\newcommand{\hardestrate}{0.42}
\newcommand{\mat}[1]{\ensuremath{\boldsymbol{\rm #1}}}
\newcommand{\LPN}{$\mathsf{LPN}$}
\newcommand{\LPNs}{\LPN \ }
\newcommand{\RLPN}{$\mathsf{RLPN}$}
\newcommand{\RLPNs}{\RLPN \ }
\newcommand{\nRLPN}{$\mathsf{double}$-$\mathsf{RLPN}$}
\newcommand{\nRLPNs}{\nRLPN \ }
\newcommand{\prob}[2]{\mathbb{P}_{#1}\left(#2\right)}
\DeclareMathOperator{\Prob}{{\mathbb{P}}}
\newcommand{\expect}[2]{\mathbb{E}_{#1}\left(#2\right)}
\newcommand{\esp}{\mathbb{E}}
\newcommand{\drawn}{{\stackrel{\$}{\gets}}}
\newcommand{\IInt}[2]{\left\llbracket #1, #2 \right\rrbracket}
\newcommand{\fract}[2]{\hbox{\leavevmode
\kern.1em \raise .5ex \hbox{\the\scriptfont0 $#1$}\kern-.1em }/
\hbox{\kern-.15em \lower .25ex \hbox{\the\scriptfont0 $#2$}}
}
\newcommand{\NN}{\mathbb{N}}
\newcommand{\RR}{\mathbb{R}}
\newcommand{\C}{\mathcal{C}}
\newcommand{\CC}{\C}
\newcommand{\DD}{\ensuremath{\mathcal{D}}}
\newcommand{\dgv}{d_{\mathrm{GV}}}
\newcommand{\scal}[2]{\left\langle #1 , #2 \right\rangle}
\newcommand{\scp}[2]{\ensuremath{\scal{#1}{#2}}}
\newcommand{\norm}[1]{\left|#1\right|}
\newcommand{\hw}[1]{\left|#1\right|}
\newcommand{\card}[1]{\left| #1 \right|}
\newcommand{\transp}[1]{#1^\intercal}
\newcommand{\transpose}{\intercal}
\newcommand{\sA}{{\mathscr{A}}}
\newcommand{\sB}{{\mathscr{B}}}
\newcommand{\sE}{{\mathscr{E}}}
\newcommand{\sF}{{\mathscr{F}}}
\newcommand{\sH}{{\mathscr{H}}}
\newcommand{\sI}{{\mathscr{I}}}
\newcommand{\sN}{{\mathscr{N}}}
\newcommand{\sP}{{\mathscr{P}}}
\newcommand{\sR}{{\mathscr{R}}}
\newcommand{\sS}{{\mathscr{S}}}
\newcommand{\sW}{{\mathscr{W}}}
\newcommand{\ie}{{\em i.e. }}
\DeclareMathOperator*{\bias}{bias}
\newcommand{\cS}{\mathcal{S}}
\newcommand{\Niter}{{N}_{\textup{iter}}}
\newcommand{\Naux}{{N}_{\textup{aux}}}
\newcommand{\Nbis}{\Naux}
\newcommand{\Nisd}{{N}_{\textup{ISD}}}
\newcommand{\Teq}{{T}_{\textup{eq}}}
\newcommand{\Psucc}{{P}_{\textup{succ}}}
\newcommand{\Tdumer}{{T}_{\textup{Dumer}}}
\newcommand{\Tisd}{{T}_{\textup{ISD}}}
\newcommand{\Tdec}{{T}_{\textup{dec}}}
\newcommand{\un}{{\mathbbm{1}}}
\newcommand{\oo}[1]{\ensuremath{\mathop{}\mathopen{}o\mathopen{}\left(#1\right)}}
\newcommand{\om}[1]{\ensuremath{\mathop{}\mathopen{} \omega\mathopen{}\left(#1\right)}}
\newcommand{\Th}[1]{\Theta\left( #1 \right)}
\newcommand{\OO}[1]{\ensuremath{\mathop{}\mathopen{}\mathcal{O}\mathopen{}\left(#1\right)}}
\newcommand{\OOt}[1]{\ensuremath{\mathop{}\mathopen{}\widetilde{\mathcal{O}}\mathopen{}\left(#1\right)}}
\newcommand{\Om}[1]{\Omega \left( #1 \right)}
\newcommand{\Omt}[1]{\widetilde{\Omega} \left( #1 \right)}
\renewcommand{\vec}[1]{\mathbf{#1}}
\newcommand{\Gm}{{\mathbf{G}}}
\newcommand{\Hm}{{\mathbf{H}}}
\newcommand{\Am}{{\mathbf{A}}}
\newcommand{\Mm}{{\mathbf{A}}}
\newcommand{\Rm}{{\mathbf{R}}}
\newcommand{\Jm}{{\mathbf{J}}}
\newcommand{\Id}{{\mathbf{I}}}
\newcommand{\av}{{\mathbf{a}}}
\newcommand{\cv}{{\mathbf{c}}}
\newcommand{\ev}{{\mathbf{e}}}
\newcommand{\hv}{{\mathbf{h}}}
\newcommand{\mv}{{\mathbf{m}}}
\newcommand{\rv}{{\mathbf{r}}}
\newcommand{\sv}{{\mathbf{s}}}
\newcommand{\uv}{{\mathbf{u}}}
\newcommand{\vv}{{\mathbf{v}}}
\newcommand{\wv}{{\mathbf{w}}}
\newcommand{\xv}{{\mathbf{x}}}
\newcommand{\yv}{{\mathbf{y}}}
\newcommand{\zv}{{\mathbf{z}}}
\newcommand{\eqdef}{\stackrel{\textrm{def}}{=}}
\renewcommand{\leq}{\leqslant}
\renewcommand{\geq}{\geqslant}
\newcommand\bis[1]{{#1_{\textup{aux}}}}
\newcommand{\CCbis}[1][]{\bis{\CC^{#1}}}
\newcommand{\CCbisperp}{\bis{\CC^{\perp}}}
\newcommand{\kbis}{\bis{k}}
\newcommand{\tbis}{\bis{t}}
\newcommand{\cvbis}{\bis{\cv}}
\newcommand{\cvbisperp}{\bis{\cv^{\perp}}}
\newcommand{\evbis}{\bis{\ev}}
\newcommand{\mvbis}{\bis{\mv}}
\newcommand{\Gmbis}[1][]{\bis{\Gm^{#1}}}
\newcommand{\Gmbistransp}{\bis{\Gm^{\transpose}}}
\newcommand{\Hmbis}[1][]{\bis{\Hm^{#1}}}
\newcommand{\decodebis}{{\mathcal{D}\mathrm{ec}}}
\newcommand{\mysubtitle}[1]{{\bf \noindent #1}}
\renewcommand{\vec}[1]{\mathbf{#1}}
\renewcommand{\vec}[1]{\mathbf{#1}}
\newcommand{\F}{\mathbb{F}}
\newcommand{\charles}[1]{}
\newtheorem{fact}[theorem]{Fact}
\newtheorem{notation}[theorem]{Notation}
\newtheorem{constraint}[theorem]{Constraint}
\newtheorem{model}[theorem]{Model}
\spnewtheorem{distribution}[theorem]{Distribution}{\bfseries}{\itshape}
\spnewtheorem{pb}[theorem]{Problem}{\bfseries}{\itshape}
\newtheorem{remark}{Remark}
\newtheorem{lemma}{Lemma}
\newtheorem{definition}{Definition}
\newtheorem{theorem}{Theorem}
\newtheorem{proposition}{Proposition}
\newtheorem{pb}{Problem}
\newtheorem{notation}{Notation}
\newtheorem{conjecture}{Conjecture}
\newtheorem{distribution}{Distribution}
\newtheorem{model}{Model}
\newtheorem{constraint}{Parameter Constraint}
\newtheorem{corollary}{Corollary}
\newtheorem{fact}{Fact}
\begin{document}
	
\title{Reduction from sparse LPN to LPN, Dual Attack 3.0}

\iftoggle{llncs}{

\author{ 
}
\institute{ }
}{
\author{K\'{e}vin Carrier$^{1}$} 
\email{kevin.carrier@ensea.fr}
\author{Thomas Debris--Alazard$^{2,3}$} \email{thomas.debris@inria.fr}  
\author{Charles Meyer-Hilfiger$^{4}$} 
\email{charles.meyer-hilfiger@inria.fr}
\author{Jean-Pierre Tillich$^{4}$} 
\email{jean-pierre.tillich@inria.fr}

\address{$^{1}$ Laboratoire ETIS, UMR 8051, CY Cergy-Paris Université, ENSEA, CNRS}
\address{$^{2}$ Project GRACE, Inria Saclay}
\address{$^{3}$ Laboratoire LIX, \'Ecole Polytechnique, Institut
	Polytechnique de Paris, 1 rue Honor\'e d'Estienne d'Orves, 91120
	Palaiseau Cedex}
\address{$^{4}$ Project COSMIQ, Inria de Paris}

\thanks{The work of KC, TDA and JPT was funded by the French Agence Nationale de la
	Recherche through ANR JCJC DECODE (ANR-22-CE39-0004-01) for KC, ANR JCJC COLA (ANR-21-CE39-0011) for TDA and ANR-22-PETQ-0008 PQ-TLS for JPT. The work of CMH was funded by the French Agence de l'innovation de d\'efense and by Inria.} 	
}
\maketitle

\begin{abstract}
The security of code-based cryptography relies primarily on the hardness of decoding generic linear codes. Until very recently, all the best algorithms for solving the decoding problem 
were information set decoders ($\mathsf{ISD}$). However, recently a new algorithm called \RLPN-decoding which relies on a completely different approach was introduced and it has been shown that \RLPN \ outperforms significantly $\mathsf{ISD}$ decoders for a rather large range of rates. This \RLPN \ decoder relies on two ingredients, first reducing decoding to some underlying \LPN{} problem, and then computing efficiently many parity-checks of small weight when restricted to some positions. 
 We revisit \RLPN-decoding by noticing that, in this algorithm, decoding is in fact reduced to a sparse-\LPN{}  problem, namely with a secret whose Hamming weight is small. Our new approach consists this time in making an additional reduction from sparse-\LPN{} to plain-\LPN{} with a coding approach inspired by $\mathsf{coded}$-$\mathsf{BKW}$. It outperforms significantly the $\mathsf{ISD}$'s and \RLPN \ for code rates smaller than $\hardestrate$. This algorithm can be viewed as the code-based cryptography cousin of recent dual attacks in lattice-based cryptography. We depart completely from the traditional analysis of this kind of algorithm which uses a certain number of independence assumptions that have been strongly questioned recently in the latter domain. We give instead a formula for the \LPNs noise relying on duality which allows to analyze the behavior of the algorithm by relying only on the analysis of a certain weight distribution. By using only a minimal assumption whose validity has been verified experimentally we are able to justify the correctness of our algorithm. This key tool, namely the  duality formula, can be readily adapted to the lattice setting and is shown to give a simple explanation for some phenomena observed on dual attacks in lattices in~\cite{DP23}.

\end{abstract}

\section{Introduction}
\label{sec:intro}

\subsection{Background}
\iftoggle{llncs}{}{\mbox{}\medskip}

{\bf \noindent Code-based Cryptography: Decoding and \LPN{} Problems.}
Code-based cryptography relies on the hardness of decoding generic linear codes or sometimes also on a closely related problem, namely the \LPN{} problem. The first one corresponds in the binary case to
\begin{pb}[decoding a fixed error weight in a  linear code]
\label{pb:decodingFq}
Let $\C$ be a binary linear code over $\mathbb{F}_{2}$ of dimension $k$ and length $n$, \ie a subspace of $\mathbb{F}_{2}^n$ of dimension $k$.
We are given $\vec{y} \in \mathbb{F}_{2}^n$, an integer $t$ and we want to find  a codeword $\vec{c} \in \C$ and an error vector $\vec{e} \in \mathbb{F}_{2}^n$ of Hamming weight $|\vec{e}|=t$ for which $\vec{y} = \vec{c}+\vec{e}$. 
\end{pb}
Generally the linear code is specified by a {\em generator matrix}, namely a $k \times n$ binary matrix $\vec{G}$ whose
rows span the vector space $\C$, in other words 
$$
\C = \{\vec{u} \vec{G}: \vec{u} \in \mathbb{F}_{2}^k\}.
$$
The second one is a version of this problem where the length $n$ is basically unbounded; the code is randomly chosen and the
error model is slightly modified to take into account that the length is not fixed. 
\begin{pb}[\LPN{} problem]
Let $\vec{s}$ be a secret chosen uniformly at random in $\mathbb{F}_{2}^k$. We have unbounded access to an oracle such that each query provides a pair 
$(\vec{a},b)$ where $\vec{a}$ is chosen uniformly at random in $\mathbb{F}_{2}^k$ and $b$ is a bit obtained as
$$
b = \scal{\vec{s}}{\vec{a}}+e
$$
where $e \in \mathbb{F}_{2}$ is chosen at random and is equal to $1$ with probability $p$. Quantity $\scal{\vec{s}}{\vec{a}}$ stands for the inner product 
$\sum_{i=1}^k s_i a_i$ between $\vec{s}=(s_i)_{1 \leq i \leq k}$ and $\vec{a}=(a_i)_{1\leq i \leq k}$. The aim is to output $\vec{s}$ after querying 
a certain number of times the oracle.
\end{pb}

Sometimes a variation of the \LPN{} problem is considered, namely the {\em sparse \LPN{}} problem where the only difference is the way $\vec{s}$ is chosen, say uniformly at random among the words of length $n$ and Hamming weight $t'$ small, or the entries like i.i.d. Bernoulli random
variables of parameter $p'$ small. 
\newline

{\bf \noindent The Complexity of the Best Generic Decoding Algorithms and \LPN{}-solvers.}
It is of fundamental importance to study the complexity of these problems, the best state of the art algorithms being those that are used to 
determine secure parameters of code-based cryptosystems. The regime of parameters which is relevant for code-based cryptography depends on the type of primitive, but a large range of parameters is relevant here. For some code-based cryptosystems, $t$ is sublinear in $n$,  \cite{M78,AABBBBDGGGMMPSTZ21,AABBBDGPZ21a,BCLMNPPSSSW19,AABBBDGPZ21a}, for some Stern like signatures schemes
\cite{S93,V96,CVA10,AGS11,GPS22,FJR22} it is precisely decoding at the Gilbert-Varshamov distance that is relevant. It is at this distance that the decoding problem is expected to be the hardest. Recall that the Gilbert-Varshamov  distance  $\dgv(n,k)$ is given by 
$\dgv(n,k) \eqdef n \; h^{-1}(1-R)$, where $R \eqdef \frac{k}{n}$ is the code rate, $h$ is the binary entropy function $h(x) \eqdef - x \log_2 x -(1-x)\log_2(1-x)$ and $h^{-1}(x)$ its inverse ranging over $\left(0,\frac{1}{2}\right)$.
Above this bound, the number of solutions becomes exponential and this helps to devise more efficient decoders.

Concerning now the \LPN{} problem, it has long been recognized that having an unbounded number of queries or codelength while having a fixed
error probability $p$ per bit as in \LPN{} makes the problem really simpler. The best algorithms for solving this problem, 
are $\mathsf{BKW}$ type algorithms \cite{BKW03,EKM17} and are of subexponential complexity $2^{\OO{k/\log k}}$.
However, this is not true anymore if the number of queries is fixed and the error rate $p$ is chosen such that the problem is the hardest, namely when $h(p)=1-k/n$. In this case, the best algorithms behave exponentially in $\min(k,n-k)$ 
 despite many efforts on this issue
\cite{P62,S88,D91,BLP11,MMT11,BJMM12,MO15,BM17,BM18,CDMT22}. 
\newline

{\bf \noindent Reduction from Decoding to an \LPN{} Problem.}
Note that until very recently, all the best algorithms for solving the decoding problem or the \LPN{} problem when it is the hardest have been $\mathsf{ISD}$ algorithms. They all rely crucially on the Prange bet, namely that we have finally found after many trials a subset of positions of size $\approx n-k$ which contains almost all the errors. This was the situation since 1962 \cite{P62}.
There was at some point, just one exception \cite{D86} which relied instead on a collision technique and gave only a slight improvement in a very tiny rate range $R \in (0.98,1)$, but it was soon found out how to incorporate this 
technique in $\mathsf{ISD}$ algorithms \cite{S88,D89} to improve them. 
However in 2022,  a new algorithm called \RLPN-decoding was introduced in \cite{CDMT22}. It relies on a completely different approach following an old  idea called ``statistical decoding'' due to Al Jabri \cite{J01}. The new approach consists in reducing decoding to \LPN{}. For the first time in sixty years a strong competitor for $\mathsf{ISD}$ techniques was found: it outperforms  $\mathsf{ISD}$ techniques in the low rate regime, say $R \in (0,0.3)$ and the improvement is quite significant in the range $R \in (0,0.2)$ say. To explain the idea, assume we are given  an instance of the decoding problem $\yv=\cv+\ev$, where $\cv \in \CC$ and $|\ev|=t$.
As in statistical decoding, decoding relies on low weight parity-check equations, namely vectors $\vec{h}$ such that $\scal{\vec{h}}{\vec{c}}=0$ for any 
$\vec{c} \in \C$ (in other words, such $\vec{h}$'s belong to the dual code $\C^\perp$). However, in the new approach these parity-check equations 
are required to be of low weight only on a subset $\sN$ of positions. The rest of the positions $\sP$ correspond to the entries of $\vec{e}$ we aim to recover and is the secret  in the \LPN{} problem. The point of the whole approach is that 
$$
\langle \vec{y}, \vec{h} \rangle = \langle \vec{e}, \vec{h} \rangle 
 = \sum_{j \in \sP} h_{j}e_{j} + \sum_{j \in \sN} h_{j} e_{j}
 = \underbrace{\langle \vec{e}_{\sP}, \vec{h}_{\sP} \rangle }_{\text{lin. comb.}}+ \underbrace{\langle \vec{e}_{\sN}, \vec{h}_{\sN} \rangle}_{\text{\LPN{} noise}}.
$$
	
Here the notation $\vec{e}_\sP$ means the restriction of $\vec{e}$ to the positions in $\sP$:
	$\vec{e}_\sP=(e_i)_{i \in \sP}$. Vector $\vec{e}_\sP$ is interpreted as the \LPN{} secret $\vec{s}$, \ie $\vec{s} \eqdef \vec{e}_\sP$ and $\vec{h}_\sP$ as the linear 
	combination vector $\vec{a}$ while $\scal{\vec{e}_\sN}{\vec{h}_\sN}$ is the \LPN{} noise. Therefore, by computing ($\vec{h},\scal{\vec{y}}{\vec{h}})$ we really have access to the \LPN{} sample
	$$
	\underbrace{\vec{a}}_{\vec{h}_\sP},\overbrace{\underbrace{\langle \vec{s},\vec{a} \rangle}_{\scal{\vec{e}_\sP}{\vec{h}_\sP}} + \underbrace{e}_
	{\scal{\vec{e}_\sN}{\vec{h}_\sN}}}^{=\scal{\vec{y}}{\vec{h}}}.
	$$
The point of choosing low weight vectors $\vec{h}$ on $\sN$, is that  it is readily verified that this translates into the fact that the binary 
	random variable $\scal{\vec{e}_\sN}{\vec{h}_\sN}$ is biased, say $\Prob(\scal{\vec{e}_\sN}{\vec{h}_\sN}=1) =\frac{1-\varepsilon}{2}$ with 
	a bias $\varepsilon$ which gets bigger when the Hamming weight $|\vec{h}_\sN|$ of $\vec{h}$ on $\sN$ gets smaller.  
	
	Recovering $\vec{e}_\sP$ is then performed by producing enough parity-check equations to have enough information on $\vec{e}_\sP$ (we need about $N \approx 1/\varepsilon^2$ parity-check equations)  and amounts to solve the \LPN{} problem. This is done by the Fast Fourier Transform (FFT) and costs about $s 2^s$ where 
	$s \eqdef \card{\sP}$. We cannot afford more sophisticated techniques like the $\mathsf{BKW}$ algorithm which would give a sub-exponential algorithm, because we are very far away from the constant error probability regime. Here the bias $\varepsilon$ is exponentially small in the codelength, so we are really in the extreme noise regime, where on top of that we have hardly more \LPN{} samples than the number we need to recover the secret. In other words, we are in a situation where we can only use very basic algorithms, and the FFT which saves a factor $N$ when compared to plain exhaustive search over all possible \LPN{} secrets comes in handy here. The low weight parity-check equations are found by using collision techniques which are borrowed from advanced $\mathsf{ISD}$ techniques \cite{D89,BJMM12}. 
	
	The improvement upon statistical decoding given by \RLPN \ is really due to this splitting in two parts. Recall that plain statistical decoding uses parity-checks which are low weight on the {\em whole support}. In both cases, $\frac{1}{\varepsilon^2}$ of such parity-checks are needed, however in \RLPN \ decoding the bias $\varepsilon$ is way bigger because the weight we have on $\sN$ is way smaller for our parity-checks.\\

{\bf \noindent Dual Attacks, Some Negative Results and a New Analysis.} 
Statistical decoding \cite{J01} or its variant, namely \RLPNs decoding, both fall  into the category of {\em dual attacks} meaning a decoding algorithm that computes in a first step low weight codewords
in the dual code  and then computes the inner products of the received word $\yv$ with those parity-checks to infer some information about the error $\ev$. These methods can be viewed as the coding theoretic analogue of the dual attacks in lattice-based cryptography \cite{MR09}. Similarly to what happened in code-based cryptography, they were shown after a sequence of improvements \cite{A17,EJK20,GJ21,M22,CST22} to be able of being competitive with primal attacks, and the crucial improvement came from similar techniques, namely by a splitting strategy. 
Like in \RLPNs  decoding, the point is that this splitting  in two parts really allows to find dual vectors that are of smaller weight/norm on the restricted subset. Note that this idea was already put forward for statistical decoding (but not exploited there) in  \cite[\S8, p.33]{DT17} or \cite[p. 21]{DT17b}. 

However, the analysis in both settings relies on various independence assumptions, see for instance \cite[Ass. 4.4, Ass. 5.8]{M22} for dual attacks in lattices or \cite[Ass. 3.7]{CDMT22} for dual attacks for codes. In lattice-based cryptography, the dual attacks were strongly questioned recently in \cite{DP23} by showing that 
these independence assumptions made for analyzing dual attacks were in contradiction with some theorems in certain regimes or with well-tested heuristics in some other regimes. Note that it was already noticed in \cite[\S 3.4]{CDMT22} that the i.i.d. Bernoulli model implied by the \LPNs model for the $\langle \ev_{\sN}, \hv_{\sN} \rangle$'s is not always accurate, but it was conjectured there that the discrepancy between this ideal model and experiments does not impact the asymptotic analysis of the decoding based on this model. This was proved to be wrong in \cite{MT23} where it was shown that the number of candidates passing the validity test of the \RLPNs decoder given in \cite{CDMT22} is actually exponentially large for the parameters considered there, whereas there should be only one candidate passing the test if the algorithm was correct. However, this paper gave at the same time an approach for analyzing rigorously dual attacks in coding theory by bringing in a duality equation \cite[Prop. 1.3]{MT23} which relates the fundamental quantity manipulated by the decoder and the weight distribution of translates of a shortened version of the code to be decoded. By studying this weight distribution together with an assumption whose validity has been verified experimentally, a slightly modified \RLPNs decoder was introduced there and shown to attain the complexity exponent claimed in \cite{CDMT22}.

\subsection{Our Contribution}

\iftoggle{llncs}{
$(i)$ improving \RLPN-decoding by a reduction from sparse \LPN{} to plain \LPN{},\\
$(ii)$ a rigorous analysis of the decoding algorithm based on a simple assumption verified experimentally.
\mbox{}\medskip
}{
\begin{itemize}\setlength{\itemsep}{5pt}
	\item[$(i)$] improving \RLPN-decoding by a reduction from sparse \LPN{} to plain \LPN{},

	\item[$(ii)$] a rigorous analysis of the decoding algorithm based on a simple assumption verified experimentally.
\end{itemize}}

{\bf \noindent Reduction from Sparse \LPN{} to Plain \LPN{}.}	
Notice that the \LPN{} problem we have to solve is actually a {\em sparse} \LPN{} problem: $\vec{e}_\sP$ is not uniformly distributed among $\mathbb{F}_{2}^s$ since it is of low weight. Indeed, it is the restriction to $\sP$ of a vector which is itself of low weight. Unfortunately, 
the FFT algorithm used for recovering $\vec{e}_\sP$ is unable to exploit this fact. In a sense, what we need here to improve \RLPN \ decoding is an algorithm for solving sparse secret \LPN{} in the very noisy regime (but with an exponential number of samples). This can be 
done by using a $\mathsf{coded}$-$\mathsf{BKW}$ technique that was introduced in \cite{GJL14}. There it was not used as a technique for solving sparse $\mathsf{LPN}$ but as a technique to improve the reduction steps of the $\mathsf{BKW}$ algorithm \cite{BKW03} that put together pairs of vectors $\vec{a}$ and $\vec{a}'$ which are equal on a block of positions and add the corresponding \LPN{} samples to get an \LPN{} sample $(\vec{a}+\vec{a}',\scal{\vec{a}+\vec{a}'}{\vec{s}}+e+e')$ which is  more noisy but with vectors $\vec{a}$ which become sparser and sparser as the number of blocks increases. Asking exact collisions on the block needs a lot of \LPN{} samples and this can be relaxed by the $\mathsf{coded}$-$\mathsf{BKW}$ technique. It basically uses a code of the same length as the block of positions we are considering during the $\mathsf{BKW}$ step and asks only an approximate collision on the block meaning that the closest codewords $\vec{c}$ and $\vec{c}'$ to $\vec{a}$ and $\vec{a}'$ restricted to this block should be the same. 

To explain what we have in mind here, consider an \LPN{} 
sample
which is of the following form~$(\vec{h}_\sP,\scal{\vec{e}_\sP}{\vec{h}_\sP}+e)$. Choose now a linear code $\CCbis$ of length $s$ and dimension $\kbis$ (\ie a subspace of $\F_2^s$) which we know how to decode for any possible entry, meaning here that we can produce for any entry $\vec{y} \in \F_2^s$ a codeword $\cvbis\in \CCbis$ which is close {\em enough} to $\yv$.
Codes with this property are known under the name of {\em lossy source codes} in information theory. In \cite{GJL14} it was proposed to use for instance a product of small codes.
There are almost optimal codes (producing for a given dimension $\kbis$ almost optimal near codewords) using a low complexity decoder. Basically, the best that can be done is to produce codewords at distance  $\dgv(s,\kbis)$. 
For instance polar codes are asymptotically optimal \cite{KU10}, they attain asymptotically this Gilbert-Varshamov distance by using only a decoding algorithm of quasi-linear complexity $\OO{s \log s}$. 

Consider now a parity-check $\vec{h}$ of small weight $w$ on $\sN$ that we use for \RLPN-decoding and decode $\vec{h}_\sP$ with the lossy source code $\CCbis$: $\vec{h}_\sP = \cvbis + \evbis$ where $\cvbis \in \CCbis$ and $|\evbis|$ is small. Consider a generator matrix $\Gmbis$ of $\CCbis$, namely  a $\kbis \times s$  matrix such that
$\CCbis = \{\vec{u} \Gmbis:\;\vec{u} \in \F_2^{\kbis}\}$ (\ie the rows of $\Gmbis$ generate $\CCbis$). Notice now that
\begin{align*}
\scal{\vec{e}_\sP}{\vec{h}_\sP} &= \scal{\vec{e}_\sP}{\cvbis+\evbis} = \scal{\vec{e}_\sP}{\cvbis}+\scal{\vec{e}_\sP}{\evbis}  \\
&=  \scal{\vec{e}_\sP}{\vec{u} \Gmbis}+ \scal{\vec{e}_\sP}{\evbis}\;\; \text{(where $\vec{u} \in \mathbb{F}_{2}^{\kbis}$)}  \\
&=  \scal{\vec{e}_\sP\Gmbistransp}{\vec{u}}+\underbrace{\scal{\vec{e}_\sP}{\evbis}}_{\text{biased}} .
\end{align*}
If we plug this expression in the original \LPN{} sample $(\vec{h}_\sP,\scal{\vec{h}}{\vec{y}}=\scal{\vec{e}_\sP}{\vec{h}_\sP}+\scal{\vec{e}_\sN}{\vec{h}_\sN})$ we~obtain
\begin{equation*}
\label{eq:manip_produit_scalaire}
\scal{\vec{h}}{\vec{y}} = \scal{\vec{e}_\sP\Gmbistransp}{\vec{u}}+ \underbrace{\scal{\vec{e}}{\evbis}}_{\text{noise 1}}+ \underbrace{\scal{\vec{e}_\sN}{\vec{h}_\sN}}_{\text{noise 2}}.
\end{equation*}
In other words, we have a new \LPN{} problem where
	\begin{equation}
        \label{eq:LPN-problem}
	\underbrace{\vec{a}}_{\vec{u}},\overbrace{\underbrace{\langle \vec{s},\vec{a} \rangle}_{\scal{\vec{e}_\sP\Gmbistransp}{\vec{u}}} + \underbrace{e}_
	{\scal{\vec{e}_\sP}{\evbis}+ \scal{\vec{e}_\sN}{\vec{h}_\sN}}}^{\scal{\vec{y}}{\vec{h}}}.
	\end{equation}
The new secret is not anymore a part $\vec{e}_\sP$ of the error but a linear combination $\vec{e}_\sP \Gmbistransp$ of it and the \LPN{} noise has increased somehow. However, now the secret is way smaller, it belongs to $\mathbb{F}_{2}^{\kbis}$. The situation is changed significantly by this. Before, basically the optimal parameters for \RLPN-decoding were such that the cost of FFT decoding the \LPN{} secret, namely $\OO{s2^s}$ is of the same order as $1/\varepsilon^2$ the number of parity-check equations we need. Here $\varepsilon$ is defined by
\begin{equation*}\label{eq:bias_of_e}
\Prob(e =1)=\frac{1-\varepsilon}{2}.
\end{equation*}
Recall that $\varepsilon$ is basically a decreasing function of the weight $w$ of the parity-check equations we are able to produce. 
Here since we do not pay anymore $\OO{s2^s}$ for 
FFT decoding the new \LPN{} secret but $\OO{\kbis2^{\kbis}}$ we can take larger values for $s$ which themselves give a smaller support~$\sN$ resulting in much smaller weight $w$ on $\sN$ and thus the bias term coming from $\scal{\vec{e}_\sN}{\vec{h}_\sN}$ is much smaller. Of course there is an additional noise term now which is $\scal{\vec{e}_\sP}{\evbis}$. However, all in all, the gain we have by being able to use a much larger $s$  outweighs the additional noise term. It can also be observed that we do not recover $\vec{e}_\sP$ but $\kbis$ linear combinations of bits of $\vec{e}_\sP$. This is easy to fix by running a few times more this algorithm with other lossy source codes $\CCbis$ until getting enough linear combinations to be able to recover $\vec{e}_\sP$.

We call this new algorithm \nRLPN-decoding, since it is based on two successive reductions: first we reduce the problem to sparse-\LPN{}, then we reduce the sparse-\LPN{} to a plain-\LPN{} problem as explained above.

\medskip
{\bf \noindent \nRLPN-Decoding and its Analysis.} 
It turns out that the \LPNs problem given in Equation~\eqref{eq:LPN-problem} is more structured than a standard \LPNs problem 
and like what happened in the \RLPN{} algorithm \cite{CDMT22}, producing the most likely candidate for the \LPNs problem does not necessarily produce the right candidate $\ev_\sP \Gmbis[\transpose]$ even if we have enough samples for ensuring that in the ideal i.i.d model of the \LPNs problem the most likely candidate would indeed be the right solution. Again, the i.i.d. model is not accurate. We have to use the whole information given by the FFT and output for $L$ big enough the $L$ most likely solutions to have a chance to have $\ev_\sP \Gmbis[\transpose]$ in the list. However, verifying whether a candidate $\sv$ for 
$\ev_\sP \Gmbis[\transpose]$ is indeed valid is relatively straightforward: 
\iftoggle{llncs}{\\
{\it (a)} we can as in the \RLPNs algorithm make a bet on the weight $|\ev_\sP|$, say 
$|\ev_\sP|=t'$ (and run enough \nRLPNs decoding steps until finding a partition $\sP \cup \sN$ for which this bet is valid),\\
{\it (b)} recover $\ev_\sP$ by solving the decoding problem (in its syndrome form) $\sv = \ev_\sP \Gmbis[\transpose]$ and $|\ev_\sP|=t'$,\\
{\it (c)} check whether the putative candidate $\vv$ for $\ev_\sP$ we get can be extended to a complete solution by solving the decoding problem
$\yv=\cv+\ev$, $\ev_\sP = \vv$, $\cv \in \CC$, $|\ev|=t$ which is much easier to solve than the original decoding problem due to the partial knowledge
about $\ev$ ($\ev_\sP=\vv$).}
{
\begin{itemize}\setlength{\itemsep}{5pt}
	\item[$(a)$] we can as in the \RLPNs algorithm make a bet on the weight $|\ev_\sP|$, say 
	$|\ev_\sP|=t'$ (and run enough \nRLPNs decoding steps until finding a partition $\sP \cup \sN$ for which this bet is valid),

	\item[$(b)$] recover $\ev_\sP$ by solving the decoding problem (in its syndrome form) $\sv = \ev_\sP \Gmbis[\transpose]$ and $|\ev_\sP|=t'$,

	\item[$(c)$] check whether the putative candidate $\vv$ for $\ev_\sP$ we get can be extended to a complete solution by solving the decoding problem
	$\yv=\cv+\ev$, $\ev_\sP = \vv$, $\cv \in \CC$ and $|\ev|=t$ which is much easier to solve than the original decoding problem due to the partial knowledge
	about $\ev$, \ie $\ev_\sP=\vv$.
\end{itemize}
}

The whole problem we face here for analyzing the problem is the same as the one that was faced to analyze the \RLPNs algorithm, the i.i.d. \LPNs model
is not valid and we really have to get rid of the independence assumptions. Part of this work is achieved 
by adapting one of the fundamental tools used for analyzing \RLPNs decoding, namely \cite[Proposition 3.1]{CDMT22} which gives a formula of the bias $\varepsilon$ in terms of Krawtchouk polynomials. We will obtain a generalization of this proposition adapted to the \nRLPNs decoder, namely Proposition \ref{prop:biasCodedRLPN} in \S \ref{sec:reductiontoLPN}. Note that Proposition \ref{prop:biasCodedRLPN} does not rely on unproven assumptions contrarily to what is done in dual attacks in lattice-based cryptography where 
the corresponding result is achieved through independence assumptions.

Estimating the number $L$ of candidates for the \LPNs problem given in Equation \eqref{eq:LPN-problem} we have to keep for being sure to have $\ev_\sP\Gmbis[\transpose]$ is even more delicate. It requires a careful adaptation to our setting of \cite{MT23} that analyzed the \RLPNs decoder.   
Here, we will not be able to avoid completely assumptions for performing the analysis (but this was also the case in \cite{MT23}). However, again we will not resort to independence assumptions which seem in our context not only to be wrong strictly speaking, but also to be unable to be good enough for capturing the size of $L$. We will namely develop some tools analogous to what has been achieved in \cite{MT23}:
\begin{itemize}\setlength{\itemsep}{5pt}
\item[$(i)$] a duality result, namely Proposition \ref{prop:exp_bias} of \S \ref{sec:tools}, which expresses the FFT value of a candidate as a 
 weighted sum of the product of evaluations of Krawtchouk polynomials where the weights come from a certain weight distributions of codes related to $\CC$ and $\CCbis$. This is an adaptation of \cite[Prop. 3.2]{MT23} to our setting and is the key for estimating $L$ as explained in \S \ref{ss:intuition},

\item[$(ii)$] an estimation of this sum with probabilistic considerations. These probabilistic considerations are rigorous for the part of the sum which is most certainly the dominating term. However for the part of the sum which is very likely to be negligible, we lack accurate tail bounds for the number of codewords of a given weight in a random linear code and in this case we just conjecture that the part of the sum which seems negligible and for which we have only partial control with the probabilistic tool at hand, is indeed negligible. This conjecture has been verified experimentally and we even used a very crude approximation of this weighted sum with the help of independent Poisson variables which captures the size of $L$ obtained in our experiments and which implies our conjecture.
\end{itemize}
All in all with the help of a conjecture that we verified experimentally, we are able to capture the size of $L$ and to obtain a formula for the complexity of \nRLPNs decoding. The key tool for performing this analysis, namely the duality result, can be readily adapted to lattices (see \S \ref{sec:lattice}). It turns out that even a crude use of this duality result gives a good explanation of the part of the experimental curve departing from the theoretical curve based on the standard independence assumption  found in \cite[Fig. 3]{DP23}. This substantiates the claim made in \cite[\S 6]{MT23} that the code duality result of \cite{MT23} carries over to the lattice setting and can be used to predict dual attacks without using the independence assumption.

\medskip
{\bf \noindent The Results Obtained by this New Approach.}
This new approach results in a very significant gain compared to \RLPN \ decoding. Our most advanced version of \nRLPN-decoding algorithm performs better than the current state of the art $\mathsf{ISD}$ algorithm for all rates $R \leq \hardestrate$ \iftoggle{llncs}{}{as shown in Figure \ref{fig:compISDDoubleRLPNRLPN}}.
\begin{figure}[h!]
	\centering
	\includegraphics[height=7.5cm]{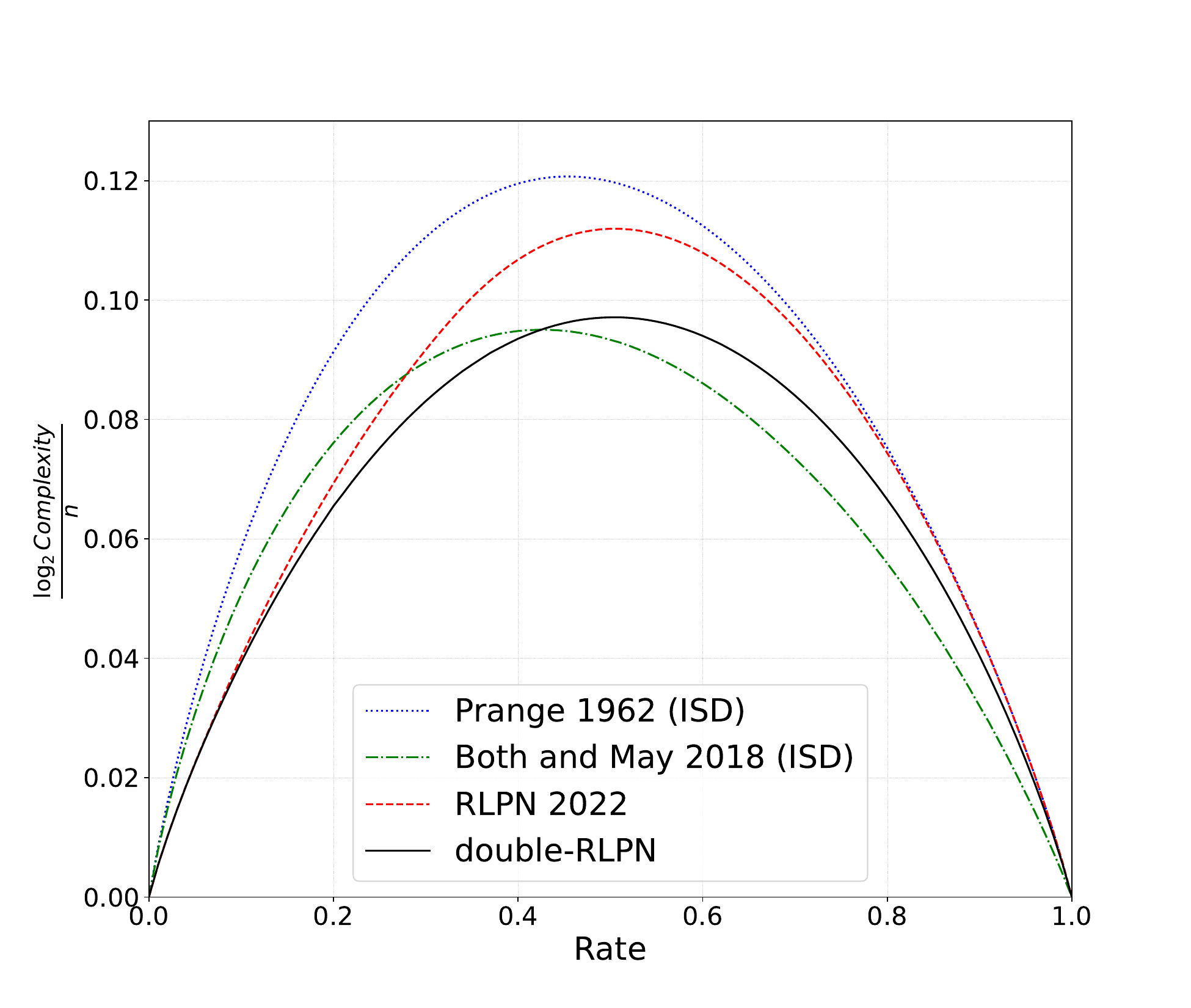}
	\caption{\label{fig:compISDDoubleRLPNRLPN} Asymptotic complexity exponent of some decoding algorithms: 
		our new \nRLPN \ decoder, 
the \RLPN \ decoder,  
Both and May algorithm \cite{BM18} (with the correction of \cite{CDMT22,E22}) which is the state-of-the-art of $\mathsf{ISD}$ decoders and the Prange decoder \cite{P62}.
}
\end{figure}

\medskip
{\bf \noindent Concurrent/related work.}
Very recently, we became aware that the prediction we have made on the score function for lattices by using our duality result and crude estimates of the relevant sum
(see  \S \ref{sec:lattice}) has also been obtained by using as we do here Bessel functions and related tools in \cite{DP23a}. This paper provides a much more in depth study as we do here.

 \section{Notation and Coding Theory Background}
\label{sec:notation}

{\bf \noindent Basic Notation.}
Vectors and  matrices are respectively denoted in bold letters 
and bold capital letters such as $\av$ and $\Am$. The entry at index $i$ of the vector $\xv$ is denoted by $x_i$ or $x(i)$. The canonical inner product $\sum_{i=1}^n x_i y_i$ between two vectors $\xv$ and $\yv$ of $\mathbb{F}_{2}^n$ is denoted by $\scal{\xv}{\yv}$ where $\F_2$ denotes the binary field. 
Let  $\sI$ be a list of indexes. We denote by $\xv_{\sI}$ the vector $(x_i)_{i \in \sI}$. In the same way, we denote by $\Mm_{\sI}$ the sub-matrix made of the columns of $\Mm$ which are indexed by $\sI$. We denote by $\vec{0}_n \in \F_2^{n \times n}$ and $\vec{Id}_n \in \F_2^{n \times n}$ the null matrix and the identity matrix of size $n$ respectively.
The concatenation of two vectors $\xv$ and $\yv$ is denoted by $\xv || \yv$. 
The Hamming weight of a vector $\xv$ and the cardinality of a finite set $\sA$ are denoted in the same way by $|\xv|$ and $\card{\sA}$ respectively. There will be no confusion since they apply to different objects. 
Notation $\IInt{a}{b}$ stands for the set of the integers between $a$ and $b$, both included. 
Furthermore, we let $\mathcal{S}_{w}^{n}$ denote the Hamming sphere of $\F_{2}^{n}$ with radius $w$ and centered at $\vec{0}$, namely \iftoggle{llncs}{$\mathcal{S}_w^n \eqdef \{ \xv \in \F_2^n \: : |\xv|=w\}$.}{
$$
\mathcal{S}_w^n \eqdef \{ \xv \in \F_2^n \: : |\xv|=w\}.
$$
}
\iftoggle{llncs}{\newline}{}

{\bf \noindent Probabilistic Notation.}
For a finite set $\sS$, we write $X \drawn \sS$ when $X$ is an element of $\sS$ drawn uniformly at random in it.
 For a Bernoulli random variable $X$, denote by $\bias(X)$ the quantity 
	\iftoggle{llncs}{$\bias(X) \eqdef \Prob(X=0)-\Prob(X=1)$.}{
	$$
	\bias(X) \eqdef \Prob(X=0)-\Prob(X=1).
	$$}
	For a Bernoulli random variable $X$ of parameter $p=\frac{1-\varepsilon}{2}$, \ie $\Prob(X=1)=\frac{1- \varepsilon}{2}$, we have $\bias(X)=\varepsilon$. 
\newline 

\mysubtitle{Fourier Transform.}
Let $f \: : \F_2^{n} \rightarrow \RR$ be a function. We define its Fourier transform $\widehat{f} : \F_2^n \rightarrow \RR$ as
\begin{equation} \label{eq:def_fourier}
\widehat{f}\left(\xv\right) = \sum_{\uv \in \F_2^n} f(\uv) \left(-1\right)^{\scp{\xv}{\uv}}.
\end{equation}
{\bf \noindent Soft-O Notation.}
For real valued functions defined over $\RR$ or $\NN$ we define $o()$, $\OO{}$, $\Om{}$, $\Th{}$, in the usual way and also use the less common notation $\OOt{}$ and $\Omt{}$, where 
$f = \OOt{g}$ means that $f(x) = \OO{g(x) \log^kg(x)}$ and $f=\Omt{g}$ means that $f(x)=\Om{g(x)\log^k g(x)}$ for some $k$. We will use this 
for functions which have an exponential behavior, say $g(x)=e^{\alpha x}$, in which case $f(x)=\OOt{g(x)}$ means that 
$f(x)=\OO{P(x)g(x)}$ where $P$ is a polynomial in $x$.
We also use $f = \omega(g)$ when $f$ dominates $g$ asymptotically; that is when $\mathop{\lim}\limits_{x \rightarrow \infty} \frac{|f(x)|}{g(x)} = \infty$.
\newline

{\bf \noindent Coding Theory.} A binary linear code $\C$ of length $n$ and dimension $k$ is a subspace of $\mathbb{F}_{2}^n$ of dimension $k$. We say that it has parameters $[n, k]$ or that it is an $[n, k]$-code. Its {\em rate} $R$ is defined as $R \eqdef \frac{k}{n}$. 
A generator matrix $\Gm$ for $\C$ is a full rank $k \times n$ matrix over $\mathbb{F}_{2}$ such that
$$
\C =\left\{\uv \Gm:\uv \in \F_{2}^k \right\}.
$$
A parity-check matrix $\Hm$ for $\C$ is a full-rank $(n-k)\times n$ matrix over $\mathbb{F}_{2}$ such that
$$
\C = \left\{\cv \in \F_{2}^n: \Hm \transp{\cv} = \mathbf{0} \right\}.
$$
In other words, $\C$ is the null space of $\Hm$. The dimension of the code is given by $\dim\left(\CC\right) \eqdef k$.
The code whose generator matrix is the parity-check matrix of $\CC$ is called the dual code of $\CC$. It might be seen as the subspace of parity-checks of $\CC$ and is defined equivalently as
\begin{definition}[Dual Code]
The {\em dual code} $\CC^\perp$ of an $[n,k]$-code $\CC$ is an $[n,n-k]$-code which is defined by
\iftoggle{llncs}{$
\CC^\perp \eqdef \left\{\vec{h} \in \mathbb{F}_{2}^n: \forall \cv \in \CC,\; \scal{\cv}{\vec{h}}=0 \right\}.
$
}
{
$$
\CC^\perp \eqdef \left\{\vec{h} \in \mathbb{F}_{2}^n: \forall \cv \in \CC,\; \scal{\cv}{\vec{h}}=0 \right\}.
$$
}
\end{definition}

Sometimes it is considered in the literature the following equivalent version of the decoding problem (see Problem \ref{pb:decodingFqSynd} as defined in the introduction) by using instead the parity-check matrix and syndrome point of view

\begin{pb}[Decoding a fixed error weight via syndromes]
	\label{pb:decodingFqSynd}
	Let $\C$ be an $\lbrack n,k \rbrack$-code with parity-check matrix $\vec{H}\in\F_{2}^{(n-k)\times n}$. 
	We are given a syndrome $\vec{s} \in \mathbb{F}_{2}^{n-k}$, an integer $t$ and we want to find an error vector $\vec{e} \in \mathbb{F}_{2}^n$ of Hamming weight $|\vec{e}|=t$ for which $\vec{H}\vec{e}^{\transpose} = \vec{s}^{\transpose}$. 
\end{pb}

It is readily seen that both Problems \ref{pb:decodingFq} and \ref{pb:decodingFqSynd} are equivalent: given $\CC$ with parity-check matrix $\vec{H}$, then decoding $\vec{c} + \vec{e}$ with a codeword $\vec{c} \in \CC$ and $\vec{e} \in \cS_{t}^{n}$ amounts to recover $\vec{e}$ from $\vec{H}\vec{y}^{\transpose} = \vec{H}\vec{e}^{\transpose}$ as by definition $\vec{H}\vec{c}^{\transpose} = \vec{0}$.

When $\CC$ is an $[n,k]$-code and $\xv \in \F_2^n$ we let \iftoggle{llncs}{$\CC + \xv \eqdef \{ \cv + \xv \: , \cv \in \CC\}$}{
	$$
	\CC + \xv \eqdef \{ \cv + \xv \: , \cv \in \CC\}
	$$} 
	denote a coset of $\CC$ and we denote by $N_i\left(\CC + \xv \right)$ the number of words of hamming weight $i$ in the coset $\CC + \xv$, namely \iftoggle{llncs}{$N_i\left(\CC + \xv\right) \eqdef \card{\CC + \xv \cap \mathcal{S}_i^n}$.}{
$$
N_i\left(\CC + \xv\right) \eqdef \card{\CC + \xv \cap \mathcal{S}_i^n}.
$$
}
An important quantity is the {\em Gilbert-Varshamov} distance which is defined as
\begin{definition}[Gilbert-Varshamov distance] The Gilbert-Varshamov distance $\dgv(n,k)$ associated to a length $n$ and dimension $k$ is defined as
 the largest integer $d$ such that
$$
2^k \card{\sB_d} < 2^n
$$
where $\sB_d$ is the Hamming ball centered at $\vec{0}$ in $\F_2^n$ and radius $d$, that is 
$\sB_d \eqdef \{\xv \in \F_2^n:\;|\xv|\leq d\}$.
\end{definition}
This quantity has two different interpretations. On one hand, it corresponds up to a constant term to the {\em typical minimum distance} of a linear code of length $n$ and dimension $k$, but it is also related to the expected number of solutions of the decoding problem for a random linear $[n,k]$-code which is defined as follows.

\begin{pb}[\textup{$(n,k,t)$ Decoding Problem - $\mathsf{DP}(n,k,t)$}]\label{pb:DP}\iftoggle{llncs}{\vspace*{-0.2cm}}{\mbox{ }}
	\begin{itemize}\iftoggle{llncs}{}{\setlength{\itemsep}{5pt}}
		\item \textup{Given:}  $(\vec{G},\vec{y}\eqdef \vec{m}\vec{G} + \vec{x})$ where $\vec{m},\vec{G}$ and $\vec{x}$ are respectively picked uniformly at random over $\mathbb{F}_{2}^{k}$, $\mathbb{F}_{2}^{k \times n}$ and $\cS_{t}^{n}$.

		\item \textup{Aim:} an error $\vec{e}\in \mathbb{F}_{2}^{n}$ of Hamming weight $t$ such that $\vec{y}-\vec{e} = \vec{z}\vec{G}$ for some $\vec{z} \in \mathbb{F}_{2}^{k}$. 
	\end{itemize}
\end{pb}

This problem really corresponds to decoding at distance $t$ the $\lbrack n,k \rbrack$-code admitting $\vec{G}$ as generator matrix. 
The largest weight $t$ for which we might hope for having a single solution (strictly speaking when we look for solutions of weight $\leq t$ and not exactly $t$, but the difference
between these two notions is generally irrelevant) is given by the Gilbert-Varshamov distance $\dgv(n,k)$. At this distance, the expected number of solutions
is readily seen to be $\Th{1}$ whether we look 
at codewords at distance exactly $t$ from the received word $\yv$ or 
at distance $\leq t$.

It will also be very convenient to consider the operation of puncturing a code, \ie keeping only a subset of entries in a codeword.
\begin{definition}[Punctured Code]
For a code $\CC$ and a subset $\sI$ of code positions, we denote by $\CC_\sI$ the punctured code obtained from $\CC$ by keeping only the positions in $\sI$, \ie
$$
\CC_{\sI}= \{\cv_\sI: \cv \in \CC\}.
$$
\end{definition}
\begin{definition}[Shortened Code]
	For a code $\CC$ and a subset $\sI$ of code positions, we denote by $\CC^\sI$ the shortened code is defined by
	$$
	\CC^{\sI}= \{\cv_\sI: \cv \in \CC \mbox{ and } \cv_{ \llbracket 1,n \rrbracket \setminus \sI } = \vec{0}\}.
	$$
\end{definition}
It is readily seen that we have 
\begin{equation}\label{eq:shortPunct}  
\left(\CC^{\sI}\right)^{\perp} = \left(\CC^{\perp}\right)_{\sI} \quad \mbox{and} \quad \left(\CC_{\sI}\right)^{\perp} = \left(\CC^{\perp}\right)^{\sI}.
\end{equation} 

{\bf \noindent Krawtchouk Polynomial.}
We recall here some properties about Krawtchouk polynomial that will be useful in the article. Many useful properties can be found in \cite[\S 2.2]{KS21}
\begin{definition}{(Krawtchouk polynomial)}
	We define the Krawtchouk polynomial $K_w^{(n)}$ of degree $w$ and of order $n$ as 
	\iftoggle{llncs}{$K_w^{(n)}\left( X \right)  \eqdef  \sum_{j = 0}^w \left(-1\right)^j \binom{X}{j} \binom{n-X}{w-j}$.}{
		$$
		K_w^{(n)}\left( X \right)  \eqdef  \sum_{j = 0}^w \left(-1\right)^j \binom{X}{j} \binom{n-X}{w-j}.
		$$
}
\end{definition}
The following fact is well known: it gives an alternate expression of the Krawtchouk polynomial (see for instance \cite[Lemma 5.3.1]{L99}).
\begin{fact} \label{fact:krawtchouk_bias}
	For any $\vec{x} \in \F_2^n$,
	\begin{equation}
		K_w^{(n)} \left( |\xv| \right) = \widehat{\un_w}(\xv)=\sum_{\yv \in \F_2^n : |\yv|= w } \left(-1\right)^{\langle \xv, \yv \rangle}.
	\end{equation}
where $\un_w$ is the characteristic function of the Hamming sphere $\cS^n_w$ of radius $w$.
\end{fact}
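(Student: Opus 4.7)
The statement contains two equalities; the plan is to prove them one after the other, in both cases relying only on elementary counting. The second equality, $\widehat{\un_w}(\xv) = \sum_{\yv\in\F_2^n:\,|\yv|=w}(-1)^{\scal{\xv}{\yv}}$, follows at once by unfolding the definition of the Fourier transform in \eqref{eq:def_fourier}: since $\un_w(\uv)$ equals $1$ on the sphere $\cS^n_w$ and $0$ elsewhere, the outer sum over $\uv \in \F_2^n$ collapses to a sum over weight-$w$ vectors, which is exactly the displayed right-hand side.

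For the first equality, $K_w^{(n)}(|\xv|) = \sum_{\yv : |\yv|=w}(-1)^{\scal{\xv}{\yv}}$, I would partition the weight-$w$ vectors $\yv$ according to the integer $j \eqdef |\mathrm{supp}(\yv) \cap S|$, where $S \eqdef \{i : x_i = 1\}$ is the support of $\xv$, so that $|S|=|\xv|$. A moment of thought shows that $\scal{\xv}{\yv} \equiv j \pmod{2}$, because only the positions where both $\xv$ and $\yv$ are $1$ contribute to the $\F_2$-inner product; hence $(-1)^{\scal{\xv}{\yv}} = (-1)^j$ is constant on each partition class. Counting the number of weight-$w$ vectors $\yv$ with a prescribed $j$ amounts to choosing which $j$ of the $|\xv|$ positions in $S$ and which $w-j$ of the $n-|\xv|$ positions in its complement are set to $1$, yielding $\binom{|\xv|}{j}\binom{n-|\xv|}{w-j}$. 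Summing over $j \in \{0,\dots,w\}$ recovers exactly the defining expression of $K_w^{(n)}(|\xv|)$.

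There is no substantive obstacle: the identity is a standard bookkeeping fact, and all steps are routine. The only point that deserves being stated carefully is the bridge between the $\F_2$-valued inner product $\scal{\xv}{\yv}$ and the parity of the integer overlap $j$, since the entire utility of the lemma is precisely to connect the $\F_2$-valued Fourier character on the left to the integer-valued Krawtchouk polynomial on the right.
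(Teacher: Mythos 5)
Your proof is correct and complete. The paper itself gives no proof of this fact, simply citing \cite[Lemma 5.3.1]{L99} as a standard reference; your argument (collapsing the Fourier sum to the sphere, then partitioning by the overlap $j = |\mathrm{supp}(\yv)\cap\mathrm{supp}(\xv)|$ and observing that $\scal{\xv}{\yv}\equiv j\pmod 2$, so the class contributes $(-1)^j\binom{|\xv|}{j}\binom{n-|\xv|}{w-j}$) is exactly the textbook derivation and matches the cited source.
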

We recall here the summary of some known results about Krawtchouk polynomials made in~\cite{CDMT22}.
\begin{proposition}{\cite[Prop. 3.5, Prop. 3.6]{CDMT22}}\label{prop:expansion}
	\begin{enumerate}\iftoggle{llncs}{}{\setlength{\itemsep}{7pt}}
		\item {\bf Value at 0.} For all $0 \leq w \leq n$, $K_w^{(n)}(0)=\binom{n}{w}$.
		\item {\bf Reciprocity.} For all $0 \leq t,w \leq n$, $\binom{n}{t}K_w^{(n)}(t)=\binom{n}{w}K_t^{(n)}(w)$.
		\item {\bf Roots.} The polynomials $K_w^{(n)}$'s have $w$ distinct roots which lie in the interval 
		$\llbracket n/2-\sqrt{w(n-w)},n/2+\sqrt{w(n-w)} \rrbracket.$ The distance between roots is at least $2$ and at most $o(n)$.
		\item {\bf Magnitude in and out the root region.} Let $\tau$ and $\omega$ be two reals in $[0,1]$. Let $\omega^\perp \eqdef \frac{1}{2}-\sqrt{\omega(1-\omega)}$, and let $z \eqdef \frac{1-2\tau  - \sqrt{D}}{2 (1-\omega)}$ where $D \eqdef \left(1- 2 \tau \right)^2-4 \omega(1-\omega)$.
		
		Define $\widetilde{\kappa}(\tau,\omega) \eqdef \left\{
		\begin{array}{ll}
			\tau \log_2(1-z)+ (1-\tau) \log_2(1+z) - \omega\log_2 z  & \mbox{ if } \tau \in [0,\omega^\perp], \\
			\frac{1 - h(\tau)+h(\omega)}{2} & \mbox{ otherwise.}
		\end{array}
		\right.$
		\begin{itemize}\iftoggle{llncs}{}{\setlength{\itemsep}{5pt}}
			\item 4.1. If $\tau \leq \frac{1}{2}-\sqrt{\omega(1-\omega)}$, then for all $t$ and $w$ such that $\lim\limits_{n \to \infty} \frac{t}{n} = \tau$ and $\lim\limits_{n \to \infty} \frac{w}{n} = \omega$ we have $K_w^{(n)}(t) = 2^{n \left(\widetilde{\kappa} \left(\tau, \omega\right) + o(1)\right)}$.
			\item 4.2.	If $\tau > \frac{1}{2}-\sqrt{\omega(1-\omega)}$, then there exists $t(n)$ and $w(n)$ such that $\lim\limits_{n \to \infty} \frac{t}{n} = \tau$, $\lim\limits_{n \to \infty} \frac{w}{n} = \omega$ and $\left| K_w^{(n)}(t) \right| = 2^{n \left(\widetilde{\kappa} \left(\tau, \omega\right) + o(1)\right)}$.
		\end{itemize}
	\end{enumerate}
\end{proposition}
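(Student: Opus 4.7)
The plan is to handle the four assertions separately; the first two follow immediately from the definitions, while the last two require orthogonal-polynomial theory and saddle-point asymptotics. For part 1 I specialize the defining sum at $X = 0$: all terms with $j \geq 1$ vanish because $\binom{0}{j} = 0$, leaving $\binom{n}{w}$. For part 2 I invoke Fact \ref{fact:krawtchouk_bias} to write
\[
\binom{n}{t}\, K_w^{(n)}(t) = \sum_{|\xv|=t}\sum_{|\yv|=w} (-1)^{\scp{\xv}{\yv}},
\]
and observe that this double sum is symmetric under swapping $(t, \xv)$ with $(w, \yv)$, so it also equals $\binom{n}{w}\, K_t^{(n)}(w)$.

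For the roots in part 3, I would use the fact that $(K_w^{(n)})_w$ is an orthogonal family for the binomial weight $\binom{n}{x}/2^n$ on $\IInt{0}{n}$. Standard orthogonal-polynomial theory (three-term recurrence together with a Sturm-type separation argument) then gives $w$ real distinct roots together with the usual interlacing with the roots of $K_{w-1}^{(n)}$, and the confinement to $[n/2 - \sqrt{w(n-w)},\; n/2 + \sqrt{w(n-w)}]$ is the classical Levenshtein bound, obtainable either from the Christoffel--Darboux identity or from a direct variational estimate. The spacing-at-least-$2$ claim follows from the integrality of the natural support combined with interlacing, while the $o(n)$ upper bound on gaps comes from the known (arcsine) limiting distribution of normalized Krawtchouk roots.

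For the magnitude estimates in part 4, the starting point is the generating identity
\[
(1-z)^t (1+z)^{n-t} = \sum_{w=0}^{n} K_w^{(n)}(t)\, z^w,
\]
which yields $K_w^{(n)}(t) = \frac{1}{2\pi i}\oint \frac{(1-z)^t(1+z)^{n-t}}{z^{w+1}}\, dz$ by Cauchy's formula. A saddle-point analysis gives the saddle equation $\frac{n-t}{1+z} - \frac{t}{1-z} = \frac{w}{z}$, a quadratic whose relevant root is $z = (1-2\tau - \sqrt{D})/(2(1-\omega))$ with $D$ as defined. When $\tau \leq \omega^{\perp}$ (so $D \geq 0$), the saddle is real and positive, and steepest descent produces a sign-stable exponential asymptotic with rate exactly $\widetilde{\kappa}(\tau, \omega) = \tau \log_2(1-z) + (1-\tau)\log_2(1+z) - \omega \log_2 z$. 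When $\tau > \omega^{\perp}$, the two saddles are complex conjugates and the integral oscillates; the envelope magnitude of $|(1-z)^t(1+z)^{n-t}/z^{w}|$ at the complex saddle simplifies to the symmetric expression $(1 - h(\tau) + h(\omega))/2$.

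The main obstacle I anticipate is case 4.2. Inside the root interval the polynomial changes sign $w$ times and there is no pointwise exponential lower bound, so the statement is necessarily about the envelope attained on a suitable subsequence of $(t(n), w(n))$. Producing such a subsequence requires choosing $(t(n), w(n))$ bounded away from the roots of $K_w^{(n)}$, which is where the density information from part 3 is used: since consecutive roots are separated by $o(n)$ and by at least $2$, one can select integer pairs near the midpoints between consecutive roots so that the oscillatory phase lands near an extremum. This delicate combinatorial-analytic step is carried out in detail in \cite{CDMT22}, and we simply recall the result here.
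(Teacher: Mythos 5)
The paper does not actually prove this proposition: it simply states it as a recall of known facts and cites \cite[Prop.\ 3.5, Prop.\ 3.6]{CDMT22} for both the statement and its proof. Your sketch therefore goes further than the paper itself, and its broad outline is correct. Parts 1 and 2 are fine: for part 2 the symmetry of $\sum_{|\xv|=t}\sum_{|\yv|=w}(-1)^{\scp{\xv}{\yv}}$ under interchanging the roles of $t$ and $w$ is exactly the right one-line argument. For part 4 your saddle-point setup is the standard route and matches the closed-form $\widetilde{\kappa}$: the saddle equation reduces to $(n-w)z^2-(n-2t)z+w=0$, whose root gives the displayed $z$; and when $D<0$ one checks $|z|^2=\frac{\omega}{1-\omega}$, $|1-z|^2=\frac{2\tau}{1-\omega}$, $|1+z|^2=\frac{2(1-\tau)}{1-\omega}$, from which the envelope rate simplifies to $\frac{1-h(\tau)+h(\omega)}{2}$ as claimed. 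Your honest acknowledgment that 4.2 (choosing $(t(n),w(n))$ away from the oscillation zeros) is delicate and is carried out in \cite{CDMT22} is appropriate; this is precisely where the paper also relies on that reference.

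One genuine soft spot is your justification of the ``distance between roots is at least $2$'' claim in part 3. ``Integrality of the natural support combined with interlacing'' is not by itself a proof: the roots of $K_w^{(n)}$ are generally irrational and having fewer than one lattice point between consecutive roots does not immediately contradict anything. The standard argument uses the three-term recurrence together with the discrete orthogonality on $\{0,\dots,n\}$: if two consecutive roots were less than $2$ apart, one could construct a polynomial of degree $<w$ that does not change sign on the lattice support but still has zero inner product with a suitable positive combination, contradicting positivity of the binomial measure. (Equivalently it follows from the Christoffel--Darboux kernel being positive on lattice points.) This is the Levenshtein-style argument you gesture at, but as written your sentence does not quite carry it. Since the proposition is cited, this does not affect the paper, but it is worth knowing the gap in your own reasoning.
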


 \section{Reduction from Sparse to Plain \LPN{}}
\label{sec:reductiontoLPN}

The purpose of this section is  to explain in detail the reduction from sparse to plain \LPN{} and to give an important result about the bias of the resulting \RLPNs samples.
We assume from now on that we are given and $\lbrack n,k \rbrack$-code $\CC$ and a $\yv \in \F_2^n$ such that
$$
\yv \eqdef \cv + \ev, \quad \cv \in \CC, \; |\ev| =t, 
$$ 
and we want to find $\cv$ and $\ev$.
\subsection{The Approach}

First, we randomly select a subset $\sP \subseteq \llbracket 1,n \rrbracket$ of $s$ positions,
where $s$ is a parameter that will be chosen later. Let $\sN \eqdef \llbracket 1,n \rrbracket \setminus \sP$ be the complementary set of $\sP$. Here~$\sP$ corresponds to the entries of $\ev$ we aim to recover. As explained in the introduction, the basic step of the decoding algorithm is to compute a large set $\sW$  of parity-check equations of low weight $w$ on $\sN$ and to compute all the
$\scp{\yv}{\hv}$ with $\hv$ ranging over $\sW$. In \RLPNs decoding, the approach is 
to exploit directly that we have a number $\card{\sW}$ of \LPN{} samples $(\hv_\sP,\scp{\hv}{\yv})$  which can be viewed as an 
\LPNs sample $\left( \av, \scp{ \av}{\sv} + e \right)$ by letting $\av \eqdef \hv_\sP$, $\sv \eqdef \ev_{\sP}$,  $e \eqdef \scp{\hv_{\sN}}{\ev_{\sN}}$. Indeed,
$$\scp{\hv}{\yv} = \scp{\hv}{\cv+\ev} = \underbrace{\scp{\hv}{\cv}}_{=0} + \scp{\hv}{\ev} = \underbrace{\scp{\hv_\sP}{\ev_{\sP}}}_{=\scp{\av}{\sv}}+\underbrace{\scp{\hv_\sN}{\ev_{\sN}}}_{\text{\LPN{} noise $e$}}.$$ Notice that we really have a sparse \LPN{} problem because of the sparseness of the secret $\ev_\sP$ which is not exploited in \cite{CDMT22} and only exploited to verify the solution in the corrected 
\RLPN{} algorithm of \cite{MT23}. The point of this article is to exploit the sparseness of $\ev_{\sP} \in \mathbb{F}_{2}^{s}$ right away in order to reduce the dimension $s$ of the secret. This is obtained by introducing an auxiliary code 
$\CCbis$ of length $s$ and dimension $\kbis$ which will be instrumental for reducing the dimension $s$ of the secret down to $\kbis$. This is obtained as follows. We will assume that $\CCbis$ is chosen as a code with an {\em efficient list-decoding procedure} at distance $\tbis$. 

\begin{definition}[Efficiently list decodable code]\label{def:effListDeco} A code $\CC$ of length $n$ is said to be efficiently decodable code at distance $t$ if it outputs 
for any $\yv \in \F_2^n$ a non empty list of codewords of $\CC$ at distance $t$ in time $2^{o(n)}$.
\end{definition} 
Moreover from now on, we assume that
\begin{notation}\label{notat:codeAux} 
$\CCbis$ is an $[s,\kbis]$ efficiently list decodable for some distance $\tbis$. We denote by
$\decodebis(\zv)$ the set of all codewords of $\CCbis$ at distance $\tbis$ from $\zv \in \F_2^s$, namely
		$$
		\decodebis(\zv) = \left\{ \cv_{\textup{aux}} \in \CCbis \; : \; \left|\cv_{\textup{aux}} + \zv\right| = \tbis  \right\}.
		$$
\end{notation} 

\begin{remark}
	In our instantiation, $\tbis$ is chosen such that $\tbis \approx \dgv(s,\kbis)$, thus we typically have~$\card{\decodebis(\hv_\sP)}=\Th{1}$.
\end{remark}

Now, let us consider $\cvbis \in \decodebis(\hv_{\sP})$, a codeword of $\CCbis$ at distance $\tbis$ of $\hv_{\sP}$. It is readily seen that 
$\scp{\yv}{\hv}$ decomposes as:

$$
\langle \yv,\hv\rangle = \underbrace{\langle \ev_{\sP},\cv_{\textup{aux}} \rangle}_{\mbox{linear comb.}}  + \underbrace{\langle \ev_{\sP},\hv_{\sP} + \cvbis \rangle + \langle \ev_{\sN}, \hv_{\sN} \rangle}_{\mbox{``new'' \LPN{} noise}}. $$

Let us start by defining $\CCbis$ with a generator matrix $\Gmbis \in \F_2^{s \times \kbis}$.
Then, knowing~$\cv_{\textup{aux}} \in \CCbis$ is equivalent to know $\mvbis \in \F_{2}^{\kbis}$ such that 
$$
\cvbis = \mvbis \Gmbis.
$$ 

We can therefore rewrite $\scp{\ev_\sP}{\cvbis}$ as
\begin{equation*}
	\langle \ev_{\sP},\cvbis \rangle = \scp{\ev_\sP}{\mvbis \Gmbis}
	= \langle  \ev_{\sP}  \Gmbis[\transpose], \mvbis \rangle. \label{eq:scalProduct} 
\end{equation*}
We have therefore for each parity-check equation $\hv$ of weight $w$ on $\sN$ that we have computed (\ie \ for all $\hv \in \sW$) and each 
codeword $\mvbis \Gmbis$ of $\CCbis$ at distance $\tbis$ from $\hv_\sP$, an \LPNs sample $(\mvbis,\scp{\yv}{\hv})$ which can be viewed as such by noticing that it is indeed equal to
\begin{equation}\label{eq:LPN}
	\left( \av, \langle \av,\sv \rangle + e \right) \quad \mbox{with} \quad \left\{ \begin{array}{l}
		\av \eqdef \mvbis \\
		\sv \eqdef \ev_{\sP} \Gmbis[\transpose] \\
		e \eqdef \langle \ev_{\sP},\hv_{\sP} + \cvbis \rangle + \langle \ev_{\sN}, \hv_{\sN} \rangle
	\end{array}
	\right. 
\end{equation} 
Notice here that, if $ \decodebis(\hv_{\sP})$ contains more than one element, we can compute such \LPN{} samples for each different $\cvbis \in \decodebis(\hv_{\sP})$.
The secret in the above \LPN{} sample is no longer given by $\ev_{\sP}$ that we want to recover (contrarily
to \RLPN-decoding \cite{CDMT22}), but is given by $\ev_{\sP}  \Gmbis[\transpose] \in \F_{2}^{\kbis}$ which are $\kbis<s$ linear equations involving the $s = \card{\sP}$ bits of the vector $\ev$ we are looking for.

The main advantage of our new technique is that we end up with an \LPN{} problem whose dimension of the secret has decreased from $s$ to $\kbis$. However, the noise has increased; let us describe how it behaves in the following paragraph.

\subsection{Estimating the New Noise} The error $e$ in Equation \eqref{eq:LPN} is biased toward zero and its bias is a function of $n,s,t$ and $u,w,\tbis$ which are respectively 
$$
u \eqdef \left|\ev_{\sN}\right|, \quad w \eqdef \left|\hv_{\sN}\right| \quad \mbox{and} \quad \tbis \eqdef |\hv_{\sP} + \cvbis|.
$$
In the following statement we compute the bias of $e$ over all the possible LPN samples, that is we compute 
$$
\bias_{\left(\hv,\cvbis\right) \drawn \widetilde{\sH}} \left(\langle \ev_{\sP}, \vec{h}_{\textup{aux}}+\cv_{\textup{aux}}  \rangle + \langle \ev_{\sN}, \hv_{\sN} \rangle \right) 
= \frac{1}{\card{\widetilde{\sH}}} \sum_{\left(\hv,\cvbis\right) \in \widetilde{\sH}} (-1)^{\langle \ev_{\sP}, \hv_{\sP} + \cvbis  \rangle + \langle \ev_{\sN}, \hv_{\sN} \rangle} 
$$
where $\widetilde{\sH}$ is defined by
\begin{definition}
	\begin{equation}
			\widetilde{\sH} \eqdef \{ \left( \hv, \cvbis \right) \in \CC^{\perp} \times \CCbis\: : \: |\hv_{\sN}| = w \mbox{ and } |\hv_{\sP} + \cvbis| = \tbis  \}.
	\end{equation}
\end{definition}
It is tempting to conjecture that this bias is well approximated by the bias of a Bernoulli variable 
$X \eqdef \scal{\ev_{\sP}}{\vec{e}_{\textup{aux}}} + \scal{\ev_{\sN}}{\wv}$ where $\vec{e}_{\textup{aux}}$ and $\wv$ 
are respectively drawn uniformly at random in the Hamming spheres
$\cS_{\tbis}^{s}$ and $\cS_{w}^{n-s}$.
The sum $ \scal{\ev_{\sP}}{\vec{e}_{\textup{aux}}} + \scal{\ev_{\sN}}{\wv}$
is performed over $\F_2$ and all the vectors are independent random variables. Because of the independence of the random variables, from the
straightforward fact that $\bias(X_1 + X_2) = \bias(X_1)\bias(X_2)$ when $X_1$ and $X_2$ are independent Bernoulli variables (and the addition is performed modulo $2$). Therefore, 
\begin{eqnarray*}
\bias(X) &= & \bias\left(\scal{\ev_{\sP}}{\ev_{\textup{aux}}}\right) \bias\left(\scal{\ev_{\sN}}{\wv}\right) \\
& = & \frac{K_{\tbis}^{(s)}(t-u)}{\binom{s}{\tbis} } \frac{K_w^{(n-s)}(u)}{\binom{n-s}{w}}\iftoggle{llncs}{\;\;}{\qquad}\text{(by Fact \ref{fact:krawtchouk_bias})}.
\end{eqnarray*}
This kind of approximation was done in the early days of statistical decoding \cite{J01,O06,DT17a}, until \cite[Prop. 3.1]{CDMT22} which has shown that under certain conditions, \ie when there are enough available parity-check equations of weight $w$ (essentially when the number is of order $\om{1/\delta^2}$ where $\delta$ is the bias), then this approximation can indeed be shown to hold with overwhelming probability. \iftoggle{llncs}{}{It turns out that} \cite[Prop. 3.1]{CDMT22} can be adapted to our setting with some additional technicalities and conditions. It can be shown that with overwhelming probability we indeed have
$$
\bias_{\left(\hv,\cvbis\right) \drawn \widetilde{\sH}} \left(\langle \ev_{\sP}, \vec{h}_{\sP} + \vec{c}_{\textup{aux}}  \rangle + \langle \ev_{\sN}, \hv_{\sN} \rangle \right) = (1+o(1)) \bias(X).
$$
This is in essence what the following proposition shows.
\begin{restatable}{proposition}{propbiasCodedRLPN}\label{prop:biasCodedRLPN} 
	Suppose that the parameters are such that for some constant $\alpha > 0$
	\begin{equation}\label{eq:cstPropo35}
		\frac{\binom{n-s}{w} \binom{s}{\tbis}}{2^{k-\kbis}} = \om{\frac{n^\alpha}{\delta^2}} \quad \mbox{ where } \quad  \delta \eqdef \frac{K_w^{(n-s)} \left(u\right)K_{\tbis}^{(s)}(t-u) }{ \binom{n-s}{w} \binom{s}{\tbis}}. 
	\end{equation}
Moreover suppose that
	\begin{equation}\label{eq:constraints} 
		\frac{\binom{n-s}{w} \binom{s}{\tbis}}{2^{k}} = \OO{n^{\alpha}} \quad \mbox{and} \quad \frac{\binom{s}{\tbis}}{2^{s-\kbis}} = \OO{n^{\alpha}}.
	\end{equation}
		Let $\sN$ be a set of $n-s$ positions in $\IInt{1}{n}$ and $\sP \eqdef \llbracket 1,n \rrbracket \backslash \sN$. Let $\ev$ be a vector of weight $u$ on $\sN$ and $t-u$ on $\sP$.
		Let $\CC$ and $\CCbis$ be $[n,k]$ and $[s,\kbis]$ linear codes respectively.
Let us choose $(\cvbis,\hv)$ uniformly at random in
	$$ 
	\widetilde{\sH} = \{ \left( \hv, \cvbis \right) \in \CC^{\perp} \times \CCbis\: : \: |\hv_{\sN}| = w \mbox{ and } |\hv_{\sP} + \cvbis| = \tbis  \}.
	$$
	Then for a proportion $1-o(1)$ of codes $\CCbis$ and $\CC$ we have that
	$$ 
	\bias_{\left(\hv,\cvbis\right) \drawn \widetilde{\sH}}\left(  \langle \cvbis + \hv_{\sP},\ev_{\sP} \rangle +  \langle \ev_{\sN},  \hv_{\sN}\rangle \right) = \delta(1+o(1)).
	$$
\end{restatable}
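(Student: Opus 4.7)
Set $N \eqdef \card{\widetilde{\sH}}$ and $S \eqdef \sum_{(\hv,\cvbis) \in \widetilde{\sH}} (-1)^{\scal{\ev_\sP}{\hv_\sP+\cvbis} + \scal{\ev_\sN}{\hv_\sN}}$, so that the bias in the statement equals $S/N$. The plan is to treat $N$ and $S$ as random variables over the joint random choice of $\CC$ and $\CCbis$, drawn independently and uniformly among $[n,k]$ and $[s,\kbis]$ codes respectively, show that $\mathbb{E}[S] = (1+o(1))\delta \, \mathbb{E}[N]$ in first moment, and then apply the second moment method to $N$ and to the centred variable $S - \delta N$ so that Chebyshev's inequality yields simultaneous concentration, hence $S/N = \delta(1+o(1))$ for a $1-o(1)$ fraction of codes.

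\textbf{First moment.} Expanding $N$ and $S$ as sums of indicators over pairs $(\hv,\cvbis) \in \F_2^n \times \F_2^s$ satisfying the weight constraints and using the independence of the two codes, the probability that a fixed non-zero word lies in $\CC^\perp$ is $(2^{n-k}-1)/(2^n-1) = 2^{-k}(1+o(1))$ and the corresponding probability for $\CCbis$ is $2^{\kbis-s}(1+o(1))$. These factors pull out, and after the change of variable $\zv \eqdef \hv_\sP + \cvbis$ the sum over $\cvbis$ becomes $\sum_{|\zv|=\tbis}(-1)^{\scal{\ev_\sP}{\zv}} = K_{\tbis}^{(s)}(t-u)$ by Fact \ref{fact:krawtchouk_bias}, independently of $\hv$; the remaining sum over $\hv$ factors as $2^s \sum_{|\hv_\sN|=w}(-1)^{\scal{\ev_\sN}{\hv_\sN}} = 2^s K_w^{(n-s)}(u)$. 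Combining, $\mathbb{E}[N] = (1+o(1))\binom{n-s}{w}\binom{s}{\tbis}/2^{k-\kbis}$ and $\mathbb{E}[S] = (1+o(1))\delta\,\mathbb{E}[N]$, as required.

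\textbf{Second moment and concentration.} The variances of $N$ and of $S - \delta N$ are sums over ordered pairs $(\hv,\cvbis),(\hv',\cvbis')$ of joint inclusion probabilities minus the product of marginals. When $\{\hv,\hv'\}$ and $\{\cvbis,\cvbis'\}$ are both $\F_2$-linearly independent the joint probability factorises as $2^{-2k}\cdot 2^{2(\kbis-s)}(1+o(1))$, which cancels the subtracted product and contributes only $o(\mathbb{E}[N]^2)$. The remaining contribution comes from degenerate pairs: diagonal cases $\hv=\hv'$, $\cvbis=\cvbis'$, and the coupled cases where $\hv+\hv'$ is forced into $\CC^\perp$ or $\cvbis+\cvbis'$ into $\CCbis$. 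A direct enumeration, together with the bounds \eqref{eq:constraints}, gives $\mathrm{Var}(N) = \OO{n^\alpha \mathbb{E}[N]}$ and $\mathrm{Var}(S-\delta N) = \OO{n^\alpha \mathbb{E}[N]}$. Using the hypothesis \eqref{eq:cstPropo35} that $\mathbb{E}[N]\delta^2 = \om{n^\alpha}$, Chebyshev's inequality gives $N = (1+o(1))\mathbb{E}[N]$ and $|S-\delta N| = o(\delta\mathbb{E}[N])$ with probability $1-o(1)$, whence $S/N = \delta(1+o(1))$.

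\textbf{Main obstacle.} The technical heart is the variance bound above: one has to exhaust every type of degeneracy between $(\hv,\cvbis)$ and $(\hv',\cvbis')$ and show that the worst-case contribution is at most $n^\alpha \mathbb{E}[N]$. The first condition in \eqref{eq:constraints} controls the $\hv'=\hv$ (and more generally $\hv'\in\langle\hv\rangle$) collisions via $\binom{n-s}{w}\binom{s}{\tbis}/2^k$, while the second controls the $\cvbis'=\cvbis$ collisions via $\binom{s}{\tbis}/2^{s-\kbis}$. These coupled collisions do not arise in \cite[Prop.~3.1]{CDMT22}, so although the overall strategy is parallel to theirs, the bookkeeping is genuinely more involved because two independent code randomnesses have to be combined in the same second moment estimate.
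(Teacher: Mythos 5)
Your plan is correct and follows essentially the same route as the paper: compute the first moment of the (signed and unsigned) sample counts, bound the second moment via a case analysis of degenerate index pairs using precisely the two constraints in \eqref{eq:constraints}, and conclude by Chebyshev using \eqref{eq:cstPropo35}. The paper's proof tracks the pair of counts $E_0,E_1$ (the number of samples with each inner-product parity) and applies Bienaym\'e--Tchebychev to each separately before forming the ratio $(E_0-E_1)/(E_0+E_1)$, whereas you track $N=E_0+E_1$ and the centred combination $S-\delta N$; this is a linear change of coordinates and the underlying variance computation (Lemma~\ref{lemma:varianceEb} in the paper) is the same in both.
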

\iftoggle{llncs}{This proposition is proved in \S \ref{app:a} of the appendix.}{
	\begin{proof}
		See Appendix \S\ref{app:a} .
	\end{proof}
}

	 \section{The \nRLPN \ Algorithm} \label{ss:doubleRLPN}
We first going to explain the four main ingredients of the \nRLPNs algorithm:
\begin{itemize}\setlength{\itemsep}{5pt}
\item computing suitable LPN samples,
\item FFT decoding,
\item recovering $\ev_\sP$,
\item the bet ensuring that there are $u$ errors on $\sN$ at some point.
\end{itemize}
Let us detail each of these ingredients (or steps of the algorithm).
\newline

{\bf \noindent Computing the LPN Samples.}
First, our algorithm computes a certain number of LPN samples  by computing a set $\sW$ of elements of $\CC^{\perp}$ of weight $w$ on $\sN$ by using a procedure {\tt ParityCheckEquations}$(w,\sN,\CC)$ that uses low-weight codewords search techniques to produce a bunch of parity-check equations of $\CC$ of weight $w$ on $\sN$. Then a random code $\CCbis$ is chosen in a family of codes over $\F_2^{\card{\sP}}$ and dimension $\kbis$ that we know how to decode efficiently at distance $\tbis$. 
For an element $\hv$ in $\sW$, each $\hv_\sP$ is decoded at distance $\tbis$  to finally compute the set $\sH$ containing pairs $\left(\hv,\cvbis\right)$ in $\sW \times \CCbis$ satisfying  $|\hv_{\sN}|=w$ and $|\hv_{\sP} + \cvbis|=\tbis$. 
Algorithm \ref{alg:sample} gives the pseudo-code of the procedure.
\begin{algorithm}[h!]
\begin{algorithmic}[1]
\caption{The function computing the LPN samples associated to $\sN$ \label{alg:sample}}
\Function{LPN-Samples}{$\CC,\sN$}
\State $\sP \gets \IInt{1}{n} \setminus \sN$
\State $\sW \gets \Call{ParityCheckEquations}{w,\sN,\CC}$\label{state:PCE}
\Comment{returns a set of parity-check equations of $\CC$ of weight $w$ on $\sN$}
\State$\CCbis \drawn \sF(\sP,\kbis,\tbis)$
\Comment{returns a code $\CCbis$ in a family of codes $\sF$ over $\F_2^{\card{\sP}}$ and dimension $\kbis$ that we know how to decode efficiently at distance $\tbis$}
\State $\sH \gets \emptyset$
\ForAll{$\hv \in \sH$}
\State $\sH \gets \sH \cup \{\hv\} \times \Call{Decode}{\hv_\sP,\CCbis,\tbis}$\label{state:computeH}
\Comment{$\Call{Decode}{\hv_\sP,\CCbis,\tbis}$ outputs a set of codewords of $\CCbis$ at distance $\tbis$ of $\hv_\sP$} \label{state:Dec} 
\EndFor 
\State{$\Gmbis \gets$ generating matrix of $\CCbis$}\\
\Return{$\left( \sH, \Gmbis \right)$}
\EndFunction
\end{algorithmic}
\end{algorithm}

{\noindent \bf FFT Decoding.}
Computing $\sH$ gives a number $\card{\sH}$ of \LPNs samples, which from the interpretation given in Equation \eqref{eq:LPN}, leads us to think that 
the right choice $\xv \in \F_2^\kbis$ for $\ev_{\sP}\Gmbis[\transpose]$ is the one for which \iftoggle{llncs}{$\bias_{(\hv,\mvbis \Gmbis) \drawn \sH} \left( \scp{\yv}{\hv} + \scp{\xv}{\mvbis}\right)$}{
$$
\bias_{(\hv,\mvbis \Gmbis) \drawn \sH} \left( \scp{\yv}{\hv} + \scp{\xv}{\mvbis}\right)
$$}would be given by  Proposition \ref{prop:biasCodedRLPN}. It should namely be of order $\delta$ which is defined in this proposition. Natural candidates for being equal to $\ev_\sP\Gmbis[\transpose]$ are those for which this bias is say $\geq \delta/2$. This leads to compute all those biases. This can be done rather efficiently by factoring the common 
computations made for computing all those biases for $\xv \in \F_2^\kbis$ by an FFT trick which is standard  in the LPN context. It dates back in this context to \cite{LF06}, but it can be traced back to decoding the first-order Reed-Muller code (which is another way to view the decoding task
in case of the LPN problem) which was already suggested in \cite{G66}. 
The link between the bias of the random variables we are interested in and the Fourier transform is based on the following simple observation that follows right away from the very definition of the Fourier transform. Before we give this observation, let us bring in a notation that will be helpful for describing it and which will be used throughout the paper from now on.
\begin{notation}
For any $\yv \in \F_2^n$, $\sH \subseteq \CC^\perp \times \CCbis$ and a generator matrix $\Gmbis$ of $\CCbis$ we define the function
$f_{\yv,\sH,\Gmbis}$ on $\F_2^\kbis$ by
\iftoggle{llncs}{
	\begin{eqnarray} \label{eq:def_f_H}
		f_{\yv,\sH,\Gmbis[]}: \F_2^{\kbis} & \rightarrow & \RR \nonumber \\
		\xv & \mapsto & \sum_{\hv : \left(\hv, \xv \Gmbis \right) \in \sH} (-1)^{\langle \yv, \hv\rangle} \;\; \text{if this sum is not empty,}\qquad \label{eq:def_f}\\
                \xv & \mapsto & \qquad\qquad\qquad0 \qquad\qquad\text{otherwise.} \nonumber
	\end{eqnarray}
}{
	\begin{eqnarray} \label{eq:def_f_H}
	f_{\yv,\sH,\Gmbis[]}: \F_2^{\kbis} & \rightarrow & \RR \nonumber \\
	\vec{u} & \mapsto & \left\{\renewcommand{\arraystretch}{1.2}\begin{array}{cl} 
	\sum_{\hv : \left(\hv, \vec{u} \Gmbis \right) \in \sH} (-1)^{\langle \yv, \hv\rangle}\label{eq:def_f} & \text{if this sum is not empty,} \\
	0 & \text{otherwise.}
	\end{array}\right. 
\end{eqnarray}
}

\end{notation}

With this notation at hand, the link between the biases and the Fourier transform of this function is given by the following lemma\iftoggle{llncs}{ proved in Appendix \ref{app:proofLemmaFund}.}{.} 
	\begin{restatable}{lemma}{lemFundamental}\label{lem:fundamental}
We have for any $\uv \in \F_2^\kbis$ and any $\xv \in \F_2^s$ such that $\xv \Gmbis[\transpose] = \uv$
		\begin{eqnarray*}
			\widehat{f_{\yv,\sH,\Gmbis}} \left(\uv\right)
&= & \card{\sH}\; \bias_{(\hv,\mvbis \Gmbis)\drawn \sH}(\langle \yv, \hv \rangle  + \langle \uv, \mvbis \rangle)\\
 & = & \card{\sH} \; \bias_{(\hv,\cvbis)\drawn \sH}(\langle \yv, \hv \rangle  + \langle \xv, \cvbis \rangle).
			\end{eqnarray*}
	\end{restatable}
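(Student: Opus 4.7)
The plan is to prove this lemma by direct computation, unfolding the definition of the Fourier transform and of $f_{\yv,\sH,\Gmbis}$. First I would expand
\[
\widehat{f_{\yv,\sH,\Gmbis}}(\uv) \;=\; \sum_{\vec{v} \in \F_2^{\kbis}} f_{\yv,\sH,\Gmbis}(\vec{v})\,(-1)^{\scp{\uv}{\vec{v}}}
\]
using the defining formula \eqref{eq:def_f}, so the inner sum over $\hv$ with $(\hv,\vec{v}\Gmbis)\in\sH$ absorbs into a single sum over all pairs in $\sH$. Since every element of $\sH$ is of the form $(\hv,\mvbis\Gmbis)$ for some $\mvbis \in \F_2^{\kbis}$ (note that $\Gmbis$ is full rank, so $\mvbis$ is uniquely determined by $\mvbis\Gmbis$), one obtains
\[
\widehat{f_{\yv,\sH,\Gmbis}}(\uv) \;=\; \sum_{(\hv,\mvbis\Gmbis) \in \sH} (-1)^{\scp{\yv}{\hv} + \scp{\uv}{\mvbis}}.
\]
Then normalizing by $\card{\sH}$ matches exactly the definition of $\bias_{(\hv,\mvbis\Gmbis)\drawn\sH}(\scp{\yv}{\hv}+\scp{\uv}{\mvbis})$, yielding the first equality.

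For the second equality, I would use the adjoint identity for the transpose: if $\xv\Gmbis[\transpose] = \uv$, then for every $\mvbis \in \F_2^{\kbis}$,
\[
\scp{\uv}{\mvbis} \;=\; \scp{\xv\Gmbis[\transpose]}{\mvbis} \;=\; \scp{\xv}{\mvbis\Gmbis} \;=\; \scp{\xv}{\cvbis}
\]
with $\cvbis \eqdef \mvbis\Gmbis$. Reparametrizing the sum indexed by $(\hv,\mvbis\Gmbis)\in \sH$ as a sum indexed by $(\hv,\cvbis)\in \sH$ then gives the second claimed equality.

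There is no real obstacle here: the statement is essentially a bookkeeping identity between the Fourier transform of a counting-with-signs function and a bias, together with the standard $\scp{\xv\Gmbis[\transpose]}{\mvbis}=\scp{\xv}{\mvbis\Gmbis}$ manipulation. The only point worth being careful about is the case where the inner sum in \eqref{eq:def_f} is empty, so that $f_{\yv,\sH,\Gmbis}(\vec{v})=0$ by convention; but this is consistent with the rewriting as a sum over all of $\sH$ because such $\vec{v}$ simply contribute nothing to that sum. Thus the proof reduces to a few lines of calculation with no independence or probabilistic assumption required.
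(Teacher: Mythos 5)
Your proof is correct and follows essentially the same route as the paper's: expand the Fourier transform by the definition of $f_{\yv,\sH,\Gmbis}$, collapse the double sum into a single sum over $\sH$, then apply the adjoint identity $\scp{\xv\Gmbis[\transpose]}{\mvbis} = \scp{\xv}{\mvbis\Gmbis}$ to reparametrize. The two remarks you add (injectivity of $\mvbis \mapsto \mvbis\Gmbis$ because $\Gmbis$ has full rank, and consistency of the empty-sum convention) are implicit in the paper and worth making explicit, but they do not change the argument.
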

	\iftoggle{llncs}{}{
		\begin{proof} We have the following computation,
		\begin{align*}
			\widehat{f_{\yv,\sH,\Gmbis[]}} \left(\uv\right) &= \sum_{\vv \in \F_2^\kbis} (-1)^{\scp{\uv}{\vv}} f_{\yv,\sH,\Gmbis[]}(\vv) \\
			&= \sum_{(\hv,\vv)\in \CC^\perp\times \F_2^\kbis : \left(\hv,\vv \Gmbis \right) \in \sH} (-1)^{\langle \yv, \hv \rangle + \langle \uv, \vv \rangle} && \mbox{(Equations \eqref{eq:def_fourier} and \eqref{eq:def_f})}\\
			&=  \sum_{(\hv,\vv)\in \CC^\perp\times \F_2^\kbis : \left(\hv,\vv \Gmbis \right) \in  \sH} (-1)^{\langle \yv, \hv \rangle + \langle \xv \Gmbis[\transpose], \vv \rangle} \\
			&= \sum_{(\hv,\vv)\in \CC^\perp\times \F_2^\kbis:\left(\hv,\vv \Gmbis \right) \in  \sH} (-1)^{\langle \yv, \hv \rangle + \langle \xv, \vv \Gmbis \rangle} \\
			&=  \sum_{\left(\hv,\cvbis \right) \in  \sH} (-1)^{\langle \yv, \hv \rangle + \langle \xv, \cvbis \rangle}
		\end{align*}	
		which concludes the proof by definition of the bias. 
	\end{proof}
	}

\begin{remark}
The probabilistic notation hides the fact that computing all these Fourier coefficients and taking the maximum of them allows to decode in a certain code.
Indeed let,
\iftoggle{llncs}{$
\DD \eqdef \left\{ (\scp{\uv}{\mvbis})_{(\hv,\mvbis\Gmbis) \in \sH}: \uv \in \F_2^\kbis \right\}
$}
{$$
	\DD \eqdef \left\{ (\scp{\uv}{\mvbis})_{(\hv,\mvbis\Gmbis) \in \sH}: \uv \in \F_2^\kbis \right\}
$$}
which is under very mild assumptions a linear code of dimension $\kbis$ and length $\card{\sH}$. If we let $c(\uv) \eqdef (\scp{\uv}{\mvbis})_{(\hv,\mvbis\Gmbis) \in \sH}$ be the codeword associated to $\uv$ and $\vv = (\scp{\yv}{\hv})_{(\hv,\mvbis\Gmbis) \in \sH}$ then since 
$$
\card{\sH}\; \bias_{(\hv,\mvbis \Gmbis)\drawn \sH}(\langle \yv, \hv \rangle  + \langle \uv, \mvbis \rangle) = 
\card{\sH} - 2 |\vv + c(\uv)|,
$$
it follows from Lemma \ref{lem:fundamental} that $c(\uv_0)$ is the codeword of $\DD$ which is the closest to $\vv$, where $\uv_0 = \arg\max \widehat{f_{\yv,\sH,\Gmbis}} \left(\uv\right)$. Therefore, vector $\uv_0$ is here a likely candidate for being equal to $\ev_\sP\Gmbis[\transpose]$ when $\sH$ is big enough.
\end{remark}

We give the pseudo-code of the FFT decoding algorithm producing a list $\sS$ of putative candidates for being equal to $\ev_\sP \Gmbis[\transpose]$ 
in Algorithm \ref{algo:FFT}.

\begin{algorithm}[h!]
\caption{FFT algorithm  producing a list of candidates for $\ev_\sP \Gmbis[\transpose]$\label{algo:FFT} }
\begin{flushleft}
\textbf{Input:} $\sH$, $\Gmbis$\\
\textbf{Output:} $\sS$ a list of candidates for $\ev_\sP \Gmbis[\transpose]$
\end{flushleft}
\begin{algorithmic}[1]
\Function{FFT-Decode}{$\sH,\;\Gmbis$}
\State $\widehat{f_{\yv,\sH,\Gmbis }} \gets$\Call{FFT}{$f_{\yv,\sH,\Gmbis}$}
\State $\sS \gets  \left\{ \uv \in \F_2^{\kbis} \: : \widehat{f_{\yv,\sH,\Gmbis }}(\uv) > \frac{\delta}{2} \card{\sH} \right\} $ \label{lst:line:set_candidates} \iftoggle{llncs}{
	\Comment{{\scriptsize $\delta \eqdef \frac{K_{w}^{(n-s)}(u) K_{\tbis}^{(s)}(t-u)}{\binom{n-s}{w} \binom{s}{\tbis}}$}}}{\Comment{ $\delta \eqdef \frac{K_{w}^{(n-s)}(u) K_{\tbis}^{(s)}(t-u)}{\binom{n-s}{w} \binom{s}{\tbis}}$}}
\State \Return $\sS$
\EndFunction
\end{algorithmic}
\end{algorithm}
The point of using the FFT for computing all these biases is that its complexity is of order $\OO{\kbis 2^\kbis F}$ where $F$ is the complexity of computing $f_{\yv,\sH,\Gmbis}$ which can be bounded by $\OO{\max\left(1,\frac{\card{\sH}}{2^\kbis}\right)}$. On the other hand, if we had computed directly all those biases we would have a much bigger complexity of $\OO{\card{\sH}2^\kbis}$  because $\sH$ is of exponential size for the problem at hand.
\medskip

{\noindent \bf Recovering $\ev_\sP$ and then $\ev$.}
If we have a candidate ${\sv}$ for $\ev_\sP\Gmbis[\transpose]$, then since we expect $|\ev_\sP|=t-u$, recovering $\ev_\sP$ from the equality ${\sv} = \ev_\sP \Gmbis[\transpose]$ is nothing but solving a decoding problem, namely to decode $t-u$ errors in the code of parity-check matrix $\Gmbis$, \ie $\CC_{\textup{aux}}^\perp$. In other words, we have to solve $\textsf{DP}(s,s-\kbis,t-u)$. 
This approach can be generalized by taking  $\Nbis$ different sets of \LPNs samples associated respectively to the codes $\CCbis[(1)], \cdots, \CCbis[(\Nbis)]$. For $i$ in $\IInt{1}{\Nbis}$, let $\Gmbis[(i)]$ be the generating matrix which is chosen for $\CCbis[(i)]$. Then each of these sets of \LPNs samples brings candidates for $\ev_\sP\Gmbis[\transpose]$. By choosing an $\Nbis$-tuple of candidates $(\sv^{(1)},\cdots,\sv^{(\Nbis)})$, where $\sv^{(i)}$ is a candidate for $\Gmbis[(i)] \transp{\ev_\sP}$ (we have taken the transpose to have a more readable form) given by the $i$-th \LPNs samples set, we get to solve the 
set of simultaneous equations 
$$
\transp{(\sv^{(1)})} = \Gmbis[(1)]\transp{\ev_\sP},\cdots, \; \transp{(\sv^{(\Nbis)})} = \Gmbis[(\Nbis)] \transp{\ev_\sP}
$$
with the constraint $|\ev_\sP|=t-u$. In other words if we set 
\iftoggle{llncs}{
\begin{align*}
\transp{\Hm} \eqdef \begin{pmatrix}
\Gmbis[(1)]^{\transpose} & \cdots &
\Gmbis[(\Naux)]^{\transpose}
\end{pmatrix} &\quad &
\sv \eqdef \begin{pmatrix} \sv^{(1)} & \hdots & \sv^{(\Nbis)}\end{pmatrix}
\end{align*}
}
{
$$
\transp{\Hm} \eqdef \begin{pmatrix}
	\Gmbis[(1)]^{\transpose} & \cdots &
	\Gmbis[(\Naux)]^{\transpose}
\end{pmatrix} \quad \mbox{and} \quad 
\sv \eqdef \begin{pmatrix} \sv^{(1)} & \hdots & \sv^{(\Nbis)}\end{pmatrix}
$$
}
then we have to solve the decoding problem $\Hm \transp{\ev_\sP} = \transp{\sv}$ with $|\ev_\sP|=t-u$, \iftoggle{llncs}{that is}{in other words we have to solve}
$\textsf{DP}\left(s,s-\Nbis \kbis,t-u\right)$.  We are going to choose a simple ISD algorithm to solve this problem, namely 
Dumer's algorithm \cite{D89} which is a good compromise between efficiency and simple formula for its complexity. We denote by 
\Call{Decode-Dumer}{$\Hm,\sv,t$} the call to Dumer's algorithm to decode the syndrome $\sv$ of an error of weight $t$ associated to the 
parity-check matrix $\Hm$. We assume here that this call produces {\em all} solutions to this decoding problem.

Once we have recovered $\ev_\sP$, say we know that it is equal to some $\vv$ of weight $t-u$ in $\F_2^s$, we face a much simpler problem. We namely  have to solve the problem
$$\yv = \cv + \ev,\; \cv \in \CC,\; \ev_\sP = \vv,\;|\ev_\sN|=u.$$
This is nothing but $\textsf{DP}(n-s,k-s,u)$ which is much simpler. Here we might just use \iftoggle{llncs}{$\Call{Decode-Dumer}{}$}{algorithm ~$\Call{Decode-Dumer}{}$} on it. 
Let us call $\Call{Solve-SubProblem}{\CC,\sN,\yv,\vv,u}$ the routine which performs this task and which returns a candidate for $\ev_\sN$ 
and returns $\bot$ otherwise.
If this problem has no solution we have of course a false candidate for $\ev_\sP$ and if we have a solution, then we have solved our decoding problem. To verify that we have indeed such a decoding problem,  suppose without loss of generality that $\sP = \llbracket 1,s \rrbracket$\iftoggle{llncs}{,}{ and} $\sN = \llbracket s+1,n \rrbracket$. We can also assume that $\CC_{\sP}$ is of full rank dimension $s$ (this holds with overwhelming probability).  We can compute $\Gm$ a generator matrix of $\CC$ of the form $\Gm =  \begin{pmatrix}
	\vec{Id}_{s} & \Rm \\
	\vec{0}_{k-s} & \Rm' 
\end{pmatrix}$
by applying partial Gaussian elimination on a generator matrix of $\CC$.
Then $\Call{Solve-SubProblem}{\CC,\sN,\yv,\xv,u}$ decodes at distance $u$ the word $\yv' \eqdef \yv_{\sN} - (\yv_{\sP} - \xv) \Rm$ onto the code $\CC^{\sN}$ of generator matrix $\Rm'$.

 With this notation at hand, the pseudo-code describing the algorithm for recovering $\ev_\sP$ and then returning $\ev$ if a suitable solution is found, is given in Algorithm \ref{algo:recovering_ep}.
 \medskip

\begin{algorithm}[h!]
\caption{algorithm recovering $\ev_\sP$ and then $\ev$ \label{algo:recovering_ep}} 
\begin{flushleft}
\textbf{Input:} $\sS^{(1)},\cdots,\sS^{(\Nbis)}\subset \F_2^{\kbis}$, $\Gmbis[(1)],\cdots,\Gmbis[(\Nbis)]\in \F_2^{\kbis \times s}$
\end{flushleft}
\begin{algorithmic}[1]
\Function{Recover-$\ev$}{$\sS^{(1)},\cdots,\sS^{(\Nbis)},\Gmbis[(1)],\cdots,\Gmbis[(\Nbis)]$}
\State $\transp{\Hm} \gets \begin{pmatrix}
	\Gmbis[(1)]^{\transpose} & \cdots &
	\Gmbis[(\Naux)]^{\transpose}
\end{pmatrix}$
\For{$ \left(\sv^{(1)}, ..., \sv^{(N_{\textup{aux}})} \right)\in  \prod_{j=1}^{N_{\textup{aux}}}\mathcal{S}^{(j)}$}
		\State $\sv \gets \begin{pmatrix} \sv^{(1)} & \hdots & \sv^{(\Nbis)}\end{pmatrix}$
        \ForAll{$\vv \in \Call{Decode-Dumer}{\Hm,\sv,t}$}
		\State $\ev' \gets \Call{Solve-SubProblem}{\CC,\sN,\yv,\vv,u}$
		\If{$\ev'  \neq \bot$} \label{lst:line:redo1}
		\State \Return $\ev$ such that $\ev_{\sP} = \xv$ and $\ev_{\sN} = \ev'$\label{lst:line:redo2}
		\EndIf \label{lst:line:redo3}
		\EndFor
		\EndFor
	
\EndFunction
\end{algorithmic}
\end{algorithm}

{\bf \noindent Testing Enough Candidates $\sN$.}
Now, it may also happen that when choosing  $\sN$, we might not have that $|\ev_\sN|=u$. For this, we have to check enough candidates. The probability that a set $\sN$ of size $n-s$ satisfies this property is given by 
\iftoggle{llncs}{$\Psucc \eqdef \frac{\binom{t}{u}\binom{n-t}{n-u-s}}{\binom{n}{n-s}}$.}{$$\Psucc \eqdef \frac{\binom{t}{u}\binom{n-t}{n-u-s}}{\binom{n}{n-s}}.$$} Performing a number $\Niter$ of trials for $\sN$ which is of order $\Th{1/\Psucc}$ will succeed with constant probability.
Putting all these ingredients together leads to the whole \nRLPNs algorithm given in Algorithm \ref{alg:codedRLPN}.
\medskip

\iftoggle{llncs}{}{\mbox{}\medskip}
\begin{algorithm}
	\caption{\nRLPN \ decoder \label{alg:codedRLPN}}
	\begin{flushleft}
\textbf{Input:} $\yv$, $t$, $\CC$ an $\lbrack n,k \rbrack$-code \\ 
\textbf{Parameters:} $s, u,\kbis,\tbis, N, \left( \CCbis[(j)] \right)_{j \in \mathcal{F}}$ \\ 
\textbf{Output:} $\ev$ such that $|\ev|=t$ and $\yv - \ev \in \CC$.
	\end{flushleft}
	\begin{algorithmic}[1]
		\Function{\nRLPN}{$\yv$, $\CC$, $t$}
\For{$i$ from $1$ to $\Niter$}  \label{lst:line:while}
		\Comment{$\Niter$ such that w.o.p one iteration is s.t $|\ev_{\sN}|=u$}
		\State $\sN \stackrel{\$}{\gets} \left\{\sI \subseteq \llbracket 1, n \rrbracket \ : \ \card{\sI} = n-s\right\}$
\Comment{Hope that $|\ev_{\sN}|=u$}
\For{$j = 1,\dots,N_{\textup{aux}}$}  \label{lst:line:while_auxcode} 
                \State $(\sH^{(j)},\Gmbis[(j)]) \gets \Call{LPN-Samples}{\CC,\sN} $
                \State $\cS^{(j)} \gets \Call{FFT-Decode}{\sH^{(j)},\Gmbis[(j)]} $
		\EndFor
                \State \Call{Recover-$\ev$}{$\sS^{(1)},\cdots,\sS^{(\Nbis)},\Gmbis[(1)],\cdots,\Gmbis[(\Nbis)]$}
		\EndFor
		\EndFunction
	\end{algorithmic}
\end{algorithm}

{\noindent \bf Complexity of the Algorithm.}
It is sufficient to take $\Nbis=\OO{1}$ for the parameters we are interested in which will correspond to a choice of $\kbis$ of the form
$\kbis = \Om{s}$. With this choice we immediately get the following complexity for the \nRLPNs algorithm
\begin{proposition}{}\label{prop:complexity}
The complexity $C$ of the \nRLPNs algorithm is given by
\begin{eqnarray*}
C & = & \OOt{\frac{1}{\Psucc}\left(\Teq + N \Tdec + \kbis \max\left( 2^\kbis,\card{\sH}\right)+S^{\Nbis} \Tisd\right)}\;\;\text{where}\\
\Psucc & = & \frac{\binom{t}{u}\binom{n-t}{n-u-s}}{\binom{n}{n-s}}\\
\Tisd & = & \Tdumer(s,s-\Nbis \kbis,t-u)+\Nisd \Tdumer(n-s,k-s,u) 
\end{eqnarray*}
and $\Teq$ is the time complexity of $\Call{ParityCheckEquations}{}$, 
$N$ is the number of parity-check equations produced by this procedure, $\Tdec$ is the complexity of 
decoding $\CCbis$, \ie it is the complexity of a call to $\Call{Decode}$, $S$ is the size of a list output by
$\Call{FFT-Decode}$, $\Nisd$ is the number of solutions to the $(s,s-\Nbis \kbis,t-u)$ decoding problem and $\Tdumer(n,k,t)$ stands for the complexity of solving the $(n,k,t)$ decoding problem with Dumer's algorithm when we want to find all solutions to the problem.
\end{proposition}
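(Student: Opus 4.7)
The plan is to walk through Algorithm \ref{alg:codedRLPN} stage by stage, bounding the cost of each sub-routine and multiplying by the number of times it is invoked. Because everything is deterministic once the random choices have been made, it suffices to account for each line of the pseudo-code separately and then combine.

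First, the outer loop on line \ref{lst:line:while}. A randomly chosen $\sN$ of size $n-s$ satisfies $|\ev_{\sN}|=u$ with probability exactly $\Psucc$, so a standard coupon/Chernoff argument shows that taking $\Niter = \Th{1/\Psucc}$ trials is enough to hit such an $\sN$ with constant probability; this is what contributes the leading $1/\Psucc$ factor. Inside a single outer iteration, the inner loop on line \ref{lst:line:while_auxcode} is run $\Nbis = \OO{1}$ times, so its multiplicative cost can be absorbed into the $\OOt{}$. Each inner iteration first invokes $\Call{LPN-Samples}{}$: producing the parity-check set via $\Call{ParityCheckEquations}{}$ costs $\Teq$ and yields $N$ equations, and then each $\hv_{\sP}$ is decoded in $\CCbis$ at distance $\tbis$ for a cost $\Tdec$ per equation. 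This gives the summand $\Teq + N\Tdec$.

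Next, $\Call{FFT-Decode}{}$ has to compute $\widehat{f_{\yv,\sH,\Gmbis}}$ on the whole of $\F_{2}^{\kbis}$. Observe that each pair $(\hv,\cvbis)\in\sH$ is assigned to exactly one index $\uv\in\F_{2}^{\kbis}$ (namely the one with $\uv\Gmbis=\cvbis$), so building the table of values of $f_{\yv,\sH,\Gmbis}$ takes $\OO{\card{\sH}+2^{\kbis}}$ operations. A standard Walsh--Hadamard transform on $\F_{2}^{\kbis}$ then takes $\OO{\kbis 2^{\kbis}}$ operations, and thresholding at $\delta\card{\sH}/2$ is linear in $2^{\kbis}$. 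Altogether this yields the $\kbis \max(2^{\kbis},\card{\sH})$ term.

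Finally, $\Call{Recover-$\ev$}{}$ enumerates at most $\prod_{j}\card{\sS^{(j)}}\le S^{\Nbis}$ tuples $(\sv^{(1)},\dots,\sv^{(\Nbis)})$. For each tuple we run $\Call{Decode-Dumer}{}$ on the stacked syndrome problem, an instance of $\textsf{DP}(s,s-\Nbis\kbis,t-u)$ costing $\Tdumer(s,s-\Nbis\kbis,t-u)$, and for each of its $\Nisd$ returned solutions $\vv$ we run $\Call{Solve-SubProblem}{}$ which is essentially one call to Dumer on $\textsf{DP}(n-s,k-s,u)$, costing $\Tdumer(n-s,k-s,u)$. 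Summing gives the $\Tisd$ factor defined in the statement, so the whole recovery step costs $S^{\Nbis}\Tisd$ per outer iteration. Adding the three contributions of a single outer iteration and multiplying by $\Niter = \Th{1/\Psucc}$ (and absorbing the $\Nbis=\OO{1}$ and polynomial factors into $\OOt{}$) yields the announced bound. The only genuinely delicate bookkeeping is the FFT step, where one must notice that the table of $f_{\yv,\sH,\Gmbis}$ is filled by a single pass through $\sH$ rather than by $2^{\kbis}$ independent evaluations; everything else is a straightforward sum-of-costs argument.
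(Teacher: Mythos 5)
Your proposal is correct and matches the approach the paper implicitly takes (the paper states the proposition follows "immediately" from the algorithm and does not spell out a proof). You walk through Algorithm \ref{alg:codedRLPN} stage by stage, multiply each per-iteration cost by $\Niter=\Th{1/\Psucc}$, absorb $\Nbis=\OO{1}$ into the $\OOt{}$, and correctly identify all four contributions: $\Teq + N\Tdec$ from $\Call{LPN-Samples}{}$, the FFT cost (your $\OO{\card{\sH}+\kbis 2^{\kbis}}$ is a slightly sharper bound than the stated $\kbis\max(2^{\kbis},\card{\sH})$, hence consistent with the claim), and $S^{\Nbis}\Tisd$ for $\Call{Recover-$\ev$}{}$, with $\Tisd$ decomposing as one Dumer call on $\textsf{DP}(s,s-\Nbis\kbis,t-u)$ plus $\Nisd$ calls to Dumer on $\textsf{DP}(n-s,k-s,u)$.
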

The asymptotic complexity formula for the \nRLPNs algorithm, including also the constraints required on our parameters, is given in Appendix \ref{app:complexity} Proposition \ref{prop:asym_comp_doubleRLPN} which was used to generate Figure \ref{fig:compISDDoubleRLPNRLPN}. %
 
\section{Estimating the Number of False Candidates}\label{sec:bias}

		The goal of this section is to introduce the main tool necessary to make a rigorous analysis of Algorithm \ref{alg:codedRLPN} and to 
give a formula for the number of false candidates which is proved by making a certain conjecture whose validity has then been verified experimentally.

	\subsection{Main Duality Tool}
	 The fundamental quantity when analyzing dual attacks is the bias of $\langle \yv, \hv \rangle  + \langle \xv, \cvbis \rangle$ which tells us whether $\xv \Gmbis[\transpose]$  has to be put in the list $\sS$ of candidates output by Algorithm \ref{algo:FFT}. While initially standard independence assumptions were made to analyze its distribution \cite[Ass. 3.7]{CDMT22} (which are very similar to analyze dual attacks in lattice based cryptography), recently \cite{MT23}  showed that these assumptions were erroneous and, gave for the first time a dual expression \cite[Prop 1.]{MT23} for this quantity which seems a key step to understand  its behavior and gave with an additional assumption a rigorous analysis of the \RLPNs dual attack. The proposition given there to estimate the number of false candidates turns out to match accurately the experiments.
  The following proposition is a generalization of \cite[Prop 1.]{MT23} and gives a dual expression for the aforementioned bias. 
		\begin{restatable}{proposition}{propDualityBias} \label{prop:exp_bias}
			Let $\sP$ and $\sN$ be two complementary subsets of $\llbracket 1 , n \rrbracket$ of size $s$ and $n-s$ respectively. Let $\CC$ be an~$[n,k]$-code such that $\CC_{\sP}$ is of dimension $s$ and let $\CCbis$ be an~$[s,\kbis]$-code. We have for any $\xv \in \F_2^s$
		\begin{equation} \label{eq:prop_bias_doubleRLPN}
			\bias_{\left(\hv,\cvbis \right) \drawn \widetilde{\sH}} \left(\langle \yv, \hv \rangle  + \langle \xv, \cvbis \rangle\right) = \frac{1}{2^{k - \kbis}}\: \frac{1}{\card{\widetilde{\sH}}} \sum_{i = 0}^{n-s} \sum_{j = 0}^{s}  N_{i,j} K_w^{(n-s)} \left(i\right) K_{\tbis}^{(s)} \left(j\right)  
		\end{equation}
		where
		\begin{eqnarray*}
			N_{i,j} &\eqdef &\card{ \left\{  \left(\rv,\cv^{\sN}\right) \in  (\xv +\CCbisperp)  \times \CC^{\sN} \: : \left| \rv \right|=j \mbox{ and } \left| \left( \rv + \ev_{\sP}\right) \Rm + \ev_{\sN} + \cv^{\sN} \right| = i\right\} },\\
\widetilde{\sH} & = &\{ \left( \hv, \cvbis \right) \in \CC^{\perp} \times \CCbis\: : \: |\hv_{\sN}| = w \mbox{ and } |\hv_{\sP} + \cvbis| = \tbis  \}
		\end{eqnarray*}
			and where $\Rm \in \F_2^{s \times (n-s)}$ is such that for any $\vec{h} \in \CC^{\perp}$ we have $\hv_{\sP}= \hv_{\sN} \Rm^{\transpose}$.
\end{restatable}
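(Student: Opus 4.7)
\textbf{Proof plan for Proposition~\ref{prop:exp_bias}.}

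The strategy is two successive Fourier inversions, one against $\CCbis$ and one against $(\CC^{\sN})^{\perp}$, after which the Krawtchouk polynomials appear by Fact~\ref{fact:krawtchouk_bias}. Set $S \eqdef \card{\widetilde{\sH}} \cdot \bias_{(\hv,\cvbis)\drawn \widetilde{\sH}}\left(\langle\yv,\hv\rangle+\langle\xv,\cvbis\rangle\right)$. Since $\hv \in \CC^{\perp}$ and $\yv=\cv+\ev$ with $\cv\in\CC$, we have $\langle\yv,\hv\rangle=\langle\ev_\sP,\hv_\sP\rangle+\langle\ev_\sN,\hv_\sN\rangle$, so
$$S=\sum_{\hv\in\CC^{\perp}}\un_w(\hv_\sN)(-1)^{\langle\ev_\sP,\hv_\sP\rangle+\langle\ev_\sN,\hv_\sN\rangle}\sum_{\cvbis\in\CCbis}\un_{\tbis}(\hv_\sP+\cvbis)(-1)^{\langle\xv,\cvbis\rangle}.$$

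The first step is to handle the inner sum over $\cvbis \in \CCbis$. I would change variables to $\zv \eqdef \hv_\sP+\cvbis$ (which ranges over the coset $\hv_\sP+\CCbis$) and then express the coset indicator via Fourier inversion on~$\CCbis^{\perp}$, namely $\un_{\hv_\sP+\CCbis}(\zv)=\frac{1}{\card{\CCbis^{\perp}}}\sum_{\vv\in\CCbis^{\perp}}(-1)^{\langle\vv,\zv+\hv_\sP\rangle}$. After exchanging the $\zv$ and $\vv$ summations, the sum over $\zv$ of weight $\tbis$ produces $K_{\tbis}^{(s)}(|\vv+\xv|)$ by Fact~\ref{fact:krawtchouk_bias}. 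The substitution $\rv \eqdef \vv+\xv$ makes $\rv$ range over $\xv+\CCbis^{\perp}$ (using $\CCbis^{\perp\perp}=\CCbis$-type reasoning, purely formal) and one reads off
$$S=\frac{1}{\card{\CCbis^{\perp}}}\sum_{\rv\in\xv+\CCbis^{\perp}}K_{\tbis}^{(s)}(|\rv|)\sum_{\hv\in\CC^{\perp}}\un_w(\hv_\sN)(-1)^{\langle\ev_\sP+\rv,\hv_\sP\rangle+\langle\ev_\sN,\hv_\sN\rangle}.$$

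The second step is to treat the sum over $\hv\in\CC^{\perp}$. The hypothesis that $\CC_\sP$ has dimension $s$ lets me write a generator matrix of $\CC$ in the block form $\bigl(\begin{smallmatrix}\Id_s & \Rm\\ \vec{0}&\Rm'\end{smallmatrix}\bigr)$; solving $\Gm\hv^{\transpose}=0$ then gives the announced bijection $\hv\in\CC^{\perp}\leftrightarrow \hv_\sN\in(\CC^{\sN})^{\perp}$ via $\hv_\sP=\hv_\sN\Rm^{\transpose}$. Using $\langle\ev_\sP+\rv,\hv_\sN\Rm^{\transpose}\rangle=\langle(\ev_\sP+\rv)\Rm,\hv_\sN\rangle$, the linear form becomes $\langle\uv,\hv_\sN\rangle$ with $\uv \eqdef (\ev_\sP+\rv)\Rm+\ev_\sN$. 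I would then Fourier-invert the indicator of $(\CC^{\sN})^{\perp}$ against~$\CC^{\sN}$, exchange sums, and apply Fact~\ref{fact:krawtchouk_bias} once more to obtain $\sum_{\hv_\sN\in(\CC^{\sN})^{\perp}}\un_w(\hv_\sN)(-1)^{\langle\uv,\hv_\sN\rangle}=\frac{1}{\card{\CC^{\sN}}}\sum_{\cv^{\sN}\in\CC^{\sN}}K_w^{(n-s)}(|\uv+\cv^{\sN}|)$.

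Substituting back and grouping pairs $(\rv,\cv^{\sN})\in(\xv+\CCbis^{\perp})\times\CC^{\sN}$ according to $j=|\rv|$ and $i=|(\ev_\sP+\rv)\Rm+\ev_\sN+\cv^{\sN}|$ produces the announced double sum with the stated $N_{i,j}$. The only remaining bookkeeping is the normalization: $\card{\CCbis^{\perp}}\cdot\card{\CC^{\sN}}=2^{s-\kbis}\cdot2^{k-s}=2^{k-\kbis}$, which yields the factor $\frac{1}{2^{k-\kbis}\card{\widetilde{\sH}}}$ after dividing by $\card{\widetilde{\sH}}$. The main obstacle in executing this plan is the second step: one must carefully justify the bijection $\hv\leftrightarrow\hv_\sN$ (which is where the assumption $\dim\CC_\sP=s$ is used), verify that the induced matrix $\Rm$ is indeed the $\Rm$ of the statement, and keep the normalizations consistent through two nested Fourier inversions. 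Everything else is a straightforward exchange of finite sums.
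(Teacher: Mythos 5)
Your proposal is correct and follows essentially the same route as the paper: both proofs are Poisson-summation (duality) arguments that use the linear relation $\hv_\sP = \hv_\sN\Rm^{\transpose}$ to eliminate $\hv_\sP$ and then produce the two Krawtchouk factors by Fourier inversion. The only difference is organizational: the paper first establishes the $\Rm$ rewriting as a standalone lemma and then performs a single Poisson summation on the product code $(\CC^\sN)^\perp\times\CCbis$ against its dual $\CC^\sN\times\CCbisperp$, whereas you interleave the $\Rm$ substitution into the middle of two sequential Fourier inversions (first against $\CCbis^\perp$, then against $\CC^\sN$); the two factorings of the computation are equivalent and produce the same normalization $2^{s-\kbis}\cdot 2^{k-s}=2^{k-\kbis}$.
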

	\begin{proof}
		This proposition is proved in Appendix \ref{app:proofBias}.
	\end{proof}
\subsection{Intuition on How this Formula Allows to Estimate $\card{\sS}$}
\label{ss:intuition}
As a preliminary remark, notice that $\bias_{\left(\hv,\cvbis \right) \drawn \widetilde{\sH}} \left(\langle \yv, \hv \rangle  + \langle \xv, \cvbis \rangle\right)$ is the same for all $\xv$ belonging to a same coset of $\CCbis[\perp]$ and therefore only possibly allows to distinguish the values $\xv \Gmbis[\transpose]$. Second, observe that the expected value of $\card{\widetilde{\sH}}$ is $\frac{\binom{n-s}{w}}{2^{k-s}}\frac{\binom{s}{\tbis}}{2^{s-\kbis}}$ so that we expect
\iftoggle{llncs}{$
\frac{1}{2^{k-\kbis} \; \card{\widetilde{\sH}}} \approx \frac{1}{\binom{n-s}{w}\binom{s}{\tbis}}.
$}
{$$
\frac{1}{2^{k-\kbis} \; \card{\widetilde{\sH}}} \approx \frac{1}{\binom{n-s}{w}\binom{s}{\tbis}}.
$$}
Third, observe that Proposition \ref{prop:biasCodedRLPN} means in essence that the bias corresponding to $\ev_\sP$, namely 
$\bias_{\left(\hv,\cvbis \right) \drawn \widetilde{\sH}} \left(\langle \yv, \hv \rangle  + \langle \ev_\sP, \cvbis \rangle\right)$
should be $\approx \frac{K_w^{(n-s)} \left(u\right) K_{\tbis}^{(s)}(t-u)}{\binom{n-s}{w}\binom{s}{\tbis}}$, \ie it corresponds roughly to the ``first'' pair $(i,j)$ 
(where we range the values according to the product $K_w^{(n-s)} \left(i\right) K_{\tbis}^{(s)} \left(j\right) $) for which $N_{i,j} \neq 0$, namely 
$(i,j)=(u,t-u)$ where the pair $(\ev_\sP,\vec{0})$ is likely to be the only pair $(\rv,\cv^\sN)$ in $(\ev_\sP +\CCbisperp)  \times \CC^{\sN}$ such that
$\left| \rv \right|=t-u$ and $\left| \left( \rv + \ev_{\sP}\right) \Rm + \ev_{\sN} + \cv^{\sN} \right| = u$ (and therefore we likely have
$N_{u,t-u}=1$). Therefore the behavior of the sum appearing in \eqref{eq:prop_bias_doubleRLPN} is dominated by this first 
term $N_{i,j}$ which is non zero, namely $(i,j)=(u,t-u)$ since we really have in this case that the corresponding term in the 
sum is nothing but \iftoggle{llncs}{$\frac{K_w^{(n-s)} \left(u\right) K_{\tbis}^{(s)}(t-u)}{\binom{n-s}{w}\binom{s}{\tbis}}$.}{$$
\frac{K_w^{(n-s)} \left(u\right) K_{\tbis}^{(s)}(t-u)}{\binom{n-s}{w}\binom{s}{\tbis}}.
$$}

This kind of phenomenon appears to be much more general than this: the $\xv \in \F_2^s$ which give a high bias (and are therefore the ones we put in $\sS$) are those for which there is an $N_{i,j}$ which is unexpectedly non zero (and therefore most likely equal to $1$) 
in the low values of $(i,j)$ for which the term $ K_w^{n-s} \left(i\right) K_{\tbis}^{(s)} \left(j\right)$ can compete or even supersede the term
dominating in the expression~\eqref{eq:prop_bias_doubleRLPN} of the bias of $\ev_{\sP}$, namely $K_w^{(n-s)} \left(u\right) K_{\tbis}^{(s)}(t-u)$
(we have ignored the common denominator $2^{k-\kbis}\card{\widetilde{\sH}}$ appearing in both sums). Similarly we expect that the bias of those
$\xv$ is of order in this case
$$
\frac{K_w^{(n-s)} \left(i\right) K_{\tbis}^{(s)}(j)}{\binom{n-s}{w}\binom{s}{\tbis}}.
$$
This intuition is formalized by Conjecture \ref{ass:bias_dom} what we make later on.

	\subsection{Main Proposition}
	\label{sec:tools}

	The key step of the analysis is to estimate the number of candidates, namely the size of $\sS$ (Instruction \ref{lst:line:set_candidates} of Algorithm \ref{alg:codedRLPN}).
	 Provided that the bet  ($|\ev_{\sN}| = u$)  on the error is valid we expect that the secret vector $\ev_{\sP} \Gmbis[\transpose]$ belongs to $\sS$. But, as we will show in this section this set also contains some false positives, namely any element of $\sS \setminus \{ \ev_{\sP} \Gmbis[\transpose]\}$. Testing if an element of $\sS$ is a false positive (Algorithm \ref{algo:recovering_ep}) will be of exponential cost. Estimating their number is therefore crucial to predict the complexity of our algorithm.
	 The following proposition bounds the expected number of candidates in a typical iteration of Algorithm \ref{alg:codedRLPN}.
	\begin{restatable}{proposition}{propExpSizeS} \label{prop:exp_sizeS}
		Using Distribution \ref{notation:randomCodes} for $\CC,\CCbis,\ev$ and $\yv$ and given that our parameters verify Parameter Constraint \ref{ass:parameters_doubleRLPN}, that the number of computed \LPNs samples is the total number of available \LPNs samples, \ie{}
$\sH = \widetilde{\sH}$ and under Conjecture \ref{ass:bias_dom} we have that the expected number of candidates per iteration is bounded by
		\begin{equation}
			\mathbb{E }_{\CC,\CCbis}\left(\card{\sS }\right) = \OOt{\max_{(i,j) \in \mathcal{A}} \frac{\binom{s}{j} \binom{n-s}{i} }{2^{n-k}}} + 1
		\end{equation}
		where 
			\begin{equation} \label{def:setijA}
				\mathcal{A} \eqdef \left\{ \left(i,j\right) \in \llbracket 0, n-s \rrbracket \times \llbracket0,s \rrbracket, \iftoggle{llncs}{}{\;} \left| \frac{ K_w^{(n-s)}\left(u\right) K_{\tbis}^{(s)}\left(t-u\right) }{ K_w^{(n-s)}\left(i\right) K_{\tbis}^{(s)}\left(j\right)}\right| \leq n^{3.2}\right\}.
			\end{equation}
			The set $\sS$ of candidates is defined by
			\begin{equation} \label{eq:prop_def_S}
				\sS \eqdef \left\{\sv \in \F_2^\kbis \: : \widehat{f_{\yv,\widetilde{\sH},\Gmbis}}\left(\sv\right) \geq  \frac{\delta}{2} \: \widetilde{H}\right\},
			\end{equation}
			where 
			\begin{equation}
				\widetilde{H} \eqdef \frac{\binom{n-s}{w} \binom{s}{\tbis}}{2^{k-\kbis}} \qquad \mbox{and} \qquad \delta \eqdef \frac{K_w^{(n-s)}\left(u\right) K_{\tbis}^{(s)}\left(t-u\right) }{\binom{n-s}{w} \binom{s}{\tbis}}.
			\end{equation}
\end{restatable}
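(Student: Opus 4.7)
The plan is to combine the duality formula from Proposition \ref{prop:exp_bias} with the union bound / Markov argument in the random code model, using Conjecture \ref{ass:bias_dom} precisely to cut off the Krawtchouk sum to the index set $\mathcal{A}$.

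First, by Lemma \ref{lem:fundamental}, membership of a coset $\xv + \CCbis[\perp]$ in $\sS$ is equivalent to
\[
\bigl|\bias_{(\hv,\cvbis)\drawn \widetilde{\sH}}\bigl(\langle \yv,\hv\rangle + \langle \xv,\cvbis\rangle\bigr)\bigr| \;\geq\; \tfrac{\delta}{2}.
\]
Applying Proposition \ref{prop:exp_bias} I would rewrite this condition as
\[
\left| \sum_{i,j} N_{i,j}(\xv)\, K_w^{(n-s)}(i)\, K_{\tbis}^{(s)}(j) \right| \;\geq\; \tfrac{\delta}{2}\cdot 2^{k-\kbis}\,\card{\widetilde{\sH}}.
\]
Then I would invoke Conjecture \ref{ass:bias_dom} to assert that, outside of $\mathcal{A}$, the contribution to this sum is with high probability negligible (at most a constant fraction of the right-hand side), so that a necessary condition for $\xv \Gmbis[\transpose]$ to be in $\sS$ is that
\[
\sum_{(i,j)\in \mathcal{A}} N_{i,j}(\xv)\, \bigl|K_w^{(n-s)}(i)\, K_{\tbis}^{(s)}(j)\bigr| \;=\; \Omega\bigl(2^{k-\kbis}\,\card{\widetilde{\sH}}\,\delta\bigr).
\]
Since by definition of $\mathcal{A}$ each term in this sum is at most $n^{3.2}$ times $K_w^{(n-s)}(u)\,K_{\tbis}^{(s)}(t-u)$, a false candidate $\xv$ must therefore satisfy $N_{i,j}(\xv)\ge 1$ for at least one pair $(i,j)\in\mathcal{A}$.

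Next I would bound the expected number of such cosets by a union bound plus Markov:
\[
\mathbb{E}_{\CC,\CCbis}|\sS| \;\leq\; 1 \;+\; \sum_{(i,j)\in\mathcal{A}}\, \sum_{\xv+\CCbis[\perp]} \mathbb{E}_{\CC,\CCbis}\bigl[N_{i,j}(\xv)\bigr],
\]
where the leading $1$ accounts for the genuine candidate $\ev_{\sP}\Gmbis[\transpose]$. The inner double sum is easy to evaluate: summing $N_{i,j}(\xv)$ over $\xv$ counts triples $(\rv,\cv^{\sN})$ with $\rv\in \F_2^s$ of weight $j$, $\cv^{\sN}\in \CC^{\sN}$, and $|(\rv+\ev_{\sP})\Rm + \ev_{\sN} + \cv^{\sN}|=i$, with each $\rv$ appearing once because cosets of $\CCbis[\perp]$ partition $\F_2^s$. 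Under Distribution \ref{notation:randomCodes} the expected count of valid $\cv^{\sN}$ for fixed $\rv$ is $\binom{n-s}{i}/2^{n-k}$ (the dimension of $\CC^{\sN}$ being $k-s$ with high probability), so
\[
\sum_{\xv+\CCbis[\perp]} \mathbb{E}[N_{i,j}(\xv)] \;=\; \binom{s}{j}\binom{n-s}{i}/2^{n-k}.
\]
Because $|\mathcal{A}|$ is polynomial in $n$, replacing the sum over $\mathcal{A}$ with its maximum term costs only a polynomial factor, which is absorbed into the $\widetilde{\OO{}}$ notation, yielding the claimed bound.

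The hard part is really Conjecture \ref{ass:bias_dom}: it is what rigorously discards the contribution of $(i,j)\notin\mathcal{A}$, which is where the independence-style assumptions previously failed. Beyond that, one also has to check that subtracting off the genuine-candidate term (to apply Markov to the remaining false candidates) does not change the estimate, and that Parameter Constraint \ref{ass:parameters_doubleRLPN} ensures $\card{\widetilde{\sH}}$ concentrates around its mean $\binom{n-s}{w}\binom{s}{\tbis}/2^{k-\kbis}$ so that the threshold $\delta \widetilde{H}/2$ matches the bias threshold used above; both of these are routine given the tools already developed in Section \ref{sec:reductiontoLPN}, in contrast to the duality-based control of the tail of the Krawtchouk-weighted sum, which is the genuinely new ingredient.
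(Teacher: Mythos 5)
Your overall strategy matches the paper's: reduce membership in $\sS$ to the Fourier/bias threshold via Lemma~\ref{lem:fundamental}, rewrite the bias as the Krawtchouk-weighted sum via Proposition~\ref{prop:exp_bias}, invoke Conjecture~\ref{ass:bias_dom} to confine attention to $(i,j)\in\mathcal{A}$, and finish with a first-moment computation of $N_{i,j}$. However, the way you deploy Conjecture~\ref{ass:bias_dom} is not licensed by its statement. The conjecture is a probability bound — it asserts
$\Prob\bigl(\sum_{i,j} K_{\tbis}^{(s)}(j) K_w^{(n-s)}(i) N_{i,j} \ge \tfrac{1}{2}K_w^{(n-s)}(u)K_{\tbis}^{(s)}(t-u)\bigr) = \OOt{\max_{(i,j)\in\mathcal{A}}\Prob(N_{i,j}\neq 0)+2^{-n}}$
for a uniformly random false coset $\xv$ — not a pointwise (in $\CC,\CCbis,\xv$) assertion that the contribution from $(i,j)\notin\mathcal{A}$ is negligible. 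Your inference ``a false candidate must therefore satisfy $N_{i,j}(\xv)\ge 1$ for some $(i,j)\in\mathcal{A}$'' is a deterministic statement that would indeed give your union bound $\mathbb{E}[|\sS|]\le 1+\sum_{(i,j)\in\mathcal{A}}\sum_{\xv+\CCbisperp}\mathbb{E}[N_{i,j}(\xv)]$, but it is strictly stronger than the conjecture and does not follow from it. The paper's proof avoids this by first converting $\mathbb{E}_{\CC,\CCbis}(|\sS|)$ into $1+2^{\kbis}\Prob_{\CC,\CCbis,\xv}(\cdot)$ (Lemma~\ref{lem:link_S_bias}, using that a generic false coset is distributed like a random $\xv$), then rewriting that probability as the Krawtchouk-sum event (Lemma~\ref{lem:simpl_bias}), applying the conjecture as stated, and only then using Markov $\Prob(N_{i,j}\neq 0)\le \mathbb{E}[N_{i,j}]$ together with the estimate $\mathbb{E}[N_{i,j}]=\OO{\binom{s}{j}\binom{n-s}{i}/2^{\kbis+n-k}}$ from Lemma~\ref{lem:probNij}; multiplying by $2^{\kbis}$ recovers your $\binom{s}{j}\binom{n-s}{i}/2^{n-k}$. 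Note also that the threshold is $\widehat{f_{\yv,\widetilde{\sH},\Gmbis}}(\sv)\ge\frac{\delta}{2}\widetilde{H}$, which translates into $\bias\ge\frac{\delta}{2}\widetilde{H}/\card{\widetilde{\sH}}$ rather than $|\bias|\ge\delta/2$; the $\card{\widetilde{\sH}}$ factor then cancels exactly against the prefactor in Proposition~\ref{prop:exp_bias}, so no concentration estimate on $\card{\widetilde{\sH}}$ is needed for this bound (that is only required for correctness, i.e.\ for showing the genuine candidate lies in $\sS$).
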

	\begin{remark}
		The additional constraint that $\sH=\widetilde{\sH}$ is only here to simplify the proof. One could make a similar proposition without this constraint. In our instantiation of Algorithm \ref{alg:codedRLPN} and with our optimal parameters this constraint is de-facto verified. Note also that $\widetilde{H}$ appearing in the expression of the threshold is  the expected number of available \LPNs samples, namely $\expect{\CC,\CCbis}{\left|\widetilde{\sH}\right|}$.
	\end{remark}
\begin{distribution} \label{notation:randomCodes}\mbox{ }
	\begin{itemize}\setlength{\itemsep}{5pt}
		\item $\sP$ and $\sN$ are two fixed complementary subsets of $\llbracket 1 ,n \rrbracket$ of size $s$ and $n-s$ respectively.
		\item The code $\CC$ of generator matrix $\Gm$ is chosen uniformly at random among $[n,k]$ linear codes which are such that $\CC_{\sP}$ is of dimension $s$.
		\item The code $\CCbis$ of generator matrix $\Gmbis\in \F_2^{\kbis \times s}$  is chosen uniformly at random among the~$[s,\kbis]$-codes.
		\item $\ev \in \F_2^n$ is a fixed vector of $\mathcal{S}_{t}^n$, $\cv \in \CC$ is a random codeword of $\CC$ and we define $\yv \eqdef \cv + \ev$.
	\end{itemize}
\end{distribution} 
Correctness of our algorithm is ensured by the following constraints.
\begin{constraint} \label{ass:parameters_doubleRLPN}
	We suppose that the parameters $n, k , t, s, \kbis, \tbis, w, u$ are such that there exists a constant $\alpha > 0$ that is such that
	\begin{equation} \label{constraints:constraint}
(i)\;\;		\frac{\binom{n-s}{w} \binom{s}{\tbis}}{2^{k-\kbis}} = \om{\frac{n^{\alpha + 8}}{\delta^2}},\;(ii)\;\frac{\binom{n-s}{w} \binom{s}{t_{\textup{aux}}}}{2^{k}} = \OO{n^{\alpha}},\;(iii)\;\;\frac{\binom{s}{t_{\textup{aux}}}}{2^{s-\kbis}} = \OO{n^{\alpha}}.
	\end{equation}
where, 
$$		
\delta \eqdef \frac{K_w^{(n-s)}\left(u\right) K_{\tbis}^{(s)}\left(t-u\right) }{\binom{n-s}{w} \binom{s}{\tbis}}.
$$
\end{constraint}
\begin{remark}
	Note that these constraints are in fact, up to a polynomial factor, the minimal constraints required for our algorithm to work. Indeed, there are precisely the constraints required in Proposition \ref{prop:biasCodedRLPN} which estimates the bias of the error of the \LPNs samples \eqref{eq:LPN}.
\end{remark}
The difficulty of proving Proposition \ref{prop:exp_sizeS} is similar to the difficulties encountered in analyzing the \RLPNs algorithm in \cite{MT23}, we know too little about the tails of the distribution of $N_{i,j}$. As such, we will make the following conjecture which formalizes the discussion in \S \ref{ss:intuition}.
\begin{restatable}{conjecture}{conjectureBiasDom} \label{ass:bias_dom}
	Using Distribution \ref{notation:randomCodes}, under Parameter Constraint \ref{ass:parameters_doubleRLPN},
\iftoggle{llncs}{
	\begin{multline*}
\text{\footnotesize{\(\prob{}
{ \sum_{j = 0}^{s} \sum_{i = 0}^{n-s} K_{\tbis}^{(s)} \left(j\right) K_w^{(n-s)} \left(i\right) N_{i,j} \geq \frac{1}{2} K_w^{(n-s)}\left(u\right) K_{\tbis}^{(s)}\left(t-u\right)} =
\OOt{ \max_{(i,j) \in \mathcal{A}}\prob{}{N_{i,j} \neq 0}  + 2^{-n}}\)}} 
\end{multline*}
	}
	{
		\begin{multline*}
	\mathbb{P}\left( \sum_{j = 0}^{s} \sum_{i = 0}^{n-s} K_{\tbis}^{(s)} \left(j\right) K_w^{(n-s)} \left(i\right) N_{i,j} \geq \frac{1}{2} K_w^{(n-s)}\left(u\right) K_{\tbis}^{(s)}\left(t-u\right)\right) \\ =
			\OOt{ \max_{(i,j) \in \mathcal{A}}\prob{}{N_{i,j} \neq 0} + 2^{-n} }
		\end{multline*} 
	}
	where $\mathcal{A}$ is given in Equation \eqref{def:setijA} and $\xv$ is taken uniformly at random in $\F_2^{s} \setminus \{ \CCbisperp + \ev_{\sP}\}$.
\end{restatable}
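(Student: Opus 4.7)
\textbf{Proof plan for Conjecture \ref{ass:bias_dom}.} Write $S \eqdef \sum_{i,j} K_{\tbis}^{(s)}(j) K_w^{(n-s)}(i) N_{i,j}$ and $T \eqdef \tfrac{1}{2} K_w^{(n-s)}(u) K_{\tbis}^{(s)}(t-u)$, and partition $S = S_{\mathcal{A}} + S_{\mathcal{A}^c}$ according to the set $\mathcal{A}$ defined in \eqref{def:setijA}. The whole plan is to show that the event $\{S \geq T\}$ is, up to a polylogarithmic factor, driven by the existence of some nonzero $N_{i,j}$ with $(i,j)\in\mathcal{A}$, and that $S_{\mathcal{A}^c}$ cannot on its own push $S$ over the threshold except on a set of probability $\OOt{2^{-n}}$. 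The event splits as $\{S\geq T\}\subseteq \{S_{\mathcal{A}} \geq T/2\}\cup\{S_{\mathcal{A}^c}\geq T/2\}$ and each summand is bounded separately.

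\textbf{The competing region $\mathcal{A}$.} This half is rigorous via a plain union bound. Since $S_{\mathcal{A}}=0$ whenever all $N_{i,j}$ with $(i,j)\in\mathcal{A}$ vanish,
$$
\mathbb{P}\!\left(S_{\mathcal{A}} \geq T/2\right) \leq \mathbb{P}\!\left(\exists (i,j)\in\mathcal{A}:N_{i,j}\neq 0\right) \leq \sum_{(i,j)\in\mathcal{A}}\mathbb{P}(N_{i,j}\neq 0),
$$
and as $|\mathcal{A}| \leq (n-s+1)(s+1) = \poly{n}$ the right-hand side is $\OOt{\max_{(i,j)\in\mathcal{A}}\mathbb{P}(N_{i,j}\neq 0)}$, which is exactly the first term in the conjectured bound.

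\textbf{The negligible region $\mathcal{A}^c$.} By the defining inequality of $\mathcal{A}$ we have $|K_w^{(n-s)}(i) K_{\tbis}^{(s)}(j)| \leq 2T/n^{3.2}$ for $(i,j)\notin\mathcal{A}$, so $|S_{\mathcal{A}^c}| \leq (2T/n^{3.2})\,M$ with $M \eqdef \sum_{(i,j)\notin\mathcal{A}} N_{i,j}$. It therefore suffices to establish $\mathbb{P}(M \geq n^{3.2}/4) = \OOt{2^{-n}}$. A Markov bound fails here because, taking expectations over $\CC,\CCbis$ (and using that $\CCbisperp\subseteq \F_2^\sP$ and $\CC^\sN \subseteq \F_2^\sN$ are independent random codes for a uniformly drawn $\xv$), one computes $\mathbb{E}[N_{i,j}] \approx \binom{s}{j}\binom{n-s}{i}/2^{n-k+\kbis}$ and so $\mathbb{E}[M]$ is itself exponentially large. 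The intended route is a two-scale argument: decompose $\mathcal{A}^c=\mathcal{A}^c_{\mathrm{rare}}\cup\mathcal{A}^c_{\mathrm{bulk}}$, where $\mathcal{A}^c_{\mathrm{rare}}$ collects the $(i,j)$ with $\mathbb{E}[N_{i,j}] = 2^{-\Omega(n)}$ and $\mathcal{A}^c_{\mathrm{bulk}}$ the rest. On $\mathcal{A}^c_{\mathrm{rare}}$, Markov already gives that $N_{i,j}=0$ except with exponentially small probability, and a union bound over the polynomially many such pairs handles them. On $\mathcal{A}^c_{\mathrm{bulk}}$, one writes $N_{i,j}=\mathbb{E}[N_{i,j}]+\Delta_{i,j}$; the deterministic mean contribution to $S_{\mathcal{A}^c}$ is controlled by the magnitude estimate of Proposition \ref{prop:expansion} (the defining inequality on $\mathcal{A}$ forces $(i/n,j/s)$ away from the Krawtchouk root region, where the product $K_w^{(n-s)}(i) K_{\tbis}^{(s)}(j)$ decays geometrically), while the fluctuation contribution is to be controlled by a Chebyshev/second-moment argument on the weight distribution of a random code.

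\textbf{The main obstacle.} The hard step, and the one responsible for the statement being formulated as a conjecture, is controlling the fluctuations $\Delta_{i,j}$ sharply enough to achieve the $\OOt{2^{-n}}$ probability target. This requires tail bounds for the number of codewords of prescribed weight in the shortened random code $\CC^\sN$ and in cosets of $\CCbisperp$ that go well beyond the second moment — ideally of Chernoff or Poisson-tail quality — yet such bounds are not available for random linear codes in the regime where $\mathbb{E}[N_{i,j}]$ is $O(1)$ or below. A natural substitute, and the one the authors use to justify the conjecture, is the Poisson heuristic: model the family $(N_{i,j})_{(i,j)\in\mathcal{A}^c}$ as independent Poisson variables with the computed means; this implies the desired concentration on $M$ (and hence the conjecture) and matches their experiments, but currently admits no first-principles proof. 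Closing this gap — either via sharp tail estimates for weight distributions of random linear codes, or via an explicit Stein/coupling-type Poisson approximation in this setting — is the essential remaining step.
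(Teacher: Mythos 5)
Your high-level reading is correct: this statement is a conjecture, the paper establishes it only conditionally — Proposition \ref{prop:model_conjecture} derives it from the Poisson Model \ref{model:poisson} via the intermediate concentration statement Conjecture \ref{ass:min} — and your union bound over $(i,j)\in\mathcal{A}$ for the "competing" half is rigorous and matches the spirit of what the paper does (Corollary \ref{coro:probaKrawNij} together with the first case of the proof of Proposition \ref{prop:imply_conj2_conj1}). You also correctly identify that the irreducible input is a Poisson-tail-quality concentration bound on the weight enumerators, which is exactly what Conjecture \ref{ass:min} encodes.

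The concrete gap is in your treatment of $\mathcal{A}^c$. The reduction $|S_{\mathcal{A}^c}|\leq (2T/n^{3.2})\,M$ with $M=\sum_{(i,j)\notin\mathcal{A}}N_{i,j}$ asks you to prove $\prob{}{M\geq n^{3.2}/4}=\OOt{2^{-n}}$, but that statement is false, not merely hard: the $N_{i,j}$ are nonnegative counts summing deterministically to $\card{\CCbisperp}\cdot\card{\CC^{\sN}}=2^{k-\kbis}$ over all $(i,j)$, of which only a negligible fraction lies in $\mathcal{A}$, so $M$ is exponentially large with probability $1-o(1)$. Your "two-scale" repair acknowledges this but omits the tool that actually makes the argument close, namely the exact cancellation coming from Krawtchouk orthogonality: since $\sum_{i}\binom{n-s}{i}K_w^{(n-s)}(i)=0$, one has the identity $\sum_{i,j}K_{\tbis}^{(s)}(j)K_w^{(n-s)}(i)\,V_j\overline{N_i}=0$ (the paper's Centering Lemma \ref{lem:centering}), so the entire double sum may be rewritten with $N_{i,j}$ replaced by the centered variable $N_{i,j}-V_j\overline{N_i}$ at no cost. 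After centering there is no "deterministic mean contribution" left to bound on $\mathcal{A}^c$, only fluctuations, and the paper union-bounds over \emph{all} $(i,j)$ against per-pair thresholds $R_{i,j}$ of order $T/(n^{2}|K_w^{(n-s)}(i)K_{\tbis}^{(s)}(j)|)$, checking via the Krawtchouk orthonormality relations and constraints $(i)$ and $(iii)$ of Parameter Constraint \ref{ass:parameters_doubleRLPN} (Lemma \ref{lem:ineq_NI}) that $R_{i,j}$ dominates the $n^{1.1}\max(\sqrt{\mathbb{E}},1)$-scale deviations granted by the Poisson model. Without the centering step your plan cannot close even if you grant Poisson tails: you would still be left with a signed sum of exponentially large means over $\mathcal{A}^c$, and a termwise magnitude bound on $|K_w^{(n-s)}(i)K_{\tbis}^{(s)}(j)|\,\mathbb{E}(N_{i,j})$ does not make it $o(T)$ in the parameter regime of interest (where $\max_{(i,j)\in\mathcal{A}}\binom{s}{j}\binom{n-s}{i}/2^{n-k}$ — the number of false candidates — is itself exponential).
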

\iftoggle{llncs}{
\begin{remark}
	Conjecture \ref{ass:bias_dom} is discussed in Section \ref{sec:conjecture} where we give experimental evidences that our analysis holds. In Appendix \ref{sec:min_ass} we show that this conjecture is in fact a consequence of a more minimalistic conjecture.
\end{remark}
}{	Conjecture \ref{ass:bias_dom} is discussed in the following section where we give experimental evidences that our analysis holds. In Appendix \ref{sec:min_ass} we show that this conjecture is in fact a consequence of a more minimalistic conjecture.} 

\section{Experimental Evidence for Our Analysis} \label{sec:conjecture}
The goal of this section is  to provide experimental evidence supporting Proposition \ref{prop:exp_sizeS}. 
We will propose a convenient probabilistic model for the $N_{i,j}$'s and show that this model does not change the output distribution of our algorithm. We will essentially use the same model as in \cite[Appendix D]{MT23} and model the weight distribution of the coset of a random linear code as a Poisson distribution of the right expected value. Recall that $N_{i,j}$ can be written as
$$
N_{i,j} = \sum_{u = 0}^{N_j\left(\CCbisperp + \xv\right)} N_i \left(\left(\rv^{(u)} + \ev_{\sP}\right) \Rm + \ev_{\sN} + \CC^{\sN}\right)
$$
where $\Rm \in \F_2^{s \times (n-s)}$ is such that for any $\vec{h} \in \CC^{\perp}$ we have $\hv_{\sP}= \hv_{\sN} \Rm^{\transpose}$, $\rv^{(u)}$ is the $u$'th codeword of weight~$j$ of $\CCbisperp + \xv$ and $N_j(\CCbisperp + \xv)$ counts the number of elements in $\CCbisperp + \xv$ of Hamming weight~$j$. With our model, we first draw $N_j\left(\CCbisperp + \xv\right)$ according to a Poisson distribution of expected value $\frac{\binom{s}{j}}{2^{\kbis}}$ and then we draw each $N_i \left(\left(\rv^{(u)} + \ev_{\sP}\right) \Rm + \ev_{\sN} + \CC^{\sN}\right)$ according to independent Poisson distributions of expected values $\frac{\binom{n-s}{i}}{2^{n-k}}$
(see appendix \ref{app:prop_Nij}, Lemma \ref{lem:Nbar} where we compute $\mathbb{E}\left( N_j\left(\CCbisperp + \xv\right) \right)$ and $\mathbb{E }\left(N_i \left(\left(\rv^{(u)} + \ev_{\sP}\right) \Rm + \ev_{\sN} + \CC^{\sN}\right) \right)$ under Distribution~\ref{notation:randomCodes}).
Finally, we get the following model for $N_{i,j}$ by using the fact that the sum of independent Poisson random variables is a Poisson random variable:
	\begin{model}[Poisson Model] \label{model:poisson}
	Under Distribution \ref{notation:randomCodes} and when $\xv$ is taken uniformly at random in $\F_2^s \setminus\{\CCbisperp + \ev_{\sP}\}$ we make the model that
	$$ N_{i,j} \sim \mathrm{Poisson}\left(N_j \:  \frac{\binom{n-s}{i}}{2^{n-k}}	 \right) \mbox{, where }  N_j \sim \mathrm{Poisson}\left( \frac{\binom{s}{j}}{2^{\kbis}} \right). $$
\end{model}
Under Poisson Model \ref{model:poisson}, the following proposition proves Conjecture \ref{ass:bias_dom} and thus it shows that Proposition \ref{prop:exp_sizeS} holds.
		\begin{proposition} \label{prop:model_conjecture}
			Under the Poisson Model \ref{model:poisson}, Conjecture \ref{ass:bias_dom} holds.
		\end{proposition}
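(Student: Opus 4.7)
The plan is to decompose the weighted sum into a ``dominant'' part indexed by $(i,j)\in\mathcal{A}$ and a ``tail'' part indexed by $(i,j)\notin\mathcal{A}$, then bound each by a separate mechanism. Concretely, set
$T \eqdef K_w^{(n-s)}(u) K_{\tbis}^{(s)}(t-u)$ and split
$$
S_1 \eqdef \sum_{(i,j)\in\mathcal{A}} K_{\tbis}^{(s)}(j)\,K_w^{(n-s)}(i)\,N_{i,j}, \qquad
S_2 \eqdef \sum_{(i,j)\notin\mathcal{A}} K_{\tbis}^{(s)}(j)\,K_w^{(n-s)}(i)\,N_{i,j}.
$$
Since $\{S_1+S_2\geq T/2\}\subseteq\{S_1\geq T/4\}\cup\{|S_2|\geq T/4\}$, it is enough to bound each of these two probabilities. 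The first will contribute the $\max_{(i,j)\in\mathcal{A}}\Prob(N_{i,j}\neq 0)$ piece of the target bound, and the second will contribute the $2^{-n}$ piece.

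\textbf{Handling $S_1$.} This part is almost free: whenever $N_{i,j}=0$ for every $(i,j)\in\mathcal{A}$, we have $S_1=0<T/4$. Therefore $\{S_1\geq T/4\}\subseteq\bigcup_{(i,j)\in\mathcal{A}}\{N_{i,j}\neq 0\}$, and a union bound gives
$$
\Prob(S_1\geq T/4)\ \leq\ |\mathcal{A}|\cdot\max_{(i,j)\in\mathcal{A}}\Prob(N_{i,j}\neq 0)\ =\ \OOt{\max_{(i,j)\in\mathcal{A}}\Prob(N_{i,j}\neq 0)},
$$
using $|\mathcal{A}|\leq(n-s+1)(s+1)=\OO{n^2}$.

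\textbf{Handling $S_2$.} This is where I expect the real work. Here every $(i,j)$ satisfies $|K_w^{(n-s)}(i)K_{\tbis}^{(s)}(j)|\leq T/n^{3.2}$, so we can afford to pay a factor proportional to $\mathbb{E}[N_{i,j}]$ in expectation. Under Poisson Model \ref{model:poisson} we have $\mathbb{E}[N_{i,j}]=\binom{s}{j}\binom{n-s}{i}/2^{n-k+\kbis}$, and the $N_{i,j}$ are independent across different $j$'s (they share only the common random variable $N_j$ within a fixed column). I would first bound $|\mathbb{E}[S_2]|$ by pulling out the factor $T/n^{3.2}$ and using $\sum_{i,j}\mathbb{E}[N_{i,j}]=2^{k-\kbis}$ (together with Parameter Constraint \ref{ass:parameters_doubleRLPN}(i), which controls $T^2/(2^{k-\kbis}\binom{n-s}{w}\binom{s}{\tbis})$) to show $|\mathbb{E}[S_2]|\ll T$. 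Next I would compute the second moment via the compound-Poisson structure, decomposing $\mathrm{Var}(S_2)$ into an intra-column contribution (conditioning on each $N_j$) and an inter-column contribution (which vanishes by independence across $j$); the bound $|K_w^{(n-s)}(i)K_{\tbis}^{(s)}(j)|\leq T/n^{3.2}$ combined with Constraint \ref{ass:parameters_doubleRLPN}(i) would then yield $\mathrm{Var}(S_2)\leq T^2/n^{\alpha+4}$ up to constants, permitting a Chebyshev or Bernstein bound. Since compound Poisson variables have sub-exponential tails, I would invoke a Bernstein-type inequality to upgrade this to $\Prob(|S_2-\mathbb{E}[S_2]|\geq T/8)\leq 2^{-\Omega(n)}$, which is $\OO{2^{-n}}$ up to the tilde.

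\textbf{Conclusion and main obstacle.} Putting the two pieces together gives the bound claimed in Conjecture \ref{ass:bias_dom} under the Poisson model, with the sum over $\xv\in\F_2^s\setminus(\CCbisperp+\ev_{\sP})$ only entering through the definition of the $N_{i,j}$'s. The main obstacle is the $S_2$ bound: the Krawtchouk polynomials oscillate in sign and have widely varying magnitudes in and outside the root region (Proposition \ref{prop:expansion}, items 4.1 and 4.2), so one must be careful that the cancellations in $\mathbb{E}[S_2]$ are not over-counted, and one must feed the tight magnitude estimates of Proposition \ref{prop:expansion} into the variance calculation to exploit the $n^{-3.2}$ gap that defines $\mathcal{A}$. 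The compound-Poisson dependence within a column is harmless for the mean but slightly complicates the variance bookkeeping, and this is where I would spend most of the technical effort.
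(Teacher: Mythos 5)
Your $S_1/S_2$ split and the union bound for $S_1$ are fine, but the proposed bound on $|\mathbb{E}[S_2]|$ does not go through, and this is exactly the ``main obstacle'' you flag at the end. Your plan would give $|\mathbb{E}[S_2]| \leq \frac{T}{n^{3.2}}\sum_{(i,j)\notin\mathcal{A}}\mathbb{E}[N_{i,j}] \leq \frac{T}{n^{3.2}}\cdot 2^{k-\kbis}$, but $k-\kbis$ is linear in $n$ in the parameter regime of interest, so $2^{k-\kbis}$ is exponential and the bound is $\gg T$ rather than $\ll T$. The problem is precisely the one you sense: the bulk of $\sum_{i,j}\mathbb{E}[N_{i,j}]$ lies in the middle of the weight spectrum, where the Krawtchouk polynomials are small but alternate in sign, and a triangle-inequality magnitude bound throws away exactly the cancellation that keeps $\mathbb{E}[S_2]$ small; Parameter Constraint~\ref{ass:parameters_doubleRLPN}(i) controls the wrong ratio to rescue this.

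The paper handles this by turning the cancellation into an identity rather than an estimate. The centering lemma (Lemma~\ref{lem:centering}) uses Krawtchouk orthogonality, $\sum_i\binom{n-s}{i}K_w^{(n-s)}(i)=0$, to deduce $\sum_{i,j}K_{\tbis}^{(s)}(j)K_w^{(n-s)}(i)\,V_j\,\overline{N_i}=0$, hence the full sum equals the centered version $\sum_{i,j}K_{\tbis}^{(s)}(j)K_w^{(n-s)}(i)\bigl(N_{i,j}-V_j\overline{N_i}\bigr)$ exactly. Each centered summand has zero conditional mean given $V_j$, so no mean estimate on $S_2$ is needed; one then applies a term-by-term union bound with Poisson tail bounds (Conjecture~\ref{ass:min}, verified under the Poisson model in Lemma~\ref{lem:poisson_conj2}, and then Proposition~\ref{prop:imply_conj2_conj1}), where Lemma~\ref{lem:ineq_NI} supplies the threshold scales $R_{i,j}$ in terms of $\overline{N_i}$ and $\overline{V_j}$. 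This centering identity, not a second-moment or Bernstein computation on $S_2$, is the piece your proposal is missing.
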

		\begin{proof}
			The proof is given Appendix \ref{app:model_conjecture}.
		\end{proof}

		In Figure \ref{fig:experiment} we computed the expected number of $\xv$'s 
		whose bias multiplied by $\card{\widetilde{H}}$ is bigger than some prescribed quantity $T$ according to 
		\begin{itemize}\setlength{\itemsep}{5pt}
			\item the standard independence model in dual attacks where the $\langle \vec{e},\vec{h}\rangle$'s are supposed to be independent,

			\item some experiments,

			\item the case were we replace 
			the right-hand term of $N_{i,j}$ (given in Equation \eqref{eq:prop_bias_doubleRLPN}) by their Poisson model.
		\end{itemize}
		
As it is shown by Figure  \ref{fig:experiment}, the Poisson model matches remarkably well with the experiments.
This shows, as was the case in the analysis \cite{MT23} of the \RLPNs algorithm, that the Poisson model allows to 
predict accurately the size of $\sS$.
	\begin{figure}[h]
		\centering
\includegraphics[width=\textwidth]{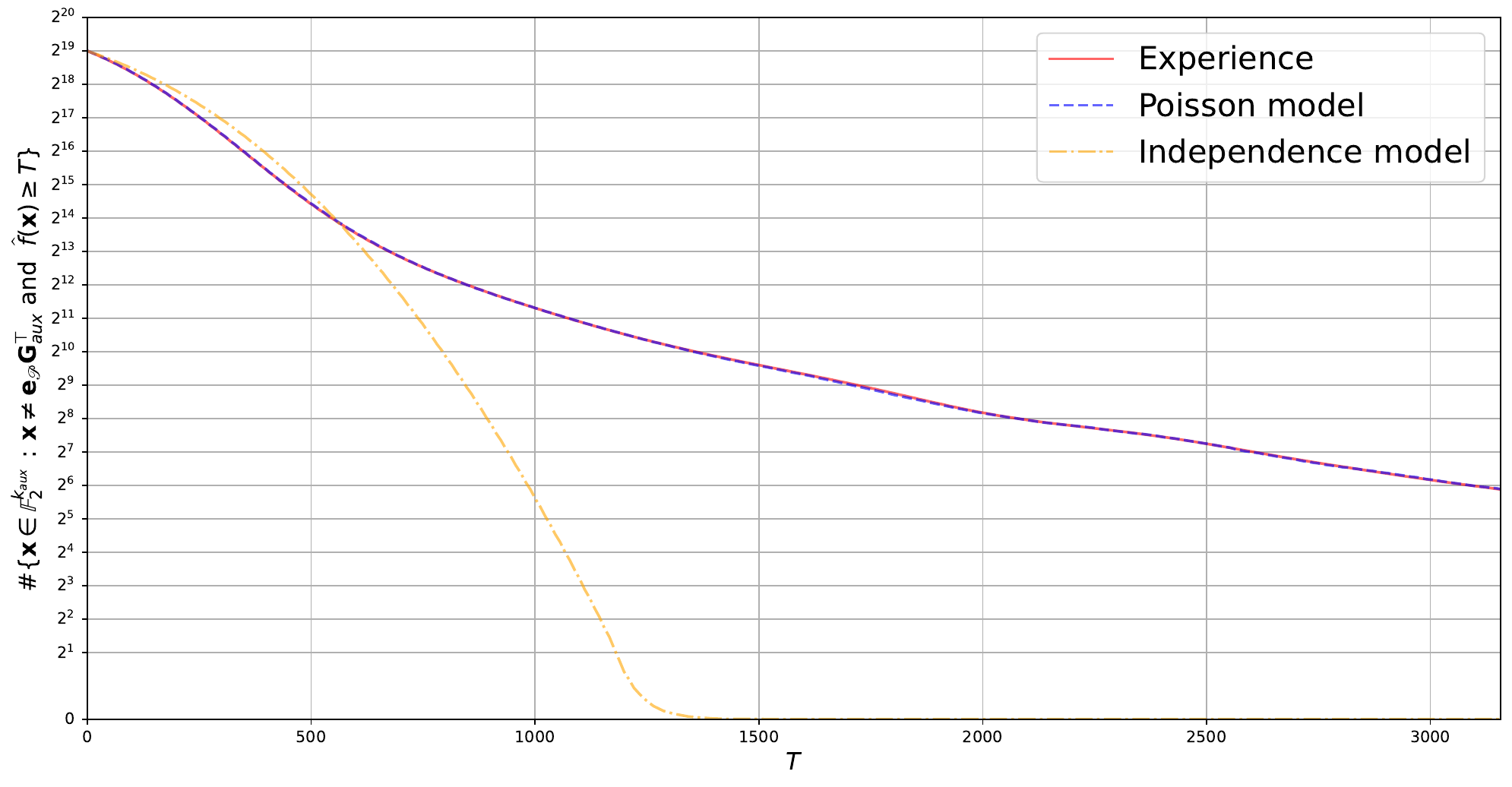}
	\caption{
	 Size of the set 
		$  \{ \xv \in \F_2^{\kbis} \setminus \{\ev_{\sP} \Gmbis \} \; : \widehat{f_{\yv,\sH}}\left(\xv\right) \geq T \} $ as a function of $T$ 
when $[w,\tbis,\kbis,s,k,n,u,t]=[5,2,20,28,30,60,8,8]$, number of \LPNs samples $N = 65536$.
Here the curve ``Independence model'' has been replaced when 
modelling the $\langle \yv, \hv \rangle$'s by i.i.d Bernouilli random variables of parameter $\frac{1}{2}$ (standard independence model in dual attacks).
\label{fig:experiment} }
	\end{figure}

 \section{Instantiating the Auxiliary Code $\CCbis$ with an Efficient Decoder}
\label{sec:SDC}

In \nRLPN\; we need to choose an auxiliary code $\CCbis$ which is efficiently list-decodable (Definition \ref{def:effListDeco}) at the smallest as possible distance $\tbis = \dgv(s,\kbis)$. We propose to use the following product of small random codes (other choices may be more suitable but they are harder to analyze, like polar codes \cite{A09,KU10,S11g,TV12}),
$$
\CCbis \eqdef \CC_1 \times \cdots \times \CC_b \iftoggle{llncs}{\;\mbox{  where the $\CC_i$'s are random  $\left[\frac{s}{b}, \frac{\kbis}{b} \right]$-codes.}}{}
$$
where the $\CC_i$'s are random  $\left[\frac{s}{b}, \frac{\kbis}{b} \right]$-codes. 
Notice that \iftoggle{llncs}{$\decodebis(\zv) \eqdef \left\{ \cv_{\textup{aux}} \in \CCbis \; : \; \left|\cv_{\textup{aux}} + \zv\right| = \tbis  \right\}$}{
$$
\decodebis(\zv) \eqdef \left\{ \cv_{\textup{aux}} \in \CCbis \; : \; \left|\cv_{\textup{aux}} + \zv\right| = \tbis  \right\}
$$} 
does not look exactly like how it should with a random code for which our analysis given in Propositions \ref{prop:biasCodedRLPN} and \ref{prop:exp_sizeS} hold. Furthermore, we will compute
$$
{\sH} \subseteq \left\{ \left( \hv, \cvbis \right) \in \CC^{\perp} \times \CCbis\: : \: \forall i \in \IInt{1}{b}, \; |\hv_{\sN}(i)| = \tfrac{w}{b} \mbox{ and } |\hv_{\sP}(i) + \cv_i| = \tfrac{\tbis}{b}  \right\}
$$
in Instruction \ref{state:computeH} of Algorithm \ref{alg:sample}. To this aim we will perform exhausting search on the random codes.
By choosing the number $b$ of blocs as,
\begin{equation}\label{eq:b} 
b = \Theta(\log n)
\end{equation}
the above decoding algorithm costs for any parity-check equation $O\left( 2^{n/\log n}\right)$ (recall that $\tbis \approx \dgv(n,k)$). Therefore, as Algorithm \ref{alg:sample} running time is exponential (in $n$) for our considered parameters, it won't affect it. Furthermore, there are false candidates when computing $\sH$ and it is crucial to estimate their numbers.

Our analysis of Sections \ref{sec:reductiontoLPN} and \ref{sec:bias} has been made in the idealized-model where $\CCbis$ is a random code equipped with genie aided decoders. But, by choosing $b$ as in Equation \eqref{eq:b}, analysis of Propositions \ref{prop:biasCodedRLPN}  and \ref{prop:exp_sizeS} is still verified with our particular choice of $\CCbis$ (up to negligible factors) as justified in Appendix \ref{app:Caux}.

 \section{Links with Dual Attacks in Lattice Based Cryptography}
\label{sec:lattice}

The purpose of this section is to give more details about the close connection between dual attacks in coding theory ({\it a.k.a}
``statistical decoding'' after the pioneering work of \cite{J01}) and dual attacks in lattice based cryptography. Basically, with some slight differences
highlighted in \cite[App. A]{PS23}, the lattice based analogue of the dual attack presented here is the slight improvement \cite{CST22} of the Matzov attack \cite{M22}.
The improvement in \cite{CST22} is based on the fact that the modulus switching technique used in \cite{M22} can be viewed as a suboptimal source distortion code for the Euclidean metric which can 
be replaced by an almost optimal polar code. The approach followed here should carry over to this lattice setting as well and in particular,
the fundamental duality Proposition \ref{prop:exp_bias}. Let us just observe now that a simple duality equality (together with a gross approximation based on the considerations of 
\S \ref{ss:intuition}) can be used to explain the results observed in \cite[Fig. 3]{DP23}.
It was shown there that predictions of the score function based on standard independence assumptions made for dual attacks in lattice based cryptography seem to be off 
in some parameter region (what can be called
the ``error-floor'' region due to its similarity with the Low-Density-Parity-Check codes literature).
To explain this point, we will use the same notation as in \cite{DP23} and will not redefine the quantities appearing here.

Let us first observe that an immediate corollary of Proposition \ref{prop:exp_bias} is
\begin{corollary}\label{cor:simple}
Consider an $[n,k]$ linear code $\CC$ and consider some word $\yv=\cv+\ev \in \F_2^n$ where $\cv$ is in $\CC$. Let 
$\sW$ be the set of codewords of weight $w$ in $\CC^\perp$ and let $f_\sW(\yv) \eqdef \sum_{\wv \in \sW} (-1)^{\scp{\yv}{\wv}}$.
We have
$$
f_\sW(\yv) = \frac{1}{2^k} \sum_{i=0}^n N_i K_w^n(i)
$$
where $N_i$ is the number of words of weight $i$ in $\CC+\ev$.
\end{corollary}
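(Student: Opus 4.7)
The plan is to recognize this as a character-sum identity, essentially a shortcut version of the MacWilliams duality appearing in the general Proposition~\ref{prop:exp_bias}. The core idea is to encode membership in $\CC^\perp$ using character orthogonality on $\CC$ and then collapse the sum over dual codewords of weight $w$ into a Krawtchouk polynomial via Fact~\ref{fact:krawtchouk_bias}.

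Concretely, I would first rewrite $f_\sW(\yv)$ by introducing the indicator $\un_{\CC^\perp}$ so the sum runs over all of $\F_2^n$ (restricted to weight $w$):
\[
f_\sW(\yv) \;=\; \sum_{\wv \in \F_2^n,\, |\wv|=w} \un_{\CC^\perp}(\wv)\,(-1)^{\scp{\yv}{\wv}}.
\]
Then I would use the standard orthogonality identity
\[
\un_{\CC^\perp}(\wv) \;=\; \frac{1}{2^k} \sum_{\cv \in \CC} (-1)^{\scp{\cv}{\wv}},
\]
which is just the fact that for $\wv \in \CC^\perp$ every term equals $+1$, and otherwise the character sum over $\CC$ vanishes. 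Substituting this in and swapping the two sums gives
\[
f_\sW(\yv) \;=\; \frac{1}{2^k}\sum_{\cv \in \CC}\; \sum_{\wv \in \F_2^n,\,|\wv|=w} (-1)^{\scp{\cv+\yv}{\wv}}.
\]

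The inner sum is recognized immediately via Fact~\ref{fact:krawtchouk_bias} as $K_w^{(n)}(|\cv + \yv|)$. The final step is just reindexing: since $\yv = \cv_0 + \ev$ for some $\cv_0 \in \CC$, the translate $\cv + \yv$ runs over the coset $\CC + \ev$ as $\cv$ runs over $\CC$, so grouping by Hamming weight gives
\[
\sum_{\cv \in \CC} K_w^{(n)}(|\cv + \yv|) \;=\; \sum_{i=0}^n N_i\, K_w^{(n)}(i),
\]
which yields the claimed formula. There is no real obstacle here: the whole argument is a two-line character-theoretic computation, and it is essentially the specialization of Proposition~\ref{prop:exp_bias} to the degenerate case $\sP = \emptyset$, $\sN = \IInt{1}{n}$ (so that the auxiliary code $\CCbis$ and its parameters $\tbis, \kbis$ disappear and only the contribution from the $N_i$'s of $\CC + \ev$ remains).
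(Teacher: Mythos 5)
Your proof is correct and matches the paper's approach: the key step is exactly the Poisson-summation/character-orthogonality identity $\un_{\CC^\perp}(\wv) = \frac{1}{2^k}\sum_{\cv\in\CC}(-1)^{\scp{\cv}{\wv}}$ combined with Fact~\ref{fact:krawtchouk_bias}, which is the same mechanism used to prove Proposition~\ref{prop:exp_bias}. You also correctly identify the corollary as the degenerate specialization $\sP=\emptyset$, $\sN=\IInt{1}{n}$ of that proposition, which is precisely how the paper presents it.
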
 
It is insightful to view these Krawtchouk polynomials as the 
 Fourier transform of the indicator function of a Hamming sphere, see Fact \ref{fact:krawtchouk_bias}. Similarly, the lattice based analogue of Corollary \ref{cor:simple} 
 involving the lattice based analogue of $f_\sW$, which is called the score function in 
 \cite{DP23} will involve the Bessel function of the first kind (see for instance \cite[Fact 4.9]{DDRT23}). We namely obtain
\begin{restatable}{proposition}{propdualityLattice}\label{propo:dualityLattice}
	Consider a lattice $\Lambda\subseteq \mathbb{R}^{n}$ and consider some word $\vec{y} = \vec{x} + \vec{e} \in \mathbb{R}^{n}$ where $\vec{x}$ is in $\Lambda$. Let $\widetilde{\sW}$ be the set of dual lattice vectors of Euclidean weights in $(w-\varepsilon,w+\varepsilon)$ in $\Lambda^{\vee}$ and let 
	$f_{\widetilde{\sW}}(\yv) \eqdef \sum_{\vec{w} \in \widetilde{\sW}} \cos(2\pi \langle \vec{x},\vec{y} \rangle)$. We have
	\begin{align} 
	f_{\widetilde{\sW}}(\vec{y}) &= \frac{1}{|\Lambda^{\vee}|} \; \sum_{j \geq 0}  \frac{N_{j}}{j^{n}\; (2\pi)^{n/2}} \;  \Big( (2\pi(w+\varepsilon)j)^{n/2}J_{n/2}(2\pi (w+\varepsilon )j) \\
	&\qquad\qquad\qquad\qquad\qquad\qquad\qquad\quad - (2\pi(w-\varepsilon)j)^{n/2}J_{n/2}(2\pi (w-\varepsilon )j) \Big)\nonumber  
\end{align} 
	where $N_{j}$ is the number of words of Euclidean norm $j$ in $\Lambda + \vec{e}$ and $J_{\nu}$ is the Bessel function of the first kind of order\footnote{Here the $j$'s belong to the discrete set of all possible norms in the lattice and should not be viewed as an integer value.} $\nu$.
\end{restatable}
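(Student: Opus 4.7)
The plan is to derive the formula by a direct application of the Poisson summation formula on the dual lattice $\Lambda^{\vee}$, combined with the classical expression of the Fourier transform of the indicator function of a Euclidean ball in $\mathbb{R}^{n}$, which naturally produces the Bessel function $J_{n/2}$.

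First, I observe that for $\vec{w} \in \Lambda^{\vee}$ and $\vec{x} \in \Lambda$ we have $\langle \vec{w},\vec{x} \rangle \in \mathbb{Z}$ by definition of the dual lattice. Interpreting the summand in the definition of $f_{\widetilde{\sW}}(\vec{y})$ as $\cos(2\pi \langle \vec{w},\vec{y}\rangle)$ (the summand as written in the statement has an apparent typo, as it does not depend on the dummy variable $\vec{w}$), one has $\cos(2\pi \langle \vec{w},\vec{y}\rangle) = \cos(2\pi \langle \vec{w},\vec{e}\rangle) = \operatorname{Re}\big(e^{2\pi i \langle \vec{w},\vec{e}\rangle}\big)$, hence
$$
f_{\widetilde{\sW}}(\vec{y}) \;=\; \operatorname{Re}\sum_{\vec{w} \in \Lambda^{\vee}} \mathbf{1}_{A}(\vec{w})\, e^{2\pi i \langle \vec{w},\vec{e}\rangle},
$$
where $\mathbf{1}_{A}$ denotes the indicator of the spherical shell $A \eqdef \{\vec{u} \in \mathbb{R}^{n} : w - \varepsilon < \|\vec{u}\| < w + \varepsilon\}$. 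This is the exact analogue of the first step in Corollary~\ref{cor:simple}, where the reduction $\langle \vec{y},\vec{h}\rangle = \langle \vec{e},\vec{h}\rangle$ was used for $\vec{h} \in \CC^{\perp}$.

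Second, I apply Poisson summation on $\Lambda^{\vee}$ (whose dual is $\Lambda$) to the function $h(\vec{w}) \eqdef \mathbf{1}_{A}(\vec{w})\, e^{2\pi i \langle \vec{w},\vec{e}\rangle}$. A direct computation gives $\widehat{h}(\vec{v}) = \widehat{\mathbf{1}_{A}}(\vec{v} - \vec{e})$. Using the evenness of $\mathbf{1}_{A}$ (so that $\widehat{\mathbf{1}_{A}}$ is real-valued and even) together with $-\Lambda = \Lambda$, the sum over $\vec{v} \in \Lambda$ can be reindexed as a sum over the coset $\Lambda + \vec{e}$, leading to
$$
f_{\widetilde{\sW}}(\vec{y}) \;=\; \frac{1}{|\Lambda^{\vee}|}\sum_{\vec{v} \in \Lambda + \vec{e}} \widehat{\mathbf{1}_{A}}(\vec{v}),
$$
where $1/|\Lambda^{\vee}|$ is the covolume of $\Lambda$ and matches the prefactor of the statement.

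Third, I invoke the classical identity that the Fourier transform of the indicator function of the origin-centred Euclidean ball of radius $r$ in $\mathbb{R}^{n}$ is the radial function $\vec{u} \mapsto (r/\|\vec{u}\|)^{n/2} J_{n/2}(2\pi r \|\vec{u}\|)$. Applied by linearity to the decomposition $\mathbf{1}_{A} = \mathbf{1}_{B_{w+\varepsilon}} - \mathbf{1}_{B_{w-\varepsilon}}$, the quantity $\widehat{\mathbf{1}_{A}}(\vec{v})$ depends only on $\|\vec{v}\|$; grouping the terms by Euclidean norm $j$ with multiplicity $N_{j}$ and reorganising the prefactors so as to put $(2\pi r j)^{n/2}$ inside each $J_{n/2}$-term yields exactly the announced form.

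The main obstacle is that $\mathbf{1}_{A}$ has jump discontinuities, so $\widehat{\mathbf{1}_{A}}(\vec{v})$ decays only as $O(\|\vec{v}\|^{-(n+1)/2})$ at infinity (using the $x^{-1/2}$ asymptotics of $J_{n/2}$), which is insufficient for absolute convergence of $\sum_{\vec{v} \in \Lambda + \vec{e}} \widehat{\mathbf{1}_{A}}(\vec{v})$ in dimension $n \geq 3$; hence the classical Poisson summation formula does not apply out of the box. I plan to resolve this by approximating $\mathbf{1}_{A}$ from above and below by smooth radial bump functions converging in $L^{1}$ to $\mathbf{1}_{A}$, applying Poisson summation to each regularisation (whose Fourier transforms are rapidly decaying and absolutely summable) and then passing to the limit. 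The sum on the right-hand side of the statement is organised by increasing norm $j$, which is precisely the natural ordering under which the limit of the regularised series is well-defined and yields the claimed identity.
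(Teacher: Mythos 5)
Your proof follows essentially the same route as the paper's: reduce the cosine sum to a single exponential sum over the shell indicator, apply Poisson summation on $\Lambda^{\vee}$, reindex over the coset $\Lambda+\vec{e}$, and invoke the radial Bessel-function formula for $\widehat{\un_{\leq r}}$. The only cosmetic difference is how the reduction step is performed: the paper folds the two exponentials together by noting $\widetilde{\sW}$ is symmetric under $\vec{w}\mapsto-\vec{w}$, whereas you use $\langle\vec{w},\vec{x}\rangle\in\mathbb{Z}$ and take real parts; both are valid, and your treatment of $\widehat{h}(\vec{v})=\widehat{\mathbf{1}_A}(\vec{v}-\vec{e})$ together with the evenness/symmetry reindexing matches the translated Poisson formula the paper uses. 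Your explicit flagging of the convergence issue is a genuine improvement over the paper's proof, which invokes the Poisson summation formula for a ``sufficiently regular function $f$'' and then applies it directly to the indicator $\un_{\leq r}$ — whose Fourier transform decays only like $\|\vec{v}\|^{-(n+1)/2}$, too slowly for absolute convergence against a lattice of dimension $n\geq 3$ — without further comment. Your regularisation sketch (smooth approximation in $L^1$, pass to the limit with the series ordered by increasing norm) is the standard way to justify the identity and could be filled in, although for completeness one would also want to specify in what sense the series on the right-hand side converges, since it is only conditionally summable; neither you nor the paper spells this out.
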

The proof is given in Appendix \ref{sec:app_lattice}.
Let us take some subset $\sW$ of $\widetilde{\sW}$ of size $N$ say. We make the approximation $f_{{{\sW}}}(\yv) \approx \frac{N}{\card{\widetilde{\sW}}} \; f_{\widetilde{\sW}}(\yv)$ and by using the
Gaussian heuristic and some computations that are detailed in Appendix \ref{sec:app_lattice}:
\begin{equation}
\label{eq:sum}
f_{\sW}(\vec{y}) \approx N \frac{\sqrt{n \pi}}{e} \sum_{j\geq 0} N_{j} \left( \frac{n}{2\pi e wj} \right)^{n/2-1} J_{n/2-1}(2\pi wj).
\end{equation}
We can use now a similar heuristic as the one described in \S \ref{ss:intuition} and predict that the abnormal large values of 
the score function $f_{\sW}(\vec{y})$ appear when $\yv$ is abnormally close to $\Lambda$, say $N_{\leq x} \neq 0$, where
$N_{\leq x}  =\card{\{ \cv \in \Lambda : \norm{ \yv - \cv} \leq x\}}$ when $\prob{}{N_{\leq x} \neq 0} \ll 1$. 
In this case, we make the crude approximation that the sum \eqref{eq:sum} is dominated by the term $j_0$ which is the smallest in it:
\begin{equation} \label{eq:lattice_approx_f_first_term}
f_{\sW}(\vec{y}) \approx N \frac{\sqrt{n \pi}}{e} N_{j_0} \left( \frac{n}{2\pi e wj_0} \right)^{n/2-1} J_{n/2-1}(2\pi wj_0).
\end{equation}
The survival function $\prob{}{f_{\sW}(\vec{y}) \geq X}$ is then crudely approximated as the probability that such an event happens
\begin{eqnarray*}
\prob{}{f_{\sW}(\vec{y}) \geq N \frac{\sqrt{n \pi}}{e} N_{x} \left( \frac{n}{2\pi e wx} \right)^{n/2-1} J_{n/2-1}(2\pi w x)} & \approx & \prob{}{N_{\leq x} \geq 1} \\
& \approx & \esp(N_{\leq x}) \\
& \approx  & \left( \frac{x}{\sqrt{\tfrac{n}{2 \pi e} \cdot (\pi n)^{1/n}} \cdot | \Lambda|^{1/n}}\right)^n
\end{eqnarray*}
where the last approximation is the Gaussian heuristic.
In the context of the experiments described in \cite[\S 5]{DP23}, $\Lambda \eqdef \mathcal{L}\left( \mat{B} \right)$ is given by\footnote{In \cite{DP23}, $\Lambda$ and $\mat{B}$ are actually respectively $\Lambda'$ and $\mat{B}'$}
	\begin{equation}
	\label{eq:B}
	\mat{B} \eqdef 
	\left[ 
	\begin{array}{c|c}
	\Id_{n/2} & \mat{A} \\
	\hline
	\mat{0} & q \cdot \Id_{n/2}
	\end{array}
	\right] \cdot 
	\left[ 
	\begin{array}{c|c}
	2 \cdot \Id_{k_{\mathrm{fft}}} & \mat{0} \\
	\hline
	\mat{0} & \Id_{n - k_{\mathrm{fft}}}
	\end{array}
	\right]
	\end{equation}
Then we use the same \emph{full sieve} algorithm as in \cite{DP23} to produce short vectors ${{\sW}} \subset \Lambda^{\vee}$. 
In what follows, we use the practical values of $N$ and $w$ that we obtained by experiments. We have reused the implementation for the experiments in \cite[§5]{DP23}\footnote{\url{https://github.com/ludopulles/DoesDualSieveWork}}. 
This very crude estimation seems to capture the error-floor behavior of the survival function as shown in Figure \ref{fig:dual_attack}. The point is that it is precisely this part of the curve which is not predicted by the standard independence assumption and which had no explanation so far. It can also be observed that the duality result is nothing but a straighforward use of the Poisson formula which has also be 
used very recently in \cite{WE23} to predict the abnormal variance of the BDD score distribution observed in \cite[Table~1]{DP23}.
\begin{figure}[h]
\centering
\begin{minipage}[c]{0.45\linewidth}
\includegraphics[width=\textwidth]{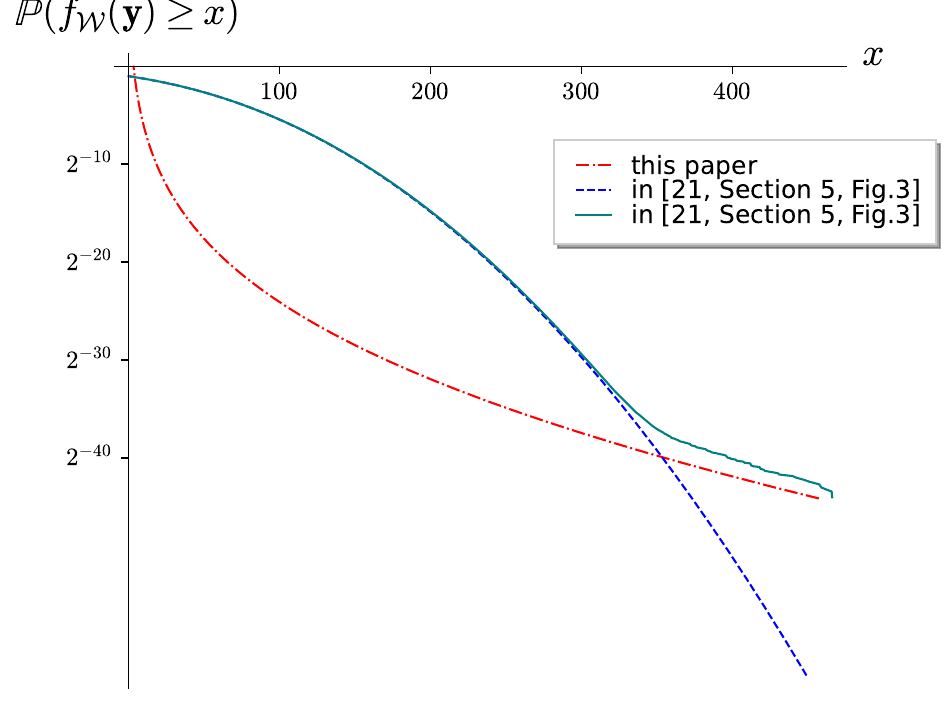}
	
\end{minipage} \hfill
\begin{minipage}[c]{0.45\linewidth}
\includegraphics[width=\textwidth]{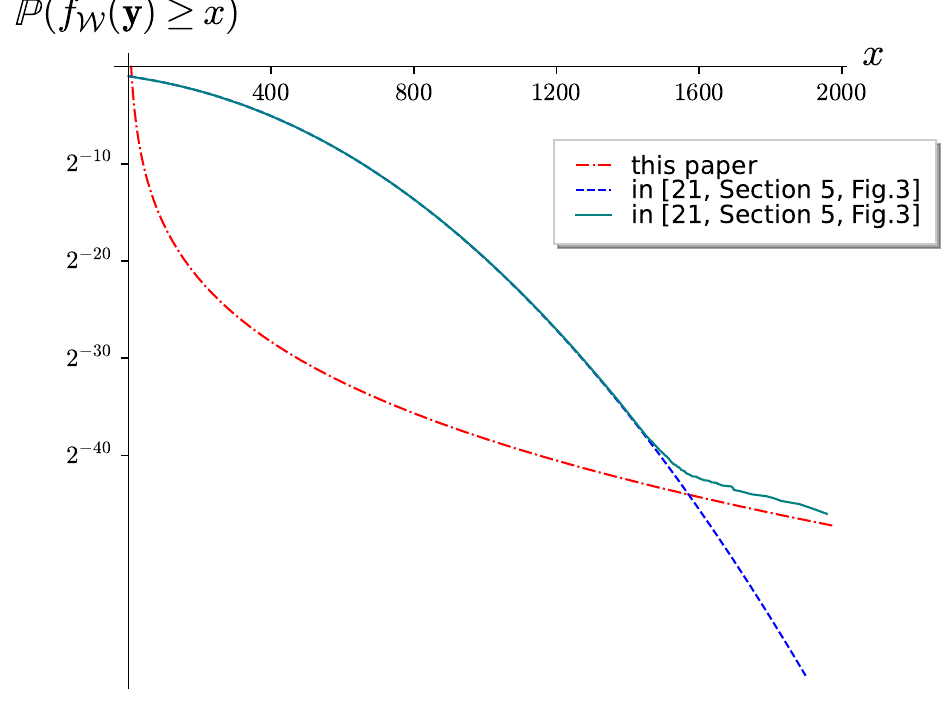}
	
\end{minipage}
		\caption{Crude estimation of the survival function (red, dash-dot line) compared to the experiments in \cite[§5]{DP23} (green, full line) and the prediction with the standard independence assumption (blue, dashed line). (left) $q = 3329$, $n = 60$, $T = 2^{45}$, $N = 5040$ and $w = 0.0320$;  (right) $q = 3329$, $n = 80$, $T = 2^{48}$, $N = 89494$ and $w = 0.0376$. \label{fig:dual_attack} }
\end{figure}
\newline

{\bf \noindent A more precise prediction.} One can remark (see Figure \ref{fig:dual_attack}) that $i)$ our newly introduced approximate distribution for $f_{\sW}$ given in Equation \eqref{eq:lattice_approx_f_first_term} matches the experimental curves specifically in the waterfall-floor zone and $ii)$ the distribution of $f_{\sW}$ given by the independence heuristic \cite[Heuristic 3]{DP23} matches the experimental curves up to this waterfall-floor zone. A natural idea to predict the experimental curve on the whole support is therefore to take the convolution of these two distributions. Indeed, for any support point, there is always one distribution which exponentially dominates the other one.
\newline

\noindent \textbf{Distribution in the waterfall-floor zone.}
Let us denote by $\vec{X}_{\textup{floor}}$ the random variable given by Equation \eqref{eq:lattice_approx_f_first_term}, namely:
\begin{equation}\label{eq:XFloor}  
\vec{X}_{\textup{floor}} \eqdef G(j_0)
\end{equation} 
where 
$$
G(j) = N \frac{\sqrt{n \pi}}{e} \left( \frac{n}{2\pi e w j} \right)^{n/2-1} J_{n/2-1}(2\pi w j)
$$ and $j_0$ is the length of the shortest vector of the lattice.
We only have to compute the distribution of $j_0$ to compute the distribution of $\vec{X}_{\textup{floor}}$. To that extend we make the following classic model.
\begin{model}{Model for the number of lattice points in a ball.} \label{model:lattice}
	Let $\Lambda$ be a random lattice of full rank $n$ and of volume $V$. We make the model that:
	\begin{equation*}
		|\Lambda \setminus \{0\} \cap \mathcal{B}_z| \sim \mathrm{Poisson}\left(\frac{\mathrm{Vol}\left( \mathcal{B}_1\right)  z^{n} }{V}\right)
	\end{equation*}
	where $\mathcal{B}_z$ denotes the Euclidean ball of center $\vec{0}$ and radius $z$ and 
	$\mathrm{Vol}\left( \mathcal{B}_1\right) = \frac{\sqrt{\pi^{n}}}{\Gamma\left(\frac{n}{2} + 1\right)}$.
\end{model}
This model allows us to write the following fact regarding the distribution of the length of the $j$'th shortest vector:
\begin{fact}\label{fact:gamma}
	Under Model \ref{model:lattice}, and for $k=0\cdots\infty$ the distribution of $j_k$, \ie the $k'th$ non-zero shortest vector of a random lattice $\Lambda$ of full rank $n$ and of volume $V$, is given by:
	$$j_k^n \sim \mathrm{Gamma}\left(k + 1, \; \frac{\mathrm{Vol}\left( \mathcal{B}_1\right) }{V} \right)$$
	where $\vec{Z} \sim \mathrm{Gamma}\left(b,\theta\right)$ has the following survival function when $b$ is an integer:
	$$ 
	\prob{}{\vec{Z} \geq \alpha} \eqdef  e^{-\theta \alpha} \sum_{i = 0}^{b-1}\frac{\left(\theta \alpha\right)^i}{i!}.
	$$
\end{fact}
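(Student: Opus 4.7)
The plan is to derive the distribution of $j_k$ directly from Model \ref{model:lattice} by translating the event ``$j_k \geq \alpha$'' into a condition on the Poisson-distributed count of non-zero lattice points in a ball. First, I would observe the elementary order-statistic identity: the $k$-th shortest non-zero vector has length at least $\alpha$ if and only if the ball of radius $\alpha$ contains strictly fewer than $k+1$ non-zero lattice points, i.e.\
\[
\{j_k \geq \alpha\} = \{\, |\Lambda \setminus \{0\} \cap \mathcal{B}_\alpha| \leq k\,\}.
\]

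Next, by Model \ref{model:lattice}, the right-hand side is a Poisson random variable with parameter $\lambda_\alpha \eqdef \frac{\mathrm{Vol}(\mathcal{B}_1) \alpha^n}{V}$. Therefore
\[
\prob{}{j_k \geq \alpha} = \sum_{i=0}^{k} e^{-\lambda_\alpha} \frac{\lambda_\alpha^i}{i!} = e^{-\lambda_\alpha} \sum_{i=0}^{k} \frac{\lambda_\alpha^i}{i!}.
\]
The final step is a change of variable: set $\beta \eqdef \alpha^n$ and $\theta \eqdef \frac{\mathrm{Vol}(\mathcal{B}_1)}{V}$ so that $\lambda_\alpha = \theta \beta$. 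Then
\[
\prob{}{j_k^n \geq \beta} = e^{-\theta \beta} \sum_{i=0}^{k} \frac{(\theta \beta)^i}{i!},
\]
which is exactly the survival function of a $\mathrm{Gamma}(k+1, \theta)$ random variable as stated in the fact, finishing the proof.

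There is no real obstacle here: the whole content is the bijection between order statistics of a point process and the counting function, combined with reading off the Poisson tail. The only slightly delicate point is bookkeeping the off-by-one in the index (``$k$-th shortest non-zero vector'' versus ``at most $k$ points strictly closer than $\alpha$''), which is why the Gamma shape parameter is $k+1$ rather than $k$. Once that is handled, matching the Poisson CDF to the Gamma survival function is a standard textbook identity and requires no further computation.
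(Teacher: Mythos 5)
Your proposal is correct and is essentially identical to the paper's proof: both translate $\{j_k \geq \alpha\}$ (equivalently $\{j_k > z\}$, the strict/non-strict distinction being immaterial here since the Poisson model is atomless) into the event that the ball of radius $\alpha$ contains at most $k$ non-zero lattice points, apply Model \ref{model:lattice} to get a Poisson tail sum, and read off the Gamma survival function via the substitution $\beta = \alpha^n$.
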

\begin{proof}
	For any $z > 0$, we have that
	\begin{align*}
		\prob{}{j_k^n> z^n} &=\prob{}{j_k> z} \\
		&= \prob{}{\bigcup_{i=0}^{k} ``|\Lambda \setminus \{0\} \cap \mathcal{B}_z| =  i" } \\
		&=\sum_{i = 0}^{k} \prob{}{|\Lambda \setminus \{0\} \cap \mathcal{B}_z| =  i} && \mbox{(disjoint union)} \\
		&=\sum_{i= 0}^{k} \frac{(\mathrm{Vol}\left(\mathcal{B}_1\right) \; z^{n} / V)^i \: e^{- (\mathrm{Vol}\left(\mathcal{B}_1 \right) \; z^{n} / V)}}{i \: ! } && \mbox{(Model \ref{model:lattice})}
	\end{align*}
	which completes the proof. 
\end{proof}

\noindent \textbf{Distribution in the waterfall zone.}
Let us denote by $\vec{X}_{\textup{fall}}$ the random variable $f_{\sW}$ under the independence heuristic. As given by \cite[Lemma 3]{DP23} and the discussion that follows their lemma we have
\begin{fact}
	$\vec{X}_{\textup{fall}}$ follows a normal distribution of mean $0$ and variance $\frac{1}{2} N$. More precisely, its probability density function $p$ is given by
	$$
	p(x) = \frac{1}{\sqrt{\pi N} } e^{-x^2/N}. 
	$$
\end{fact}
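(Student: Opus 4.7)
The plan is to derive the Gaussian law for $\vec{X}_{\textup{fall}}$ as a direct consequence of the independence heuristic combined with the Central Limit Theorem, following the same route as \cite[Lemma 3]{DP23}. Recall that by definition $\vec{X}_{\textup{fall}} = f_{\sW}(\vec{y}) = \sum_{\vec{w} \in \sW} \cos(2\pi \scal{\vec{w}}{\vec{y}})$, where the sum ranges over $N \eqdef \card{\sW}$ short dual lattice vectors.

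First, I would formalize the independence heuristic. Under the Heuristic of \cite{DP23}, the fractional parts $\scal{\vec{w}}{\vec{y}} \bmod 1$ for $\vec{w} \in \sW$ are modeled as i.i.d. uniformly distributed on $[0,1)$, so that the random variables $Z_{\vec{w}} \eqdef \cos(2\pi \scal{\vec{w}}{\vec{y}})$ are i.i.d.\ with common distribution that of $\cos(2\pi U)$, $U \sim \text{Unif}[0,1)$.

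Second, I would compute the first two moments of a single term. A direct integration gives
\begin{equation*}
\esp(Z_{\vec{w}}) = \int_0^1 \cos(2\pi u)\, du = 0, \qquad \esp(Z_{\vec{w}}^2) = \int_0^1 \cos^2(2\pi u)\, du = \tfrac{1}{2}.
\end{equation*}
Hence $\esp(\vec{X}_{\textup{fall}}) = 0$ and, by independence, $\mathrm{Var}(\vec{X}_{\textup{fall}}) = N \cdot \tfrac{1}{2} = N/2$.

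Third, I would invoke the Central Limit Theorem: the $Z_{\vec{w}}$'s are i.i.d., bounded (hence have finite moments of all orders), so as $N \to \infty$, the normalized sum $\vec{X}_{\textup{fall}}/\sqrt{N/2}$ converges in distribution to a standard normal. Equivalently, $\vec{X}_{\textup{fall}}$ is asymptotically distributed as $\mathcal{N}(0, N/2)$, whose density is
\begin{equation*}
p(x) = \frac{1}{\sqrt{2\pi \cdot N/2}}\, e^{-x^2/(2\cdot N/2)} = \frac{1}{\sqrt{\pi N}}\, e^{-x^2/N},
\end{equation*}
which is exactly the stated formula. The only delicate point is of course that the independence heuristic itself is an assumption rather than a theorem, and indeed the whole motivation of the present paper is that this heuristic fails in the waterfall-floor regime; once the heuristic is granted, however, the argument is a textbook CLT and the computation of $p(x)$ is immediate.
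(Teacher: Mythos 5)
Your proof is correct and follows exactly the standard route that the paper itself is implicitly invoking: the paper does not prove this Fact but simply cites \cite[Lemma 3]{DP23} (and the discussion around it), and that lemma rests on precisely the independence heuristic plus the elementary moment computation $\esp(\cos(2\pi U)) = 0$, $\esp(\cos^2(2\pi U)) = 1/2$ for $U$ uniform on $[0,1)$, together with the Central Limit Theorem. Your algebra for the variance ($N/2$) and the resulting density ($\tfrac{1}{\sqrt{\pi N}} e^{-x^2/N}$) checks out, and you correctly flag that the heuristic is the only non-rigorous step — which is also precisely the part the present paper is critiquing in the floor regime. So this matches the paper's approach; there is no gap.
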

We now make the refined model that $f_{\sW}\left(\yv\right)$ follows the same distribution as $\vec{X}_{\textup{fall}} +  \vec{X}_{\textup{floor}}$ where the distribution of $\vec{X}_{\textup{floor}}$ is computed numerically by using Fact \ref{fact:gamma} along with Equation~\eqref{eq:XFloor}. We show in Figure~\ref{fig:dual_attack2} that the distribution of this refined model well approximates the behavior of the experimental distribution of $f_{\sW}$ on the whole support.
\begin{figure}[h]
	\centering
	\begin{minipage}[c]{0.45\linewidth}
		\includegraphics[width=\textwidth]{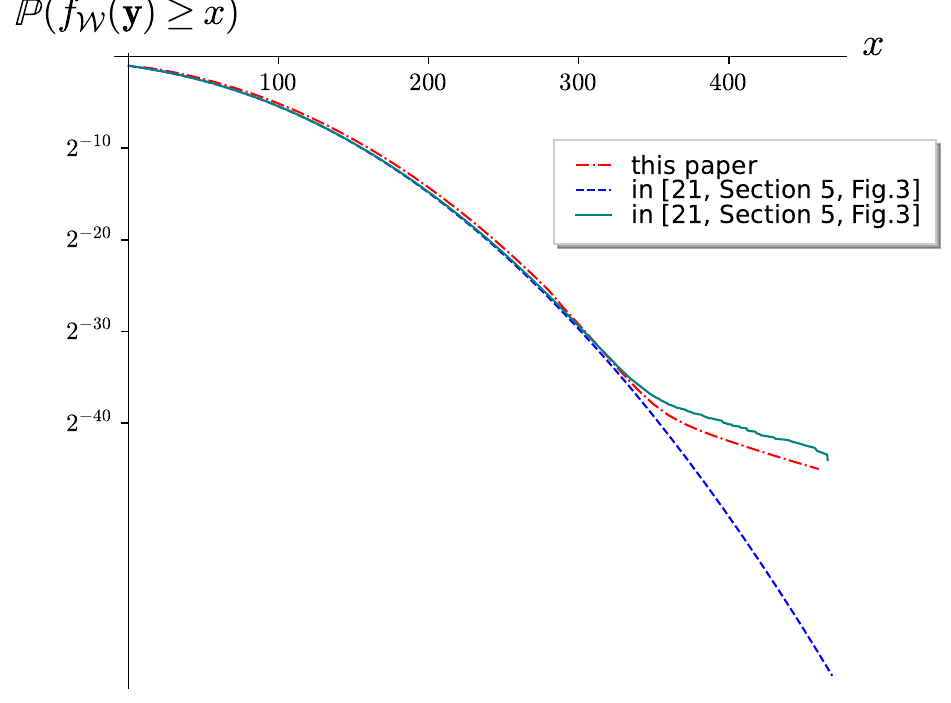}
		
	\end{minipage} \hfill
	\begin{minipage}[c]{0.45\linewidth}
		\includegraphics[width=\textwidth]{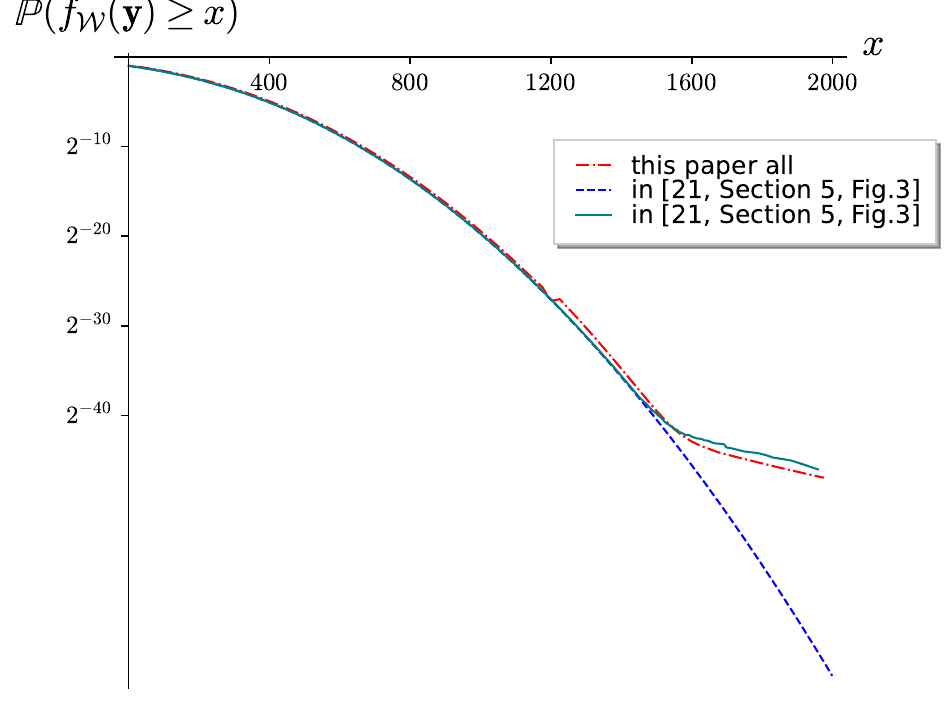}
		
	\end{minipage}
	\caption{Refined estimation of the survival function (red, dash-dot line) compared to the experiments in \cite[§5]{DP23} (green, full line) and the prediction with the standard independence assumption (blue, dashed line). (left) $q = 3329$, $n = 60$, $T = 2^{45}$, $N = 5040$ and $w = 0.0320$; (right) $q = 3329$, $n = 80$, $T = 2^{48}$, $N = 89494$ and $w = 0.0376$ \label{fig:dual_attack2} }
\end{figure}
\medskip

\noindent \textbf{Concurrent work.}
Note that very recently we have been made aware of the concurrent work \cite{DP23a} which similarly to what we do here, uses Bessel functions to predict the score with a related approach 
and similar predictions (see \cite[\S 4.3]{DP23a}).

\iftoggle{llncs}{
	\bibliographystyle{splncs04}
	
}{
	\bibliographystyle{alpha}
}
\newcommand{\etalchar}[1]{$^{#1}$}

\iftoggle{llncs}{
\newpage 
\appendix
\section{Proof of Lemma \ref{lem:fundamental}}\label{app:proofLemmaFund}

	\lemFundamental* 

	\begin{proof}
	\begin{align*}
		\widehat{f_{\yv,\sH,\Gmbis[]}} \left(\uv\right) &= \sum_{\vv \in \F_2^\kbis} (-1)^{\scp{\uv}{\vv}} f_{\yv,\sH,\Gmbis[]}(\vv) \\
		&= \sum_{(\hv,\vv)\in \CC^\perp\times \F_2^\kbis : \left(\hv,\vv \Gmbis \right) \in \sH} (-1)^{\langle \yv, \hv \rangle + \langle \uv, \vv \rangle} && \mbox{(Eq. \eqref{eq:def_f} and \eqref{eq:def_fourier})}\\
		&=  \sum_{(\hv,\vv)\in \CC^\perp\times \F_2^\kbis : \left(\hv,\vv \Gmbis \right) \in  \sH} (-1)^{\langle \yv, \hv \rangle + \langle \xv \Gmbis[\transpose], \vv \rangle} \\
		&= \sum_{(\hv,\vv)\in \CC^\perp\times \F_2^\kbis:\left(\hv,\vv \Gmbis \right) \in  \sH} (-1)^{\langle \yv, \hv \rangle + \langle \xv, \vv \Gmbis \rangle} \\
		&=  \sum_{\left(\hv,\cvbis \right) \in  \sH} (-1)^{\langle \yv, \hv \rangle + \langle \xv, \cvbis \rangle}
	\end{align*}	
\end{proof} \section{Proof of Proposition \ref{prop:biasCodedRLPN}}\label{app:a}
 
 Let us first recall this proposition
 \propbiasCodedRLPN*
 
The proof of this Proposition \ref{prop:biasCodedRLPN} will be a consequence of the following lemma. 
\begin{restatable}{lemma}{lemmaVar}\label{lemma:varianceEb}
	Let $w,s,k,\kbis,t_{\textup{aux}},n \in \mathbb{N}$ be such that (for some constant $a > 0$)
	\begin{equation}\label{eq:constraintsVar} 
			\frac{\binom{n-s}{w} \binom{s}{t_{\textup{aux}}}}{2^{k}} = \OO{n^{\alpha}} \quad \mbox{and} \quad \frac{\binom{s}{t_{\textup{aux}}}}{2^{s-\kbis}} = \OO{n^{\alpha}}
		\end{equation}

	Let $\sN$ be a fixed set of $n-s$ positions in $\IInt{1}{n}$ and $\sP \eqdef \llbracket 1,n \rrbracket \backslash \sN$. Let $\vec{e}\in\F_{2}^{n}$ be some error of weight $u$ on $\sN$.

	Assume that $\CC$ is an $[n,k]$-linear code chosen by picking an $(n-k) \times n$ binary parity-check matrix $\Hm$ uniformly at random and that $\CCbis$ is chosen by picking an $(s-\kbis) \times s$ binary parity-check matrix $\Hmbis$ uniformly at random.
	Let us define for $b \in \{0,1\}$,
	\begin{align*}
			E_b &\eqdef \card{\left\{(\ev',\hv) \in   \mathcal{S}_{\tbis}^s \times \CC^\perp:\; \;\left|\hv_{\sN}\right| = w \; \mbox{ and } \; \ev' + \hv_{\sP} \in \CCbis, \; \langle \ev',\vec{e}_{\sP} \rangle +  \langle \vec{e}_{\sN},  \vec{h}_{\sN}\rangle=b \right\}}, \\
			E'_b &\eqdef \card{\left\{ \left(\vec{e}',\vec{h}\right) \in  \mathcal{S}_{\tbis}^{s} \times  \F_{2}^{n} :\; |\vec{h}_{\sN}| = w \; \mbox{ and } \; \;\langle \vec{e}',\vec{e}_{\sP} \rangle +  \langle \vec{e}_{\sN},  \vec{h}_{\sN}\rangle =b\right \}}. 
		\end{align*}
	
	Then,
	\begin{eqnarray}
			\mathbb{E}_{\vec{H},\Hmbis}(E_b) &=&  \Th{1}\frac{E'_b}{2^{k+s-\kbis}} \label{eq:espL}  \\
			\mathbf{Var}_{\vec{H},\Hmbis}(E_b) &=& \OO{n^{\alpha}}\frac{E'_b}{2^{k+s-\kbis}}  \label{eq:varL}
		\end{eqnarray}
\end{restatable}
	\begin{proof}
	Let $\mathbf{1}_{\ev',\hv}$ be the indicator function of the event ``$\hv \in \CC^{\bot}$ and $\ev'+\hv_{\sP} \in \CCbis$''. Define 
	$$
	\mathcal{E}_{b} \eqdef \left\{ \left( \vec{e}',\vec{h} \right) \in \mathcal{S}_{t_{\textup{aux}}}^{s} \times \F_{2}^{n} \mbox{: }  |\vec{h}_{\sN}| = w \mbox{ and }  \langle \ev',\vec{e}_{\sP} \rangle + \langle \vec{e}_{\sN},\vec{h}_{\sN} \rangle=b \right\}.
	$$
	Notice that $E_{b}'  = \card{\mathcal{E}_{b}}$. 
By definition and linearity of the expectation
	\begin{align}
		\mathbb{E}_{\vec{H},\Hmbis}\left( E_{b} \right) &=\mathbb{E}_{\vec{H},\Hmbis} \left( \sum_{(\vec{e}',\vec{h})\in \mathcal{E}_{b}} \mathbf{1}_{\ev',\hv} \right) \nonumber \\
		&= \sum_{(\vec{e}',\vec{h})\in \mathcal{E}_{b}} \mathbb{E}_{\vec{H},\Hmbis}(\mathbf{1}_{\ev',\hv}) \nonumber \\ 
		&=  \sum_{(\vec{e}',\vec{h})\in \mathcal{E}_{b}} \mathbb{P}_{\vec{H},\Hmbis}\left( \hv \in \CC^{\bot}, \hv_{\sP} + \ev' \in \CCbis \right)\label{eq:expectation}
	\end{align}
	We have that,
	\begin{equation}\label{eq:probBasic}
		\mathbb{P}_{\vec{H},\Hmbis}\left( \hv \in \CC^{\bot}, \hv_{\sP} + \ev' \in \CCbis \right) = \left\{
		\begin{array}{cl}
		\frac{1}{2^{k + s - \kbis}} & \;\; \mbox{if } \vec{h}_{\sP}+\vec{e}' \neq \vec{0} \\
		\frac{1}{2^{k}} & \;\; \mbox{otherwise.}
		\end{array}
		\right.
	\end{equation}
Indeed, notice that events ``$\hv \in \CC^{\bot}$'' and ``$\hv_{\sP} + \ev' \in \CCbis $'' are independent. Furthermore, $\vec{h}$ cannot be equal to $\vec{0}$ as $|\vec{h}_{\sN}| = w > 0$.  Therefore,
	\begin{align*}
		\mathbb{P}_{\vec{H},\Hmbis}\left( \hv \in \CC^{\bot},\hv_{\sP} + \ev' \in \CCbis \right) &= \mathbb{P}_{\vec{H}}\left( \hv \in \CC^{\bot}\right)  \; \mathbb{P}_{\Hmbis}\left(\hv_{\sP} + \ev' \in \CCbis \right) \nonumber \\
		&=  \frac{1}{2^{k}}\;\mathbb{P}_{\Hmbis}\left( \hv_{\sP} + \ev' \in \CCbis  \right)
\end{align*}
	Now, plugging Equation \eqref{eq:probBasic} in \eqref{eq:expectation} leads to,
	\begin{align}
		\mathbb{E}_{\vec{H},\Hmbis}\left( E_{b} \right) &= \sum_{\substack{(\vec{e}',\vec{h})\in \mathcal{E}_{b}\\ \vec{h}_{\sP} \neq \vec{e}' }} \frac{1}{2^{k + s - \kbis}} + \sum_{\substack{(\vec{e}',\vec{h})\in \mathcal{E}_{b}\\ \vec{h}_{\sP} = \vec{e}' }} \frac{1}{2^{k}} \nonumber \\
		&= \sum_{\vec{e}' \in \cS_{\tbis}^{s}} \left( \frac{\card{\sE_{b,1}^{\vec{e}'} }}{2^{k+s-\kbis}} + \frac{\card{\sE_{b,2}^{\vec{e}'} }}{2^k} \right) \label{eq:Eb12}
\end{align} 
	where for a fixed $\vec{e}'$, 
	\begin{equation}\label{eq:Eb1} 
	\sE_{b,1}^{\vec{e}'} \eqdef \left\{ \vec{h} \in \mathbb{F}_{2}^{n}: \; (\vec{e}',\vec{h}) \in \sE_{b} \mbox{ and } \vec{h}_{\sP} \neq \vec{e}' \right\},
	\end{equation}
	\begin{equation}\label{eq:Eb2} 
		\sE_{b,2}^{\vec{e}'} \eqdef \left\{ \vec{h} \in \mathbb{F}_{2}^{n}: \; (\vec{e}',\vec{h}) \in \sE_{b} \mbox{ and } \vec{h}_{\sP} = \vec{e}' \right\}.
	\end{equation} 
	Notice that for any $\vec{e}'$, 
	\begin{equation}\label{eq:cardEb12}  
	\card{\sE_{b,1}} = \frac{\card{\sE_{b,2}}}{2^{s}} 
	\end{equation} 
	It is readily seen that,
	$$
	\card{\sE_b} = \sum_{\vec{e}' \in \cS_{\tbis}^{s}} \left( \card{\sE_{b,1}^{\vec{e}'}} +  \card{\sE_{b,2}^{\vec{e}'}} \right) 
	$$
	which implies that 
	$$
	\sum_{\vec{e}' \in \cS_{\tbis}^{s}} \card{\sE_{b,1}^{\vec{e}'}} = \frac{\card{\sE_{b}}}{1+\frac{1}{2^{s}}} = \Th{1} \card{\sE_b} \quad \mbox{and} \quad \sum_{\vec{e}' \in \cS_{\tbis}^{s}} \card{\sE_{b,2}^{\vec{e}'}} = \frac{\card{\sE_{b}}}{2^{s}-1} = \Th{2^{-s}} \card{\sE_{b}}
	$$
	Plugging this into Equation \eqref{eq:Eb12} leads to,
	\begin{equation}\label{eq:expE12b} 
	\mathbb{E}_{\vec{H},\Hmbis}\left( E_{b} \right) = \Th{1} \frac{\card{\sE_{b}}}{2^{k+s-\kbis}} + \Th{1} \frac{\card{\sE_{b}}}{2^{k+s}} = \Th{1}\frac{\card{\sE_{b}}}{2^{k+s-\kbis}}
	\end{equation} 
	which shows Equation \eqref{eq:espL}. 	
	Let us show now Equation \eqref{eq:varL}. By definition 
	\begin{align}
		\mathbf{Var}\left( E_{b} \right) &= \sum_{(\vec{e}',\vec{h}) \in \mathcal{E}_{b}}\mathbf{Var}\left( \mathbf{1}_{\vec{e}',\vec{h}} \right) + \sum_{\substack{(\vec{e}_{0}',\vec{h}^{0}),(\vec{e}_{1}',\vec{h}^{1}) \in \mathcal{E}_{b} \\ (\vec{e}_{0}',\vec{h}^{0}) \neq (\vec{e}_{1}',\vec{h}^{1})}} \mathbb{E}\left( \mathbf{1}_{\vec{e}_{0}',\vec{h}^{0}} \; \mathbf{1}_{\vec{e}_{1}',\vec{h}^{1}} \right) - \mathbb{E}\left( \mathbf{1}_{\vec{e}_{0}',\vec{h}^{0}} \right) \mathbb{E}\left( \mathbf{1}_{\vec{e}_{1}',\vec{h}^{1}} \right) \nonumber \\
		&\leq \frac{E_{b}'}{2^{k+s-\kbis}}+ \sum_{\substack{(\vec{e}_{0}',\vec{h}^{0}),(\vec{e}_{1}',\vec{h}^{1}) \in \mathcal{E}_{b} \\ (\vec{e}_{0}',\vec{h}^{0}) \neq (\vec{e}_{1}',\vec{h}^{1})}} \mathbb{E}\left( \mathbf{1}_{\vec{e}_{0}',\vec{h}^{0}} \; \mathbf{1}_{\vec{e}_{1}',\vec{h}^{1}} \right) - \mathbb{E}\left( \mathbf{1}_{\vec{e}_{0}',\vec{h}^{0}} \right) \mathbb{E}\left( \mathbf{1}_{\vec{e}_{1}',\vec{h}^{1}} \right) \nonumber
	\end{align}
	where we used that $\mathbf{Var}(\mathbf{1}_{\vec{e}',\vec{h}}) \leq \mathbb{E}\left( \mathbf{1}_{\vec{e},\vec{h}}^{2} \right) =  \mathbb{E}\left( \mathbf{1}_{\vec{e},\vec{h}} \right)$ and Equation \eqref{eq:expE12b}.

	To compute the above expectations, we will split in two cases according to $\sE_{b,1}^{\vec{e}'}$ and $\sE_{b,2}^{\vec{e}'}$ which are respectively defined in Equations \eqref{eq:Eb1} and \eqref{eq:Eb2}. More precisely, we will fix $\vec{e}_{b}'$ and suppose that $\vec{h}^{b}$ belongs to $\sE_{b,1}$ or $\sE_{b,2}$. We will treat the following disjoint cases.

	\begin{enumerate}[label=\textcolor{blue}{\arabic*.}, ref=\arabic*]
		\item\label{case1} $\sF_{0}^{\vec{e}_{0}',\vec{e}_{1}'} = \left\{(\vec{h}^{0},\vec{h}^{1}): \;  \vec{h}^{0} \in \sE_{b,1}^{\vec{e}_{0}'} \mbox{ and }  \vec{h}^{1} \in \sE_{b,1}^{\vec{e}_{1}'}\right\}$,

		\item\label{case2} $\sF_{1}^{\vec{e}_{0}',\vec{e}_{1}'} = \left\{(\vec{h}^{0},\vec{h}^{1}): \;  \vec{h}^{0} \in \sE_{b,2}^{\vec{e}_{0}'} \mbox{ and }  \vec{h}^{1} \in \sE_{b,1}^{\vec{e}_{1}'}\right\}$

		\item\label{case3} $\sF_{2}^{\vec{e}_{0}',\vec{e}_{1}'} = \left\{(\vec{h}^{0},\vec{h}^{1}): \;  \vec{h}^{0} \in \sE_{b,1}^{\vec{e}_{0}'} \mbox{ and }  \vec{h}^{1} \in \sE_{b,2}^{\vec{e}_{1}'}\right\}$

		\item\label{case4} $\sF_{3}^{\vec{e}_{0}',\vec{e}_{1}'} = \left\{(\vec{h}^{0},\vec{h}^{1}): \;  \vec{h}^{0} \in \sE_{b,2}^{\vec{e}_{0}'} \mbox{ and }  \vec{h}^{1} \in \sE_{b,2}^{\vec{e}_{1}'}\right\}$
	\end{enumerate}
	In particular,
	\begin{multline}\label{eq:ineqVariant} 
		\mathbf{Var}\left( E_{b} \right)  \leq \frac{E_{b}'}{2^{k+s-\kbis}} + \sum_{\vec{e}_{0}',\vec{e}_{1}'}\Big( \underbrace{\sum_{\substack{\vec{h}^{0},\vec{h}^{1} \in \sF_{0}^{\vec{e}_{0}',\vec{e}_{1}'} \\ (\vec{e}_{0}',\vec{h}^{0}) \neq (\vec{e}_{1}',\vec{h}^{1}) }}  \mathbb{E}\left( \mathbf{1}_{\vec{e}_{0}',\vec{h}^{0}} \; \mathbf{1}_{\vec{e}_{1}',\vec{h}^{1}} \right) - \mathbb{E}\left( \mathbf{1}_{\vec{e}_{0}',\vec{h}^{0}} \right) \mathbb{E}\left( \mathbf{1}_{\vec{e}_{1}',\vec{h}^{1}} \right)}_{\eqdef F_0}  \\
		+ \underbrace{\sum_{\substack{\vec{h}^{0},\vec{h}^{1} \in \sF_{1}^{\vec{e}_{0}',\vec{e}_{1}'} \\ (\vec{e}_{0}',\vec{h}^{0}) \neq (\vec{e}_{1}',\vec{h}^{1}) }}  \mathbb{E}\left( \mathbf{1}_{\vec{e}_{0}',\vec{h}^{0}} \; \mathbf{1}_{\vec{e}_{1}',\vec{h}^{1}} \right) - \mathbb{E}\left( \mathbf{1}_{\vec{e}_{0}',\vec{h}^{0}} \right) \mathbb{E}\left( \mathbf{1}_{\vec{e}_{1}',\vec{h}^{1}} \right)}_{\eqdef F_1} \\
		+
		\underbrace{\sum_{\substack{\vec{h}^{0},\vec{h}^{1} \in \sF_{2}^{\vec{e}_{0}',\vec{e}_{1}'} \\ (\vec{e}_{0}',\vec{h}^{0}) \neq (\vec{e}_{1}',\vec{h}^{1}) }}  \mathbb{E}\left( \mathbf{1}_{\vec{e}_{0}',\vec{h}^{0}} \; \mathbf{1}_{\vec{e}_{1}',\vec{h}^{1}} \right) - \mathbb{E}\left( \mathbf{1}_{\vec{e}_{0}',\vec{h}^{0}} \right) \mathbb{E}\left( \mathbf{1}_{\vec{e}_{1}',\vec{h}^{1}} \right)}_{\eqdef F_2}  \\
		+
		\underbrace{\sum_{\substack{\vec{h}^{0},\vec{h}^{1} \in \sF_{3}^{\vec{e}_{0}',\vec{e}_{1}'} \\ (\vec{e}_{0}',\vec{h}^{0}) \neq (\vec{e}_{1}',\vec{h}^{1}) }}  \mathbb{E}\left( \mathbf{1}_{\vec{e}_{0}',\vec{h}^{0}} \; \mathbf{1}_{\vec{e}_{1}',\vec{h}^{1}} \right) - \mathbb{E}\left( \mathbf{1}_{\vec{e}_{0}',\vec{h}^{0}} \right) \mathbb{E}\left( \mathbf{1}_{\vec{e}_{1}',\vec{h}^{1}} \right)}_{\eqdef F_3} 
		\Big) 
	\end{multline}
	Let,
	\begin{equation}\label{eq:cov} 
		\textup{Cov} \eqdef F_{0} + F_{1} + F_{2} + F_{3}
	\end{equation}

	\noindent For each cases, we will split according to the following subcases. 
		\begin{enumerate}[label=\textcolor{blue}{\roman*.}, ref=\roman*]\setlength{\itemsep}{5pt}
		\item \label{subcase:1} $ \vec{e}_{0}' = \vec{e}_{1}', \; \vec{h}^{0} \neq \vec{h}^{1}$ and $\vec{h}_{\sP}^{0} = \vec{h}_{\sP}^{1}$,

		\item \label{subcase:2} $\vec{e}_{0}' = \vec{e}_{1}', \; \vec{h}^{0} \neq \vec{h}^{1}$ and $\vec{h}_{\sP}^{0} \neq \vec{h}_{\sP}^{1}$,

		\item\label{subcase:3} $\vec{e}_{0}' \neq \vec{e}_{1}'$ and $\vec{h}^{0} = \vec{h}^{1}$

		\item\label{subcase:4} $\vec{e}_{0}' \neq \vec{e}_{1}', \; \vec{h}^{0} \neq \vec{h}^{1}$ and $\vec{h}_{\sP}^{0} = \vec{h}_{\sP}^{1}$

		\item\label{subcase:5} $\vec{e}_{0}' \neq \vec{e}_{1}',  \vec{h}^{0} \neq \vec{h}^{1}, \hv_{\sP}^{0} \neq \vec{h}_{\sP}^{1}$ and $\vec{h}_{\sP}^{0} + \vec{e}_{0}' = \vec{h}_{\sP}^{1} + \vec{e}_{1}'$

		\item\label{subcase:6} $\vec{e}_{0}' \neq \vec{e}_{1}',  \vec{h}^{0} \neq \vec{h}^{1}, \hv_{\sP}^{0}\neq \vec{h}_{\sP}^{1}$ and $\vec{h}_{\sP}^{0} + \vec{e}_{0}' \neq  \vec{h}_{\sP}^{1} + \vec{e}_{1}'$ 
		\newline
	\end{enumerate}

	{\bf\noindent Case \ref{case1}:} Recall that in this case we have 
	\begin{equation}\label{eq:case1}  
		\vec{h}^{0}_{\sP} \neq \vec{e}_{0}' \quad \mbox{and} \quad \vec{h}^{1}_{\sP} \neq \vec{e}_{1}'
	\end{equation}   
	We have, 
	\begin{equation*} 
	\mathbb{E}\left( \mathbf{1}_{\vec{e}_{0}',\vec{h}^{0}} \right) \mathbb{E}\left( \mathbf{1}_{\vec{e}_{1}',\vec{h}^{1}} \right) = \left( \frac{1}{2^{k+s-\kbis}} \right)^{2}
	\end{equation*} 
	Let us compute $\mathbb{E}_{\vec{H},\Hmbis}\left( \mathbf{1}_{\vec{e}_{0}',\vec{h}^{0}} \; \mathbf{1}_{\vec{e}_{1}',\vec{h}^{1}} \right)$ when $(\vec{e}_{0}',\vec{h}^{0}) \neq (\vec{e}_{1}',\vec{h}^{1})$. By definition
	\begin{align*}
		\mathbb{E}_{\vec{H},\Hmbis}\left( \mathbf{1}_{\vec{e}_{0}',\vec{h}^{0}} \; \mathbf{1}_{\vec{e}_{1}',\vec{h}^{1}} \right) &= \mathbb{P}_{\vec{H},\Hmbis}\left( \hv^{0},\vec{h}^{1} \in \CC^{\bot}, \vec{h}^{0}_{\sP} + \ev_{0}' \in \CCbis, \vec{h}^{1}_{\sP} + \ev_{1}' \in \CCbis \right) \\
		&= \mathbb{P}_{\vec{H},\Hmbis}\left( \hv^{0},\vec{h}^{1} \in \CC^{\bot}\right) \mathbb{P}_{\vec{H},\Hmbis} \left(  \vec{h}^{0}_{\sP} + \ev_{0}' \in \CCbis, \vec{h}^{1}_{\sP} + \ev_{1}' \in \CCbis\right)
	\end{align*}
Therefore,
	\begin{align} 
	\sum_{\vec{e}_{0}',\vec{e}_{1}'} F_{0} &= \sum_{\vec{e}_{0}',\vec{e}_{1}'}\sum_{\substack{\vec{h}^{0},\vec{h}^{1} \in \sF_{0}^{\vec{e}_{0}',\vec{e}_{1}'} \\ (\vec{e}_{0}',\vec{h}^{0}) \neq (\vec{e}_{1}',\vec{h}^{1}) }}  \mathbb{E}\left( \mathbf{1}_{\vec{e}_{0}',\vec{h}^{0}} \; \mathbf{1}_{\vec{e}_{1}',\vec{h}^{1}} \right) - \left( \frac{1}{2^{k+s-\kbis}}\right)^{2} \nonumber\\
	&= \sum_{\vec{e}_{0}',\vec{e}_{1}'}\sum_{\substack{\vec{h}^{0},\vec{h}^{1} \in \sF_{0}^{\vec{e}_{0}',\vec{e}_{1}'} \\ (\vec{e}_{0}',\vec{h}^{0}) \neq (\vec{e}_{1}',\vec{h}^{1}) }} \textup{Cov}^{(0)}\label{eq:cov0} 
	\end{align}
	where, 
	\begin{equation*}
	\textup{Cov}^{(0)} \eqdef \mathbb{P}_{\vec{H}}\left( \hv^{0},\vec{h}^{1} \in \CC^{\bot}\right) \mathbb{P}_{\Hmbis} \left( \vec{h}_{\sP}^{0} + \ev_{0}' \in \CC_{\textup{aux}}, \vec{h}_{\sP}^{1} + \ev_{1}' \in \CC_{\textup{aux}} \right) - \left(\frac{1}{ 2^{k+s-\kbis}}\right)^{2}
	\end{equation*} 
	Our aim now it to upper-bound $\textup{Cov}^{(0)}$ according to the above $6$ sub-cases.
	\newline

	{\bf \noindent Sub-case \ref{subcase:1}.} Suppose that $\vec{e}_{0}' = \vec{e}_{1}'$, $\vec{h}^{0} \neq \vec{h}^{1}$ and $\vec{h}_{\sP}^{0} = \vec{h}_{\sP}^{1}$. We have
	\begin{align*}
		\textup{Cov}^{(0)} &= \frac{1}{2^{2k}} \; \mathbb{P}_{\Hmbis}\left( \vec{h}^{0}_{\sP} + \ev_{0}' \in \CCbis, \vec{h}^{1}_{\sP} + \ev_{1}' \in \CCbis \right) - \left(\frac{1}{ 2^{k+s-\kbis}}\right)^{2} \\
		&= \frac{1}{2^{2k}}  \; \mathbb{P}_{\Hmbis}\left( \vec{h}^{0}_{\sP} + \ev_{0}' \in \CCbis \right) 	- \left(\frac{1}{ 2^{k+s-\kbis}}\right)^{2} \qquad (\mbox{as }\vec{h}^{0}_{\sP} + \ev_{0}'  = \vec{h}^{1}_{\sP} + \ev_{1}' )  \\
		&= \frac{1}{2^{2k}}  \; \frac{1}{2^{s-\kbis}} - \left(\frac{1}{ 2^{k+s-\kbis}}\right)^{2}
	\end{align*}
		where in the last line we used Equation \eqref{eq:case1}.
	Therefore, in that case 
	$$
	\textup{Cov}^{(0)} \leq   \frac{1}{2^{2k}} \; \frac{1}{2^{s-\kbis}}.
	$$

	{\bf \noindent Sub-case \ref{subcase:2}.} Suppose that $\vec{e}_{0}'=\vec{e}_{1}', \vec{h}^{0}\neq \vec{h}^{1}$ and $\vec{h}_{\sP}^{0} \neq \vec{h}_{\sP}^{1}$. We have 
	\begin{align}
		\textup{Cov}^{(0)} &= \frac{1}{2^{2k}} \;\mathbb{P}_{\Hmbis} \left(\vec{h}^{0}_{\sP} + \ev_{0}' \in \CCbis, \vec{h}^{1}_{\sP} + \ev_{1}' \in \CCbis \right) - \left(\frac{1}{ 2^{k+s-\kbis}}\right)^{2}  \nonumber \\
		&=  \frac{1}{2^{2k}} \;\mathbb{P}_{\Hmbis} \left(\vec{h}^{0}_{\sP} + \ev_{0}' \in \CCbis \right) \mathbb{P}_{\Hmbis} \left( \vec{h}^{1}_{\sP} + \ev_{1}' \in \CCbis \right) - \left(\frac{1}{ 2^{k+s-\kbis}}\right)^{2}\label{eq:cov2}
	\end{align}
	where in the last line we used that $\vec{h}^{0}_{\sP} + \ev_{0}'  \neq \vec{h}^{1}_{\sP} + \ev_{1}'$showing that these two vectors are linearly independent (we work in $\F_{2}$), and thus that both events are independent. But, as they are different from $\vec{0}$ (according to Equation \eqref{eq:case1}) we have for ($b \in \{0,1\}$)
	$$
	\mathbb{P}_{\Hmbis} \left( \vec{h}^{b}_{\sP} + \ev_{b}' \in \CCbis \right)  = \frac{1}{2^{s-k_{\textup{aux}}}}
	$$
	Therefore, plugging this in Equation \eqref{eq:cov2} leads to 
	$$
	\textup{Cov}^{(0)} = 0.
	$$

	{\bf \noindent Sub-case \ref{subcase:3}.} Suppose that $\vec{e}_{0}' \neq \vec{e}_{1}'$, $\vec{h}^{0} = \vec{h}^{1}$. In that case, 
	\begin{align*} 
		\textup{Cov}^{(0)} &=	\mathbb{P}_{\Hmbis} \left(\hv^0 \in \CC^\perp \right) \mathbb{P}_{\Hmbis} \left(\vec{h}^{0}_{\sP} + \ev_{0}' \in \CCbis, \vec{h}^{0}_{\sP} + \ev_{1}' \in \CCbis  \right)  - \left(\frac{1}{ 2^{k+s-\kbis}}\right)^{2}\\
		&=\frac{1}{2^{k}} \; \mathbb{P}_{\Hmbis} \left(\vec{h}^{0}_{\sP} + \ev_{0}' \in \CCbis \right) \mathbb{P}_{\Hmbis} \left(\vec{h}^{0}_{\sP} + \ev_{1}' \in \CCbis \right)  - \left(\frac{1}{ 2^{k+s-\kbis}}\right)^{2}\\
		&= \frac{1}{2^{k}} \; \left(\frac{1}{2^{s-\kbis}} \right)^2  - \left(\frac{1}{ 2^{k+s-\kbis}}\right)^{2}
	\end{align*} 
	where in the second equality we used that $\vec{h}_{\sP}^{0} + \vec{e}_{0}' \neq \vec{h}_{\sP}^{1} + \vec{e}_{1}'$ and are different from $\vec{0}$ (according to Equation \eqref{eq:case1}) which implies that both events are independent.  
	Therefore, in that case  
	\begin{equation}\label{eq:subcase03} 
	\textup{Cov}^{(0)} \leq   \frac{1}{2^{k + 2s-2\kbis}} 
	\end{equation}

	{\bf \noindent Sub-case \ref{subcase:4}.} Suppose that $\vec{e}_{0}' \neq \vec{e}_{1}'$, $\vec{h}^{0} \neq \vec{h}^{1}$ and $\vec{h}_{\sP}^{0} = \vec{h}_{\sP}^{1}$. In that case, 
	\begin{align*}
		\textup{Cov}^{(0)} &=\frac{1}{2^{2k}} \; \mathbb{P}_{\Hmbis} \left(\vec{h}^{0}_{\sP} + \ev_{0}' \in \CCbis, \vec{h}^{0}_{\sP} + \ev_{1}' \in \CCbis  \right) - \left(\frac{1}{ 2^{k+s-\kbis}}\right)^{2} \\
		&=\frac{1}{2^{2k}}\mathbb{P}_{\Hmbis} \left(\vec{h}^{0}_{\sP} + \ev_{0}' \in \CCbis \right) \mathbb{P}_{\Hmbis} \left(\vec{h}^{0}_{\sP} + \ev_{1}' \in \CCbis \right) - \left(\frac{1}{ 2^{k+s-\kbis}}\right)^{2} \\
		&=  \left(\frac{1}{2^{k+ s-\kbis}} \right)^2 - \left(\frac{1}{ 2^{k+s-\kbis}}\right)^{2} \\
		&= 0
	\end{align*}
	where in the second equality we used that $\vec{h}_{\sP}^{0} + \vec{e}_{0}' = \vec{h}_{\sP}^{1} + \vec{e}_{0}' \neq \vec{h}_{\sP}^{1} + \vec{e}_{1}'$ which implies that both events are independent.  
	\newline

	{\bf \noindent Sub-case \ref{subcase:5}.} Suppose that $\vec{e}_{0}' \neq \vec{e}_{1}'$, $\vec{h}^{0} \neq \vec{h}^{1}$, $\vec{h}_{\sP}^{0} \neq \vec{h}_{\sP}^{1}$ and $\hv_{\sP}^0 + \ev_0' = \hv_{\sP}^1 + \ev_1'$. We have
	\begin{align*}
		\textup{Cov}^{(0)} &=\frac{1}{2^{2k}}\; \mathbb{P}_{\Hmbis} \left(\vec{h}^{0}_{\sP} + \ev_{0}' \in \CCbis \right) - \left(\frac{1}{ 2^{k+s-\kbis}}\right)^{2} \\
		&=\frac{1}{2^{2k}}\; \mathbb{P}_{\Hmbis} \left(\vec{h}^{0}_{\sP} + \ev_{0}' \in \CCbis \right)  - \left(\frac{1}{ 2^{k+s-\kbis}}\right)^{2} \\
		&=  \frac{1}{2^{2k}} \; \frac{1}{2^{s-\kbis}}  - \left(\frac{1}{ 2^{k+s-\kbis}}\right)^{2}
	\end{align*}
	Therefore, 
	$$
	\textup{Cov}^{(0)} \leq \frac{1}{2^{2k + s - \kbis}} 
	$$

	{\bf \noindent Sub-case \ref{subcase:6}.} Suppose that $\vec{e}_{0}' \neq \vec{e}_{1}'$, $\vec{h}^{0} \neq \vec{h}^{1}$, $\vec{h}_{\sP}^{0} \neq \vec{h}_{\sP}^{1}$ and $\hv_{\sP}^0 + \ev_0' \neq \hv_{\sP}^1 + \ev_1'$.
	In that case we can write
	\begin{align*}
		\textup{Cov}^{(0)} &=\frac{1}{2^{2k}} \; \mathbb{P}_{\Hmbis} \left(\vec{h}^{0}_{\sP} + \ev_{0}' \in \CCbis, \vec{h}^{1}_{\sP} + \ev_{1}' \in \CCbis  \right) - \left(\frac{1}{ 2^{k+s-\kbis}}\right)^{2} \\
		&=\frac{1}{2^{2k}} \; \mathbb{P}_{\Hmbis} \left(\vec{h}^{0}_{\sP} + \ev_{0}' \in \CCbis \right) \mathbb{P}_{\Hmbis} \left(\vec{h}^{1}_{\sP} + \ev_{1}' \in \CCbis \right) - \left(\frac{1}{ 2^{k+s-\kbis}}\right)^{2} \\
		&=  \left(\frac{1}{2^{k+ s-\kbis}} \right)^2 - \left(\frac{1}{ 2^{k+s-\kbis}}\right)^{2}
	\end{align*}
	Therefore we obtain,  
	$$
	\textup{Cov}^{(0)} = 0.
	$$

		{\bf\noindent Case \ref{case2}:} Recall that in this case we have 
	\begin{equation}\label{eq:case2}  
		\vec{h}^{0}_{\sP} = \vec{e}_{0}' \quad \mbox{and} \quad \vec{h}^{1}_{\sP} \neq \vec{e}_{1}'
	\end{equation}  
	We have, 
	\begin{equation*} 
		\mathbb{E}\left( \mathbf{1}_{\vec{e}_{0}',\vec{h}^{0}} \right) \mathbb{E}\left( \mathbf{1}_{\vec{e}_{1}',\vec{h}^{1}} \right) =  \frac{1}{2^{k}}  \; \frac{1}{2^{k+s-\kbis}}= \frac{1}{2^{2k+s-\kbis}}
	\end{equation*} 
	Let us compute $\mathbb{E}_{\vec{H},\Hmbis}\left( \mathbf{1}_{\vec{e}_{0}',\vec{h}^{0}} \; \mathbf{1}_{\vec{e}_{1}',\vec{h}^{1}} \right)$ when $(\vec{e}_{0}',\vec{h}^{0}) \neq (\vec{e}_{1}',\vec{h}^{1})$. By definition
	\begin{align*}
		\mathbb{E}_{\vec{H},\Hmbis}\left( \mathbf{1}_{\vec{e}_{0}',\vec{h}^{0}} \; \mathbf{1}_{\vec{e}_{1}',\vec{h}^{1}} \right) &= \mathbb{P}_{\vec{H},\Hmbis}\left( \hv^{0},\vec{h}^{1} \in \CC^{\bot}, \vec{h}^{0}_{\sP} + \ev_{0}' \in \CCbis, \vec{h}^{1}_{\sP} + \ev_{1}' \in \CCbis \right) \\
		&= \mathbb{P}_{\vec{H},\Hmbis}\left( \hv^{0},\vec{h}^{1} \in \CC^{\bot}\right) \mathbb{P}_{\vec{H},\Hmbis} \left(  \vec{h}^{0}_{\sP} + \ev_{0}' \in \CCbis, \vec{h}^{1}_{\sP} + \ev_{1}' \in \CCbis\right)
	\end{align*}
	Therefore,
	\begin{align} 
		\sum_{\vec{e}_{0}',\vec{e}_{1}'} F_{1} &= \sum_{\vec{e}_{0}',\vec{e}_{1}'} \mathbb{E}\left( \mathbf{1}_{\vec{e}_{0}',\vec{h}^{0}} \; \mathbf{1}_{\vec{e}_{1}',\vec{h}^{1}} \right) -  \frac{1}{2^{2k+s-\kbis}}\nonumber \\
		&= \sum_{\vec{e}_{0}',\vec{e}_{1}'}\sum_{\substack{\vec{h}^{0},\vec{h}^{1} \in \sF_{1}^{\vec{e}_{0}',\vec{e}_{1}'} \\ (\vec{e}_{0}',\vec{h}^{0}) \neq (\vec{e}_{1}',\vec{h}^{1}) }}  \textup{Cov}^{(1)} \label{eq:cov1} 
	\end{align}
	where, 
		\begin{equation*}
	\textup{Cov}^{(1)} \eqdef \mathbb{P}_{\vec{H}}\left( \hv^{0},\vec{h}^{1} \in \CC^{\bot}\right) \mathbb{P}_{\Hmbis} \left( \vec{h}_{\sP}^{0} + \ev_{0}' \in \CC_{\textup{aux}}, \vec{h}_{\sP}^{1} + \ev_{1}' \in \CC_{\textup{aux}} \right) - \frac{1}{ 2^{2k+s-\kbis}}
	\end{equation*} 
	Our aim now it to upper-bound $\textup{Cov}^{(1)}$ according to the above $6$ sub-cases.
	\newline

	{\bf \noindent Sub-case \ref{subcase:1}.} Suppose that $\vec{e}_{0}' = \vec{e}_{1}'$, $\vec{h}^{0} \neq \vec{h}^{1}$ and $\vec{h}_{\sP}^{0} = \vec{h}_{\sP}^{1}$. This subcase is impossible according to Equation \eqref{eq:case2}. Therefore, 
	$$
	\textup{Cov}^{(1)} =0.
	$$

	{\bf \noindent Sub-case \ref{subcase:2}.} Suppose that $\vec{e}_{0}'=\vec{e}_{1}', \vec{h}^{0}\neq \vec{h}^{1}$ and $\vec{h}_{\sP}^{0} \neq \vec{h}_{\sP}^{1}$. We have 
	\begin{align}
		\textup{Cov}^{(1)} &= \frac{1}{2^{2k}} \;\mathbb{P}_{\Hmbis} \left(\vec{h}^{0}_{\sP} + \ev_{0}' \in \CCbis, \vec{h}^{1}_{\sP} + \ev_{1}' \in \CCbis \right) - \frac{1}{ 2^{2k+s-\kbis}}  \nonumber \\
		&=  \frac{1}{2^{2k}}\; \mathbb{P}_{\Hmbis} \left( \vec{h}^{1}_{\sP} + \ev_{1}' \in \CCbis \right) - \frac{1}{ 2^{2k+s-\kbis}}\label{eq:cov22}
	\end{align}
	where in the last line we used that $\vec{h}^{0}_{\sP} + \ev_{0}'  \neq \vec{h}^{1}_{\sP} + \ev_{1}'$ showing that these two vectors are linearly independent (we work in $\F_{2}$), and thus that both events are independent. Furthermore, we also used that $\vec{h}^{0}_{\sP} + \vec{e}_{0}' = \vec{0}$ according to Equation \eqref{eq:case2}. But,
	$$
	\mathbb{P}_{\Hmbis} \left( \vec{h}^{1}_{\sP} + \ev_{1}' \in \CCbis \right)  = \frac{1}{2^{s-k_{\textup{aux}}}}
	$$
	Therefore, plugging this in Equation \eqref{eq:cov22} leads to 
	$$
	\textup{Cov}^{(1)} = 0.
	$$

	{\bf \noindent Sub-case \ref{subcase:3}.} Suppose that $\vec{e}_{0}' \neq \vec{e}_{1}'$, $\vec{h}^{0} = \vec{h}^{1}$. In that case, 
	\begin{align*} 
		\textup{Cov}^{(1)} &=	\mathbb{P}_{\Hmbis} \left(\hv^0 \in \CC^\perp \right) \mathbb{P}_{\Hmbis} \left(\vec{h}^{0}_{\sP} + \ev_{0}' \in \CCbis, \vec{h}^{1}_{\sP} + \ev_{1}' \in \CCbis  \right)  - \frac{1}{ 2^{2k+s-\kbis}}\\
		&=\frac{1}{2^{k}} \; \mathbb{P}_{\Hmbis} \left(\vec{h}^{0}_{\sP} + \ev_{0}' \in \CCbis \right) \mathbb{P}_{\Hmbis} \left(\vec{h}^{0}_{\sP} + \ev_{1}' \in \CCbis \right)  - \frac{1}{ 2^{2k+s-\kbis}} \\
		&= \frac{1}{2^{k}} \;\frac{1}{2^{s-\kbis}}  - \frac{1}{ 2^{2k+s-\kbis}}
	\end{align*} 
	where in the second equality we used that $\vec{h}_{\sP}^{0} + \vec{e}_{0}' \neq \vec{h}_{\sP}^{1} + \vec{e}_{1}'$ which implies that both events are independent. Furthermore, we also used that $\vec{h}^{0}_{\sP} + \vec{e}_{0}' = \vec{0}$ according to Equation \eqref{eq:case2}.
	Therefore, in that case  
	\begin{equation}\label{eq:subcase13}
	\textup{Cov}^{(1)} \leq \frac{1}{2^{k + s-\kbis}}
	\end{equation}

	{\bf \noindent Sub-case \ref{subcase:4}.} Suppose that $\vec{e}_{0}' \neq \vec{e}_{1}'$, $\vec{h}^{0} \neq \vec{h}^{1}$ and $\vec{h}_{\sP}^{0} = \vec{h}_{\sP}^{1}$. In that case, 
	\begin{align*}
		\textup{Cov}^{(1)} &=\frac{1}{2^{2k}} \; \mathbb{P}_{\Hmbis} \left(\vec{h}^{0}_{\sP} + \ev_{0}' \in \CCbis, \vec{h}^{0}_{\sP} + \ev_{1}' \in \CCbis  \right) - \frac{1}{ 2^{2k+s-\kbis}} \\
		&=\frac{1}{2^{2k}} \mathbb{P}_{\Hmbis} \left(\vec{h}^{0}_{\sP} + \ev_{1}' \in \CCbis \right) - \frac{1}{ 2^{2k+s-\kbis}}\\
		&= \frac{1}{ 2^{2k+s-\kbis}} - \frac{1}{ 2^{2k+s-\kbis}} \\
		&= 0
	\end{align*}
	where in the second equality we used that $\vec{h}_{\sP}^{0} + \vec{e}_{0}' = \vec{h}_{\sP}^{1} + \vec{e}_{0}' \neq \vec{h}_{\sP}^{1} + \vec{e}_{1}'$ which implies that both events are independent. Furthermore, we also used that $\vec{h}^{0}_{\sP} + \vec{e}_{0}' = \vec{0}$ according to Equation \eqref{eq:case2}.
	\newline

	{\bf \noindent Sub-case \ref{subcase:5}.} Suppose that $\vec{e}_{0}' \neq \vec{e}_{1}'$, $\vec{h}^{0} \neq \vec{h}^{1}$, $\vec{h}_{\sP}^{0} \neq \vec{h}_{\sP}^{1}$ and $\hv_{\sP}^0 + \ev_0' = \hv_{\sP}^1 + \ev_1'$. This subcase is impossible according to Equation \eqref{eq:case2}. Therefore,
	$$
	\textup{Cov}^{(1)} = 0
	$$

	{\bf \noindent Sub-case \ref{subcase:6}.} Suppose that $\vec{e}_{0}' \neq \vec{e}_{1}'$, $\vec{h}^{0} \neq \vec{h}^{1}$, $\vec{h}_{\sP}^{0} \neq \vec{h}_{\sP}^{1}$ and $\hv_{\sP}^0 + \ev_0' \neq \hv_{\sP}^1 + \ev_1'$.
	In that case we can write
	\begin{align*}
		\textup{Cov}^{(0)} &=\frac{1}{2^{2k}} \; \mathbb{P}_{\Hmbis} \left(\vec{h}^{0}_{\sP} + \ev_{0}' \in \CCbis, \vec{h}^{1}_{\sP} + \ev_{1}' \in \CCbis  \right) - \frac{1}{ 2^{2k+s-\kbis}}\\
		&=\frac{1}{2^{2k}} \; \mathbb{P}_{\Hmbis} \left(\vec{h}^{1}_{\sP} + \ev_{1}' \in \CCbis \right) - \frac{1}{ 2^{2k+s-\kbis}} \\
		&=  \frac{1}{ 2^{2k+s-\kbis}} - \frac{1}{ 2^{2k+s-\kbis}}
	\end{align*}
	Therefore we obtain,  
	$$
	\textup{Cov}^{(1)} = 0.
	$$	
	{\bf\noindent Case \ref{case3}:} This situation is symmetric to Case $2$.  
	\newline

	{\bf\noindent Case \ref{case4}:} Recall that in this case we have 
	\begin{equation}\label{eq:case4}  
		\vec{h}^{0}_{\sP} = \vec{e}_{0}' \quad \mbox{and} \quad \vec{h}^{1}_{\sP} = \vec{e}_{1}'
	\end{equation}  
	We have, 
	\begin{equation*} 
		\mathbb{E}\left( \mathbf{1}_{\vec{e}_{0}',\vec{h}^{0}} \right) \mathbb{E}\left( \mathbf{1}_{\vec{e}_{1}',\vec{h}^{1}} \right) =  \frac{1}{2^{2k}} 
	\end{equation*} 
	Let us compute $\mathbb{E}_{\vec{H},\Hmbis}\left( \mathbf{1}_{\vec{e}_{0}',\vec{h}^{0}} \; \mathbf{1}_{\vec{e}_{1}',\vec{h}^{1}} \right)$ when $(\vec{e}_{0}',\vec{h}^{0}) \neq (\vec{e}_{1}',\vec{h}^{1})$. By definition
	\begin{align*}
		\mathbb{E}_{\vec{H},\Hmbis}\left( \mathbf{1}_{\vec{e}_{0}',\vec{h}^{0}} \; \mathbf{1}_{\vec{e}_{1}',\vec{h}^{1}} \right) &= \mathbb{P}_{\vec{H},\Hmbis}\left( \hv^{0},\vec{h}^{1} \in \CC^{\bot}, \vec{h}^{0}_{\sP} + \ev_{0}' \in \CCbis, \vec{h}^{1}_{\sP} + \ev_{1}' \in \CCbis \right) \\
		&= \mathbb{P}_{\vec{H},\Hmbis}\left( \hv^{0},\vec{h}^{1} \in \CC^{\bot}\right) \mathbb{P}_{\vec{H},\Hmbis} \left(  \vec{h}^{0}_{\sP} + \ev_{0}' \in \CCbis, \vec{h}^{1}_{\sP} + \ev_{1}' \in \CCbis\right)
	\end{align*}
	Therefore,
	\begin{align} 
		\sum_{\vec{e}_{0}',\vec{e}_{1}'} F_{3} &= \sum_{\vec{e}_{0}',\vec{e}_{1}'}\sum_{\substack{\vec{h}^{0},\vec{h}^{1} \in \sF_{3}^{\vec{e}_{0}',\vec{e}_{1}'} \\ (\vec{e}_{0}',\vec{h}^{0}) \neq (\vec{e}_{1}',\vec{h}^{1}) }}  \mathbb{E}\left( \mathbf{1}_{\vec{e}_{0}',\vec{h}^{0}} \; \mathbf{1}_{\vec{e}_{1}',\vec{h}^{1}} \right) -  \frac{1}{2^{2k}}\nonumber \\
		&= \sum_{\vec{e}_{0}',\vec{e}_{1}'} \sum_{\substack{\vec{h}^{0},\vec{h}^{1} \in \sF_{3}^{\vec{e}_{0}',\vec{e}_{1}'} \\ (\vec{e}_{0}',\vec{h}^{0}) \neq (\vec{e}_{1}',\vec{h}^{1}) }}\textup{Cov}^{(3)}\label{eq:cov3} 
	\end{align}
	where, 
		\begin{equation*}
	\textup{Cov}^{(3)} \eqdef \mathbb{P}_{\vec{H}}\left( \hv^{0},\vec{h}^{1} \in \CC^{\bot}\right) \mathbb{P}_{\Hmbis} \left( \vec{h}_{\sP}^{0} + \ev_{0}' \in \CC_{\textup{aux}}, \vec{h}_{\sP}^{1} + \ev_{1}' \in \CC_{\textup{aux}} \right) - \frac{1}{2^{2k}}
	\end{equation*} 
	Our aim now it to upper-bound $\textup{Cov}^{(1)}$ according to the above $6$ sub-cases.
	\newline

	{\bf \noindent Sub-case \ref{subcase:1}.} Suppose that $\vec{e}_{0}' = \vec{e}_{1}'$, $\vec{h}^{0} \neq \vec{h}^{1}$ and $\vec{h}_{\sP}^{0} = \vec{h}_{\sP}^{1}$. We have
	\begin{align*}
		\textup{Cov}^{(3)} &= \frac{1}{2^{2k}} \; \mathbb{P}_{\Hmbis}\left( \vec{h}^{0}_{\sP} + \ev_{0}' \in \CCbis, \vec{h}^{1}_{\sP} + \ev_{1}' \in \CCbis \right) - \frac{1}{2^{2k}} \\
		&= \frac{1}{2^{2k}}  \; \mathbb{P}_{\Hmbis}\left( \vec{0} \in \CCbis \right) 	- \frac{1}{2^{2k}}\\
		&= \frac{1}{2^{2k}}  - \frac{1}{ 2^{2k}}
	\end{align*}
	where in the second equality we used Equation \eqref{eq:case4}.
	Therefore, in that case 
	$$
	\textup{Cov}^{(3)} = 0.
	$$
	
	{\bf \noindent Sub-case \ref{subcase:2}.} Suppose that $\vec{e}_{0}'=\vec{e}_{1}', \vec{h}^{0}\neq \vec{h}^{1}$ and $\vec{h}_{\sP}^{0} \neq \vec{h}_{\sP}^{1}$. According to Equation \eqref{eq:case4} this sub-case is impossible. Therefore, 
	$$
	\textup{Cov}^{(3)} = 0.
	$$

	{\bf \noindent Sub-case \ref{subcase:3}.} Suppose that $\vec{e}_{0}' \neq \vec{e}_{1}'$, $\vec{h}^{0} = \vec{h}^{1}$. According to Equation \eqref{eq:case4} this sub-case is impossible. Therefore, 
	$$
	\textup{Cov}^{(3)} = 0.
	$$

	{\bf \noindent Sub-case \ref{subcase:4}.} Suppose that $\vec{e}_{0}' \neq \vec{e}_{1}'$, $\vec{h}^{0} \neq \vec{h}^{1}$ and $\vec{h}_{\sP}^{0} = \vec{h}_{\sP}^{1}$. According to Equation \eqref{eq:case4} this sub-case is impossible. Therefore, 
	$$
	\textup{Cov}^{(3)} = 0.
	$$

	{\bf \noindent Sub-case \ref{subcase:5}.} Suppose that $\vec{e}_{0}' \neq \vec{e}_{1}'$, $\vec{h}^{0} \neq \vec{h}^{1}$, $\vec{h}_{\sP}^{0} \neq \vec{h}_{\sP}^{1}$ and $\hv_{\sP}^0 + \ev_0' = \hv_{\sP}^1 + \ev_1'$. We have
	\begin{align*}
		\textup{Cov}^{(3)} &=\frac{1}{2^{2k}}\; \mathbb{P}_{\Hmbis} \left(\vec{h}^{0}_{\sP} + \ev_{0}' \in \CCbis \right) - \frac{1}{2^{2k}} \\
		&=\frac{1}{2^{2k}}\; \mathbb{P}_{\Hmbis} \left( \vec{0} \in \CCbis \right)  - \frac{1}{2^{2k}} \\
		&=  \frac{1}{2^{2k}}  - \frac{1}{2^{2k}}
	\end{align*}
	where in the second equality we used Equation \eqref{eq:case4}. 
	Therefore, 
	$$
	\textup{Cov}^{(3)} = 0
	$$

	{\bf \noindent Sub-case \ref{subcase:6}.} Suppose that $\vec{e}_{0}' \neq \vec{e}_{1}'$, $\vec{h}^{0} \neq \vec{h}^{1}$, $\vec{h}_{\sP}^{0} \neq \vec{h}_{\sP}^{1}$ and $\hv_{\sP}^0 + \ev_0' \neq \hv_{\sP}^1 + \ev_1'$. According to Equation \eqref{eq:case4} this sub-case is impossible. Therefore, 
	$$
	\textup{Cov}^{(3)} = 0.
	$$

	We are now ready to gather Cases \ref{case1}, \ref{case2}, \ref{case3} and \ref{case4} according to Subcases \ref{subcase:1}, \ref{subcase:2}, \ref{subcase:3}, \ref{subcase:4}, \ref{subcase:5} and \ref{subcase:6}. Our aim is to bound 
	$
	\textup{Cov}
	$
	that were defined in Equation \ref{eq:cov}.
	We can already notice that Case \ref{case4} has no impact on this sum while Cases \ref{case2} and \ref{case3} have an influence only in Subcase \ref{subcase:3}. Furthermore, \ref{subcase:2}, \ref{subcase:4} and \ref{subcase:6} have no contribution to this sum, whatever is the considered case.

	Let us upper-bound $\textup{Cov}$ according to the different subcases where $\textup{Cov}_{i}$ denotes the terms involved in $\textup{Cov}$ coming from Subcase $i$ (in particular $\textup{Cov}_{i}$ is defined as a certain sum of $\textup{Cov}^{(i)}$, see Equations \eqref{eq:cov0}, \eqref{eq:cov1} and \eqref{eq:cov2}).
	\newline

	{\bf \noindent Subcase \ref{subcase:1}:} We have,  
		\begin{align*}
		\textup{Cov}_{1} 
		&\leq \sum_{(\vec{e}_{0}',\vec{h}^{0}) \in \mathcal{E}_{b}} \sum_{\substack{\vec{h}^{1}: (\vec{e}_{0}',\vec{h}^{1})\in \mathcal{E}_{b} \\ \vec{h}^{1} \neq \vec{h}^{0}, \vec{h}^{0}_{\sP} = \vec{h}^{1}_{\sP}}} \frac{1}{2^{2k}} \frac{1}{2^{s-\kbis}}
		&\leq \sum_{(\vec{e}_{0},\vec{h}^{0}) \in \mathcal{E}_{b}} \frac{\binom{n-s}{w}}{2^{2k + s - \kbis}} 
		&= \frac{\binom{n-s}{w}}{2^{k}} \frac{E_{b}'}{2^{k+s-\kbis}}
	\end{align*}

	{\bf \noindent Subcase \ref{subcase:2}:} We have,
	$$
	\textup{Cov}_{2} = 0. 
	$$

	{\bf \noindent Subcase \ref{subcase:3}:} We have here to split the computation here between Cases \ref{case1} and \ref{case2}. Recall that they are given by
		$$\sF_{0}^{\vec{e}_{0}',\vec{e}_{1}'} = \left\{(\vec{h}^{0},\vec{h}^{1}): \;  \vec{h}^{0} \in \sE_{b,1}^{\vec{e}_{0}'} \mbox{ and }  \vec{h}^{1} \in \sE_{b,1}^{\vec{e}_{1}'}\right\},
	$$
	$$
	\sF_{1}^{\vec{e}_{0}',\vec{e}_{1}'} = \left\{(\vec{h}^{0},\vec{h}^{1}): \;  \vec{h}^{0} \in \sE_{b,2}^{\vec{e}_{0}'} \mbox{ and }  \vec{h}^{1} \in \sE_{b,1}^{\vec{e}_{1}'}\right\}
	$$
	where,
	$$
		\sE_{b,1}^{\vec{e}'} \eqdef \left\{ \vec{h} \in \mathbb{F}_{2}^{n}: \; (\vec{e}',\vec{h}) \in \sE_{b} \mbox{ and } \vec{h}_{\sP} \neq \vec{e}' \right\},
	$$
	$$
		\sE_{b,2}^{\vec{e}'} \eqdef \left\{ \vec{h} \in \mathbb{F}_{2}^{n}: \; (\vec{e}',\vec{h}) \in \sE_{b} \mbox{ and } \vec{h}_{\sP} = \vec{e}' \right\}.
	$$
	But recall according to Equation \eqref{eq:cardEb12} that, 
	$$
	\card{\sE_{b,1}} = \frac{\card{\sE_{b,2}}}{2^{s}}
	$$  
	Therefore, in Subcase \ref{subcase:3},
	\begin{align*} 
	 \textup{Cov}_{3} &= \sum_{\vec{e}_{0}'\neq\vec{e}_{1}'}\sum_{\substack{\vec{h}^{0}=\vec{h}^{1} \in \sF_{0}^{\vec{e}_{0}',\vec{e}_{1}'}}} \textup{Cov}^{(0)} + 2\sum_{\vec{e}_{0}'\neq\vec{e}_{1}'}\sum_{\substack{\vec{h}^{0}=\vec{h}^{1} \in \sF_{1}^{\vec{e}_{0}',\vec{e}_{1}'}  }} \textup{Cov}^{(1)}\\ 
	 &= \sum_{\vec{e}_{0}'\neq\vec{e}_{1}'}\sum_{\substack{\vec{h}^{0}=\vec{h}^{1} \in \sF_{0}^{\vec{e}_{0}',\vec{e}_{1}'} }} \frac{1}{2^{k+2s-2\kbis}} + 2\sum_{\vec{e}_{0}'\neq\vec{e}_{1}'}\sum_{\substack{\vec{h}^{0}=\vec{h}^{1} \in \sF_{1}^{\vec{e}_{0}',\vec{e}_{1}'} }} \frac{1}{2^{k+s-\kbis}} \quad \mbox{(Equations \eqref{eq:subcase03}) and \eqref{eq:subcase13}} \\
	 &=  \sum_{\vec{e}_{0}'\neq\vec{e}_{1}'}\sum_{\substack{\vec{h}^{0}=\vec{h}^{1} \in \sF_{0}^{\vec{e}_{0}',\vec{e}_{1}'} }} \frac{1}{2^{k+2s-2\kbis}}  + \frac{1}{2^{s}}  \sum_{\vec{e}_{0}'\neq\vec{e}_{1}'}\sum_{\substack{\vec{h}^{0}=\vec{h}^{1} \in \sF_{0}^{\vec{e}_{0}',\vec{e}_{1}'} }} \frac{1}{2^{k+s-\kbis}} \\  
	 &= \OO{1}  \sum_{\vec{e}_{0}'\neq\vec{e}_{1}'}\sum_{\substack{\vec{h}^{0}=\vec{h}^{1} \in \sF_{0}^{\vec{e}_{0}',\vec{e}_{1}'}}} \frac{1}{2^{k+2s-2\kbis}} 
	\end{align*} 
	where we basically use the same reasoning than for proving Equation \eqref{eq:expE12b}. There in this subcase,
	\begin{align*} 
		\textup{Cov}_{3}
		&\leq \sum_{(\vec{e}_{0}',\vec{h}^{0}) \in \mathcal{E}_{b}} \sum_{\substack{ (\vec{e}_{1}',\vec{h}^{0})\in \mathcal{E}_{b} \\ \vec{h}^{1} = \vec{h}^{0}, \ev_0' \neq \ev_1' }} \frac{1}{2^{k+2s-2\kbis}} 
		&\leq \sum_{(\vec{e}_{0}',\vec{h}^{0}) \in \mathcal{E}_{b}} \frac{\binom{s}{t_{\textup{aux}}}}{2^{k + 2s - 2\kbis}} 
		&= \frac{\binom{s}{t_{\textup{aux}}}}{2^{s-\kbis}} \; \frac{E_{b}'}{2^{k+s-\kbis}}
	\end{align*}
	\newline

	{\bf \noindent Subcase \ref{subcase:4}:} We have,
	$$
	\textup{Cov}_{4} = 0. 
	$$

	{\bf \noindent Subcase \ref{subcase:5}:} We have,
			\begin{align*}
		\textup{Cov}_{5} &\leq \sum_{(\vec{e}_{0}',\vec{h}^{0}) \in \mathcal{E}_{b}} \sum_{\substack{(\vec{e}_{1}',\vec{h}^{1})\in \mathcal{E}_{b} \\ \vec{h}^{1} \neq \vec{h}^{0},\; \vec{h}^{0}_{\sP} \neq \vec{h}^{1}_{\sP}, \;\ev_0' \neq \ev_1' \\ \ev_0' + \hv^0_{\sP} =  \ev_1' + \hv^1_{\sP}}}   \frac{1}{ 2^{2k+s-\kbis}}
		\leq  \sum_{(\vec{e}_{0},\vec{h}^{0}) \in \mathcal{E}_{b}}  \frac{\binom{s}{t_{\textup{aux}}} \binom{n-s}{w}}{ 2^{2k+s-\kbis}} \\
		&= \frac{\binom{n-s}{w}\binom{s}{t_{\textup{aux}}}}{ 2^{k}} \; \frac{E_b'}{2^{k+s-\kbis}}
	\end{align*}

	{\bf \noindent Subcase \ref{subcase:6}:} We have,
	$$
	\textup{Cov}_{6} = 0.
	$$

		Plugging all these bounds on the $\textup{Cov}_{i}$ together and using that $\textup{Cov} = \sum_{i} \textup{Cov}_{i}$ leads to 
	\begin{align*} 
		\textup{Cov} &\leq \left( \frac{\binom{n-s}{w}  }{2^{k}} +  \frac{\binom{s}{t_{\textup{aux}}}}{2^{s - \kbis}}+ \frac{\binom{n-s}{w}  \binom{s}{t_{\textup{aux}}}  }{2^{k}}\right)\; \frac{E_{b}'}{2^{k+s-\kbis}} \\
		&= \OO{n^{\alpha}} \frac{E_{b}'}{2^{k+s-\kbis}} 
	\end{align*} 
	where in the last lines we used the constraints \eqref{eq:constraintsVar} given in the proposition. Plugging this equation in  Equation \eqref{eq:ineqVariant} concludes the proof.
	\end{proof}

We are now ready to prove our proposition:
\begin{proof}[Proof of Proposition \ref{prop:biasCodedRLPN}] Let $E_{b}$ and $E_{b}'$ (for $b \in \{0,1\}$) be defined as in Lemma \ref{lemma:varianceEb}. 
	By using the Bienaym\'e-Tchebychev inequality, we obtain for any function $f$ mapping the positive integers to positive real numbers:
	\begin{equation*}
		\mathbb{P}_{\vec{H},\Hmbis}\left( |E_b- \mathbb{E}\left( E_b \right)|\geq  \sqrt{f(n) \mathbb{E}\left(E_b\right)}\right) \leq  \frac{\mathbf{Var}(E_b)}{f(n) \mathbb{E}(E_b)} 
		= \frac{\OO{n^{\alpha}}}{f(n)}
	\end{equation*}
	where the last inequality is a consequence of Lemma \ref{lemma:varianceEb}.		
	Since,
	$$
	\bias_{\left(\hv,\cvbis\right) \drawn \widetilde{\sH}}\left(  \langle \cvbis + \hv_{\sP},\ev_{\sP} \rangle +  \langle \ev_{\sN},  \hv_{\sN}\rangle \right) = \frac{E_0-E_1}{E_0+E_1},
	$$ 
	we have with probability greater than
	$1 - \frac{\OO{n^{\alpha}}}{f(n)}$ that
	\begin{multline}\label{eq:complicated}
		\frac{\mu_0-\mu_1 - \sqrt{2f(n)} \sqrt{ \mu_0+\mu_1} }{\mu_0+\mu_1 + \sqrt{2f(n)} \sqrt{ \mu_0+\mu_1}}	 \leq \bias_{\left(\hv,\cvbis\right) \drawn \widetilde{\sH}}\left(  \langle \cvbis + \hv_{\sP},\ev_{\sP} \rangle +  \langle \ev_{\sN},  \hv_{\sN}\rangle \right) \\
		  \leq  
		\frac{\mu_0-\mu_1 + \sqrt{2f(n)}\sqrt{ \mu_0+\mu_1} }{\mu_0+\mu_1 - \sqrt{2f(n)}\sqrt{ \mu_0+\mu_1}}	 \end{multline}
	where 
	$$
	\mu_i \eqdef \mathbb{E}\left(E_i\right)
	$$	
	and where we used that for all positive $x$ and $y$, $\sqrt{x}+\sqrt{y} \leq \sqrt{2(x+y)}$. 
	Let, 
	$$
	N = \mu_{0} + \mu_{1} 
	$$
	It is readily seen that,
	$$
	N = \frac{\binom{n-s}{w} \binom{s}{\tbis}}{2^{k-\kbis}} 
	$$
	We let 
	$f(n) = \delta\sqrt{N}/2$. Since $N= \mu_0+\mu_1$ this implies $f(n) = \delta\sqrt{\mu_0+\mu_1}/2$. By Equation \eqref{eq:cstPropo35}, note that $\frac{f(n)}{\OO{n^{\alpha}}}$ tends to infinity as 
	$n$ tends to infinity.  We notice that 
	\begin{align*}
		\sqrt{2f(n)} \sqrt{ \mu_0+\mu_1} &=  \delta^{1/2}(\mu_0+\mu_1)^{3/4} \nonumber \\
		&= o\left( \delta(\mu_0+\mu_1)\right) 
	\end{align*}
	because 
	\begin{equation*}
		\frac{\delta^{1/2}(\mu_0+\mu_1)^{3/4}}{\delta(\mu_0+\mu_1)} 
		= 	\frac{1}{\sqrt{\delta\sqrt{\mu_0+\mu_1}}} 
		=  \frac{1}{\sqrt{2f(n)}} 
		\mathop{\longrightarrow}\limits_{n\to +\infty}  0.
	\end{equation*}
	Equation \eqref{eq:complicated} can now be rewritten as
	\begin{multline}
		\label{eq:simpler}
		\frac{\mu_0-\mu_1 - o\left(\delta(\mu_0+\mu_1)\right) }{\mu_0+\mu_1 + o\left(\delta(\mu_0+\mu_1)\right) }	 \leq \bias\left( \langle \ev'',\vec{e}_{\sP} \rangle +  \langle \vec{e}_{\sN},  \vec{h}_{\sN}\rangle  \right) \\
		 \leq  
		\frac{\mu_0-\mu_1 + o\left(\delta(\mu_0+\mu_1)\right)  }{\mu_0+\mu_1 - o\left(\delta(\mu_0+\mu_1)\right)}	
	\end{multline}Now on the other hand
	\begin{align*}
		\delta  &= \bias_{\left(\hv,\cvbis\right) \drawn \widetilde{\sH}}\left(  \langle \cvbis + \hv_{\sP},\ev_{\sP} \rangle +  \langle \ev_{\sN},  \hv_{\sN}\rangle \right)   \\
		&= \frac{E'_0-E'_1}{E'_0+E'_1} \\
		&= \frac{\frac{E'_0}{2^{k+s-\kbis}}-\frac{E'_1}{2^{k+s-\kbis}}}{\frac{E'_0}{2^{k+s-\kbis}}+\frac{E'_1}{2^{k+s-\kbis}}} \\
		&= \frac{\mu_0-\mu_1}{\mu_0+\mu_1}
	\end{align*}
	where the last equality is a consequence of Lemma \ref{lemma:varianceEb}, in particular Equation \eqref{eq:espL}. 
	From this it follows that we can rewrite \eqref{eq:simpler} as
	\begin{multline*}
		\frac{\delta}{1+o(\delta)}- o(\delta) \leq \bias_{\left(\hv,\cvbis\right) \drawn \widetilde{\sH}}\left(  \langle \cvbis + \hv_{\sP},\ev_{\sP} \rangle +  \langle \ev_{\sN},  \hv_{\sN}\rangle \right) \\
		   \leq 
		\frac{\delta}{1-o(\delta)}+ o(\delta) 
	\end{multline*}
	from which it follows immediately that 
	$$
	\bias_{\left(\hv,\cvbis\right) \drawn \widetilde{\sH}}\left(  \langle \cvbis + \hv_{\sP},\ev_{\sP} \rangle +  \langle \ev_{\sN},  \hv_{\sN}\rangle \right)  = \delta(1+o(1))
	$$
	which concludes the proof. 
\end{proof} \section{Correctness and Running-Time of the \nRLPNs algorithm (Algorithm \ref{alg:codedRLPN})}\label{app:complexity}
In this section we prove the correctness of Algorithm \ref{alg:codedRLPN} in Subsection \ref{subsec:Correc}. Furthermore, we give its running-time in Subsection \ref{sec:app_final_comp}. To this aim, we instantiate Instructions \ref{state:PCE} ($\Call{ParityCheckEquations}{}$) \ref{state:Dec} ($\Call{Decode}{}$) of Algorithm \ref{alg:sample}.
\begin{notation}
	\begin{itemize}
		\item Instantiation of Algorithm \ref{alg:codedRLPN}.
	\begin{itemize}
		\item We instantiate $\Call{ParityCheckEquations}{}$ instruction with the technique devised in \cite[\S 5]{CDMT22} to compute all the parity-checks of a given weight in a code (which is inspired from \cite{BJMM12}). Its asymptotic complexity is recalled in Proposition \ref{prop:asym_BJMM}.
		\item The family of auxiliary $[s,\kbis]$ linear codes $\CCbis$ used will be product of $\log s$ random codes as devised in Section \ref{sec:SDC} . Using these the $\Call{Decode}{}$ procedure outputs almost all codeword at distance $\tbis$ in time $2^{o(s)} \max\left(\frac{\binom{s}{\tbis}}{2^{s-\kbis}} , 1 \right)$.
\end{itemize}
	\item Framework for the analysis of Algorithm \ref{alg:codedRLPN}
	\begin{itemize}
			\item We prove the correctness (Proposition \ref{prop:correctness}) and we make the complexity analysis (Proposition \ref{prop:asym_comp_doubleRLPN}) in the framework of Proposition \ref{prop:exp_sizeS}. More specifically, analysis is made for 
$\CC$ and $\CCbis$ being random $\lbrack n,k \rbrack$ and $\lbrack s,k\bis \rbrack$ codes. We argue in Section \ref{sec:SDC} that the proof would be roughly the same (but more complicated) if we were to make it using $\CCbis$ being random product codes. The complexity of Algorithm \ref{alg:codedRLPN} would only grow by a factor of $2^{o(s)}$ when using these codes.
		\item Note that with the $\Call{ParityCheckEquations}{}$ and $\Call{Decode}{}$ procedures we have chosen, Algorithm \ref{alg:codedRLPN} computes in fact all the possible LPN samples, namely we have (as required in Proposition \ref{prop:exp_sizeS}):
		$$ \sH = \widetilde{\sH} $$
		\item We reuse notation introduced in Proposition \ref{prop:exp_sizeS}: the set $\sS^{(j)}$ of candidates for the $j$'th auxiliary code $\CCbis[(j)]$ is defined by
		\begin{equation} \label{eq:prop_def_S2}
			\sS^{(j)} \eqdef \{\sv \in \F_2^\kbis \: : \widehat{f_{\yv,\widetilde{\sH},\Gmbis[(j)] }  }\left(\sv\right) \geq  \frac{\delta}{2} \: \widetilde{H}\},
		\end{equation}
		where 
		\begin{equation}
			\widetilde{H} \eqdef \frac{\binom{n-s}{w} \binom{s}{\tbis}}{2^{k-\kbis}}, 	\qquad \delta \eqdef \frac{K_w^{(n-s)}\left(u\right) K_{\tbis}^{(s)}\left(t-u\right) }{\binom{n-s}{w} \binom{s}{\tbis}}.
		\end{equation}
	\end{itemize}
	\end{itemize}
\end{notation}
\subsection{Correctness of the algorithm}\label{subsec:Correc}
The goal of this section is to prove that  \nRLPN, namely Algorithm \ref{alg:codedRLPN}, outputs the desired error vector $\ev$ after essentially $\Niter \approx \frac{\binom{n}{t}}{\binom{s}{t-u} \binom{n-s}{u} }$ iterations of the outer loop (Line \eqref{lst:line:while} of Algorithm \ref{alg:codedRLPN}). This is given by the following proposition.
\begin{proposition} \label{prop:correctness}
	Let $\CC$ be a code taken uniformly at random among the $[n,k]$ linear codes and $\CCbis[(1)],\dots, \CCbis[(N_{\textup{aux}})]$ which are $N_{\textup{aux}}$ codes taken uniformly at random among the $[s,\kbis]$ linear codes.  Let $\yv \eqdef \cv + \ev$ where $\cv \in \CC$ and where $\ev \in \mathcal{S}_t^{n}$ is a fixed error vector of weight $t$.
	As long as the parameters $s,\kbis,\tbis,w,u$ verify the Parameters constraint \eqref{ass:parameters_doubleRLPN} and as long as $N_{\textup{iter}} = \omega \left(\frac{\binom{n}{t}}{\binom{s}{t-u} \binom{n-s}{u} }\right)$ and $N_{\textup{aux}} = \OO{1}$, Algorithm \ref{alg:codedRLPN} outputs the error vector $\ev$ with probability $1-o(1)$. \end{proposition}

It is readily seen that when $\Niter = \omega \left(\frac{\binom{n}{t}}{\binom{s}{t-u} \binom{n-s}{u} }\right)$ then, with probability $1-o(1)$ over the choice of $\sP$ there exists an iteration such that $|\ev_{\sN}|=u$. We only have left to show that for such an iteration we have with high probability that $\ev_{\sP} \left(\Gmbis[(j)]\right)^{\top} \in \sS^{(j)}$ for $j=1, \dots,\Naux$ which is the purpose of the following lemma.
\begin{lemma}
	Let us reuse the setting of Proposition \ref{prop:correctness}. Moreover, let us fix $\sP$ and $\sN$ two complementary sets of $\llbracket 1,n\rrbracket$ of size $s$ and $n-s$ respectively and such that $|\ev_{\sN} | = u$. Let us denote by  $\Gmbis[(1)] , \dots, \Gmbis[(N_{\textup{aux}})]$  the generators matrices of the codes $\CCbis[(1)] , \dots, \CCbis[(N_{\textup{aux}})]$ respectively. Then,\begin{equation} \label{eq:eq_correctness}
		\prob{}{\bigcap_{j=1}^{N_{\textup{aux}}} ``\ev_{\sP} \: \transp{\Gmbis[(j)]}\in \sS^{(j)}"} = 1-o(1).
	\end{equation}
\end{lemma}
\begin{proof}
	First, notice that
	\begin{align*}
		\prob{}{\bigcap_{j=1}^{\Naux}  ``\ev_{\sP} \: \transp{\Gmbis[(j)]}\in \sS^{(j)}"} &= 1 - \prob{}{\bigcup_{j=1}^{\Naux}  ``\ev_{\sP} \: \transp{\Gmbis[(j)]}\notin \sS^{(j)}"} \\
		&\geq 1 - \sum_{j=1}^{\Naux} \prob{}{``\ev_{\sP} \: \transp{\Gmbis[(j)]}\notin \sS^{(j)}"}
	\end{align*}
		where we used the union-bound. 
	 Now, as $\Naux = \OO{1}$ we only have to show that $\prob{}{``\ev_{\sP} \: \transp{\Gmbis[(j)]}\notin \sS^{(j)}"}=o(1)$ to prove Equation \eqref{eq:eq_correctness}. By using Fact  \ref{fact:fundamental1},
	\begin{equation*}
		"\ev_{\sP} \: \transp{\Gmbis[(j)]}\notin \sS^{(j)}" \Longleftrightarrow  \bias_{\left(\hv,\cvbis\right) \drawn \widetilde{\sH}} \left(\langle \yv, \hv \rangle  + \langle \ev_{\sP}, \cvbis \rangle\right)  < \frac{\delta}{2} \frac{\widetilde{H}}{\card{\widetilde{\sH}}}
\end{equation*}
	Our aim is to show,
	$$
	\prob{}{\bias_{\left(\hv,\cvbis\right) \drawn \widetilde{\sH}} \left(\langle \yv, \hv \rangle  + \langle \ev_{\sP}, \cvbis \rangle\right)  < \frac{\delta}{2} \frac{\widetilde{H}}{\card{\widetilde{\sH}}} } = o(1).
	$$ 
	To simplify notation let $b \eqdef \bias_{\left(\hv,\cvbis\right) \drawn \widetilde{\sH}} \left(\langle \yv, \hv \rangle  + \langle \ev_{\sP}, \cvbis \rangle\right)$.
It is readily seen that $\expect{\CC,\CCbis}{\card{\widetilde{\sH}}} = \widetilde{H} \left(1+o(1)\right)$ and $\mathbf{Var}_{\CC,\CCbis}\left( \card{\widetilde{\sH}} \right) \leq \widetilde{H}(1+o(1))$. Therefore, by using Bienaymé-Tchebychev inequality, for any $a$,
	$$
	\mathbb{P}\left( \left| \card{\widetilde{\sH}} - (1+o(1))\widetilde{H} \right| \geq \sqrt{a \widetilde{H}} \right) \leq \frac{1+o(1)}{a}
	$$
	We have the following computation,
	\begin{multline*}
			\prob{}{b < \frac{\delta}{2} \frac{\widetilde{H}}{\card{\widetilde{\sH}}}} = 	\mathbb{P}\left( b < \frac{\delta}{2} \frac{\widetilde{H}}{\card{\widetilde{\sH}}} \mid  \left| \card{\widetilde{\sH}} - (1+o(1))\widetilde{H}\right| \geq \sqrt{a\widetilde{H}} \right) 	\mathbb{P}\left( \left| \card{\widetilde{\sH}} - (1+o(1))\widetilde{H} \right| \geq \sqrt{a \widetilde{H}} \right) \\
			+ 	\mathbb{P}\left( b < \frac{\delta}{2} \frac{\widetilde{H}}{\card{\widetilde{\sH}}} \mid  \left| \card{\widetilde{\sH}} - (1+o(1))\widetilde{H}\right| < \sqrt{a\widetilde{H}} \right) 	\mathbb{P}\left( \left| \card{\widetilde{\sH}} - (1+o(1))\widetilde{H} \right| < \sqrt{a \widetilde{H}} \right)
	\end{multline*}
	Therefore,
	\begin{align*}
			\prob{}{b < \frac{\delta}{2} \frac{\widetilde{H}}{\card{\widetilde{\sH}}}} &\leq 	\mathbb{P}\left( \left| \card{\widetilde{\sH}} - (1+o(1))\widetilde{H} \right| \geq \sqrt{a \widetilde{H}} \right) + \mathbb{P}\left( b < \frac{\delta}{2} \frac{\widetilde{H}}{(1+o(1))\widetilde{H} - \sqrt{a\widetilde{H}} } \right) \\
			&\leq \frac{1+o(1)}{a} + \mathbb{P}\left( b < \frac{\delta}{2} \frac{1}{(1+o(1)) - \sqrt{\frac{a}{\widetilde{H}} }} \right)
	\end{align*}
	Let us choose $a = \widetilde{H}^{1/2}$. Recall that $\widetilde{H} = \om{\frac{n^{\alpha + 8}}{\delta^2}}$ where $\delta \leq 1$. Therefore, 
	\begin{align*} 
	\prob{}{b < \frac{\delta}{2} \frac{\widetilde{H}}{\card{\widetilde{\sH}}}}  &= o(1) + \mathbb{P}\left( b <  \frac{\delta}{2}(1+ o(1)) \right) \\
	&=o(1)
	\end{align*} 
	where in the last equality we used Proposition \ref{eq:cstPropo35}. It concludes the proof. 
\end{proof}
\subsection{Asymptotic complexity of \nRLPNs} \label{sec:app_final_comp}
We now have every tool to give the complexity of our algorithm, namely, we can compute the expected number of candidates at each iteration given by Proposition \ref{prop:exp_sizeS} and we have the correctness of our algorithm which is given by Proposition \ref{prop:correctness}.
\begin{proposition}{Asymptotic complexity exponent of the \nRLPNs algorithm.} \label{prop:asym_comp_doubleRLPN} Define
	\begin{align*}
	 &R \eqdef \lim\limits_{n \to \infty}\frac{k}{n}, \quad \tau \eqdef \lim\limits_{n \to \infty}\frac{t}{n},   \quad \sigma \eqdef  \lim\limits_{n \to \infty} \frac{s}{n}, \quad R_{\textup{aux}} \eqdef  \lim\limits_{n \to \infty} \frac{\kbis}{n}\\ &  \quad \tau_{\textup{aux}} \eqdef  \lim\limits_{n \to \infty} \frac{\tbis}{n}, \omega \eqdef  \lim\limits_{n \to \infty} \frac{w}{n}, \quad \mu \eqdef  \lim\limits_{n \to \infty} \frac{u}{n}
	\end{align*}
	Suppose that de $\Call{Decode}{}$ procedure has an expected time complexity of $2^{n \cdot \sigma \cdot o(1)}$.
	\noindent The expected complexity of the \nRLPNs algorithm to decode a code of rate $R$ at relative distance $\tau$ is upper bounded by
		$2^{n \: \left(\alpha_{\textup{\nRLPNs}} + o(1)\right)}$
where
\begin{eqnarray*}
\alpha_{\textup{\nRLPNs}}&\leq & - \pi + \max\left( \left(1-\sigma\right) \: \cdot \:\alpha_{\textup{eq}}\left(\frac{R-\sigma}{1-\sigma}, \frac{\omega}{1-\sigma}\right) , \; \nu_{\textup{sample}}, \; R_{\textup{aux}}, \; \nu_{\textup{candidate}} \: \cdot \:  N_{\textup{aux}} +  \alpha_{\textup{ISD}} \right)
	\end{eqnarray*}
	where
	\begin{eqnarray*}
	\pi & \eqdef & h \left(\tau \right) - \sigma  \: \cdot \: h_2 \left(\frac{\tau - \mu }{\sigma}\right) -  (1- \sigma)  \: \cdot \: h \left(\frac{\mu }{1-\sigma}\right), \\
	\nu_{\textup{samples}} &\eqdef& \left(1 - \sigma\right)  \: \cdot \: h_2 \left(\frac{\omega}{1-\sigma}\right)  + \sigma  \; \cdot \; h_2 \left(\frac{\tau_{\textup{aux}}}{\sigma}\right)- (R - R_{\textup{aux}}), \\
	\alpha_{\textup{ISD}} & \eqdef & \max\left(\sigma \: \cdot \: \alpha_{\textup{ISD-Dumer}}\left( 1 - \frac{N_{\textup{aux}} \cdot \tau_{\textup{aux}} }{\sigma}, \frac{\tau - \mu}{\sigma}\right), \nu_{\textup{ISD}} + \left(1- \sigma\right)\; \cdot \; \alpha_{\textup{ISD-Dumer}}\left(\frac{R-\sigma}{1- \sigma}, \frac{\mu}{1-\sigma}\right) \right), \\
	 \nu_{\textup{ISD}} &\eqdef& \max\left(\sigma  \: \cdot \: h_2 \left(\frac{\tau - \mu }{\sigma}\right) -N_{\textup{aux}} \: \cdot \: \tau_{\textup{aux}},\: 0  \right)\\
	 		\nu_{\textup{candidates}} &\eqdef& \max \left(\max_{(\eta,\zeta) \in \mathcal{A}} \sigma \: \cdot \: h_2\left(\frac{\zeta}{\sigma}\right) + (1-\sigma) \: \cdot \:  h_2 \left(\frac{\eta}{1-\sigma}\right) - (1-R), \; 0\right) ,
\end{eqnarray*}
with
\begin{multline*}
	\mathcal{A}  \eqdef \Bigg\{ (\eta,\zeta) \in \left[0, 1-\sigma \right]\times \left[ 0,\sigma \right] \: : \sigma \left[\widetilde{\kappa} \left(\frac{\tbis}{\sigma},\frac{\tau-\mu}{\sigma}\right)  - \widetilde{\kappa} \left(\frac{\tbis}{\sigma},\frac{\zeta}{\sigma}\right) \right] +  \\   (1-\sigma) \left[\widetilde{\kappa} \left(\frac{\omega}{1-\sigma},\frac{\mu}{1-\sigma}\right)  - \widetilde{\kappa} \left(\frac{\omega}{1-\sigma},\frac{\eta}{1-\sigma}\right)  \right] \leq 0\Bigg\},
\end{multline*}
and
\begin{itemize}
	\item $\alpha_{\textup{eq}}(R',\tau')$ is the complexity exponent of $\Call{ParityCheckEquations}{}$ to compute all parity-checks of relative weight $\tau'$ of a code of rate $R'$. It is instantiated here with a technique devised in \cite[\S 5 Eq. (5.4)]{CDMT22} and its complexity is recalled in Proposition \ref{prop:asym_BJMM}.
	\item  $\alpha_{\textup{ISD-Dumer}}(R',\tau')$ is the complexity exponent of $\Call{Decode-Dumer}{}$ to return all the solutions to the decoding problem in a code of rate $R'$ at relative distance $\tau'$. Its complexity is recalled in Proposition \ref{prop:comp_ISD_decoder_dumer_asym}.
\end{itemize}
Moreover, $\sigma, \: R_{\textup{aux}}, \: \tau_{\textup{aux}}, \: \omega, \: \mu,$ are non-negative and such that
$$
\sigma \leq R, \quad \tau- \sigma \leq \mu\leq \tau, \quad \omega \leq 1 - \sigma,
$$
\begin{align}
\nu_{\textup{samples}} &\geq -2 \: \varepsilon_{\textup{bias}} , \\
0 &\geq (1-\sigma) \:  h_2 \left(\frac{\omega}{1-\sigma}\right) + \sigma h_2 \left(\frac{\tau_{\textup{aux}}}{\sigma}\right)  - R \\
0 &\geq \sigma \:  h_2 \left(\frac{\tau_{\textup{aux}}} {\sigma}\right) - (\sigma - R_{\textup{aux}})
\end{align}
where $\widetilde{\kappa}$ is the function defined in Proposition \ref{prop:expansion} and $$\varepsilon_{\textup{bias}} \eqdef  \: \sigma \: \left[ \widetilde{\kappa} \left(\frac{\tbis}{\sigma},\frac{\tau-\mu}{\sigma}\right) - h_2 \left(\frac{\tau_{\textup{aux}}}{\sigma}\right)\right] + \: (1-\sigma) \: \left[\widetilde{\kappa} \left(\frac{\omega}{1-\sigma},\frac{\mu}{1-\sigma}\right) - h_2 \left(\frac{\omega}{1-\sigma}\right)\right].$$ 
Finally, we require that $N_{\textup{aux}}=\OO{1}$.
\end{proposition}
\begin{remark}
	In practice our parameters are such that we decode the auxiliary code at Gilbert-Varshamov distance, namely $\tau_{\textup{aux}} = \sigma \: h_2^{-1} \left(1-R_{\textup{aux}}\right)$.
\end{remark}
While initially Dumer's decoder \cite{D91} is designed to produce only one solution to the decoding problem it suffices to re-run it as many time as the number of solutions we expect from the decoding problem to find all of them. We get the following proposition giving the asymptotic complexity of the $\Call{Decode-Dumer}{}$ procedure.
\begin{proposition}[Asymptotic time complexity of ISD Decoder \cite{D91} to produce \textbf{all} solutions to the decoding problem] \label{prop:comp_ISD_decoder_dumer_asym}
	Let
	$ R \eqdef \lim\limits_{n \to \infty}\frac{k}{n}, \tau \eqdef \lim\limits_{n \to \infty}\frac{t}{n}$. Let $\ell$ and $w$ be two (implicit) parameters of the algorithm and define $ \lambda \eqdef \lim\limits_{n \to \infty}\frac{\ell}{n}, \omega \eqdef \lim\limits_{n \to \infty}\frac{w}{n}$.
	The time and space complexities of \cite{D91} to find a proportion $1-o(1)$ of all solutions to the decoding problem at distance $t$ on an $[n,k]$ linear code are given by $2^{n \: \left(\alpha_{\textup{ISD-Dumer}} + o(1)\right)}$ and $2^{n \: \left(\beta_{\textup{ISD-Dumer}}+ o(1)\right)}$ respectively where
	\begin{align}
		\alpha_{\textup{ISD-Dumer}} &\eqdef \min_{\omega,\lambda} \left( \pi +  \max\left(\frac{R + \lambda}{2}h_2 \left(\frac{\omega}{ R+\lambda}\right) , \left(R + \lambda\right)h_2 \left(\frac{\omega}{ R+\lambda}\right) - \lambda\right) \right), \\
		\pi &\eqdef h_2(\tau) - (1- R-\lambda) h_2 \left(\frac{\tau - \omega}{1- R-\lambda}\right)-  (R + \lambda) h_2 \left(\frac{\omega}{R+\lambda}\right), \\
		\nu_{\textup{sol}} &\eqdef \max \left(h_2\left(\tau\right) - (1-R), 0 \right), \\
			\beta_{\textup{ISD-Dumer}} &\eqdef \frac{R + \lambda}{2}h_2 \left(\frac{\omega}{R+\lambda}\right).
	\end{align}
	Moreover $\lambda$ and $\omega$ must verify the following constraints:
	$$ 0 \leq \lambda \leq 1- R, \qquad \max \left(R + \lambda + \tau - 1, 0\right)\leq \omega \leq \min \left(\tau,R + \lambda \right).$$
	The expected number of solutions is given by $2^{n \: \left(\nu_{\textup{sol}} + o(1)\right)}$.
\end{proposition}
We recall here the asymptotic complexity of the technique devised in \cite[\S 5, Equation (5.4)]{CDMT22} based on \cite{BJMM12} decoder to produce all parity-checks of low weight of a code.
\begin{proposition}{Asymptotic time complexity exponent of BJMM technique \cite{BJMM12}, \cite[\S 5, Equation (5.4)]{CDMT22} to produce all parity-checks of relative weight $\omega$ of a code of rate $R$} \label{prop:asym_BJMM}
	\begin{equation}
		\alpha_{\text{BJMM}}(R,\omega) \eqdef \min_{\pi_1,\pi_2, \lambda_1,\lambda_2} \gamma
	\end{equation}
	where
	\begin{align*}
		&\gamma \eqdef \max(\gamma_1,\gamma_2,\gamma_3) \\
			&\gamma_1 \eqdef \max\left(\nu_0, \: 2 \: \nu_0 - \lambda_1 \right), \quad \gamma_2\eqdef  \max\left(\nu_1, \: 2 \: \nu_1 - \left( \lambda_2 - \lambda_1\right) \right), \quad	\gamma_3 \eqdef  \max\left(\nu_2, \: 2 \: \nu_2 - \left( \lambda - \lambda_2\right) \right) \\
		&\nu_0 \eqdef \frac{h(\pi_1)}{2}, \quad \nu_1 \eqdef h(\pi_1) - \lambda_1, \quad\nu_2 \eqdef h(\pi_2) - \lambda_2, \quad \nu_3 \eqdef h(\omega) - \lambda.
	\end{align*}

	and the constraint region $\sR$ is defined by the sub-region of nonnegative tuples $(\pi_1,\pi_2,\lambda_1,\lambda_2)$ such that
	$$
	\lambda_1 \leq \lambda_2 \leq \lambda, \quad \pi_1 \leq \pi_2 \leq \pi, \quad
	\pi_2 \leq 2 \pi_1, \quad \pi \leq 2\pi_2, \quad 
	\pi_2 \leq \lambda_1, \quad \pi \leq \lambda_2, 
	$$
	and
	\begin{eqnarray}
		\lambda_1 & = &  \pi_2 + (1 - \pi_2)h\left( \frac{\pi_1-\pi_2/2}{1-\pi_2}\right),\\
		\lambda_2 & = &  \omega + (1 - \omega)h\left( \frac{\pi_2-\omega/2}{1-\omega}\right)
	\end{eqnarray}
		The expected number of parity-checks computed is given by $2^{n \: \left(\nu_3 + o(1) \right)}$
\end{proposition}

 \section{Proof of Proposition \ref{prop:exp_bias}} \label{app:proofBias}

Let us recall Proposition \ref{prop:exp_bias}.
\propDualityBias*
Let us devise a more convenient expression for $\langle \yv, \hv \rangle  + \langle \xv, \cvbis \rangle$. By noticing that $\hv_{\sP}$ and $\hv_{\sN}$ are linearly linked we get the following lemma.
\begin{lemma} \label{lem:expr_error_doubleRLPN}
	Let $\sP$ and $\sN$ be two complementary subsets of $\llbracket 1 , n \rrbracket$ of size $s$ and $n-s$ respectively. Let $\CC$ and $\CCbis$ be two $[n,k]$ linear and $[s,\kbis]$ linear codes respectively such that $\CC_{\sP}$ is of dimension $s$. Let $\xv \in \F_2^s$ and $\left(\hv,\cvbis\right) \in \widetilde{\sH}$ where recall that 
	$$ 	\widetilde{\sH} \eqdef \{ \left(\hv,\cvbis\right) \in \CC^{\perp} \times \CCbis \: : |\hv_{\sN}|=w \mbox{ and } |\hv_{\sP} + \cvbisperp|=\tbis \}. $$
	We have that \begin{equation} \label{rel:noise_doubleRLPN}
		\langle \yv, \hv \rangle + \langle \xv, \cvbis \rangle = \langle \left(\xv + \ev_{\sP} \right) \Rm + \ev_{\sN}, \hv_{\sN}\rangle + \langle \xv , \hv_{\sP} + \cvbis \rangle
	\end{equation} 
	where $\Rm \in \F_2^{s \times (n-s)}$ is (independently of the parity-check $\vec{h}$) such that
	\begin{equation}\label{rel:parity}
		\hv_{\sP}= \hv_{\sN} \Rm^{\transpose}.
	\end{equation}
\end{lemma}
\begin{proof}
	First, let us show Equation \eqref{rel:parity}. Suppose without loss of generality that $\sP = \llbracket 1,s \rrbracket$ and $\sN = \llbracket s+1,n \rrbracket$. Let $\Gm \in \F_2^{k \times n}$ be a generator matrix of $\CC$. Because $\CC_{\sP}$ is of dimension $s$ there exists an invertible $\Jm \in \F_2^{k \times k}$ such that
	$$\Jm \Gm =  \begin{pmatrix}
		\vec{Id}_{s} & \Rm \\
		\vec{0}_{k-s} & \Rm'
	\end{pmatrix} $$
	where $\Rm \in \F_2^{s \times (n-s)}$ and $\Rm' \in \F_2^{(k-s)\times (n-s)}$. Furthermore, $\Jm \Gm$ is another generator matrix for $\CC$. Therefore for any $\hv \in \CC^{\perp}$ we have $\Jm \Gm \: \transp{\hv} = \vec{0}$. Since $\Jm \Gm \transp{\hv}= \transp{\hv_{\sP}}+\Rm \transp{\hv_{\sN}}$, this gives \eqref{rel:parity}.
	Now, let us prove \eqref{rel:noise_doubleRLPN}.
	Recall that, using Equation \eqref{rel:parity} we have:
	\begin{align*}
		\langle \yv, \hv\rangle &= \langle \ev_{\sP}, \hv_{\sP}\rangle + \langle \ev_{\sN}, \hv_{\sN}\rangle \\
		&= \langle \ev_{\sP}, \hv_{\sN} \Rm^{\top} \rangle + \langle \ev_{\sN}, \hv_{\sN}\rangle  \\
		&= \langle \ev_{\sP} \Rm + \ev_{\sN}, \hv_{\sN}\rangle,
	\end{align*}
	and
	\begin{align*}
		\langle \xv, \cvbis \rangle &=  \langle \xv, \hv_{\sP}\rangle +  \langle \xv,  \hv_{\sP} + \cvbis \rangle \\
		&= \langle \xv \Rm , \hv_{\sN} \rangle + \langle \xv ,  \hv_{\sP} + \cvbis \rangle
	\end{align*}
	where in the last equality we used Equation \eqref{rel:parity}. 
	This concludes the proof.
\end{proof}

\begin{proof}[Proof of Proposition \ref{prop:exp_bias}]
	Let us consider $\Rm \in \F_2^{s \times (n-s)}$ as in Lemma \ref{lem:expr_error_doubleRLPN} and let us prove Equation \eqref{eq:prop_bias_doubleRLPN}.
	By definition of the bias and $\widetilde{\sH}$ given Equation \eqref{rel:parity}  we have the following computation and ,
	\begin{align*}
	\bias_{\left(\hv,\cvbis \right)) \drawn \widetilde{\sH}}&   \left(	\langle \yv, \hv \rangle + \langle \xv, \cvbis \rangle \right) \\
	&= \frac{1}{\card{\widetilde{\sH}}} \sum_{ \left( \hv, \cvbis \right) \in \widetilde{\sH}} (-1)^{\langle \yv, \hv \rangle + \langle \xv, \cvbis \rangle} \\
	&=  \frac{1}{\card{\widetilde{\sH}}} \sum_{ \substack{ \left( \hv_{\sN}, \cvbis\right) \in \left(\CC^{\perp}\right)_{\sN} \times \CCbis  \\   \left|\hv_{\sN}\right| = w ,\left| \Rm \hv_{\sN} + \cvbis \right| = \tbis}} (-1)^{\langle \left(\xv + \ev_{\sP} \right) \Rm + \ev_{\sN}, \hv_{\sN}\rangle + \langle \xv , \hv_{\sN} \transp{\Rm} + \cvbis \rangle} 
	\end{align*}
	where in the last equality we used Lemma \ref{lem:expr_error_doubleRLPN}. Therefore, 
	\begin{equation}\label{eq:bias} 
		\bias_{\left(\hv,\cvbis \right)) \drawn \widetilde{\sH}} = \frac{1}{\card{\widetilde{\sH}}} \sum_{ \substack{ \left( \hv_{\sN}, \cvbis\right) \in \left(\CC^{\perp}\right)_{\sN} \times \CCbis  } } f(\hv_{\sN}, \cvbis) 
	\end{equation}
	where,
	$$
	f(\hv_{\sN}, \cvbis) \eqdef   (-1)^{\langle \left(\xv + \ev_{\sP} \right) \Rm + \ev_{\sN}, \hv_{\sN} \rangle + \langle \xv ,  \hv_{\sN} \transp{\Rm} + \cvbis \rangle} \mathbf{1}_{ \{\left|\hv_{\sN}\right| = w ,\left| \hv_{\sN} \transp{\Rm} + \cvbis\right| = \tbis \}}.
	$$
Using Equation \ref{eq:shortPunct} we have that $\left(\CC^{\perp}\right)_{\sN} =  \left(\CC^{\sN}\right)^{\perp}$ and thus $\left( \left(\CC^{\sN}\right)^{\perp} \times \CCbis \right)^{\perp} =\CC^{\sN} \times \CCbisperp $. By using the Poisson formula
	(see \cite[Lemma 2, Ch. 5.2]{MS86}), together with the fact that $\dim\left(\CC^{\sN} \times \CCbisperp \right) = k-\kbis$, 
	we get
	\begin{align}\label{eq:Poisson} 
		\sum_{ \substack{ \left( \hv_{\sN}, \cvbis\right) \in \left(\CC^{\sN} \right)^{\perp} \times \CCbis  } } f(\hv_{\sN}, \cvbis) &= \frac{1}{2^{k-\kbis}}\sum_{ \substack{ \left( \cv^{\sN}, \cvbisperp\right) \in \CC^{\sN} \times \CCbisperp  } } \widehat{f}(\cv^{\sN}, \cvbisperp).
	\end{align}
	Let us compute the right-hand term. By definition of $f$, it is readily seen that 
	\begin{align*} 
	\widehat{f}(\vec{y}_{1},\vec{y}_{2}) &=\sum_{\substack{\vec{z}_{1}\in \F_{2}^{n-1},\vec{z}_{2}\in \F_{2}^{s} \\ |\vec{z}_{1}| = w, |\vec{z}_{1}\vec{R}^{\top}+\vec{z}_{2}| = \tbis } } (-1)^{\langle \vec{y}_{1},\vec{z}_{1} \rangle + \langle \vec{y}_{2},\vec{z}_{2}\rangle} (-1)^{\langle \left(\xv + \ev_{\sP} \right) \Rm + \ev_{\sN}, \vec{z}_{1} \rangle + \langle \xv ,  \vec{z}_{1} \transp{\Rm} + \vec{z}_{2} \rangle} \\ 
	&= \sum_{\vec{z}_{1}\in \F_{2}^{n-s}:|\vec{z}_{1}|=w}  (-1)^{\langle \vec{y}_{1}+\left(\xv + \ev_{\sP}+\vec{y}_{2} \right) \Rm + \ev_{\sN}, \vec{z}_{1} \rangle} \sum_{\substack{\vec{z}_{2}\in \F_{2}^{s}: |\vec{z}_{1}\transp{\vec{R}}+\vec{z}_{2}}| = \tbis } (-1)^{\langle\vec{y}_{2}+\vec{x},\vec{z}_{1}\transp{\vec{R}} +\vec{z}_{2} \rangle} \\
	&= K^{(n-s)}_{w}\left(\left| \vec{y}_{1} + (\vec{y}_{2}+\vec{x}+\vec{e}_{\sP})\vec{R}+\vec{e}_{\sN} \right| \right) \; K^{(s)}_{\tbis}\left( \left|\vec{y}_{2}+\vec{x} \right| \right)
	\end{align*} 
	where in the last equality we used Fact \ref{fact:krawtchouk_bias}. Plugging this into Equation \eqref{eq:Poisson}  and then into Equation \eqref{eq:bias} concludes the proof. 
\end{proof} \section{Proof of Proposition \ref{prop:exp_sizeS}} \label{app:appendix_sec_tools}
The proof of this Appendix is to prove Proposition \ref{prop:exp_sizeS} which we recall is given by
\propExpSizeS*
The proof is divided in the following steps.
\newline

{\bf \noindent Step 1:} in Lemma \ref{lem:link_S_bias} we show that the expected size of $\sS$ is related to the probability that the bias of $\langle \yv, \hv \rangle  + \langle \xv, \hv_{\sP} \rangle$ is superior to the threshold $\approx \frac{\delta}{2}$.
\newline

{\bf \noindent Step 2:}  We give an exponential bound on the aforementioned probability by using Poisson summation formula as it was done in the proof of Proposition \ref{prop:exp_bias}.

\subsubsection*{\textbf{Step 1}} \label{sec:link_S_bias}
Recall that we have from Equation $\ref{eq:prop_def_S}$ that 
$$ 	
\sS \eqdef \left\{\sv \in \F_2^\kbis \: : \widehat{f_{\yv,\widetilde{\sH},\Gmbis}}\left(\sv\right) \geq  \frac{\delta}{2} \: \widetilde{H}\right\}. 
$$
By using Lemma \ref{lem:fundamental} we get the following condition for an element $\sv$ to be a candidate:
\begin{fact} \label{fact:fundamental1} 	 Let $\sv \in \F_2^{\kbis}$ and $\xv \in \F_2^{s}$ such that $\xv \Gmbis[\transpose] = \sv$. We have,
$$
	\sv \in \sS \; \Longleftrightarrow  \bias_{\left(\hv,\cvbis\right) \drawn \widetilde{\sH}}\left(\langle \yv,\hv \rangle + \langle \xv,\cvbis \rangle\right)  \geq \frac{\delta}{2} \; \frac{\widetilde{H}}{\card{\widetilde{\sH}}}.
	$$
\end{fact}
From there, we can derive the following lemma linking the expected size of $\sS$ and the previous bias. 
\begin{lemma} \label{lem:link_S_bias}
	Under Distribution \ref{notation:randomCodes}, using notation of Proposition \ref{prop:exp_sizeS} and under the constraint of Proposition \ref{prop:exp_sizeS} that $\sH = \widetilde{\sH}$ we have
	$$ 
	\mathbb{E}_{\CC,\CCbis}\; \left(\card{\sS}\right) \leq 2^{\kbis} \prob{\CC,\CCbis,\xv}{\bias_{(\hv,\cvbis)\drawn \widetilde{\sH}}(\langle \yv, \hv \rangle  + \langle \xv, \cvbis \rangle) \geq \frac{\delta}{2} \; \frac{\widetilde{H}}{\card{\widetilde{\sH}}} } + 1$$
	where $\xv$ is taken uniformly at random in $\F_2^s \setminus \{\CCbisperp + \ev_{\sP}\}$.
\end{lemma}
\begin{proof} We have the following computation, 
	\begin{align*} 
		\expect{\CC,\CCbis}{\card{\sS}}&= \expect{}{\sum_{\sv \in \F_2^{\kbis}} \mathbf{1}_{\sv \in \sS}} \\
		&\leq1 + \expect{}{\sum_{\sv \in \F_2^{\kbis} \: : \sv \neq \ev_{\sP} \Gmbis[\transpose]} \mathbf{1}_{\sv \in \sS}} \\
		&=1 + \expect{}{\sum_{\sv \in \F_2^{\kbis} \: : \sv \neq \ev_{\sP} \Gmbis[\transpose]} \frac{1}{2^{s-\kbis}} \sum_{ \zv \in \F_2^s \: : \zv \: \Gmbis[\transpose] = \sv}\mathbf{1}_{\zv \Gmbis[\transpose] \in \sS}}
	\end{align*} 
where in the last equality we used that $\Gmbis$, which is a generator matrix of $\CCbis$, has rank $s-\kbis$ according to Distribution \ref{notation:randomCodes}.
Now, from the linearity of the expectation we get,
	\begin{align*}
		\expect{\CC,\CCbis}{\card{\sS}}&\leq  1 + \frac{1}{2^{s-\kbis}} \sum_{\zv \in \F_2^s \: : \zv \notin \CCbisperp + \ev_{\sP}} \prob{\CC,\CCbis}{\zv \Gmbis[\transpose] \in \sS} \\
&= 1 + 2^{\kbis} \:  \prob{\CC,\CCbis,\xv}{\xv \Gmbis[\transpose] \in \sS} 
	\end{align*}
	where in the last line we used that $\Gmbis$ has full rank, $\card{\CCbisperp + \ev_{\sP}} = 2^{s}$ and $\xv$ is taken uniformly at random in $\F_2^s \setminus \{\CCbisperp + \ev_{\sP}\}$. Using Fact \ref{fact:fundamental1} concludes the proof.
\end{proof}
\subsubsection*{Step 2}\label{sec:distrib_bias} The following lemma relates the upper-bound given in Lemma \ref{lem:link_S_bias} to the involved probability in Conjecture \ref{ass:bias_dom}. 
\begin{lemma} \label{lem:simpl_bias}
		Using Distribution \ref{notation:randomCodes} and notation of Proposition \ref{prop:exp_sizeS} we have
		\begin{multline*}
\prob{\CC,\CCbis,\xv}{\bias_{\left(\hv,\cvbis \right) \drawn \widetilde{\sH}} \left(\langle \yv, \hv \rangle  + \langle \xv, \cvbis \rangle\right) \geq \frac{\delta}{2} \frac{\widetilde{H}}{\card{\widetilde{\sH}}}  }   = \\ \prob{\CC,\CCbis,\xv}{ \sum_{j = 0}^{s} \sum_{i = 0}^{n-s} K_{\tbis}^{(s)} \left(j\right) K_w^{(n-s)} \left(i\right) N_{i,j} \geq \frac{1}{2} \: K_w^{(n-s)}\left(u\right) K_{\tbis}^{(s)}\left(t-u\right) }.
		\end{multline*}
		where $\xv$ is taken uniformly at random in $\F_2^s \setminus \{ \CCbisperp + \ev_{\sP}\}$.
	\end{lemma}
	\begin{proof}
		Recall that we have that from notation of Proposition \ref{prop:exp_sizeS},
		\begin{equation} \label{lem:simpl_bias_bias_def}
			\widetilde{H} \eqdef \frac{\binom{n-s}{w} \binom{s}{\tbis}}{2^{k-\kbis}}, \qquad \delta \eqdef \frac{K_w^{(n-s)}(u) K_{\tbis}^{(s)}(t-u) }{\binom{n-s}{w} \binom{s}{\tbis}}.
		\end{equation}
		According to Proposition \ref{prop:exp_bias} we have the following computation,
		\begin{align*}
			\mathbb{P}&\left( \bias_{\left(\hv,\cvbis \right) \drawn \widetilde{\sH}} \left(\langle \yv, \hv \rangle  + \langle \xv, \cvbis \rangle\right) \geq \frac{\delta}{2} \frac{\widetilde{H}}{\card{\widetilde{\sH}}} \right) \\ 
			&= \prob{}{\frac{1}{2^{k-\kbis}} \frac{1}{\card{\widetilde{\sH}}} \sum_{j = 0}^{s} \sum_{i = 0}^{n-s} K_{\tbis}^{(s)} \left(j\right) K_w^{n-s} \left(i\right) N_{i,j} \geq \frac{\delta}{2} \frac{\widetilde{H}}{\card{\widetilde{\sH}}}  } \\
			&=   \prob{}{\sum_{j = 0}^{s} \sum_{i = 0}^{n-s} K_{\tbis}^{(s)} \left(j\right) K_w^{n-s} \left(i\right) N_{i,j} \geq \frac{K_w^{(n-s)}\left(u\right) K_{\tbis}^{(s)}\left(t-u\right)}{2}}
		\end{align*}
		where in the last equality we used Equation \eqref{lem:simpl_bias_bias_def}. It concludes the proof. 
	\end{proof}
We will now use Conjecture \ref{ass:bias_dom} to bound the right-hand term of Lemma \ref{lem:simpl_bias} by the probability of the  event ``$N_{i,j} \neq 0$'' for some low $i$ and $j$ (more precisely when $K_w^{(n-s)}(u) K_{\tbis}^{(s)}(t-u) \leq K_w^{(n-s)}(i) K_{\tbis}^{(s)}(j)$). 
\begin{restatable}{lemma}{lemprobNij}\label{lem:probNij}
		Under Distribution \ref{notation:randomCodes} we have that for $(i,j) \in \llbracket 0, n-s  \rrbracket \times \llbracket 0, s\rrbracket$,
		$$ 
		\prob{\CC,\CCbis,\xv}{N_{i,j} \neq 0} = \OO{ \frac{\binom{s}{j} \binom{n-s}{i}}{2^{\kbis + n-k}} }
		$$
		where $\xv$ is taken uniformly at random in $\F_2^s \setminus \{ \CCbisperp + \ev_{\sP}\}$.
	\end{restatable}
	\begin{proof}
		This is proved in the first lemma of Appendix \ref{app:prop_Nij}.
	\end{proof}
	We are now ready to prove Proposition \ref{prop:exp_sizeS}.
	\begin{proof}[proof of Proposition \ref{prop:exp_sizeS}]
		Recall that we want to show, 
		$$  \mathbb{E}_{\CC,\CCbis} \left(\card{\sS} \right)  = \OOt{ \max_{(i,j) \in \mathcal{A}} \frac{\binom{s}{j} \binom{n-s}{i}}{2^{n-k}}} + 1. 
			$$
		Lemma \ref{lem:link_S_bias} gives us that 
		\begin{equation}\label{eq:expBia} 
			\mathbb{E}_{\CC,\CCbis} \left(\card{\sS }\right) \leq 2^{\kbis} \prob{\CC,\CCbis,\xv}{\bias_{(\hv,\cvbis)\drawn \widetilde{\sH}}(\langle \yv, \hv \rangle  + \langle \xv, \cvbis \rangle) \geq \frac{\delta}{2} \; \frac{\widetilde{H}}{\card{\widetilde{\sH}}} } + 1
		\end{equation} 
		Now using Lemma \ref{lem:simpl_bias} we get that
		\begin{multline}\label{eq:biasConj}
			\prob{}{\bias_{\left(\hv,\cvbis \right) \drawn \widetilde{\sH}} \left(\langle \yv, \hv \rangle  + \langle \xv, \cvbis \rangle\right) \geq \frac{\delta}{2} \; \frac{\widetilde{H}}{\card{\widetilde{\sH}}}  }  = \\ \prob{}{ \sum_{j = 0}^{s} \sum_{i = 0}^{n-s} K_{\tbis}^{(s)} \left(j\right) K_w^{(n-s)} \left(i\right) N_{i,j} \geq \frac{1}{2} \: K_w^{(n-s)}\left(u\right) K_{\tbis}^{(s)}\left(t-u\right) }.
		\end{multline}
		From Conjecture \ref{ass:bias_dom},
		\begin{align*}
			\prob{}{ \sum_{j = 0}^{s} \sum_{i = 0}^{n-s} K_{\tbis}^{(s)} \left(j\right) K_w^{(n-s)} \left(i\right) N_{i,j} \geq \frac{1}{2} \: K_w^{(n-s)}\left(u\right) K_{\tbis}^{(s)}\left(t-u\right) } = \OOt{ \max_{(i,j) \in \mathcal{A}}\prob{\CC,\CCbis,\xv}{N_{i,j} \neq 0} }
		\end{align*}
		Therefore, using Lemma \ref{lem:probNij} we get 
		$$
		\prob{}{ \sum_{j = 0}^{s} \sum_{i = 0}^{n-s} K_{\tbis}^{(s)} \left(j\right) K_w^{(n-s)} \left(i\right) N_{i,j} \geq \frac{1}{2} \: K_w^{(n-s)}\left(u\right) K_{\tbis}^{(s)}\left(t-u\right) } = \OOt{ \frac{\binom{s}{j} \binom{n-s}{i}}{2^{\kbis + n-k}} }
		$$
		Plugging this into Equation \eqref{eq:biasConj} and then in \eqref{eq:expBia} concludes the proof. 
		which proves our result.
	\end{proof}
 \section{About the distribution of the $N_{i,j}$}\label{app:prop_Nij}
This appendix is dedicated to studying the distribution of $N_{i,j}$ and in particular to prove Lemma~\ref{lem:probNij} which was used in Appendix \ref{app:proofBias} to prove Proposition \ref{prop:exp_bias}. We recall that it is given by
 \lemprobNij*
 First, using the union bound we can devise the following upper bound on our target probability:
\begin{fact} \label{fact:Nij_toexpect}
	$$
	\prob{}{N_{i,j} \neq 0}\leq \mathbb{E}\left(N_{i,j}\right).
	$$
\end{fact}
To compute this expected value, we first need to give the following lemma giving the probability that a word belongs to a random code.
\begin{lemma} \label{lem:probaRandom}
	Let $\CC$ be chosen uniformly at random among the $[n,k]$ linear codes. Let $\cv \in \F_2^n \setminus \{ \vec{0}\}$ we have
	\begin{align}
		\prob{\CC}{\cv \in \CC} &= \frac{2^k - 1}{2^{n} - 1}, \\
		\prob{\CC}{\vec{0} \in \CC} &= 1.
	\end{align}
\end{lemma}
\begin{proof}
	This lemma directly follows from \cite[\S 3, Lemma 9.3.2, (iii)]{BCN89}.
\end{proof}
We now give the preliminary lemma which breaks down the expected value of $N_{i,j}$ on the $\CCbisperp$ part and on the $\CC^{\sN}$ part.
\begin{restatable}{lemma}{lemmaNbar} \label{lem:Nbar}
Under Distribution \ref{notation:randomCodes} we denote by
\begin{align*}
	\overline{N_j^{(\CCbisperp)} }&\eqdef \expect{\CC,\CCbis,\xv} { N_j \left(\CCbisperp + \xv\right)}
\end{align*}
where $\xv$ is taken uniformly at random in $\F_2^{s} \setminus \{\CCbisperp + \ev_{\sP}\}$ and $\vec{r} \in \CCbisperp + \vec{x}$. We denote by fixing~$\vec{x}$,
\begin{align}
	\overline{N_i^{(\CC^{\sN})} }&\eqdef \expect{\CC} {N_i \left( \CC^{\sN} + \left(\rv+\ev_{\sP}\right)\Rm + \ev_{\sN} \right) }
\end{align}
We have that
	\begin{align}
\overline{N_i^{(\CC^{\sN})}} &=\frac{\binom{n-s}{i}}{2^{n-k}} \left(1 + \OO{2^{- (s-k)}} \right) \label{eq:expected_ni_bar}\\
	 \overline{N_j^{(\CCbisperp)}}  &\leq  	 \frac{\binom{s}{j}}{2^{\kbis}} \left(1 + \OO{ 2^{-  \left(s-\kbis\right)}}\right) \label{eq:expected_nj_bar}
\end{align}
Furthermore, the term $\OO{2^{- (s-\kbis)}}$ in Equation \eqref{eq:expected_ni_bar} does not depend on $i$·
\end{restatable}
\begin{proof}

	Under Distribution \ref{notation:randomCodes}, $\CC$ is taken at random among the $[n,k]$-codes that are such that $\CC_{\sP}$ is of full rank dimension $s$. Therefore, it is the same as if $\CC$ was chosen by taking its generator matrix $\Gm \in \F_2^{k \times n}$ as follows: 
	
$$ 
				\Gm_{\sP} \eqdef \begin{pmatrix}
						\Id_{s} \\
						\vec{0}_{k-s}
					\end{pmatrix}, \qquad \Gm_{\sN} \eqdef \begin{pmatrix}
					 \Rm \\
					 \Gm^{\sN}
					\end{pmatrix}. 
					$$
				where $\Rm$ is chosen uniformly at random among matrices of $\F_2^{s \times (n-s)}$ and $\Gm^{\sN}$ is any generator matrix of a code chosen uniformly at random among the $[n-s,k-s]$-codes. In particular, $\Gm^{\sN}$ and $\vec{R}$ are independent.

		 		Now, by definition $\xv\in \F_2^{s} \setminus \{\CCbisperp + \ev_{\sP}\}$ and $\vec{r} \in \CCbisperp + \vec{x}$, therefore $\rv + \ev_{\sP} \neq \vec{0}$. We deduce that, $ \left(\rv+\ev_{\sP}\right)\Rm$ is uniformly distributed in $\F_2^{n-s}$ as a non-zero sum of uniformly distribution vectors.
To simplify the notations let us define the uniformly distributed vector $$\vv \eqdef  \left(\rv+ \ev_{\sP}\right)\Rm + \ev_{\sN}$$ which is independent of $\CC^{\sN}$ (by construction $\Gm^{\sN}$ and $\vec{R}$ are independent). Now, let us show Equation \eqref{eq:expected_ni_bar}. We have by linearity of the expected value that
	\begin{align*}
		\overline{N_i^{(\CC^{\sN})}} &= \sum_{\zv \in \mathcal{S}_i^{n-s}} \prob{ \CC,\vv}{\zv \in \CC^{\sN} + \vv} \\
		&= \sum_{\zv \in \mathcal{S}_i^{n-s}} \prob{ \CC,\vv}{\zv \in \CC^{\sN} + \vv \middle| \vv = \zv} \prob{\CC,\vv}{\vv = \zv} + \prob{ \CC,\vv}{\zv \in \CC^{\sN} + \vv \middle| \vv \neq \zv} \prob{\CC,\vv}{\vv \neq \zv} \\
		&=  \sum_{\zv \in \mathcal{S}_i^{n-s}} \frac{1}{2^{n-s}}  + \frac{2^{k-s} - 1}{2^{n-s} - 1} \left( 1 - \frac{1}{2^{n-s}} \right) \qquad \mbox{(Lemma \ref{lem:probaRandom})} \\
		&= \frac{\binom{n-s}{i}}{2^{n-k}} \left( \frac{2^{n-k}}{2^{n-s}} + 2^{n-k} \; \frac{2^{k-s} -1}{2^{n-s}-1} \OO{1} \right) \\
		&= \frac{\binom{n-s}{i}}{2^{n-k}} \left( \frac{1}{2^{k-s}} + \frac{2^{-s} - 2^{-k}}{2^{-s}-2^{-n}} \; \OO{1}\right) \\
		&= \frac{\binom{n-s}{i}}{2^{n-k}} \left( \frac{1}{2^{k-s}} + \frac{1- 2^{-k+s}}{1-2^{-n+s}} \OO{1} \right) \\
		&= \frac{\binom{n-s}{i}}{2^{n-k}} \left( 1 + \frac{1}{2^{k-s}} + \OO{2^{-k+s}} \right) \quad \mbox{($s \leq k \leq n$)} 
\end{align*}
	Let us now show Equation \eqref{eq:expected_nj_bar}. Recall that 
	$$
	\overline{N_j^{(\CCbisperp)}}  \eqdef \expect{\CCbis, \xv} {N_j \left(\CCbisperp + \xv \right)}
	$$ 
	where $\xv$ is taken uniformly at random in $\F_2^{s} \setminus \{\CCbisperp + \ev_{\sP} \}$. By definition,
	\begin{align*}
		\expect{\CCbis, \xv} {N_j \left(\CCbisperp + \xv \right)} &= \sum_{\vec{z} \in  \mathcal{S}_{j}^{s}} \mathbb{P}_{\CCbis, \xv} \left( \vec{z} \in \CCbisperp + \xv \right) \\
		&= \sum_{\vec{z} \in  \mathcal{S}_{j}^{s}, \vec{x}_{0} \in \F_2^{s}} \mathbb{P}_{\CCbis} \left( \vec{z} - \vec{x}_{0} \in \CCbisperp \right)\mathbb{P}_{\vec{x},\CCbis}\left( \vec{x} = \vec{x}_{0} \right) \\
		&\leq \sum_{\vec{z} \in  \mathcal{S}_{j}^{s}}\left(  \sum_{\vec{x}_{0} \in \F_2^{s}} \frac{2^{s-\kbis}-1}{2^{s}-1} \;  \mathbb{P}_{\vec{x},\CCbis}\left( \vec{x} = \vec{x}_{0} \right) + \mathbb{P}_{\vec{x},\CCbis}\left( \vec{x} = \vec{z} \right) \right) \\
		&\leq \binom{s}{j} \frac{2^{s-\kbis}-1}{2^{s}-1} + \frac{\binom{s}{j}}{2^{s} - 2^{\kbis}} \\
		&\leq \frac{\binom{s}{j}}{2^{\kbis}} \;   2^{\kbis} \; \frac{2^{s-\kbis}-1}{2^{s}-1} + \frac{\binom{s}{j}}{2^{\kbis}} \; \frac{2^{\kbis}}{2^{s} - 2^{\kbis}} \\
		&\leq \frac{\binom{s}{j}}{2^{\kbis}} \;   \frac{2^{s}-2^{\kbis}}{2^{s}-1} + \frac{\binom{s}{j}}{2^{\kbis}} \; \frac{1}{2^{s-\kbis} - 1}  \\
		&\leq \frac{\binom{s}{j}}{2^{\kbis}} \; \left( 1 + \OO{2^{\kbis-s}} \right)
	\end{align*}
	which completes the proof. 
\end{proof}
	We can now show that the expected value of $N_{i,j}$ is the product of the two previously computed quantities:
	\begin{lemma}  \label{fact:product_expect}
		Under Distribution \ref{notation:randomCodes} and when $\xv$ is taken uniformly at random in $\F_2^{s} \setminus \{ \CCbisperp + \ev_{\sP}\}$ we have
		$$ 
		\expect{\CC,\CCbis,\xv}{N_{i,j}} = \overline{N_j^{(\CCbisperp)}}  \; 	\overline{N_i^{(\CC^{\sN})}}
		$$
		where, 
		$$
		N_{i,j} \eqdef \card{ \left\{  \left(\rv,\cv^{\sN}\right) \in  (\xv +\CCbisperp)  \times \CC^{\sN} \: : \left| \rv \right|=j \mbox{ and } \left| \left( \rv + \ev_{\sP}\right) \Rm + \ev_{\sN} + \cv^{\sN} \right| = i\right\} }
		$$
	and,
		$$
			\overline{N_j^{(\CCbisperp)} }\eqdef \expect{\CC,\CCbis,\xv} { N_j \left(\CCbisperp + \xv\right)} \quad \mbox{;} \quad 	\overline{N_i^{(\CC^{\sN})} }\eqdef \expect{\CC} {N_i \left( \CC^{\sN} + \left(\rv+\ev_{\sP}\right)\Rm + \ev_{\sN} \right) }.
		$$
	\end{lemma}
	\begin{proof} By definition,
		$$
		N_{i,j} = \sum_{u = 0}^{N_j\left(\CCbisperp + \xv\right)} N_i \left(\left(\rv^{(u)} + \ev_{\sP}\right) \Rm + \ev_{\sN} + \CC^{\sN}\right)
		$$
		where $\rv^{(u)}$ is the $u$'th codeword of weight~$j$ of $\CCbisperp + \xv$ and $N_j(\CCbisperp + \xv)$ counts the number of elements in $\CCbisperp + \xv$ of Hamming weight~$j$. Therefore,
		\begin{align*}
		N_{i,j} &= \sum_{\vec{z}\in \mathcal{S}_{j}^{s}} N_i \left(\left(\vec{z}+ \ev_{\sP}\right) \Rm + \ev_{\sN} + \CC^{\sN}\right)  \;\mathds{1}_{``\vec{z} \in \CCbisperp + \xv''}  \\
			&= \sum_{\vec{z}\in \mathcal{S}_{j}^{s},\vec{w} \in \mathcal{S}_{i}^{n-s}}  \mathds{1}_{``\vec{w} \in \left(\vec{z}+ \ev_{\sP}\right) \Rm + \ev_{\sN} + \CC^{\sN}"}  \;\mathds{1}_{``\vec{z} \in \CCbisperp + \xv''}  \\
		\end{align*} 
		We deduce that,
		\begin{align*}
			\expect{\CC,\CCbis,\xv}{N_{i,j}} &= \sum_{\vec{z}\in \mathcal{S}_{j}^{s},\vec{w} \in \mathcal{S}_{i}^{n-s}} \mathbb{P}\left( \vec{w} \in \left(\vec{z}+ \ev_{\sP}\right) \Rm + \ev_{\sN} + \CC^{\sN}, \; \vec{z} \in \CCbisperp + \xv \right) \\
			 &= \sum_{\vec{z}\in \mathcal{S}_{j}^{s},\vec{w} \in \mathcal{S}_{i}^{n-s}} \mathbb{P}\left( \vec{w} \in \left(\vec{z}+ \ev_{\sP}\right) \Rm + \ev_{\sN} + \CC^{\sN} \mid \vec{z} \in \CCbisperp + \xv  \right) \mathbb{P}\left(  \vec{z} \in \CCbisperp + \xv \right) \\
			 &= \sum_{\vec{z}\in \mathcal{S}_{j}^{s}}\left( \sum_{\vec{w} \in \mathcal{S}_{i}^{n-s}} \mathbb{P}\left( \vec{w} \in \left(\vec{z}+ \ev_{\sP}\right) \Rm + \ev_{\sN} + \CC^{\sN} \mid \vec{z} \in \CCbisperp + \xv  \right) \right)\mathbb{P}\left(  \vec{z} \in \CCbisperp + \xv \right) \\
			 &= \sum_{\vec{z}\in \mathcal{S}_{j}^{s}} \overline{N_i^{(\CC^{\sN})} }\; \mathbb{P}\left(  \vec{z} \in \CCbisperp + \xv \right) \\
			 &=  \overline{N_i^{(\CC^{\sN})} } \; 	\overline{N_j^{(\CCbisperp)} }
		\end{align*}
		which completes the proof. 
\end{proof}
\begin{proof}[Proof of Lemma \ref{lem:probNij}]
	We prove our result by using Fact \ref{fact:Nij_toexpect}, then Lemmas \ref{lem:Nbar} and \ref{fact:product_expect}.
\end{proof} \section{Instantiating the Auxiliary Code $\CCbis$ with an Efficient Decoder}\label{app:Caux}

We use here notation from \S\ref{sec:SDC}. In particular, we suppose the auxiliary code $\CCbis$ is a product of $b$ small random codes where
\begin{equation}
	b = \Theta(\log n).
\end{equation}

We have to show that for such $b$, the 
analyses 
from Propositions \ref{prop:biasCodedRLPN} and \ref{prop:exp_sizeS} still hold. Indeed, these analyses were done as if $\CCbis$ were a random code equipped with genie aided decoders. Here we compute $\sH$ as a subset of
$$
\widetilde{\sH} \subseteq \left\{ \left( \hv, \cvbis \right) \in \CC^{\perp} \times \CCbis\: : \: \forall i \in \IInt{1}{b}, \; |\hv_{\sN}(i)| = \tfrac{w}{b} \mbox{ and } |\hv_{\sP}(i) + \cv_i| = \tfrac{\tbis}{b}  \right\}
$$
by decoding each parity-check performing an exhaustive search on each block. We have therefore to show that in this case,

\begin{enumerate}[label=\textcolor{blue}{\roman*.}, ref=\roman*]
\item\label{SDC1} the bias
	$$
	\bias_{\left(\hv,\cvbis\right) \drawn \widetilde{\sH}}\left(  \langle \cvbis + \hv_{\sP},\ev_{\sP} \rangle +  \langle \ev_{\sN},  \hv_{\sN}\rangle \right)
	$$
is of the same order as that given by Proposition \ref{prop:biasCodedRLPN},
\item\label{SDC2} and the number of candidates to test 
$$
\mathbb{E }_{\CC,\CCbis}\left(\card{\sS }\right)
$$
is of the same order as that given by Proposition \ref{prop:exp_sizeS}.
\end{enumerate}

To prove item \ref{SDC1}., we first suppose that $\ev_{\sP}$ and $\ev_{\sN}$ have a weight which is fairly distributed, that is:
\begin{equation}
\label{eq:distrib_e}
\forall i \in \IInt{1}{b}, \; \hw{\ev_{\sP}(i)} = \tfrac{t - u}{b} \mbox{ and } \hw{\ev_{\sN}(i)} = \tfrac{u}{b}.
\end{equation}
This happens with a probability:
\begin{align}
\mathbb{P}_{\mathrm{succ}} = \dfrac{ \binom{s/b}{(t - u)/b}^b\binom{(n-s)/b}{u/b}^b}{\binom{s}{t - u}\binom{n-s}{u}} &= \Om{n \left(\tfrac{b}{c  n}\right)^{b}}
\end{align}
where $c$ is constant in $n$.
So we only need to iterate the whole double-RLPN algorithm a sub-exponential number of times, namely at most $\tfrac{1}{\mathbb{P}_{\mathrm{succ}}}$ times. Note that \eqref{eq:distrib_e} is not a necessary condition to achieve our decoding so this overcost is overestimated.

Now, assuming Condition \eqref{eq:distrib_e}, then we can see the bias above as the product of $b$ independent biases involving smaller vectors. More formally, we have
\begin{equation}
\begin{array}{l}
\displaystyle{\bias_{\left(\hv, \cvbis\right) \drawn \overline{\sH}}\left(  \langle \cvbis + \hv_{\sP},\ev_{\sP} \rangle +  \langle \ev_{\sN},  \hv_{\sN}\rangle \right)} \\
\ \ \ \ \ \ \ \ \ \ \ \ \ \ \ \ \ \ \ = \displaystyle{\prod_{i = 1}^{b}\bias_{\left(\hv(i), \cvbis(i)\right) \drawn \widetilde{\sH}_i}\left(  \langle \cvbis(i) + \hv_{\sP}(i),\ev_{\sP}(i) \rangle +  \langle \ev_{\sN}(i),  \hv_{\sN}(i)\rangle \right)}
\end{array}
\end{equation}
where 
\begin{equation}
\begin{array}{l}
\widetilde{\sH}_i \eqdef \Big\{ \left( \hv_{\sP}(i), \hv_{\sN}(i), \cvbis(i) \right) \in \left(\CC_{\sP(i) \cup \sN(i)}\right)^{\perp} \times \CC_i \; \\
\ \ \ \ \ \ \ \ \ \ \ \ \ \ \ \ \ \ \ \ \ \ \ \ \ \ \ \ \ \ : \; |\hv_{\sN}(i)| = \tfrac{w}{b} \mbox{ and } |\hv_{\sP}(i) + \cv_i| = \tfrac{\tbis}{b}  \Big\}
\end{array}
\end{equation}

Moreover, let us degrade the Constraints \eqref{eq:cstPropo35} of Proposition \ref{prop:biasCodedRLPN} by replacing the polynomial factor $n^{\alpha}$ by a super-polynomial 
\begin{equation}
A \eqdef \frac{n^{\alpha - 1 + \log(c) + \log(n)}}{\log(n)^{\log(n)}}.
\end{equation}
On the one hand, this super-polynomial factor is multiplied to the final complexity, but on the other hand the new constraint (with the original one \eqref{eq:constraints}) induces:
\begin{align}
\frac{\binom{(n-s)/b}{w/b} \binom{s/b}{\tbis/b}}{2^{(k-\kbis)/b}}  
&= \om{\frac{n^{\alpha/\log(n)}}{\delta^{2/\log(n)}}}
\end{align}
and
\begin{equation}
\frac{\binom{(n-s)/b}{w/b} \binom{s/b}{\tbis/b}}{2^{k/b}} = \OO{n^{\alpha/\log(n)}} \quad \mbox{and} \quad \frac{\binom{s/b}{\tbis/b}}{2^{(s-\kbis)/b}} = \OO{n^{\alpha/\log(n)}}.
\end{equation}

Which allows us to say, using Proposition \ref{prop:biasCodedRLPN}, that for all $i \in \IInt{1}{b}$ and for a proportion $1 - \oo{1}$ of codes $\CC_i$ and $\CC_{\sP(i) \cup \sN(i)}$:

\begin{equation}
\bias_{\left(\hv(i), \cvbis(i)\right) \drawn \overline{\sH}_i}\left(  \langle \cvbis(i) + \hv_{\sP}(i),\ev_{\sP}(i) \rangle +  \langle \ev_{\sN}(i),  \hv_{\sN}(i)\rangle \right) = \delta^{1/\log(n)}(1-\oo{1}).
\end{equation}
By specifying the values of both $\oo{1}$ (see proof of Proposition \ref{prop:biasCodedRLPN} in Appendix \ref{app:a}), we can deduce that \ref{SDC1}. is verified.\\

To verify item \ref{SDC2}., we can adapt Section \ref{sec:distrib_bias} to show that
\begin{align}
\mathbb{E }_{\CC,\CCbis}\left(\card{\sS}\right) & = \displaystyle{\OOt{ \max_{(i,j) \in \mathcal{A}} \frac{\binom{s/b}{j}^b \binom{(n-s)/b}{i}^b}{2^{n-k}} } + 1} 
\end{align}
where
\begin{equation}
				\mathcal{A} \eqdef \left\{ \left(i,j\right) \in \llbracket 0, \tfrac{n-s}{b} \rrbracket \times \llbracket0,\tfrac{s}{b} \rrbracket, \frac{ K_{w/b}^{((n-s)/b)}\left(u/b\right) K_{\tbis/b}^{(s/b)}\left((t-u)/b\right) }{ K_{w/b}^{((n-s)/b)}\left(i\right) K_{\tbis/b}^{(s/b)}\left(j\right)} \leq n^{2/b} \right\},
\end{equation}
\begin{equation}
\sS \eqdef \{\sv \in \F_2^\kbis \: : \widehat{f_{\yv,\widetilde{\sH},\Gmbis}}\left(\sv\right) \geq  \frac{\delta}{2} \: \widetilde{H}\},
\end{equation}
and
\begin{equation}
\widetilde{H} \eqdef \frac{\binom{(n-s)/b}{w/b}^b \binom{s/b}{\tbis/b}^b}{2^{k-\kbis}}.
\end{equation}
		
Finally, up to a sub-exponential factor, the above expectation is of the same order as in Proposition \ref{prop:exp_sizeS}. \section{Proofs of the statements made in Section \ref{sec:lattice}}\label{sec:app_lattice}

{\bf \noindent Proof of Proposition \ref{propo:dualityLattice}}
\propdualityLattice*

It is helpful to notice before the following link between the Bessel functions and the Fourier transform of the indicator function 
$\un_{\leq w}$
of the words of Euclidean norm $\leq w$
in $\mathbb{R}^n$ (see \cite[Fact 4.11]{DDRT23})
\begin{lemma}
	We have for any positive integer $n$, any $w\geq 0$, any $\xv$ in $\mathbb{R}^n$ 
	$$
	\widehat{\un_{\leq w}}(\xv) = \left( \frac{w}{\| \vec{x}\|_{2}} \right)^{n/2} J_{n/2}\left(2\pi w \|\vec{x}\|_{2}\right)
	$$
where $\widehat{f}(\vec{x}) = \int_{\mathbb{R}^{n}} f(\vec{x})e^{-2i\pi \langle \vec{x},\vec{y} \rangle} d\vec{y}$ for $f : \mathbb{R}^{n} \rightarrow \mathbb{C}$. 
\end{lemma}

\begin{proof}[Proof of Proposition \ref{propo:dualityLattice}]
	First, notice that,
	\begin{align}
		f_{\widetilde{\sW}}(\yv) &= \frac{1}{2} \sum_{\vec{w} \in \widetilde{\sW}} \left( e^{2i\pi \langle \vec{w},\vec{y} \rangle} + e^{-2i\pi \langle \vec{x},\vec{y} \rangle} \right) \nonumber \\
		&=  \sum_{\vec{w} \in \widetilde{\sW}} e^{-2i\pi \langle \vec{w},\vec{y} \rangle} \quad (\mbox{$\vec{w} \mapsto -\vec{w}$ is a bijection in $\sW$}) \nonumber \\
		&= \sum_{\vec{w} \in \Lambda^{\vee}} \left( \un_{\leq w+\varepsilon}(\vec{w}) - \un_{\leq w-\varepsilon}(\vec{w}) \right) e^{-2i\pi \langle \vec{w},\vec{y} \rangle} \label{eq:diffUn} 
	\end{align}
	Recall now the Poisson summation formula, for any $\vec{y} \in \Lambda + \vec{e}$ and sufficiently regular function $f$,
	$$
	\sum_{\vec{x} \in \Lambda^{\vee}} f(\vec{x})e^{-2i\pi \langle \vec{x},\vec{y} \rangle} = \frac{1}{|\Lambda^{\vee}|} \; \sum_{\vec{x}\in \Lambda + \vec{e}} \widehat{f}(\vec{x})
	$$
	Plugging this formula into Equation \eqref{eq:diffUn} yields to, 
	\begin{align}
			f_{\widetilde{\sW}}(\yv) &= \frac{1}{|\Lambda^{\vee}|} \sum_{\vec{x \in \Lambda+\vec{e}}} \left( \widehat{\un_{\leq w+\varepsilon}}(\vec{x}) - \widehat{\un_{\leq w-\varepsilon}}(\vec{x}) \right) \nonumber \\
			&= \frac{1}{|\Lambda^{\vee}|}\sum_{j\geq 0}  \frac{N_{j}}{j^{n/2}} \left( (w+\varepsilon)^{n/2} J_{n/2}\left( 2\pi (w+\varepsilon) j \right) - (w-\varepsilon)^{n/2}J_{n/2}(2\pi (w-\varepsilon) j) \right) \nonumber \\
			&= \frac{1}{|\Lambda^{\vee}|}\sum_{ j \geq 0} \frac{N_j}{(2\pi)^{n/2}\; j^{n}} \Big( \left( 2\pi (w+\varepsilon) j \right)^{n/2} J_{n/2}\left( 2\pi (w+\varepsilon) j \right) \nonumber\\
			&\qquad\qquad\qquad\qquad\qquad\qquad - \left( 2\pi (w-\varepsilon)j \right)^{n/2}J_{n/2}(2\pi (w-\varepsilon) j) \Big)   \label{eq:concl}
	\end{align}
	which 
	 concludes the proof. 
\end{proof} 

{\noindent \bf An Approximation.}
We have also an approximate form for $f_{\widetilde{\sW}}(\yv)$ which is given by
\begin{equation}
f_{\widetilde{\sW}}(\yv) 
\approx \frac{4\pi \varepsilon}{|\Lambda^{\vee}|}\; \sum_{j \geq 0} j N_{j} \;\left( \frac{w}{j} \right)^{n/2} J_{n/2-1}(2\pi  w j) \label{eq:approxDerivBessel}.
\end{equation}
This follows from the fact that 
\begin{equation}\label{eq:derivative} 
	\frac{d}{dx} \left( x^{n/2} J_{n/2}(x) \right) = x^{n/2} J_{n/2-1}(x). 
	\end{equation} 
	Let,
	$$
	X \eqdef 2\pi w j \quad \mbox{ and } \quad h \eqdef 2\pi \varepsilon j
	$$	
	Notice that,
	\begin{multline*}
		\left( \left( 2\pi (w+\varepsilon) j \right)^{n/2} J_{n/2}\left( 2\pi (w+\varepsilon) j \right) - \left( 2\pi (w-\varepsilon)j \right)^{n/2}J_{n/2}(2\pi (w-\varepsilon) j) \right)   \\
		= (X+h)^{n/2}J_{n/2}(X+h) - (X-h)^{n/2} J_{n/2}(X-h)
	\end{multline*}
	From Equation \eqref{eq:derivative},
	\begin{align*} 
	(X+h)^{n/2} J_{n/2}(X+h) - (X-h)^{n/2} J_{n/2}(X-h) &\approx 2h \; \frac{d}{dX} \left( X^{n/2} J_{n/2}(X) \right) \\
	&= 2h \; X^{n/2} J_{n/2-1}(X) \\
	&= (4\pi \varepsilon j) \; (2\pi w j)^{n/2}J_{n/2-1}(2\pi w j)
	\end{align*} 
	Plugging this into Equation \eqref{eq:concl} yields \eqref{eq:approxDerivBessel}.

We also recall that we make the approximation
\begin{equation}\label{eq:fW}
f_{{{\sW}}}(\yv) \approx \frac{N}{\card{\widetilde{\sW}}} \; f_{\widetilde{\sW}}(\yv)
\end{equation}

The number $N_{\leq x}^{\vee}$ of dual vectors of length $\leq x$ can be approximated using the Gaussian heuristic:
\begin{equation*} 
N_{\leq x}^{\vee} \approx \frac{x^{n}}{\sqrt{n\pi}\; \left(\frac{n}{2\pi e} \right)^{n/2}}\;  \frac{1}{|\Lambda^{\vee}|}
\end{equation*} 
Thus we have:
\begin{align}
\card{\widetilde{\sW}} &= N_{\leq w + \varepsilon}^{\vee} -  N_{\leq w - \varepsilon}^{\vee} \nonumber \approx \frac{(w + \varepsilon)^n - (w - \varepsilon)^n}{\sqrt{\tfrac{n}{2 \pi e}}^n \cdot \sqrt{\pi n} \cdot | \Lambda^{\vee}|} \nonumber\approx \frac{2 n \varepsilon w^{n-1}}{\sqrt{\tfrac{n}{2 \pi e}}^n\;  \sqrt{\pi n} \; | \Lambda^{\vee}|}
\end{align}
Putting this into Equation \eqref{eq:fW} shows,
\begin{equation*}
	f_{\sW}(\vec{y}) \approx N \; \frac{\card{\Lambda^{\vee}}\; \sqrt{n\pi}\left( \frac{n}{2\pi e} \right)^{n/2}}{2n \varepsilon w^{n-1} } \; f_{\widetilde{\sW}}(\vec{y})
\end{equation*}
and after some further computation
\begin{align*}
	f_{\sW}(\vec{y}) &\approx   N \; \card{\Lambda^{\vee}} \frac{\sqrt{n\pi}\left( \frac{n}{2\pi e} \right)^{n/2}}{2n \varepsilon w^{n-1} } \; \frac{4\pi \varepsilon}{|\Lambda^{\vee}|}\; \sum_{j \geq 0} j N_{j} \;\left( \frac{w}{j} \right)^{n/2} J_{n/2-1}(2\pi  w j)  \\ 
	&= N \; \frac{\sqrt{n \pi}}{2n w^{n-1}} \left( \frac{n}{2\pi e} \right)^{n/2} \;4\pi \sum_{j\geq 0} j N_{j} \;\left( \frac{w}{j} \right)^{n/2} J_{n/2-1}(2\pi  w j)  \\ 
	&= N \; \frac{\sqrt{n}\; n^{n/2}}{n} \; \frac{1}{w^{n-1}}\; \frac{\sqrt{\pi} \; \pi}{\pi^{n/2}} \; \frac{4}{2 \; 2^{n/2}} \; \frac{1}{e^{n/2}} \; \sum_{j\geq 0} N_{j} \; w^{n/2} \frac{1}{j^{n/2-1}} \; J_{n/2-1}(2\pi wj) \\
	&= N \;\sqrt{n}\; \sqrt{\pi} \;e^{-1}\; n^{n/2-1} \; \pi^{-(n/2-1)} \; 2^{-(n/2-1)} e^{-(n/2-1)} \sum_{j\geq 0} N_{j} \left( \frac{1}{wj} \right)^{n/2-1} \; J_{n/2-1}(2\pi w j) \\
	&= N \frac{\sqrt{n \pi}}{e} \sum_{j\geq 0} N_{j} \left( \frac{n}{2\pi e wj} \right)^{n/2-1} J_{n/2-1}(2\pi wj)
\end{align*}

 \section{Proof of Proposition \ref{prop:model_conjecture}}\label{app:model_conjecture}

\subsection{A more minimalistic conjecture} \label{sec:min_ass}
The goal of this section is to devise a more minimalistic but stronger conjecture which implies Conjecture \ref{ass:bias_dom} (in the sense that it involves only concentration bound of the weight enumerator of some random linear codes). Furthermore, the aforementioned implication is a key step of the proof of Proposition \ref{prop:model_conjecture} which shows that the experimental model implies Conjecture \ref{ass:bias_dom}.
\begin{conjecture} \label{ass:min}
	For any $(i,j) \in \llbracket 0, n-s \rrbracket \times \llbracket 0, s \rrbracket $ and any~$v \in \left \llbracket 0, \overline{V_j}  +  n^{1.1} \max\left(\sqrt{\overline{V_j}},1\right) \right \rrbracket$, we have under Distribution \ref{notation:randomCodes},
	\begin{align}
		&\prob{\CCbis, \xv}{\left|V_j - \overline{V_j}\right|>  n^{1.1} \max\left(\sqrt{\overline{V_j}},1\right)} &&= \OOt{2^{-n}}, \label{eq:ass_min_1} \\
&\prob{\CC,\CCbis,\xv}{ \left|N_{i,j} - v \: \overline{N_i} \right| > n^{1.1} \max \left(\sqrt{v \: \overline{N_i}}, \:1\right) \middle | \: N_j = v} &&= \OOt{2^{-n}}. \label{eq:ass_min_2}
	\end{align}
	where
	\begin{equation*}
		V_j \eqdef N_j \left(\CCbisperp + \xv\right)\end{equation*}
and
	\begin{align}
		\overline{N_i} \eqdef \frac{\binom{n-s}{i}}{2^{n-k}}, \quad \overline{V_j} \eqdef \frac{\binom{s}{j}}{2^{\kbis}} \label{def:N_ibar}.
	\end{align}
\end{conjecture}
\begin{remark}
	Recall that $N_{i,j} \eqdef \sum_{u = 1}^{V_j} N_i^{(u)}$ where $N^{(u)}_i \eqdef N_i\left(\left(\rv^{(u)} + \ev_{\sP}\right) \Rm + \ev_{\sN} + \CC^{\sN}\right)$ and $\vec{r}^{(u)}$ is the $u$'th codeword of weight $j$ of $\CCbisperp + \xv$. From this and using Lemma \ref{lem:Nbar} it is readily seen that we have that $$\expect{\CC,\CCbis,\xv}{N_{i,j} \middle| \: V_j = v} = v \: \overline{N_i} \left(1+o(1)\right).$$ As such, Equation \eqref{eq:ass_min_2} in the previous Conjecture can also really be seen as a concentration inequality.
\end{remark}
\begin{proposition} \label{prop:imply_conj2_conj1}
	Conjecture \ref{ass:min} implies Conjecture \ref{ass:bias_dom}.
\end{proposition}
The following lemmas will be useful to prove the this proposition.
\begin{lemma}[Centering Lemma.]\label{lem:centering} We have,
	$$ 
	\sum_{j = 0}^{s} \sum_{i = 0}^{n-s} K_{\tbis}^{(s)} \left(j\right) K_w^{(n-s)} \left(i\right) N_{i,j} = \sum_{j = 0}^{s} \sum_{i = 0}^{n-s} K_{\tbis}^{(s)} \left(j\right) K_w^{(n-s)} \left(i\right) \left(N_{i,j} - V_j \overline{N_i} \right).
	$$
\end{lemma}
\begin{proof}
	From the orthogonality of Krawtchouk polynomials relatively to the measure $\mu\left(i\right) = \binom{n-s}{i}$ \cite[Ch. 5. \S 7. Theorem 16]{MS86} we have:
	$$
	\sum_{i = 0}^{n-s} \binom{n-s}{i} K_w^{(n-s)}\left(i\right) = 0.
	$$
	By using the definition of $\overline{N_i}$ in Equation \eqref{def:N_ibar}, we get,
	\begin{align*}
		0 &= \sum_{j = 0}^s V_j \: K^{(s)}_{\tbis} \left(j\right)  \sum_{i = 0}^{n-s} K^{(n-s)}_w \left(i\right) \overline{N_{i}}\\
		&= \sum_{j = 0}^s  \sum_{i = 0}^{n-s} K^{(n-s)}_w \left(i\right) K^{(s)}_{\tbis} \left(j\right) V_j \: \overline{N_{i}}.
	\end{align*}
	Therefore,
	$$  
	\sum_{j = 0}^{s} \sum_{i = 0}^{n-s} K_{\tbis}^{(s)} \left(j\right) K_w^{(n-s)} \left(i\right) N_{i,j} = \sum_{j = 0}^{s} \sum_{i = 0}^{n-s} K_{\tbis}^{(s)} \left(j\right) K_w^{(n-s)} \left(i\right) \left(N_{i,j} - V_j \overline{N_i} \right)
	$$
	which completes the proof. 
\end{proof}
\begin{corollary}\label{coro:probaKrawNij}We have,
	\begin{multline*}
		\mathbb{P}\left( \sum_{j = 0}^{s} \sum_{i = 0}^{n-s} K_{\tbis}^{(s)} \left(j\right) K_w^{(n-s)} \left(i\right) N_{i,j} \geq \frac{1}{2} K_w^{(n-s)}\left(u\right) K_{\tbis}^{(s)}\left(t-u\right)\right) = \\
		\OOt{\max_{(i,j) \in \llbracket 0,n-s\rrbracket \times \llbracket 0,s \rrbracket} \prob{}{\left| N_{i,j} -  V_j \: \overline{N_i}\right| \geq \frac{1}{2 \: (n+1)^2}  \left|\frac{K_w^{(n-s)}\left(u\right) K_{\tbis}^{(s)}\left(t-u\right)}{ K_w^{(n-s)}\left(i\right) K_{\tbis}^{(s)}\left(j\right) }\right| } }
	\end{multline*}
	
\end{corollary}
\begin{proof}
	The event 
	$$ 
	\sum_{j = 0}^{s} \sum_{i = 0}^{n-s} K_{\tbis}^{(s)} \left(j\right) K_w^{n-s} \left(i\right) \left( N_{i,j} - V_j \overline{N_i}\right) \geq \frac{1}{2} K_w^{(n-s)}\left(u\right) K_{\tbis}^{(s)}\left(t-u\right) 
	$$ 
	implies that it exists $j \in  \llbracket 0, s \rrbracket$ and $i \in \llbracket 0, n-s \rrbracket$ such that 
	\begin{equation} \label{eq:proof_bias_to_NI_doubleRLPN}
		K_{\tbis}^{(s)} \left(j\right) K_w^{n-s}\left(i\right) \left( N_{i,j} - V_j \overline{N_i}\right)  \geq \frac{K_w^{(n-s)}\left(u\right) K_{\tbis}^{(s)}\left(t-u\right) }{2 \left(n-s+1\right) \left(s+1\right)} .
	\end{equation}
	Therefore,
	\begin{align*}
		&	\mathbb{P}\left( \sum_{j = 0}^{s} \sum_{i = 0}^{n-s} K_{\tbis}^{(s)} \left(j\right) K_w^{(n-s)} \left(i\right) N_{i,j} \geq \frac{1}{2} K_w^{(n-s)}\left(u\right) K_{\tbis}^{(s)}\left(t-u\right)\right) \\ 
		&=	\mathbb{P}\left( \sum_{j = 0}^{s} \sum_{i = 0}^{n-s} K_{\tbis}^{(s)} \left(j\right) K_w^{n-s}\left(i\right) \left( N_{i,j} - V_j \overline{N_i}\right)   \geq \frac{1}{2} K_w^{(n-s)}\left(u\right) K_{\tbis}^{(s)}\left(t-u\right)\right)  \quad (\mbox{Lemma } \ref{lem:centering}) \\ 
		&\leq 	 \prob{\CC,\CCbis}{ \bigvee_{j = 0}^{s} \bigvee_{i = 0}^{n-s} \left( K_{\tbis}^{(s)} \left(j\right) K_w^{n-s} \left(i\right) \left( N_{i,j} - V_j \overline{N_i} \right)  \geq \frac{K_w^{(n-s)}\left(u\right) K_{\tbis}^{(s)}\left(t-u\right) }{2 \left(n-s+1\right) \left(s+1\right)}  \right)} \quad (\mbox{using } \eqref{eq:proof_bias_to_NI_doubleRLPN})\\
		&\leq \sum_{j = 0}^s \sum_{i = 0}^{n-s}  \prob{}{ K_{\tbis}^{(s)} \left(j\right) K_w^{n-s} \left(i\right) \left( N_{i,j} - V_j \overline{N_i} \right) \geq \frac{K_w^{(n-s)}\left(u\right) K_{\tbis}^{(s)}\left(t-u\right) }{2 \left(n-s+1\right) \left(s+1\right)} } \quad (\mbox{union bound}) \\
&= \OO{n^2} \max_{\substack{i = 0 \dots n-s\\j = 0\dots s}}  \prob{}{  \left| N_{i,j} - V_j \overline{N_i} \right|\geq  \frac{1}{2 \: (n+1)^2}  \left|\frac{K_w^{(n-s)}\left(u\right) K_{\tbis}^{(s)}\left(t-u\right)}{ K_w^{(n-s)}\left(i\right) K_{\tbis}^{(s)}\left(j\right) }\right| }
	\end{align*}
	which completes the proof. 
\end{proof}
\begin{lemma} \label{lem:ineq_NI}
	Under Parameters constraint \ref{ass:parameters_doubleRLPN} we have that
	$$ \left(  \frac{ K_w^{(n-s)}\left(u\right) K_{\tbis}^{(s)} \left(t-u\right)}{K_w^{(n-s)}\left(i\right) K_{\tbis}^{(s)} \left(j\right)  } \right)^2 = \omega \left(n^{\alpha + 8}\right)\overline{N_i} \max \left(\overline{V_{j}},\frac{1}{\OO{n^{\alpha}}}\right) 
	$$
\end{lemma}
\begin{proof}
	First, we simplify $ \left(K_w^{(n-s)}\left(u\right) K_{\tbis}^{(s)} \left(t-u\right) \right)^2$ by using Constraint $(i)$ 
of Parameters constraint \ref{ass:parameters_doubleRLPN} which we recall is given by:
	$$ 
	\frac{\binom{n-s}{w} \binom{s}{\tbis}}{2^{k-\kbis}} = \frac{\omega \left(n^{\alpha+8} \right)}{\delta^2}, \quad \mbox{where }\; \delta \eqdef \frac{K_w^{(n-s)}\left(u\right) K_{\tbis}^{(s)} \left(t-u\right) }{\binom{n-s}{w}\binom{s}{\tbis}}.
	$$
	By reordering the terms in the previous equation we get:
	\begin{align*}
		\left(K_w^{(n-s)}\left(u\right) K_{\tbis}^{(s)} \left(t-u\right) \right)^2= 2^{k-\kbis}\binom{s}{\tbis} \binom{n-s}{w} \omega(n^{\alpha + 8}).
	\end{align*}
	Now, to show the lemma we only have to show that
	\begin{align}
		\frac{ 2^{k-\kbis} \binom{s}{\tbis} \binom{n-s}{w} }{ \left( K_w^{(n-s)}\left(i\right) K_{\tbis}^{(s)} \left(j\right) \right)^2 } )\geq \overline{N_i} \max \left(\overline{V_{j}},\frac{1}{\OO{n^{\alpha}}}\right). \label{eq:proof_ineq_Ni}
	\end{align}
	First ,let us lower bound $\frac{1}{\left(K_w^{(n-s)}(i) \right)^2}$.  From the orthonormality relations of the Krawtchouk polynomials \cite[Ch. 5. \S 7. Theorem 16]{MS86} with the measure $\mu_1 \left(v\right) = \frac{\binom{n-s}{v}}{2^{n-s}}$ we have that
	\begin{align*}
		\sum_{v=0}^{n-s} \binom{n-s}{v} \left(K_w^{(n-s)}\left(v\right)\right)^2 &= \binom{n-s}{w} 2^{n-s},
	\end{align*}
	and thus, as the previous sum is composed of positive terms, we have that
	\begin{align}
		\frac{1}{\left(K_w^{(n-s)}(i) \right)^2} \geq \frac{\binom{n-s}{i}}{\binom{n-s}{w}2^{n-s}}. \label{eq:proof_ineq_NI_eq1}
	\end{align}
	Now, let us lower-bound $\frac{1}{\left(K_{\tbis}^{(s)}(j) \right)^2} $. Using the same orthonormality argument relatively to the measure $\mu_2 \left(v\right) = \frac{\binom{s}{v}}{2^{s}}$ we get
	\begin{align*}
		\frac{1}{\left(K_{\tbis}^{(s)}(j) \right)^2} \geq \frac{\binom{s}{j}}{\binom{s}{\tbis}2^{s}}.
	\end{align*}
	Furthermore, from Fact \ref{fact:krawtchouk_bias} we can also deduce the following inequality \begin{equation*}  \label{eq:proof_lem_ineq_NI_3}
		\left(K_{\tbis}^{(s)}\left( j\right)\right)^2 \leq \binom{s}{\tbis}^2.
	\end{equation*} 
	Combining the last two equations we get that
	\begin{align}
		\frac{1}{\left(K_{\tbis}^{(s)}(j) \right)^2} \geq \min \left(\frac{\binom{s}{j}}{\binom{s}{\tbis}2^{s}} , \frac{1}{\binom{s}{\tbis}^2} \right)  \label{eq:proof_ineq_NI_eq2}.
	\end{align}
	Finally let us show Equation \eqref{eq:proof_ineq_Ni} by using Equation \eqref{eq:proof_ineq_NI_eq1} and \eqref {eq:proof_ineq_NI_eq2} :
	\begin{align*}
		\frac{ 2^{k-\kbis} \binom{s}{\tbis} \binom{n-s}{w} }{ \left( K_w^{(n-s)}\left(i\right) K_{\tbis}^{(s)} \left(j\right) \right)^2 }  &\geq  2^{k-\kbis} \binom{s}{\tbis} \binom{n-s}{w}  \frac{\binom{n-s}{i}}{\binom{n-s}{w}2^{n-s}} \:  \min \left(\frac{\binom{s}{j}}{\binom{s}{\tbis}2^{s}} , \frac{1}{\binom{s}{\tbis}^2} \right) \\
		&= \frac{\binom{n-s}{i}}{2^{n-k}} \min \left(\frac{\binom{s}{j}}{2^{\kbis}}, \quad \frac{2^{s-\kbis}}{\binom{s}{\tbis}}\right) \\
		&= \frac{\binom{n-s}{i}}{2^{n-k}} \min \left(\frac{\binom{s}{j}}{2^{\kbis}}, \frac{1}{\OO{n^{\alpha}}}\right) \quad \mbox{(Constraint $(iii)$ of Eq. \eqref{constraints:constraint})}\\
		&= \overline{N_i} \min \left(\overline{V_j}, \frac{1}{\OO{n^{\alpha}}} \right).
	\end{align*}
	This completes the proof. 
\end{proof}
\begin{proof}[Proof of Proposition \ref{prop:imply_conj2_conj1}.]
	Let us suppose that the Parameter constraint \ref{ass:parameters_doubleRLPN} is verified and that Conjecture \ref{ass:min} is true. We want to show Conjecture \ref{ass:bias_dom} holds, namely that
	\begin{multline*} 
	\mathbb{P}\left( \sum_{j = 0}^{s} \sum_{i = 0}^{n-s} K_{\tbis}^{(s)} \left(j\right) K_w^{(n-s)} \left(i\right) N_{i,j} \geq \frac{1}{2} K_w^{(n-s)}\left(u\right) K_{\tbis}^{(s)}\left(t-u\right)\right) \\ 
	=
	\OOt{ \max_{(i,j) \in \mathcal{A}}\prob{}{N_{i,j} \neq 0}  + 2^{-n}}. 
	\end{multline*} 
	Using Corollary \ref{coro:probaKrawNij}, we only have to show that for any $(i,j) \in \llbracket 0,n-s\rrbracket \times \llbracket 0,s \rrbracket$ we have
	$$ \prob{}{\left| N_{i,j} -  V_j \: \overline{N_i}\right| \geq \frac{1}{2 \: (n+1)^2}  \left|\frac{K_w^{(n-s)}\left(u\right) K_{\tbis}^{(s)}\left(t-u\right)}{ K_w^{(n-s)}\left(i\right) K_{\tbis}^{(s)}\left(j\right) }\right| }  = \OOt{ \max_{(i^*,j^*) \in \mathcal{A}}\prob{}{N_{i^*,j^*} \neq 0} + 2^{-n}}.$$
	To ease up the notations let us denote by 
	\begin{equation} \label{def:R_ij}
		R_{i,j} \eqdef  \frac{1}{2 \: (n+1)^2}  \left|\frac{K_w^{(n-s)}\left(u\right) K_{\tbis}^{(s)}\left(t-u\right)}{ K_w^{(n-s)}\left(i\right) K_{\tbis}^{(s)}\left(j\right) }\right|.
	\end{equation}
	Thus, we only have to show that 
	\begin{equation} \label{eq:proof_conj2_to_prove}
		\prob{}{\left| N_{i,j} -  V_j \: \overline{N_i}\right| \geq R_{i,j} }  = \OOt{ \max_{(i^*,j^*) \in \mathcal{A}}\prob{}{N_{i^*,j^*} \neq 0} + 2^{-n}}.
	\end{equation}
	We prove the previous equality for each cases: $(i,j) \in \mathcal{A}$ or $(i,j) \notin \mathcal{A}$, where recall that
	$$
		\mathcal{A} \eqdef \left\{ \left(i,j\right) \in \llbracket 0, n-s \rrbracket \times \llbracket0,s \rrbracket, \iftoggle{llncs}{}{\;} \left| \frac{ K_w^{(n-s)}\left(u\right) K_{\tbis}^{(s)}\left(t-u\right) }{ K_w^{(n-s)}\left(i\right) K_{\tbis}^{(s)}\left(j\right)}\right| \leq n^{3.2}\right\}.
	$$

	\noindent \textbf{Cases 1:} Here we suppose that  $(i,j) \in \mathcal{A}$. Let us prove Equation \eqref{eq:proof_conj2_to_prove}. Using the law of total probability we have that
	\begin{align*}
		\prob{}{\left| N_{i,j} -  V_j \: \overline{N_i}\right| \geq R_{i,j} } \leq \prob{}{N_{i,j} \neq 0} +  \prob{}{\left| N_{i,j} -  V_j \: \overline{N_i}\right| \geq R_{i,j}, \; N_{i,j} = 0}.
	\end{align*}
	As $(i,j) \in \mathcal{A}$, 
	$$  
	\prob{}{N_{i,j} \neq 0} = \OOt{\max_{(i^*,j^*) \in \mathcal{A}}\prob{}{N_{i^*,j^*} \neq 0}}
	$$
	we only have left to show that:
	\begin{equation*}
		\prob{}{\left| N_{i,j} -  V_j \: \overline{N_i}\right| \geq R_{i,j} , \: \; N_{i,j} = 0} = \OOt{\max_{(i^*,j^*) \in \mathcal{A}}\prob{}{N_{i^*,j^*} \neq 0} + 2^{-n}}.
	\end{equation*}
	We now show the previous equation, by proving that,
		\begin{equation} \label{proof:conj2_target_1}
		\prob{}{\left| N_{i,j} -  V_j \: \overline{N_i}\right| \geq R_{i,j} , \: \; N_{i,j} = 0} = \OOt{ 2^{-n}}.
	\end{equation}
	We have:
	\begin{align}
		\prob{}{\left| N_{i,j} -  V_j \: \overline{N_i}\right| \geq R_{i,j} , \: \; N_{i,j} = 0} &= \prob{}{ V_j \: \overline{N_i} \geq R_{i,j}} \nonumber \quad \mbox{(we used that $V_{j},\overline{N_{i}} > 0$)} \\
		&= \prob{}{ V_j\geq \frac{R_{i,j}}{\overline{N_i}}} \nonumber \\
		&= \prob{}{V_j - \overline{V_j}\geq \frac{R_{i,j}}{\overline{N_i}} - \overline{V_j}} \nonumber \\
		&\leq \prob{}{\left|V_j - \overline{V_j} \right|\geq \frac{R_{i,j}}{\overline{N_i}} - \overline{V_j}}. \nonumber
	\end{align}
	Recall that from Equation \eqref{eq:ass_min_1} of Conjecture \ref{ass:min} we have that
	\begin{equation*} \label{redit:conj2_eq_1}
		\prob{\CCbis, \xv}{\left|V_j - \overline{V_j}\right|>  n^{1.1} \max\left(\sqrt{\overline{V_j}},1\right)} = \OOt{2^{-n}}.
	\end{equation*}
	Therefore, we only have to show that for $n$ big enough we have 
	\begin{equation} \label{proof:conj2_ineq_1}
		\frac{R_{i,j}}{\overline{N_i}} - \overline{V_j} \geq n^{1.1} \max \left(\sqrt{\overline{V_j}, \: 1} \right)
	\end{equation}
	to prove Equation \eqref{proof:conj2_target_1}. Let us prove Equation \eqref{proof:conj2_ineq_1}. By definition of $R_{i,j}$ in Equation \eqref{def:R_ij} and using Lemma~\ref{lem:ineq_NI} we have that
	\begin{align}
		R_{i,j}^2 &=   \frac{1}{4 \: (n+1)^4}  \left(\frac{K_w^{(n-s)}\left(u\right) K_{\tbis}^{(s)}\left(t-u\right)}{ K_w^{(n-s)}\left(i\right) K_{\tbis}^{(s)}\left(j\right) }\right)^2 \nonumber \\
		&= \frac{1}{4 \: (n+1)^4} \; \omega \left(n^{\alpha + 8}\right)\overline{N_i} \max \left(\overline{V_{j}},\frac{1}{\OO{n^{\alpha}}}\right) \nonumber \\
&=  f(n) \max \left(\overline{N_i} \: \overline{V_j}, n^{-\alpha}\overline{N_i}\right) \label{eq:fn}
	\end{align}
	where $f(n) = \omega\left(n^{\alpha + 4}\right)$.
Therefore,
	\begin{align}
		\frac{R_{i,j}^2}{\overline{N_i}^2} \; \frac{1}{n^{2.4} \max \left(\overline{V_j}^2, \: 1\right)}& =  \frac{f(n) \max \left(\overline{N_i} \: \overline{V_j}, n^{-\alpha} \: \overline{N_i}\right)}{n^{2.4} \; \overline{N_i}^2 \max \left(\overline{V_j}^2, \: 1\right)} \nonumber \\
		&=   \frac{f(n) \max \left(\frac{\overline{V_j}}{\overline{N_i}}, \frac{n^{-\alpha}}{\overline{N_i}} \: \right)}{ n^{2.4} \: \max \left(\overline{V_j}^2, \: 1\right)} \nonumber\\
		&= \begin{cases}
			\frac{1}{n^{2.4}}f(n)\max \left(\frac{1 }{\overline{N_i} \overline{V_j}}, \frac{n^{-\alpha}}{\overline{N_i} \overline{V_j}^2} \: \right) & \mbox{ if } \overline{V_j} > 1\\
			\frac{1}{n^{2.4}} f(n)\max \left(\frac{\overline{V_j}}{\overline{N_i}}, \frac{n^{-\alpha}}{\overline{N_i}} \: \right) & \mbox{ if } \overline{V_j} \leq 1
		\end{cases} \nonumber \\
		&\geq  	 \frac{1}{n^{2.4}} f(n) \min \left(\frac{1}{\overline{N_i} \overline{V_j}}, \frac{n^{-\alpha}}{\overline{N_i}}\right) \nonumber\\
		&\geq  \frac{n^{-2 \alpha}}{n^{2.4}} \; f(n) \; \min \left(\frac{1}{\overline{N_i} \overline{V_j}}, \frac{1}{n^{-\alpha} \: \overline{N_i}}\right) \nonumber \\
		&= \frac{n^{-2 \alpha}}{n^{2.4}} \; \frac{f(n)}{\max \left( \overline{N_i} \overline{V_j}, n^{-\alpha} \: \overline{N_i}\right)} \nonumber \\
		&=   \frac{n^{-2 \alpha}}{n^{2.4}} \;  \frac{f(n)^2}{R_{i,j}^2}\nonumber \qquad (\mbox{By Equation \eqref{eq:fn}}) \\
		&=  \frac{\omega\left(n^{5.6}\right)}{R_{i,j}^2} \qquad (f(n) = \omega\left(n^{\alpha + 4}\right)) \nonumber\\
		&= \omega(1) \label{eq:proof_conj2_c1_f}
	\end{align}
	where in the last line we used the fact that $(i,j) \in \mathcal{A}$: by definition, 
	$$
	R_{i,j} = \frac{1}{2 \: (n+1)^2} \; \left|\frac{K_w^{(n-s)}\left(u\right) K_{\tbis}^{(s)}\left(t-u\right)}{ K_w^{(n-s)}\left(i\right) K_{\tbis}^{(s)}\left(j\right) }\right|\leq \frac{1}{2 \: (n+1)^2} \; n^{3.2}
	$$ 
	and thus
	\begin{align*}
		\frac{1}{R_{i,j}^2} &\geq \frac{ 2 \: \left(n+1\right)^2}{n^{3.2}}.
	\end{align*}
	Finally, Equation \eqref{eq:proof_conj2_c1_f} shows that for $n$ big enough
	$$  \frac{R_{i,j}}{\overline{N_i}} \geq n^{1.2} \max \left(\overline{V_j}, \: 1 \right)$$
	And as such, for $n$ big enough
	\begin{align*}
		\frac{R_{i,j}}{\overline{N_i}} - \overline{V_j} &\geq n^{1.1}\max \left(\overline{V_j}, \: 1 \right) \\
		&\geq n^{1.1}\max \left(\sqrt{\overline{V_j}}, \: 1 \right)
	\end{align*}
	which proves Equation \eqref{proof:conj2_ineq_1}. Therefore we have just proved Equation \eqref{eq:proof_conj2_to_prove} in the case where $(i,j) \in \mathcal{A}$.
	\newline
	
	\noindent \textbf{Case 2: }. Here we suppose that  $(i,j) \notin \mathcal{A}$. Let us prove Equation \eqref{eq:proof_conj2_to_prove}.
	We only have to prove that:
	\begin{equation*}
		\prob{}{\left| N_{i,j} -  V_j \: \overline{N_i}\right| \geq R_{i,j} }  = \OOt{2^{-n}}.
	\end{equation*}
	 Let $M$ be defined as
	\begin{equation} \label{eq:def_M}
		M \eqdef \overline{V_j} + n^{1.1} \max \left(\sqrt{\overline{V_j}}, 1\right).
	\end{equation} 
	By the law of total probability we have that
	\begin{multline*}
		\prob{\CC,\CCbis,\xv}{  \left| N_{i,j} -  V_j\overline{N_{i}}  \right| > R_{i,j} } = \\  \prob{}{ \left| N_{i,j} -  V_j \overline{N_{i}}  \right| > R_{i,j} | V_j > M}\prob{}{V_j > M} + \\ \sum_{v = 0}^{M} \prob{}{ \left| N_{i,j} -  V_j \overline{N_{i}}  \right| > R_{i,j} | V_j =  v}\prob{}{V_j= v}.
	\end{multline*}
	Which we can upper bound by
	\begin{multline*}
		\prob{\CC,\CCbis,\xv}{  \left| N_{i,j} -  V_j \overline{N_{i}}  \right| > R_{i,j} } \leq \prob{}{V_j > M} +  \max_{v= 0...M }\prob{}{ \left| N_{i,j} -  v \:  \overline{N_{i}}  \right|> R_{i,j} | V_j =  v}.
	\end{multline*}
	By definition of $M$ in Equation \eqref{eq:def_M} and using Equation \eqref{eq:ass_min_1} of Conjecture \ref{ass:min} we get that 
	$$ \prob{}{V_j > M}  = \OOt{2^{-n}}.$$
	Now, we only have left to prove that for any $v \in \llbracket 0, M \rrbracket$ we have
	$$ 
	\prob{}{ \left| N_{i,j} -  v\; \overline{N_{i}}  \right|> R_{i,j} |V_j =  v}  = \OOt{2^{-n}}.
	$$
	Let us consider $v \in \llbracket 0, M \rrbracket$. Let us first show that for $n$ big enough we have that
	\begin{align} \label{proof:ineq_Rij}
		n^{1.1}\max \left(\sqrt{v \: \overline{N_i}}, \:1\right) \leq R_{i,j}.
	\end{align}We have
	\begin{align*}
		n^{2.2}\max \left(v \: \overline{N_i}, \:1\right)  &\leq n^{2.2}\max \left(M \: \overline{N_i}, \:1\right)  \\
		&\leq n^{2.2}\max \left(\overline{V_j} \: \overline{N_i} + n^{1.1} \overline{N_i} \max \left(\sqrt{\overline{V_j}},1\right), \:1\right) && \mbox{(Using Equation \eqref{eq:def_M})} \\
		&\leq  \max \left(2 \: n^{2.2} \: \overline{V_j} \: \overline{N_i}, \; 2\: n^{3.3} \overline{N_i} \sqrt{\overline{V_j}}, \: 2\: n^{3.3} \overline{N_i} ,\:  n^{2.2}\right) 
	\end{align*}
	To show Equation \eqref{proof:ineq_Rij}, ne only have left to prove that, for $n$ big enough, each term in the previous maximum is smaller than $R_{i,j}^2$.
	First let us recall that by definition of $R_{i,j}$ in Equation \eqref{def:R_ij} and from Lemma \ref{lem:ineq_NI},
$$R_{i,j}^2 = \max \left(\omega \left(n^{\alpha + 4}\right) \overline{N_i} \: \overline{V_j}, \omega \left(n^{4}\right) \overline{N_i}\right).$$ 
	For $n$ big enough we have that:
	\begin{align*}
		&2 \: n^{2.2} \: \overline{V_j} \: \overline{N_i} &&\leq  n^{\alpha + 4}\overline{N_i} \: \overline{V_j} \leq R_{i,j}^2, \\
		&2\: n^{3.3} \overline{N_i} \sqrt{\overline{V_j}} &&\leq \begin{cases}
			n^{4}\overline{N_i}\leq R_{i,j}^2 &\mbox{ when } \overline{V_j} \leq 1 \\
			n^{\alpha + 4}\overline{N_i} \: \overline{V_j} \leq R_{i,j}^2 &\mbox{ when } \overline{V_j} > 1
		\end{cases},  \\
		&	2 \: n^{3.3} \: \overline{N_i} &&\leq  n^{4} \overline{N_i} \leq R_{i,j}^2, \\
		& n^{2.2} &&\leq R_{i,j}^2.
	\end{align*}
	Where in the last equation we used the fact that $(i,j) \notin \mathcal{A}$, thus $R_{i,j} \geq \frac{n^{3.2}}{2 \left(n+1\right)^2}$ and thus $R_{i,j}^2 \geq n^{2.3}$ for $n$ big enough. We have shown that $$n^{2.2}\max \left(v \: \overline{N_i}, \:1\right) \leq R_{i,j}^2$$ and thus we have shown Equation \eqref{proof:ineq_Rij}.
	Finally we have
	\begin{align*}
		\prob{}{ \left| N_{i,j} -  V_j \overline{N_{i}}  \right|> R_{i,j} |V_j =  v} &=	\OO{\prob{}{ \left| N_{i,j} -  V_j \overline{N_{i}}  \right|> n^{1.1}\max \left(\sqrt{v \: \overline{N_i}}, \:1\right) |V_j =  v}} \\
		&= \OOt{2^{-n}}
	\end{align*}
	where in the last line we used Equation \eqref{eq:ass_min_2} of Conjecture \ref{ass:min}. This concludes the proof.
\end{proof}

\begin{lemma} \label{lem:poisson_conj2}
	The Poisson Model \ref{model:poisson} imply Conjecture \ref{ass:min}.
\end{lemma}
\begin{proof}
	Let $M$ be defined as
	\begin{equation} \label{lemma:proof_bound_bias_1_Mn}
	M \eqdef \overline{V_j} + n^{1.1} \max \left(\sqrt{\overline{V_j}}, 1\right)
	\end{equation} 
	Recall that to show Conjecture \ref{ass:min} we only have to show that
	\begin{align}
	\prob{\CCbis, \xv}{ \left| V_j - \overline{V_j} \right| \geq  n^{1.1} \max\left(\sqrt{\overline{V_j}},1\right)} &&=& \OOt{2^{-n}}, \label{eq:lem_ass_min_1} \\
	\forall v \in \llbracket 0, M \rrbracket, \:\; \prob{\CC,\CCbis,\xv}{ \left|N_{i,j} - v \: \overline{N_i} \right| > n^{1.1} \max\left(\sqrt{v \: \overline{N_i}}, 1\right)\: \middle | \: N_j = v} &&=& \OOt{2^{-n}} \label{eq:lem_ass_min_2}
\end{align}
	Under the Poisson Model \ref{model:poisson} we have that
	$$ N_{i,j} \sim \mathrm{Poisson}\left(V_j \: \overline{N_i}\right), \quad V_j \sim  \mathrm{Poisson}\left( \overline{V_j}\right).$$
	We will use the following fact: when $\vec{X}$ follows a Poisson distribution of parameter $\lambda$ and $g(n)=\omega(n)$, then we have that
	\begin{align} \label{lemma:proof_bound_bias_1_poisson_bound}
		\prob{}{|\vec{X} - \lambda| > g(n) \max \left(\sqrt{\lambda},1\right)} = 2^{-\omega(n)} .
		\end{align}
		Let us prove this claim. It is known \cite[Prop 11.15]{G17b} that we have the following exponential tail bound for $\vec{X}$:
		\begin{equation*}
			\prob{}{|\vec{X} - \lambda| > r} \leq 2 \: e^{\frac{- \: r^2}{2 \: (\lambda + r) }}.
		\end{equation*}
	Thus,
	\begin{align*}
		\prob{}{|\vec{X} - \lambda| > g(n) \max \left(\sqrt{\lambda},1\right)} &\leq 2 \: e^{\frac{- \: g(n)^2 \max \left(\lambda,1\right)}{2 \: (\lambda + g(n) \max\left(\sqrt{\lambda},1\right)) }} \\
		&\leq 2 \: e^{\frac{- \: g(n)}{2 \left(\frac{\lambda + g(n) \max\left(\sqrt{\lambda},1\right)}{g(n) \max \left(\lambda,1\right)} \right) }}.
	\end{align*}
	We only have left to show that 
	$$  \frac{\lambda + g(n) \max\left(\sqrt{\lambda},1\right)}{g(n) \max \left(\lambda,1\right)}= \OO{1}.$$
	First it is readily seen that we have that ($g(n) = \omega(n)$)
	$$  \frac{\lambda}{g(n) \max \left(\lambda,1\right)}  = \OO{1},$$
	and second, 
	$$ \frac{g(n) \max\left(\sqrt{\lambda},1\right)}{g(n) \max \left(\lambda,1\right)} = \begin{cases}
		1=\OO{1} & \mbox{ if } \lambda \leq 1 \\
		\frac{1}{\sqrt{\lambda}} = \OO{1} & \mbox{ if } \lambda > 1.
	\end{cases}$$
	which concludes the proof of Equation \eqref{lemma:proof_bound_bias_1_poisson_bound}. Equation \eqref{eq:lem_ass_min_1} directly follows from Equation \eqref{lemma:proof_bound_bias_1_poisson_bound}. Equation \eqref{eq:lem_ass_min_2} also directly follow sfrom Equation  \eqref{lemma:proof_bound_bias_1_poisson_bound} by noticing that $N_{i,j}= v \sim \mathrm{Poisson}\left(v \: \overline{N_i}\right)$ when $V_{j} = v$.
\end{proof}
\begin{proof}[Proof of Proposition \ref{prop:model_conjecture}] Apply successively Lemma \ref{lem:poisson_conj2} and Proposition \ref{prop:imply_conj2_conj1}.
\end{proof} 
}{
	\newpage
	\appendix
\section{Proof of Proposition \ref{prop:biasCodedRLPN}}\label{app:a}
 
 Let us first recall this proposition
 \propbiasCodedRLPN*
 
The proof of this Proposition \ref{prop:biasCodedRLPN} will be a consequence of the following lemma. 
\begin{restatable}{lemma}{lemmaVar}\label{lemma:varianceEb}
	Let $w,s,k,\kbis,t_{\textup{aux}},n \in \mathbb{N}$ be such that (for some constant $a > 0$)
	\begin{equation}\label{eq:constraintsVar} 
			\frac{\binom{n-s}{w} \binom{s}{t_{\textup{aux}}}}{2^{k}} = \OO{n^{\alpha}} \quad \mbox{and} \quad \frac{\binom{s}{t_{\textup{aux}}}}{2^{s-\kbis}} = \OO{n^{\alpha}}
		\end{equation}

	Let $\sN$ be a fixed set of $n-s$ positions in $\IInt{1}{n}$ and $\sP \eqdef \llbracket 1,n \rrbracket \backslash \sN$. Let $\vec{e}\in\F_{2}^{n}$ be some error of weight $u$ on $\sN$.

	Assume that $\CC$ is an $[n,k]$-linear code chosen by picking an $(n-k) \times n$ binary parity-check matrix $\Hm$ uniformly at random and that $\CCbis$ is chosen by picking an $(s-\kbis) \times s$ binary parity-check matrix $\Hmbis$ uniformly at random.
	Let us define for $b \in \{0,1\}$,
	\begin{align*}
			E_b &\eqdef \card{\left\{(\ev',\hv) \in   \mathcal{S}_{\tbis}^s \times \CC^\perp:\; \;\left|\hv_{\sN}\right| = w \; \mbox{ and } \; \ev' + \hv_{\sP} \in \CCbis, \; \langle \ev',\vec{e}_{\sP} \rangle +  \langle \vec{e}_{\sN},  \vec{h}_{\sN}\rangle=b \right\}}, \\
			E'_b &\eqdef \card{\left\{ \left(\vec{e}',\vec{h}\right) \in  \mathcal{S}_{\tbis}^{s} \times  \F_{2}^{n} :\; |\vec{h}_{\sN}| = w \; \mbox{ and } \; \;\langle \vec{e}',\vec{e}_{\sP} \rangle +  \langle \vec{e}_{\sN},  \vec{h}_{\sN}\rangle =b\right \}}. 
		\end{align*}
	
	Then,
	\begin{eqnarray}
			\mathbb{E}_{\vec{H},\Hmbis}(E_b) &=&  \Th{1}\frac{E'_b}{2^{k+s-\kbis}} \label{eq:espL}  \\
			\mathbf{Var}_{\vec{H},\Hmbis}(E_b) &=& \OO{n^{\alpha}}\frac{E'_b}{2^{k+s-\kbis}}  \label{eq:varL}
		\end{eqnarray}
\end{restatable}
	\begin{proof}
	Let $\mathbf{1}_{\ev',\hv}$ be the indicator function of the event ``$\hv \in \CC^{\bot}$ and $\ev'+\hv_{\sP} \in \CCbis$''. Define 
	$$
	\mathcal{E}_{b} \eqdef \left\{ \left( \vec{e}',\vec{h} \right) \in \mathcal{S}_{t_{\textup{aux}}}^{s} \times \F_{2}^{n} \mbox{: }  |\vec{h}_{\sN}| = w \mbox{ and }  \langle \ev',\vec{e}_{\sP} \rangle + \langle \vec{e}_{\sN},\vec{h}_{\sN} \rangle=b \right\}.
	$$
	Notice that $E_{b}'  = \card{\mathcal{E}_{b}}$. 
By definition and linearity of the expectation
	\begin{align}
		\mathbb{E}_{\vec{H},\Hmbis}\left( E_{b} \right) &=\mathbb{E}_{\vec{H},\Hmbis} \left( \sum_{(\vec{e}',\vec{h})\in \mathcal{E}_{b}} \mathbf{1}_{\ev',\hv} \right) \nonumber \\
		&= \sum_{(\vec{e}',\vec{h})\in \mathcal{E}_{b}} \mathbb{E}_{\vec{H},\Hmbis}(\mathbf{1}_{\ev',\hv}) \nonumber \\ 
		&=  \sum_{(\vec{e}',\vec{h})\in \mathcal{E}_{b}} \mathbb{P}_{\vec{H},\Hmbis}\left( \hv \in \CC^{\bot}, \hv_{\sP} + \ev' \in \CCbis \right)\label{eq:expectation}
	\end{align}
	We have that,
	\begin{equation}\label{eq:probBasic}
		\mathbb{P}_{\vec{H},\Hmbis}\left( \hv \in \CC^{\bot}, \hv_{\sP} + \ev' \in \CCbis \right) = \left\{
		\begin{array}{cl}
		\frac{1}{2^{k + s - \kbis}} & \;\; \mbox{if } \vec{h}_{\sP}+\vec{e}' \neq \vec{0} \\
		\frac{1}{2^{k}} & \;\; \mbox{otherwise.}
		\end{array}
		\right.
	\end{equation}
Indeed, notice that events ``$\hv \in \CC^{\bot}$'' and ``$\hv_{\sP} + \ev' \in \CCbis $'' are independent. Furthermore, $\vec{h}$ cannot be equal to $\vec{0}$ as $|\vec{h}_{\sN}| = w > 0$.  Therefore,
	\begin{align*}
		\mathbb{P}_{\vec{H},\Hmbis}\left( \hv \in \CC^{\bot},\hv_{\sP} + \ev' \in \CCbis \right) &= \mathbb{P}_{\vec{H}}\left( \hv \in \CC^{\bot}\right)  \; \mathbb{P}_{\Hmbis}\left(\hv_{\sP} + \ev' \in \CCbis \right) \nonumber \\
		&=  \frac{1}{2^{k}}\;\mathbb{P}_{\Hmbis}\left( \hv_{\sP} + \ev' \in \CCbis  \right)
\end{align*}
	Now, plugging Equation \eqref{eq:probBasic} in \eqref{eq:expectation} leads to,
	\begin{align}
		\mathbb{E}_{\vec{H},\Hmbis}\left( E_{b} \right) &= \sum_{\substack{(\vec{e}',\vec{h})\in \mathcal{E}_{b}\\ \vec{h}_{\sP} \neq \vec{e}' }} \frac{1}{2^{k + s - \kbis}} + \sum_{\substack{(\vec{e}',\vec{h})\in \mathcal{E}_{b}\\ \vec{h}_{\sP} = \vec{e}' }} \frac{1}{2^{k}} \nonumber \\
		&= \sum_{\vec{e}' \in \cS_{\tbis}^{s}} \left( \frac{\card{\sE_{b,1}^{\vec{e}'} }}{2^{k+s-\kbis}} + \frac{\card{\sE_{b,2}^{\vec{e}'} }}{2^k} \right) \label{eq:Eb12}
\end{align} 
	where for a fixed $\vec{e}'$, 
	\begin{equation}\label{eq:Eb1} 
	\sE_{b,1}^{\vec{e}'} \eqdef \left\{ \vec{h} \in \mathbb{F}_{2}^{n}: \; (\vec{e}',\vec{h}) \in \sE_{b} \mbox{ and } \vec{h}_{\sP} \neq \vec{e}' \right\},
	\end{equation}
	\begin{equation}\label{eq:Eb2} 
		\sE_{b,2}^{\vec{e}'} \eqdef \left\{ \vec{h} \in \mathbb{F}_{2}^{n}: \; (\vec{e}',\vec{h}) \in \sE_{b} \mbox{ and } \vec{h}_{\sP} = \vec{e}' \right\}.
	\end{equation} 
	Notice that for any $\vec{e}'$, 
	\begin{equation}\label{eq:cardEb12}  
	\card{\sE_{b,1}} = \frac{\card{\sE_{b,2}}}{2^{s}} 
	\end{equation} 
	It is readily seen that,
	$$
	\card{\sE_b} = \sum_{\vec{e}' \in \cS_{\tbis}^{s}} \left( \card{\sE_{b,1}^{\vec{e}'}} +  \card{\sE_{b,2}^{\vec{e}'}} \right) 
	$$
	which implies that 
	$$
	\sum_{\vec{e}' \in \cS_{\tbis}^{s}} \card{\sE_{b,1}^{\vec{e}'}} = \frac{\card{\sE_{b}}}{1+\frac{1}{2^{s}}} = \Th{1} \card{\sE_b} \quad \mbox{and} \quad \sum_{\vec{e}' \in \cS_{\tbis}^{s}} \card{\sE_{b,2}^{\vec{e}'}} = \frac{\card{\sE_{b}}}{2^{s}-1} = \Th{2^{-s}} \card{\sE_{b}}
	$$
	Plugging this into Equation \eqref{eq:Eb12} leads to,
	\begin{equation}\label{eq:expE12b} 
	\mathbb{E}_{\vec{H},\Hmbis}\left( E_{b} \right) = \Th{1} \frac{\card{\sE_{b}}}{2^{k+s-\kbis}} + \Th{1} \frac{\card{\sE_{b}}}{2^{k+s}} = \Th{1}\frac{\card{\sE_{b}}}{2^{k+s-\kbis}}
	\end{equation} 
	which shows Equation \eqref{eq:espL}. 	
	Let us show now Equation \eqref{eq:varL}. By definition 
	\begin{align}
		\mathbf{Var}\left( E_{b} \right) &= \sum_{(\vec{e}',\vec{h}) \in \mathcal{E}_{b}}\mathbf{Var}\left( \mathbf{1}_{\vec{e}',\vec{h}} \right) + \sum_{\substack{(\vec{e}_{0}',\vec{h}^{0}),(\vec{e}_{1}',\vec{h}^{1}) \in \mathcal{E}_{b} \\ (\vec{e}_{0}',\vec{h}^{0}) \neq (\vec{e}_{1}',\vec{h}^{1})}} \mathbb{E}\left( \mathbf{1}_{\vec{e}_{0}',\vec{h}^{0}} \; \mathbf{1}_{\vec{e}_{1}',\vec{h}^{1}} \right) - \mathbb{E}\left( \mathbf{1}_{\vec{e}_{0}',\vec{h}^{0}} \right) \mathbb{E}\left( \mathbf{1}_{\vec{e}_{1}',\vec{h}^{1}} \right) \nonumber \\
		&\leq \frac{E_{b}'}{2^{k+s-\kbis}}+ \sum_{\substack{(\vec{e}_{0}',\vec{h}^{0}),(\vec{e}_{1}',\vec{h}^{1}) \in \mathcal{E}_{b} \\ (\vec{e}_{0}',\vec{h}^{0}) \neq (\vec{e}_{1}',\vec{h}^{1})}} \mathbb{E}\left( \mathbf{1}_{\vec{e}_{0}',\vec{h}^{0}} \; \mathbf{1}_{\vec{e}_{1}',\vec{h}^{1}} \right) - \mathbb{E}\left( \mathbf{1}_{\vec{e}_{0}',\vec{h}^{0}} \right) \mathbb{E}\left( \mathbf{1}_{\vec{e}_{1}',\vec{h}^{1}} \right) \nonumber
	\end{align}
	where we used that $\mathbf{Var}(\mathbf{1}_{\vec{e}',\vec{h}}) \leq \mathbb{E}\left( \mathbf{1}_{\vec{e},\vec{h}}^{2} \right) =  \mathbb{E}\left( \mathbf{1}_{\vec{e},\vec{h}} \right)$ and Equation \eqref{eq:expE12b}.

	To compute the above expectations, we will split in two cases according to $\sE_{b,1}^{\vec{e}'}$ and $\sE_{b,2}^{\vec{e}'}$ which are respectively defined in Equations \eqref{eq:Eb1} and \eqref{eq:Eb2}. More precisely, we will fix $\vec{e}_{b}'$ and suppose that $\vec{h}^{b}$ belongs to $\sE_{b,1}$ or $\sE_{b,2}$. We will treat the following disjoint cases.

	\begin{enumerate}[label=\textcolor{blue}{\arabic*.}, ref=\arabic*]
		\item\label{case1} $\sF_{0}^{\vec{e}_{0}',\vec{e}_{1}'} = \left\{(\vec{h}^{0},\vec{h}^{1}): \;  \vec{h}^{0} \in \sE_{b,1}^{\vec{e}_{0}'} \mbox{ and }  \vec{h}^{1} \in \sE_{b,1}^{\vec{e}_{1}'}\right\}$,

		\item\label{case2} $\sF_{1}^{\vec{e}_{0}',\vec{e}_{1}'} = \left\{(\vec{h}^{0},\vec{h}^{1}): \;  \vec{h}^{0} \in \sE_{b,2}^{\vec{e}_{0}'} \mbox{ and }  \vec{h}^{1} \in \sE_{b,1}^{\vec{e}_{1}'}\right\}$

		\item\label{case3} $\sF_{2}^{\vec{e}_{0}',\vec{e}_{1}'} = \left\{(\vec{h}^{0},\vec{h}^{1}): \;  \vec{h}^{0} \in \sE_{b,1}^{\vec{e}_{0}'} \mbox{ and }  \vec{h}^{1} \in \sE_{b,2}^{\vec{e}_{1}'}\right\}$

		\item\label{case4} $\sF_{3}^{\vec{e}_{0}',\vec{e}_{1}'} = \left\{(\vec{h}^{0},\vec{h}^{1}): \;  \vec{h}^{0} \in \sE_{b,2}^{\vec{e}_{0}'} \mbox{ and }  \vec{h}^{1} \in \sE_{b,2}^{\vec{e}_{1}'}\right\}$
	\end{enumerate}
	In particular,
	\begin{multline}\label{eq:ineqVariant} 
		\mathbf{Var}\left( E_{b} \right)  \leq \frac{E_{b}'}{2^{k+s-\kbis}} + \sum_{\vec{e}_{0}',\vec{e}_{1}'}\Big( \underbrace{\sum_{\substack{\vec{h}^{0},\vec{h}^{1} \in \sF_{0}^{\vec{e}_{0}',\vec{e}_{1}'} \\ (\vec{e}_{0}',\vec{h}^{0}) \neq (\vec{e}_{1}',\vec{h}^{1}) }}  \mathbb{E}\left( \mathbf{1}_{\vec{e}_{0}',\vec{h}^{0}} \; \mathbf{1}_{\vec{e}_{1}',\vec{h}^{1}} \right) - \mathbb{E}\left( \mathbf{1}_{\vec{e}_{0}',\vec{h}^{0}} \right) \mathbb{E}\left( \mathbf{1}_{\vec{e}_{1}',\vec{h}^{1}} \right)}_{\eqdef F_0}  \\
		+ \underbrace{\sum_{\substack{\vec{h}^{0},\vec{h}^{1} \in \sF_{1}^{\vec{e}_{0}',\vec{e}_{1}'} \\ (\vec{e}_{0}',\vec{h}^{0}) \neq (\vec{e}_{1}',\vec{h}^{1}) }}  \mathbb{E}\left( \mathbf{1}_{\vec{e}_{0}',\vec{h}^{0}} \; \mathbf{1}_{\vec{e}_{1}',\vec{h}^{1}} \right) - \mathbb{E}\left( \mathbf{1}_{\vec{e}_{0}',\vec{h}^{0}} \right) \mathbb{E}\left( \mathbf{1}_{\vec{e}_{1}',\vec{h}^{1}} \right)}_{\eqdef F_1} \\
		+
		\underbrace{\sum_{\substack{\vec{h}^{0},\vec{h}^{1} \in \sF_{2}^{\vec{e}_{0}',\vec{e}_{1}'} \\ (\vec{e}_{0}',\vec{h}^{0}) \neq (\vec{e}_{1}',\vec{h}^{1}) }}  \mathbb{E}\left( \mathbf{1}_{\vec{e}_{0}',\vec{h}^{0}} \; \mathbf{1}_{\vec{e}_{1}',\vec{h}^{1}} \right) - \mathbb{E}\left( \mathbf{1}_{\vec{e}_{0}',\vec{h}^{0}} \right) \mathbb{E}\left( \mathbf{1}_{\vec{e}_{1}',\vec{h}^{1}} \right)}_{\eqdef F_2}  \\
		+
		\underbrace{\sum_{\substack{\vec{h}^{0},\vec{h}^{1} \in \sF_{3}^{\vec{e}_{0}',\vec{e}_{1}'} \\ (\vec{e}_{0}',\vec{h}^{0}) \neq (\vec{e}_{1}',\vec{h}^{1}) }}  \mathbb{E}\left( \mathbf{1}_{\vec{e}_{0}',\vec{h}^{0}} \; \mathbf{1}_{\vec{e}_{1}',\vec{h}^{1}} \right) - \mathbb{E}\left( \mathbf{1}_{\vec{e}_{0}',\vec{h}^{0}} \right) \mathbb{E}\left( \mathbf{1}_{\vec{e}_{1}',\vec{h}^{1}} \right)}_{\eqdef F_3} 
		\Big) 
	\end{multline}
	Let,
	\begin{equation}\label{eq:cov} 
		\textup{Cov} \eqdef F_{0} + F_{1} + F_{2} + F_{3}
	\end{equation}

	\noindent For each cases, we will split according to the following subcases. 
		\begin{enumerate}[label=\textcolor{blue}{\roman*.}, ref=\roman*]\setlength{\itemsep}{5pt}
		\item \label{subcase:1} $ \vec{e}_{0}' = \vec{e}_{1}', \; \vec{h}^{0} \neq \vec{h}^{1}$ and $\vec{h}_{\sP}^{0} = \vec{h}_{\sP}^{1}$,

		\item \label{subcase:2} $\vec{e}_{0}' = \vec{e}_{1}', \; \vec{h}^{0} \neq \vec{h}^{1}$ and $\vec{h}_{\sP}^{0} \neq \vec{h}_{\sP}^{1}$,

		\item\label{subcase:3} $\vec{e}_{0}' \neq \vec{e}_{1}'$ and $\vec{h}^{0} = \vec{h}^{1}$

		\item\label{subcase:4} $\vec{e}_{0}' \neq \vec{e}_{1}', \; \vec{h}^{0} \neq \vec{h}^{1}$ and $\vec{h}_{\sP}^{0} = \vec{h}_{\sP}^{1}$

		\item\label{subcase:5} $\vec{e}_{0}' \neq \vec{e}_{1}',  \vec{h}^{0} \neq \vec{h}^{1}, \hv_{\sP}^{0} \neq \vec{h}_{\sP}^{1}$ and $\vec{h}_{\sP}^{0} + \vec{e}_{0}' = \vec{h}_{\sP}^{1} + \vec{e}_{1}'$

		\item\label{subcase:6} $\vec{e}_{0}' \neq \vec{e}_{1}',  \vec{h}^{0} \neq \vec{h}^{1}, \hv_{\sP}^{0}\neq \vec{h}_{\sP}^{1}$ and $\vec{h}_{\sP}^{0} + \vec{e}_{0}' \neq  \vec{h}_{\sP}^{1} + \vec{e}_{1}'$ 
		\newline
	\end{enumerate}

	{\bf\noindent Case \ref{case1}:} Recall that in this case we have 
	\begin{equation}\label{eq:case1}  
		\vec{h}^{0}_{\sP} \neq \vec{e}_{0}' \quad \mbox{and} \quad \vec{h}^{1}_{\sP} \neq \vec{e}_{1}'
	\end{equation}   
	We have, 
	\begin{equation*} 
	\mathbb{E}\left( \mathbf{1}_{\vec{e}_{0}',\vec{h}^{0}} \right) \mathbb{E}\left( \mathbf{1}_{\vec{e}_{1}',\vec{h}^{1}} \right) = \left( \frac{1}{2^{k+s-\kbis}} \right)^{2}
	\end{equation*} 
	Let us compute $\mathbb{E}_{\vec{H},\Hmbis}\left( \mathbf{1}_{\vec{e}_{0}',\vec{h}^{0}} \; \mathbf{1}_{\vec{e}_{1}',\vec{h}^{1}} \right)$ when $(\vec{e}_{0}',\vec{h}^{0}) \neq (\vec{e}_{1}',\vec{h}^{1})$. By definition
	\begin{align*}
		\mathbb{E}_{\vec{H},\Hmbis}\left( \mathbf{1}_{\vec{e}_{0}',\vec{h}^{0}} \; \mathbf{1}_{\vec{e}_{1}',\vec{h}^{1}} \right) &= \mathbb{P}_{\vec{H},\Hmbis}\left( \hv^{0},\vec{h}^{1} \in \CC^{\bot}, \vec{h}^{0}_{\sP} + \ev_{0}' \in \CCbis, \vec{h}^{1}_{\sP} + \ev_{1}' \in \CCbis \right) \\
		&= \mathbb{P}_{\vec{H},\Hmbis}\left( \hv^{0},\vec{h}^{1} \in \CC^{\bot}\right) \mathbb{P}_{\vec{H},\Hmbis} \left(  \vec{h}^{0}_{\sP} + \ev_{0}' \in \CCbis, \vec{h}^{1}_{\sP} + \ev_{1}' \in \CCbis\right)
	\end{align*}
Therefore,
	\begin{align} 
	\sum_{\vec{e}_{0}',\vec{e}_{1}'} F_{0} &= \sum_{\vec{e}_{0}',\vec{e}_{1}'}\sum_{\substack{\vec{h}^{0},\vec{h}^{1} \in \sF_{0}^{\vec{e}_{0}',\vec{e}_{1}'} \\ (\vec{e}_{0}',\vec{h}^{0}) \neq (\vec{e}_{1}',\vec{h}^{1}) }}  \mathbb{E}\left( \mathbf{1}_{\vec{e}_{0}',\vec{h}^{0}} \; \mathbf{1}_{\vec{e}_{1}',\vec{h}^{1}} \right) - \left( \frac{1}{2^{k+s-\kbis}}\right)^{2} \nonumber\\
	&= \sum_{\vec{e}_{0}',\vec{e}_{1}'}\sum_{\substack{\vec{h}^{0},\vec{h}^{1} \in \sF_{0}^{\vec{e}_{0}',\vec{e}_{1}'} \\ (\vec{e}_{0}',\vec{h}^{0}) \neq (\vec{e}_{1}',\vec{h}^{1}) }} \textup{Cov}^{(0)}\label{eq:cov0} 
	\end{align}
	where, 
	\begin{equation*}
	\textup{Cov}^{(0)} \eqdef \mathbb{P}_{\vec{H}}\left( \hv^{0},\vec{h}^{1} \in \CC^{\bot}\right) \mathbb{P}_{\Hmbis} \left( \vec{h}_{\sP}^{0} + \ev_{0}' \in \CC_{\textup{aux}}, \vec{h}_{\sP}^{1} + \ev_{1}' \in \CC_{\textup{aux}} \right) - \left(\frac{1}{ 2^{k+s-\kbis}}\right)^{2}
	\end{equation*} 
	Our aim now it to upper-bound $\textup{Cov}^{(0)}$ according to the above $6$ sub-cases.
	\newline

	{\bf \noindent Sub-case \ref{subcase:1}.} Suppose that $\vec{e}_{0}' = \vec{e}_{1}'$, $\vec{h}^{0} \neq \vec{h}^{1}$ and $\vec{h}_{\sP}^{0} = \vec{h}_{\sP}^{1}$. We have
	\begin{align*}
		\textup{Cov}^{(0)} &= \frac{1}{2^{2k}} \; \mathbb{P}_{\Hmbis}\left( \vec{h}^{0}_{\sP} + \ev_{0}' \in \CCbis, \vec{h}^{1}_{\sP} + \ev_{1}' \in \CCbis \right) - \left(\frac{1}{ 2^{k+s-\kbis}}\right)^{2} \\
		&= \frac{1}{2^{2k}}  \; \mathbb{P}_{\Hmbis}\left( \vec{h}^{0}_{\sP} + \ev_{0}' \in \CCbis \right) 	- \left(\frac{1}{ 2^{k+s-\kbis}}\right)^{2} \qquad (\mbox{as }\vec{h}^{0}_{\sP} + \ev_{0}'  = \vec{h}^{1}_{\sP} + \ev_{1}' )  \\
		&= \frac{1}{2^{2k}}  \; \frac{1}{2^{s-\kbis}} - \left(\frac{1}{ 2^{k+s-\kbis}}\right)^{2}
	\end{align*}
		where in the last line we used Equation \eqref{eq:case1}.
	Therefore, in that case 
	$$
	\textup{Cov}^{(0)} \leq   \frac{1}{2^{2k}} \; \frac{1}{2^{s-\kbis}}.
	$$

	{\bf \noindent Sub-case \ref{subcase:2}.} Suppose that $\vec{e}_{0}'=\vec{e}_{1}', \vec{h}^{0}\neq \vec{h}^{1}$ and $\vec{h}_{\sP}^{0} \neq \vec{h}_{\sP}^{1}$. We have 
	\begin{align}
		\textup{Cov}^{(0)} &= \frac{1}{2^{2k}} \;\mathbb{P}_{\Hmbis} \left(\vec{h}^{0}_{\sP} + \ev_{0}' \in \CCbis, \vec{h}^{1}_{\sP} + \ev_{1}' \in \CCbis \right) - \left(\frac{1}{ 2^{k+s-\kbis}}\right)^{2}  \nonumber \\
		&=  \frac{1}{2^{2k}} \;\mathbb{P}_{\Hmbis} \left(\vec{h}^{0}_{\sP} + \ev_{0}' \in \CCbis \right) \mathbb{P}_{\Hmbis} \left( \vec{h}^{1}_{\sP} + \ev_{1}' \in \CCbis \right) - \left(\frac{1}{ 2^{k+s-\kbis}}\right)^{2}\label{eq:cov2}
	\end{align}
	where in the last line we used that $\vec{h}^{0}_{\sP} + \ev_{0}'  \neq \vec{h}^{1}_{\sP} + \ev_{1}'$showing that these two vectors are linearly independent (we work in $\F_{2}$), and thus that both events are independent. But, as they are different from $\vec{0}$ (according to Equation \eqref{eq:case1}) we have for ($b \in \{0,1\}$)
	$$
	\mathbb{P}_{\Hmbis} \left( \vec{h}^{b}_{\sP} + \ev_{b}' \in \CCbis \right)  = \frac{1}{2^{s-k_{\textup{aux}}}}
	$$
	Therefore, plugging this in Equation \eqref{eq:cov2} leads to 
	$$
	\textup{Cov}^{(0)} = 0.
	$$

	{\bf \noindent Sub-case \ref{subcase:3}.} Suppose that $\vec{e}_{0}' \neq \vec{e}_{1}'$, $\vec{h}^{0} = \vec{h}^{1}$. In that case, 
	\begin{align*} 
		\textup{Cov}^{(0)} &=	\mathbb{P}_{\Hmbis} \left(\hv^0 \in \CC^\perp \right) \mathbb{P}_{\Hmbis} \left(\vec{h}^{0}_{\sP} + \ev_{0}' \in \CCbis, \vec{h}^{0}_{\sP} + \ev_{1}' \in \CCbis  \right)  - \left(\frac{1}{ 2^{k+s-\kbis}}\right)^{2}\\
		&=\frac{1}{2^{k}} \; \mathbb{P}_{\Hmbis} \left(\vec{h}^{0}_{\sP} + \ev_{0}' \in \CCbis \right) \mathbb{P}_{\Hmbis} \left(\vec{h}^{0}_{\sP} + \ev_{1}' \in \CCbis \right)  - \left(\frac{1}{ 2^{k+s-\kbis}}\right)^{2}\\
		&= \frac{1}{2^{k}} \; \left(\frac{1}{2^{s-\kbis}} \right)^2  - \left(\frac{1}{ 2^{k+s-\kbis}}\right)^{2}
	\end{align*} 
	where in the second equality we used that $\vec{h}_{\sP}^{0} + \vec{e}_{0}' \neq \vec{h}_{\sP}^{1} + \vec{e}_{1}'$ and are different from $\vec{0}$ (according to Equation \eqref{eq:case1}) which implies that both events are independent.  
	Therefore, in that case  
	\begin{equation}\label{eq:subcase03} 
	\textup{Cov}^{(0)} \leq   \frac{1}{2^{k + 2s-2\kbis}} 
	\end{equation}

	{\bf \noindent Sub-case \ref{subcase:4}.} Suppose that $\vec{e}_{0}' \neq \vec{e}_{1}'$, $\vec{h}^{0} \neq \vec{h}^{1}$ and $\vec{h}_{\sP}^{0} = \vec{h}_{\sP}^{1}$. In that case, 
	\begin{align*}
		\textup{Cov}^{(0)} &=\frac{1}{2^{2k}} \; \mathbb{P}_{\Hmbis} \left(\vec{h}^{0}_{\sP} + \ev_{0}' \in \CCbis, \vec{h}^{0}_{\sP} + \ev_{1}' \in \CCbis  \right) - \left(\frac{1}{ 2^{k+s-\kbis}}\right)^{2} \\
		&=\frac{1}{2^{2k}}\mathbb{P}_{\Hmbis} \left(\vec{h}^{0}_{\sP} + \ev_{0}' \in \CCbis \right) \mathbb{P}_{\Hmbis} \left(\vec{h}^{0}_{\sP} + \ev_{1}' \in \CCbis \right) - \left(\frac{1}{ 2^{k+s-\kbis}}\right)^{2} \\
		&=  \left(\frac{1}{2^{k+ s-\kbis}} \right)^2 - \left(\frac{1}{ 2^{k+s-\kbis}}\right)^{2} \\
		&= 0
	\end{align*}
	where in the second equality we used that $\vec{h}_{\sP}^{0} + \vec{e}_{0}' = \vec{h}_{\sP}^{1} + \vec{e}_{0}' \neq \vec{h}_{\sP}^{1} + \vec{e}_{1}'$ which implies that both events are independent.  
	\newline

	{\bf \noindent Sub-case \ref{subcase:5}.} Suppose that $\vec{e}_{0}' \neq \vec{e}_{1}'$, $\vec{h}^{0} \neq \vec{h}^{1}$, $\vec{h}_{\sP}^{0} \neq \vec{h}_{\sP}^{1}$ and $\hv_{\sP}^0 + \ev_0' = \hv_{\sP}^1 + \ev_1'$. We have
	\begin{align*}
		\textup{Cov}^{(0)} &=\frac{1}{2^{2k}}\; \mathbb{P}_{\Hmbis} \left(\vec{h}^{0}_{\sP} + \ev_{0}' \in \CCbis \right) - \left(\frac{1}{ 2^{k+s-\kbis}}\right)^{2} \\
		&=\frac{1}{2^{2k}}\; \mathbb{P}_{\Hmbis} \left(\vec{h}^{0}_{\sP} + \ev_{0}' \in \CCbis \right)  - \left(\frac{1}{ 2^{k+s-\kbis}}\right)^{2} \\
		&=  \frac{1}{2^{2k}} \; \frac{1}{2^{s-\kbis}}  - \left(\frac{1}{ 2^{k+s-\kbis}}\right)^{2}
	\end{align*}
	Therefore, 
	$$
	\textup{Cov}^{(0)} \leq \frac{1}{2^{2k + s - \kbis}} 
	$$

	{\bf \noindent Sub-case \ref{subcase:6}.} Suppose that $\vec{e}_{0}' \neq \vec{e}_{1}'$, $\vec{h}^{0} \neq \vec{h}^{1}$, $\vec{h}_{\sP}^{0} \neq \vec{h}_{\sP}^{1}$ and $\hv_{\sP}^0 + \ev_0' \neq \hv_{\sP}^1 + \ev_1'$.
	In that case we can write
	\begin{align*}
		\textup{Cov}^{(0)} &=\frac{1}{2^{2k}} \; \mathbb{P}_{\Hmbis} \left(\vec{h}^{0}_{\sP} + \ev_{0}' \in \CCbis, \vec{h}^{1}_{\sP} + \ev_{1}' \in \CCbis  \right) - \left(\frac{1}{ 2^{k+s-\kbis}}\right)^{2} \\
		&=\frac{1}{2^{2k}} \; \mathbb{P}_{\Hmbis} \left(\vec{h}^{0}_{\sP} + \ev_{0}' \in \CCbis \right) \mathbb{P}_{\Hmbis} \left(\vec{h}^{1}_{\sP} + \ev_{1}' \in \CCbis \right) - \left(\frac{1}{ 2^{k+s-\kbis}}\right)^{2} \\
		&=  \left(\frac{1}{2^{k+ s-\kbis}} \right)^2 - \left(\frac{1}{ 2^{k+s-\kbis}}\right)^{2}
	\end{align*}
	Therefore we obtain,  
	$$
	\textup{Cov}^{(0)} = 0.
	$$

		{\bf\noindent Case \ref{case2}:} Recall that in this case we have 
	\begin{equation}\label{eq:case2}  
		\vec{h}^{0}_{\sP} = \vec{e}_{0}' \quad \mbox{and} \quad \vec{h}^{1}_{\sP} \neq \vec{e}_{1}'
	\end{equation}  
	We have, 
	\begin{equation*} 
		\mathbb{E}\left( \mathbf{1}_{\vec{e}_{0}',\vec{h}^{0}} \right) \mathbb{E}\left( \mathbf{1}_{\vec{e}_{1}',\vec{h}^{1}} \right) =  \frac{1}{2^{k}}  \; \frac{1}{2^{k+s-\kbis}}= \frac{1}{2^{2k+s-\kbis}}
	\end{equation*} 
	Let us compute $\mathbb{E}_{\vec{H},\Hmbis}\left( \mathbf{1}_{\vec{e}_{0}',\vec{h}^{0}} \; \mathbf{1}_{\vec{e}_{1}',\vec{h}^{1}} \right)$ when $(\vec{e}_{0}',\vec{h}^{0}) \neq (\vec{e}_{1}',\vec{h}^{1})$. By definition
	\begin{align*}
		\mathbb{E}_{\vec{H},\Hmbis}\left( \mathbf{1}_{\vec{e}_{0}',\vec{h}^{0}} \; \mathbf{1}_{\vec{e}_{1}',\vec{h}^{1}} \right) &= \mathbb{P}_{\vec{H},\Hmbis}\left( \hv^{0},\vec{h}^{1} \in \CC^{\bot}, \vec{h}^{0}_{\sP} + \ev_{0}' \in \CCbis, \vec{h}^{1}_{\sP} + \ev_{1}' \in \CCbis \right) \\
		&= \mathbb{P}_{\vec{H},\Hmbis}\left( \hv^{0},\vec{h}^{1} \in \CC^{\bot}\right) \mathbb{P}_{\vec{H},\Hmbis} \left(  \vec{h}^{0}_{\sP} + \ev_{0}' \in \CCbis, \vec{h}^{1}_{\sP} + \ev_{1}' \in \CCbis\right)
	\end{align*}
	Therefore,
	\begin{align} 
		\sum_{\vec{e}_{0}',\vec{e}_{1}'} F_{1} &= \sum_{\vec{e}_{0}',\vec{e}_{1}'} \mathbb{E}\left( \mathbf{1}_{\vec{e}_{0}',\vec{h}^{0}} \; \mathbf{1}_{\vec{e}_{1}',\vec{h}^{1}} \right) -  \frac{1}{2^{2k+s-\kbis}}\nonumber \\
		&= \sum_{\vec{e}_{0}',\vec{e}_{1}'}\sum_{\substack{\vec{h}^{0},\vec{h}^{1} \in \sF_{1}^{\vec{e}_{0}',\vec{e}_{1}'} \\ (\vec{e}_{0}',\vec{h}^{0}) \neq (\vec{e}_{1}',\vec{h}^{1}) }}  \textup{Cov}^{(1)} \label{eq:cov1} 
	\end{align}
	where, 
		\begin{equation*}
	\textup{Cov}^{(1)} \eqdef \mathbb{P}_{\vec{H}}\left( \hv^{0},\vec{h}^{1} \in \CC^{\bot}\right) \mathbb{P}_{\Hmbis} \left( \vec{h}_{\sP}^{0} + \ev_{0}' \in \CC_{\textup{aux}}, \vec{h}_{\sP}^{1} + \ev_{1}' \in \CC_{\textup{aux}} \right) - \frac{1}{ 2^{2k+s-\kbis}}
	\end{equation*} 
	Our aim now it to upper-bound $\textup{Cov}^{(1)}$ according to the above $6$ sub-cases.
	\newline

	{\bf \noindent Sub-case \ref{subcase:1}.} Suppose that $\vec{e}_{0}' = \vec{e}_{1}'$, $\vec{h}^{0} \neq \vec{h}^{1}$ and $\vec{h}_{\sP}^{0} = \vec{h}_{\sP}^{1}$. This subcase is impossible according to Equation \eqref{eq:case2}. Therefore, 
	$$
	\textup{Cov}^{(1)} =0.
	$$

	{\bf \noindent Sub-case \ref{subcase:2}.} Suppose that $\vec{e}_{0}'=\vec{e}_{1}', \vec{h}^{0}\neq \vec{h}^{1}$ and $\vec{h}_{\sP}^{0} \neq \vec{h}_{\sP}^{1}$. We have 
	\begin{align}
		\textup{Cov}^{(1)} &= \frac{1}{2^{2k}} \;\mathbb{P}_{\Hmbis} \left(\vec{h}^{0}_{\sP} + \ev_{0}' \in \CCbis, \vec{h}^{1}_{\sP} + \ev_{1}' \in \CCbis \right) - \frac{1}{ 2^{2k+s-\kbis}}  \nonumber \\
		&=  \frac{1}{2^{2k}}\; \mathbb{P}_{\Hmbis} \left( \vec{h}^{1}_{\sP} + \ev_{1}' \in \CCbis \right) - \frac{1}{ 2^{2k+s-\kbis}}\label{eq:cov22}
	\end{align}
	where in the last line we used that $\vec{h}^{0}_{\sP} + \ev_{0}'  \neq \vec{h}^{1}_{\sP} + \ev_{1}'$ showing that these two vectors are linearly independent (we work in $\F_{2}$), and thus that both events are independent. Furthermore, we also used that $\vec{h}^{0}_{\sP} + \vec{e}_{0}' = \vec{0}$ according to Equation \eqref{eq:case2}. But,
	$$
	\mathbb{P}_{\Hmbis} \left( \vec{h}^{1}_{\sP} + \ev_{1}' \in \CCbis \right)  = \frac{1}{2^{s-k_{\textup{aux}}}}
	$$
	Therefore, plugging this in Equation \eqref{eq:cov22} leads to 
	$$
	\textup{Cov}^{(1)} = 0.
	$$

	{\bf \noindent Sub-case \ref{subcase:3}.} Suppose that $\vec{e}_{0}' \neq \vec{e}_{1}'$, $\vec{h}^{0} = \vec{h}^{1}$. In that case, 
	\begin{align*} 
		\textup{Cov}^{(1)} &=	\mathbb{P}_{\Hmbis} \left(\hv^0 \in \CC^\perp \right) \mathbb{P}_{\Hmbis} \left(\vec{h}^{0}_{\sP} + \ev_{0}' \in \CCbis, \vec{h}^{1}_{\sP} + \ev_{1}' \in \CCbis  \right)  - \frac{1}{ 2^{2k+s-\kbis}}\\
		&=\frac{1}{2^{k}} \; \mathbb{P}_{\Hmbis} \left(\vec{h}^{0}_{\sP} + \ev_{0}' \in \CCbis \right) \mathbb{P}_{\Hmbis} \left(\vec{h}^{0}_{\sP} + \ev_{1}' \in \CCbis \right)  - \frac{1}{ 2^{2k+s-\kbis}} \\
		&= \frac{1}{2^{k}} \;\frac{1}{2^{s-\kbis}}  - \frac{1}{ 2^{2k+s-\kbis}}
	\end{align*} 
	where in the second equality we used that $\vec{h}_{\sP}^{0} + \vec{e}_{0}' \neq \vec{h}_{\sP}^{1} + \vec{e}_{1}'$ which implies that both events are independent. Furthermore, we also used that $\vec{h}^{0}_{\sP} + \vec{e}_{0}' = \vec{0}$ according to Equation \eqref{eq:case2}.
	Therefore, in that case  
	\begin{equation}\label{eq:subcase13}
	\textup{Cov}^{(1)} \leq \frac{1}{2^{k + s-\kbis}}
	\end{equation}

	{\bf \noindent Sub-case \ref{subcase:4}.} Suppose that $\vec{e}_{0}' \neq \vec{e}_{1}'$, $\vec{h}^{0} \neq \vec{h}^{1}$ and $\vec{h}_{\sP}^{0} = \vec{h}_{\sP}^{1}$. In that case, 
	\begin{align*}
		\textup{Cov}^{(1)} &=\frac{1}{2^{2k}} \; \mathbb{P}_{\Hmbis} \left(\vec{h}^{0}_{\sP} + \ev_{0}' \in \CCbis, \vec{h}^{0}_{\sP} + \ev_{1}' \in \CCbis  \right) - \frac{1}{ 2^{2k+s-\kbis}} \\
		&=\frac{1}{2^{2k}} \mathbb{P}_{\Hmbis} \left(\vec{h}^{0}_{\sP} + \ev_{1}' \in \CCbis \right) - \frac{1}{ 2^{2k+s-\kbis}}\\
		&= \frac{1}{ 2^{2k+s-\kbis}} - \frac{1}{ 2^{2k+s-\kbis}} \\
		&= 0
	\end{align*}
	where in the second equality we used that $\vec{h}_{\sP}^{0} + \vec{e}_{0}' = \vec{h}_{\sP}^{1} + \vec{e}_{0}' \neq \vec{h}_{\sP}^{1} + \vec{e}_{1}'$ which implies that both events are independent. Furthermore, we also used that $\vec{h}^{0}_{\sP} + \vec{e}_{0}' = \vec{0}$ according to Equation \eqref{eq:case2}.
	\newline

	{\bf \noindent Sub-case \ref{subcase:5}.} Suppose that $\vec{e}_{0}' \neq \vec{e}_{1}'$, $\vec{h}^{0} \neq \vec{h}^{1}$, $\vec{h}_{\sP}^{0} \neq \vec{h}_{\sP}^{1}$ and $\hv_{\sP}^0 + \ev_0' = \hv_{\sP}^1 + \ev_1'$. This subcase is impossible according to Equation \eqref{eq:case2}. Therefore,
	$$
	\textup{Cov}^{(1)} = 0
	$$

	{\bf \noindent Sub-case \ref{subcase:6}.} Suppose that $\vec{e}_{0}' \neq \vec{e}_{1}'$, $\vec{h}^{0} \neq \vec{h}^{1}$, $\vec{h}_{\sP}^{0} \neq \vec{h}_{\sP}^{1}$ and $\hv_{\sP}^0 + \ev_0' \neq \hv_{\sP}^1 + \ev_1'$.
	In that case we can write
	\begin{align*}
		\textup{Cov}^{(0)} &=\frac{1}{2^{2k}} \; \mathbb{P}_{\Hmbis} \left(\vec{h}^{0}_{\sP} + \ev_{0}' \in \CCbis, \vec{h}^{1}_{\sP} + \ev_{1}' \in \CCbis  \right) - \frac{1}{ 2^{2k+s-\kbis}}\\
		&=\frac{1}{2^{2k}} \; \mathbb{P}_{\Hmbis} \left(\vec{h}^{1}_{\sP} + \ev_{1}' \in \CCbis \right) - \frac{1}{ 2^{2k+s-\kbis}} \\
		&=  \frac{1}{ 2^{2k+s-\kbis}} - \frac{1}{ 2^{2k+s-\kbis}}
	\end{align*}
	Therefore we obtain,  
	$$
	\textup{Cov}^{(1)} = 0.
	$$	
	{\bf\noindent Case \ref{case3}:} This situation is symmetric to Case $2$.  
	\newline

	{\bf\noindent Case \ref{case4}:} Recall that in this case we have 
	\begin{equation}\label{eq:case4}  
		\vec{h}^{0}_{\sP} = \vec{e}_{0}' \quad \mbox{and} \quad \vec{h}^{1}_{\sP} = \vec{e}_{1}'
	\end{equation}  
	We have, 
	\begin{equation*} 
		\mathbb{E}\left( \mathbf{1}_{\vec{e}_{0}',\vec{h}^{0}} \right) \mathbb{E}\left( \mathbf{1}_{\vec{e}_{1}',\vec{h}^{1}} \right) =  \frac{1}{2^{2k}} 
	\end{equation*} 
	Let us compute $\mathbb{E}_{\vec{H},\Hmbis}\left( \mathbf{1}_{\vec{e}_{0}',\vec{h}^{0}} \; \mathbf{1}_{\vec{e}_{1}',\vec{h}^{1}} \right)$ when $(\vec{e}_{0}',\vec{h}^{0}) \neq (\vec{e}_{1}',\vec{h}^{1})$. By definition
	\begin{align*}
		\mathbb{E}_{\vec{H},\Hmbis}\left( \mathbf{1}_{\vec{e}_{0}',\vec{h}^{0}} \; \mathbf{1}_{\vec{e}_{1}',\vec{h}^{1}} \right) &= \mathbb{P}_{\vec{H},\Hmbis}\left( \hv^{0},\vec{h}^{1} \in \CC^{\bot}, \vec{h}^{0}_{\sP} + \ev_{0}' \in \CCbis, \vec{h}^{1}_{\sP} + \ev_{1}' \in \CCbis \right) \\
		&= \mathbb{P}_{\vec{H},\Hmbis}\left( \hv^{0},\vec{h}^{1} \in \CC^{\bot}\right) \mathbb{P}_{\vec{H},\Hmbis} \left(  \vec{h}^{0}_{\sP} + \ev_{0}' \in \CCbis, \vec{h}^{1}_{\sP} + \ev_{1}' \in \CCbis\right)
	\end{align*}
	Therefore,
	\begin{align} 
		\sum_{\vec{e}_{0}',\vec{e}_{1}'} F_{3} &= \sum_{\vec{e}_{0}',\vec{e}_{1}'}\sum_{\substack{\vec{h}^{0},\vec{h}^{1} \in \sF_{3}^{\vec{e}_{0}',\vec{e}_{1}'} \\ (\vec{e}_{0}',\vec{h}^{0}) \neq (\vec{e}_{1}',\vec{h}^{1}) }}  \mathbb{E}\left( \mathbf{1}_{\vec{e}_{0}',\vec{h}^{0}} \; \mathbf{1}_{\vec{e}_{1}',\vec{h}^{1}} \right) -  \frac{1}{2^{2k}}\nonumber \\
		&= \sum_{\vec{e}_{0}',\vec{e}_{1}'} \sum_{\substack{\vec{h}^{0},\vec{h}^{1} \in \sF_{3}^{\vec{e}_{0}',\vec{e}_{1}'} \\ (\vec{e}_{0}',\vec{h}^{0}) \neq (\vec{e}_{1}',\vec{h}^{1}) }}\textup{Cov}^{(3)}\label{eq:cov3} 
	\end{align}
	where, 
		\begin{equation*}
	\textup{Cov}^{(3)} \eqdef \mathbb{P}_{\vec{H}}\left( \hv^{0},\vec{h}^{1} \in \CC^{\bot}\right) \mathbb{P}_{\Hmbis} \left( \vec{h}_{\sP}^{0} + \ev_{0}' \in \CC_{\textup{aux}}, \vec{h}_{\sP}^{1} + \ev_{1}' \in \CC_{\textup{aux}} \right) - \frac{1}{2^{2k}}
	\end{equation*} 
	Our aim now it to upper-bound $\textup{Cov}^{(1)}$ according to the above $6$ sub-cases.
	\newline

	{\bf \noindent Sub-case \ref{subcase:1}.} Suppose that $\vec{e}_{0}' = \vec{e}_{1}'$, $\vec{h}^{0} \neq \vec{h}^{1}$ and $\vec{h}_{\sP}^{0} = \vec{h}_{\sP}^{1}$. We have
	\begin{align*}
		\textup{Cov}^{(3)} &= \frac{1}{2^{2k}} \; \mathbb{P}_{\Hmbis}\left( \vec{h}^{0}_{\sP} + \ev_{0}' \in \CCbis, \vec{h}^{1}_{\sP} + \ev_{1}' \in \CCbis \right) - \frac{1}{2^{2k}} \\
		&= \frac{1}{2^{2k}}  \; \mathbb{P}_{\Hmbis}\left( \vec{0} \in \CCbis \right) 	- \frac{1}{2^{2k}}\\
		&= \frac{1}{2^{2k}}  - \frac{1}{ 2^{2k}}
	\end{align*}
	where in the second equality we used Equation \eqref{eq:case4}.
	Therefore, in that case 
	$$
	\textup{Cov}^{(3)} = 0.
	$$
	
	{\bf \noindent Sub-case \ref{subcase:2}.} Suppose that $\vec{e}_{0}'=\vec{e}_{1}', \vec{h}^{0}\neq \vec{h}^{1}$ and $\vec{h}_{\sP}^{0} \neq \vec{h}_{\sP}^{1}$. According to Equation \eqref{eq:case4} this sub-case is impossible. Therefore, 
	$$
	\textup{Cov}^{(3)} = 0.
	$$

	{\bf \noindent Sub-case \ref{subcase:3}.} Suppose that $\vec{e}_{0}' \neq \vec{e}_{1}'$, $\vec{h}^{0} = \vec{h}^{1}$. According to Equation \eqref{eq:case4} this sub-case is impossible. Therefore, 
	$$
	\textup{Cov}^{(3)} = 0.
	$$

	{\bf \noindent Sub-case \ref{subcase:4}.} Suppose that $\vec{e}_{0}' \neq \vec{e}_{1}'$, $\vec{h}^{0} \neq \vec{h}^{1}$ and $\vec{h}_{\sP}^{0} = \vec{h}_{\sP}^{1}$. According to Equation \eqref{eq:case4} this sub-case is impossible. Therefore, 
	$$
	\textup{Cov}^{(3)} = 0.
	$$

	{\bf \noindent Sub-case \ref{subcase:5}.} Suppose that $\vec{e}_{0}' \neq \vec{e}_{1}'$, $\vec{h}^{0} \neq \vec{h}^{1}$, $\vec{h}_{\sP}^{0} \neq \vec{h}_{\sP}^{1}$ and $\hv_{\sP}^0 + \ev_0' = \hv_{\sP}^1 + \ev_1'$. We have
	\begin{align*}
		\textup{Cov}^{(3)} &=\frac{1}{2^{2k}}\; \mathbb{P}_{\Hmbis} \left(\vec{h}^{0}_{\sP} + \ev_{0}' \in \CCbis \right) - \frac{1}{2^{2k}} \\
		&=\frac{1}{2^{2k}}\; \mathbb{P}_{\Hmbis} \left( \vec{0} \in \CCbis \right)  - \frac{1}{2^{2k}} \\
		&=  \frac{1}{2^{2k}}  - \frac{1}{2^{2k}}
	\end{align*}
	where in the second equality we used Equation \eqref{eq:case4}. 
	Therefore, 
	$$
	\textup{Cov}^{(3)} = 0
	$$

	{\bf \noindent Sub-case \ref{subcase:6}.} Suppose that $\vec{e}_{0}' \neq \vec{e}_{1}'$, $\vec{h}^{0} \neq \vec{h}^{1}$, $\vec{h}_{\sP}^{0} \neq \vec{h}_{\sP}^{1}$ and $\hv_{\sP}^0 + \ev_0' \neq \hv_{\sP}^1 + \ev_1'$. According to Equation \eqref{eq:case4} this sub-case is impossible. Therefore, 
	$$
	\textup{Cov}^{(3)} = 0.
	$$

	We are now ready to gather Cases \ref{case1}, \ref{case2}, \ref{case3} and \ref{case4} according to Subcases \ref{subcase:1}, \ref{subcase:2}, \ref{subcase:3}, \ref{subcase:4}, \ref{subcase:5} and \ref{subcase:6}. Our aim is to bound 
	$
	\textup{Cov}
	$
	that were defined in Equation \ref{eq:cov}.
	We can already notice that Case \ref{case4} has no impact on this sum while Cases \ref{case2} and \ref{case3} have an influence only in Subcase \ref{subcase:3}. Furthermore, \ref{subcase:2}, \ref{subcase:4} and \ref{subcase:6} have no contribution to this sum, whatever is the considered case.

	Let us upper-bound $\textup{Cov}$ according to the different subcases where $\textup{Cov}_{i}$ denotes the terms involved in $\textup{Cov}$ coming from Subcase $i$ (in particular $\textup{Cov}_{i}$ is defined as a certain sum of $\textup{Cov}^{(i)}$, see Equations \eqref{eq:cov0}, \eqref{eq:cov1} and \eqref{eq:cov2}).
	\newline

	{\bf \noindent Subcase \ref{subcase:1}:} We have,  
		\begin{align*}
		\textup{Cov}_{1} 
		&\leq \sum_{(\vec{e}_{0}',\vec{h}^{0}) \in \mathcal{E}_{b}} \sum_{\substack{\vec{h}^{1}: (\vec{e}_{0}',\vec{h}^{1})\in \mathcal{E}_{b} \\ \vec{h}^{1} \neq \vec{h}^{0}, \vec{h}^{0}_{\sP} = \vec{h}^{1}_{\sP}}} \frac{1}{2^{2k}} \frac{1}{2^{s-\kbis}}
		&\leq \sum_{(\vec{e}_{0},\vec{h}^{0}) \in \mathcal{E}_{b}} \frac{\binom{n-s}{w}}{2^{2k + s - \kbis}} 
		&= \frac{\binom{n-s}{w}}{2^{k}} \frac{E_{b}'}{2^{k+s-\kbis}}
	\end{align*}

	{\bf \noindent Subcase \ref{subcase:2}:} We have,
	$$
	\textup{Cov}_{2} = 0. 
	$$

	{\bf \noindent Subcase \ref{subcase:3}:} We have here to split the computation here between Cases \ref{case1} and \ref{case2}. Recall that they are given by
		$$\sF_{0}^{\vec{e}_{0}',\vec{e}_{1}'} = \left\{(\vec{h}^{0},\vec{h}^{1}): \;  \vec{h}^{0} \in \sE_{b,1}^{\vec{e}_{0}'} \mbox{ and }  \vec{h}^{1} \in \sE_{b,1}^{\vec{e}_{1}'}\right\},
	$$
	$$
	\sF_{1}^{\vec{e}_{0}',\vec{e}_{1}'} = \left\{(\vec{h}^{0},\vec{h}^{1}): \;  \vec{h}^{0} \in \sE_{b,2}^{\vec{e}_{0}'} \mbox{ and }  \vec{h}^{1} \in \sE_{b,1}^{\vec{e}_{1}'}\right\}
	$$
	where,
	$$
		\sE_{b,1}^{\vec{e}'} \eqdef \left\{ \vec{h} \in \mathbb{F}_{2}^{n}: \; (\vec{e}',\vec{h}) \in \sE_{b} \mbox{ and } \vec{h}_{\sP} \neq \vec{e}' \right\},
	$$
	$$
		\sE_{b,2}^{\vec{e}'} \eqdef \left\{ \vec{h} \in \mathbb{F}_{2}^{n}: \; (\vec{e}',\vec{h}) \in \sE_{b} \mbox{ and } \vec{h}_{\sP} = \vec{e}' \right\}.
	$$
	But recall according to Equation \eqref{eq:cardEb12} that, 
	$$
	\card{\sE_{b,1}} = \frac{\card{\sE_{b,2}}}{2^{s}}
	$$  
	Therefore, in Subcase \ref{subcase:3},
	\begin{align*} 
	 \textup{Cov}_{3} &= \sum_{\vec{e}_{0}'\neq\vec{e}_{1}'}\sum_{\substack{\vec{h}^{0}=\vec{h}^{1} \in \sF_{0}^{\vec{e}_{0}',\vec{e}_{1}'}}} \textup{Cov}^{(0)} + 2\sum_{\vec{e}_{0}'\neq\vec{e}_{1}'}\sum_{\substack{\vec{h}^{0}=\vec{h}^{1} \in \sF_{1}^{\vec{e}_{0}',\vec{e}_{1}'}  }} \textup{Cov}^{(1)}\\ 
	 &= \sum_{\vec{e}_{0}'\neq\vec{e}_{1}'}\sum_{\substack{\vec{h}^{0}=\vec{h}^{1} \in \sF_{0}^{\vec{e}_{0}',\vec{e}_{1}'} }} \frac{1}{2^{k+2s-2\kbis}} + 2\sum_{\vec{e}_{0}'\neq\vec{e}_{1}'}\sum_{\substack{\vec{h}^{0}=\vec{h}^{1} \in \sF_{1}^{\vec{e}_{0}',\vec{e}_{1}'} }} \frac{1}{2^{k+s-\kbis}} \quad \mbox{(Equations \eqref{eq:subcase03}) and \eqref{eq:subcase13}} \\
	 &=  \sum_{\vec{e}_{0}'\neq\vec{e}_{1}'}\sum_{\substack{\vec{h}^{0}=\vec{h}^{1} \in \sF_{0}^{\vec{e}_{0}',\vec{e}_{1}'} }} \frac{1}{2^{k+2s-2\kbis}}  + \frac{1}{2^{s}}  \sum_{\vec{e}_{0}'\neq\vec{e}_{1}'}\sum_{\substack{\vec{h}^{0}=\vec{h}^{1} \in \sF_{0}^{\vec{e}_{0}',\vec{e}_{1}'} }} \frac{1}{2^{k+s-\kbis}} \\  
	 &= \OO{1}  \sum_{\vec{e}_{0}'\neq\vec{e}_{1}'}\sum_{\substack{\vec{h}^{0}=\vec{h}^{1} \in \sF_{0}^{\vec{e}_{0}',\vec{e}_{1}'}}} \frac{1}{2^{k+2s-2\kbis}} 
	\end{align*} 
	where we basically use the same reasoning than for proving Equation \eqref{eq:expE12b}. There in this subcase,
	\begin{align*} 
		\textup{Cov}_{3}
		&\leq \sum_{(\vec{e}_{0}',\vec{h}^{0}) \in \mathcal{E}_{b}} \sum_{\substack{ (\vec{e}_{1}',\vec{h}^{0})\in \mathcal{E}_{b} \\ \vec{h}^{1} = \vec{h}^{0}, \ev_0' \neq \ev_1' }} \frac{1}{2^{k+2s-2\kbis}} 
		&\leq \sum_{(\vec{e}_{0}',\vec{h}^{0}) \in \mathcal{E}_{b}} \frac{\binom{s}{t_{\textup{aux}}}}{2^{k + 2s - 2\kbis}} 
		&= \frac{\binom{s}{t_{\textup{aux}}}}{2^{s-\kbis}} \; \frac{E_{b}'}{2^{k+s-\kbis}}
	\end{align*}
	\newline

	{\bf \noindent Subcase \ref{subcase:4}:} We have,
	$$
	\textup{Cov}_{4} = 0. 
	$$

	{\bf \noindent Subcase \ref{subcase:5}:} We have,
			\begin{align*}
		\textup{Cov}_{5} &\leq \sum_{(\vec{e}_{0}',\vec{h}^{0}) \in \mathcal{E}_{b}} \sum_{\substack{(\vec{e}_{1}',\vec{h}^{1})\in \mathcal{E}_{b} \\ \vec{h}^{1} \neq \vec{h}^{0},\; \vec{h}^{0}_{\sP} \neq \vec{h}^{1}_{\sP}, \;\ev_0' \neq \ev_1' \\ \ev_0' + \hv^0_{\sP} =  \ev_1' + \hv^1_{\sP}}}   \frac{1}{ 2^{2k+s-\kbis}}
		\leq  \sum_{(\vec{e}_{0},\vec{h}^{0}) \in \mathcal{E}_{b}}  \frac{\binom{s}{t_{\textup{aux}}} \binom{n-s}{w}}{ 2^{2k+s-\kbis}} \\
		&= \frac{\binom{n-s}{w}\binom{s}{t_{\textup{aux}}}}{ 2^{k}} \; \frac{E_b'}{2^{k+s-\kbis}}
	\end{align*}

	{\bf \noindent Subcase \ref{subcase:6}:} We have,
	$$
	\textup{Cov}_{6} = 0.
	$$

		Plugging all these bounds on the $\textup{Cov}_{i}$ together and using that $\textup{Cov} = \sum_{i} \textup{Cov}_{i}$ leads to 
	\begin{align*} 
		\textup{Cov} &\leq \left( \frac{\binom{n-s}{w}  }{2^{k}} +  \frac{\binom{s}{t_{\textup{aux}}}}{2^{s - \kbis}}+ \frac{\binom{n-s}{w}  \binom{s}{t_{\textup{aux}}}  }{2^{k}}\right)\; \frac{E_{b}'}{2^{k+s-\kbis}} \\
		&= \OO{n^{\alpha}} \frac{E_{b}'}{2^{k+s-\kbis}} 
	\end{align*} 
	where in the last lines we used the constraints \eqref{eq:constraintsVar} given in the proposition. Plugging this equation in  Equation \eqref{eq:ineqVariant} concludes the proof.
	\end{proof}

We are now ready to prove our proposition:
\begin{proof}[Proof of Proposition \ref{prop:biasCodedRLPN}] Let $E_{b}$ and $E_{b}'$ (for $b \in \{0,1\}$) be defined as in Lemma \ref{lemma:varianceEb}. 
	By using the Bienaym\'e-Tchebychev inequality, we obtain for any function $f$ mapping the positive integers to positive real numbers:
	\begin{equation*}
		\mathbb{P}_{\vec{H},\Hmbis}\left( |E_b- \mathbb{E}\left( E_b \right)|\geq  \sqrt{f(n) \mathbb{E}\left(E_b\right)}\right) \leq  \frac{\mathbf{Var}(E_b)}{f(n) \mathbb{E}(E_b)} 
		= \frac{\OO{n^{\alpha}}}{f(n)}
	\end{equation*}
	where the last inequality is a consequence of Lemma \ref{lemma:varianceEb}.		
	Since,
	$$
	\bias_{\left(\hv,\cvbis\right) \drawn \widetilde{\sH}}\left(  \langle \cvbis + \hv_{\sP},\ev_{\sP} \rangle +  \langle \ev_{\sN},  \hv_{\sN}\rangle \right) = \frac{E_0-E_1}{E_0+E_1},
	$$ 
	we have with probability greater than
	$1 - \frac{\OO{n^{\alpha}}}{f(n)}$ that
	\begin{multline}\label{eq:complicated}
		\frac{\mu_0-\mu_1 - \sqrt{2f(n)} \sqrt{ \mu_0+\mu_1} }{\mu_0+\mu_1 + \sqrt{2f(n)} \sqrt{ \mu_0+\mu_1}}	 \leq \bias_{\left(\hv,\cvbis\right) \drawn \widetilde{\sH}}\left(  \langle \cvbis + \hv_{\sP},\ev_{\sP} \rangle +  \langle \ev_{\sN},  \hv_{\sN}\rangle \right) \\
		  \leq  
		\frac{\mu_0-\mu_1 + \sqrt{2f(n)}\sqrt{ \mu_0+\mu_1} }{\mu_0+\mu_1 - \sqrt{2f(n)}\sqrt{ \mu_0+\mu_1}}	 \end{multline}
	where 
	$$
	\mu_i \eqdef \mathbb{E}\left(E_i\right)
	$$	
	and where we used that for all positive $x$ and $y$, $\sqrt{x}+\sqrt{y} \leq \sqrt{2(x+y)}$. 
	Let, 
	$$
	N = \mu_{0} + \mu_{1} 
	$$
	It is readily seen that,
	$$
	N = \frac{\binom{n-s}{w} \binom{s}{\tbis}}{2^{k-\kbis}} 
	$$
	We let 
	$f(n) = \delta\sqrt{N}/2$. Since $N= \mu_0+\mu_1$ this implies $f(n) = \delta\sqrt{\mu_0+\mu_1}/2$. By Equation \eqref{eq:cstPropo35}, note that $\frac{f(n)}{\OO{n^{\alpha}}}$ tends to infinity as 
	$n$ tends to infinity.  We notice that 
	\begin{align*}
		\sqrt{2f(n)} \sqrt{ \mu_0+\mu_1} &=  \delta^{1/2}(\mu_0+\mu_1)^{3/4} \nonumber \\
		&= o\left( \delta(\mu_0+\mu_1)\right) 
	\end{align*}
	because 
	\begin{equation*}
		\frac{\delta^{1/2}(\mu_0+\mu_1)^{3/4}}{\delta(\mu_0+\mu_1)} 
		= 	\frac{1}{\sqrt{\delta\sqrt{\mu_0+\mu_1}}} 
		=  \frac{1}{\sqrt{2f(n)}} 
		\mathop{\longrightarrow}\limits_{n\to +\infty}  0.
	\end{equation*}
	Equation \eqref{eq:complicated} can now be rewritten as
	\begin{multline}
		\label{eq:simpler}
		\frac{\mu_0-\mu_1 - o\left(\delta(\mu_0+\mu_1)\right) }{\mu_0+\mu_1 + o\left(\delta(\mu_0+\mu_1)\right) }	 \leq \bias\left( \langle \ev'',\vec{e}_{\sP} \rangle +  \langle \vec{e}_{\sN},  \vec{h}_{\sN}\rangle  \right) \\
		 \leq  
		\frac{\mu_0-\mu_1 + o\left(\delta(\mu_0+\mu_1)\right)  }{\mu_0+\mu_1 - o\left(\delta(\mu_0+\mu_1)\right)}	
	\end{multline}Now on the other hand
	\begin{align*}
		\delta  &= \bias_{\left(\hv,\cvbis\right) \drawn \widetilde{\sH}}\left(  \langle \cvbis + \hv_{\sP},\ev_{\sP} \rangle +  \langle \ev_{\sN},  \hv_{\sN}\rangle \right)   \\
		&= \frac{E'_0-E'_1}{E'_0+E'_1} \\
		&= \frac{\frac{E'_0}{2^{k+s-\kbis}}-\frac{E'_1}{2^{k+s-\kbis}}}{\frac{E'_0}{2^{k+s-\kbis}}+\frac{E'_1}{2^{k+s-\kbis}}} \\
		&= \frac{\mu_0-\mu_1}{\mu_0+\mu_1}
	\end{align*}
	where the last equality is a consequence of Lemma \ref{lemma:varianceEb}, in particular Equation \eqref{eq:espL}. 
	From this it follows that we can rewrite \eqref{eq:simpler} as
	\begin{multline*}
		\frac{\delta}{1+o(\delta)}- o(\delta) \leq \bias_{\left(\hv,\cvbis\right) \drawn \widetilde{\sH}}\left(  \langle \cvbis + \hv_{\sP},\ev_{\sP} \rangle +  \langle \ev_{\sN},  \hv_{\sN}\rangle \right) \\
		   \leq 
		\frac{\delta}{1-o(\delta)}+ o(\delta) 
	\end{multline*}
	from which it follows immediately that 
	$$
	\bias_{\left(\hv,\cvbis\right) \drawn \widetilde{\sH}}\left(  \langle \cvbis + \hv_{\sP},\ev_{\sP} \rangle +  \langle \ev_{\sN},  \hv_{\sN}\rangle \right)  = \delta(1+o(1))
	$$
	which concludes the proof. 
\end{proof} \section{Correctness and Running-Time of the \nRLPNs algorithm (Algorithm \ref{alg:codedRLPN})}\label{app:complexity}
In this section we prove the correctness of Algorithm \ref{alg:codedRLPN} in Subsection \ref{subsec:Correc}. Furthermore, we give its running-time in Subsection \ref{sec:app_final_comp}. To this aim, we instantiate Instructions \ref{state:PCE} ($\Call{ParityCheckEquations}{}$) \ref{state:Dec} ($\Call{Decode}{}$) of Algorithm \ref{alg:sample}.
\begin{notation}
	\begin{itemize}
		\item Instantiation of Algorithm \ref{alg:codedRLPN}.
	\begin{itemize}
		\item We instantiate $\Call{ParityCheckEquations}{}$ instruction with the technique devised in \cite[\S 5]{CDMT22} to compute all the parity-checks of a given weight in a code (which is inspired from \cite{BJMM12}). Its asymptotic complexity is recalled in Proposition \ref{prop:asym_BJMM}.
		\item The family of auxiliary $[s,\kbis]$ linear codes $\CCbis$ used will be product of $\log s$ random codes as devised in Section \ref{sec:SDC} . Using these the $\Call{Decode}{}$ procedure outputs almost all codeword at distance $\tbis$ in time $2^{o(s)} \max\left(\frac{\binom{s}{\tbis}}{2^{s-\kbis}} , 1 \right)$.
\end{itemize}
	\item Framework for the analysis of Algorithm \ref{alg:codedRLPN}
	\begin{itemize}
			\item We prove the correctness (Proposition \ref{prop:correctness}) and we make the complexity analysis (Proposition \ref{prop:asym_comp_doubleRLPN}) in the framework of Proposition \ref{prop:exp_sizeS}. More specifically, analysis is made for 
$\CC$ and $\CCbis$ being random $\lbrack n,k \rbrack$ and $\lbrack s,k\bis \rbrack$ codes. We argue in Section \ref{sec:SDC} that the proof would be roughly the same (but more complicated) if we were to make it using $\CCbis$ being random product codes. The complexity of Algorithm \ref{alg:codedRLPN} would only grow by a factor of $2^{o(s)}$ when using these codes.
		\item Note that with the $\Call{ParityCheckEquations}{}$ and $\Call{Decode}{}$ procedures we have chosen, Algorithm \ref{alg:codedRLPN} computes in fact all the possible LPN samples, namely we have (as required in Proposition \ref{prop:exp_sizeS}):
		$$ \sH = \widetilde{\sH} $$
		\item We reuse notation introduced in Proposition \ref{prop:exp_sizeS}: the set $\sS^{(j)}$ of candidates for the $j$'th auxiliary code $\CCbis[(j)]$ is defined by
		\begin{equation} \label{eq:prop_def_S2}
			\sS^{(j)} \eqdef \{\sv \in \F_2^\kbis \: : \widehat{f_{\yv,\widetilde{\sH},\Gmbis[(j)] }  }\left(\sv\right) \geq  \frac{\delta}{2} \: \widetilde{H}\},
		\end{equation}
		where 
		\begin{equation}
			\widetilde{H} \eqdef \frac{\binom{n-s}{w} \binom{s}{\tbis}}{2^{k-\kbis}}, 	\qquad \delta \eqdef \frac{K_w^{(n-s)}\left(u\right) K_{\tbis}^{(s)}\left(t-u\right) }{\binom{n-s}{w} \binom{s}{\tbis}}.
		\end{equation}
	\end{itemize}
	\end{itemize}
\end{notation}
\subsection{Correctness of the algorithm}\label{subsec:Correc}
The goal of this section is to prove that  \nRLPN, namely Algorithm \ref{alg:codedRLPN}, outputs the desired error vector $\ev$ after essentially $\Niter \approx \frac{\binom{n}{t}}{\binom{s}{t-u} \binom{n-s}{u} }$ iterations of the outer loop (Line \eqref{lst:line:while} of Algorithm \ref{alg:codedRLPN}). This is given by the following proposition.
\begin{proposition} \label{prop:correctness}
	Let $\CC$ be a code taken uniformly at random among the $[n,k]$ linear codes and $\CCbis[(1)],\dots, \CCbis[(N_{\textup{aux}})]$ which are $N_{\textup{aux}}$ codes taken uniformly at random among the $[s,\kbis]$ linear codes.  Let $\yv \eqdef \cv + \ev$ where $\cv \in \CC$ and where $\ev \in \mathcal{S}_t^{n}$ is a fixed error vector of weight $t$.
	As long as the parameters $s,\kbis,\tbis,w,u$ verify the Parameters constraint \eqref{ass:parameters_doubleRLPN} and as long as $N_{\textup{iter}} = \omega \left(\frac{\binom{n}{t}}{\binom{s}{t-u} \binom{n-s}{u} }\right)$ and $N_{\textup{aux}} = \OO{1}$, Algorithm \ref{alg:codedRLPN} outputs the error vector $\ev$ with probability $1-o(1)$. \end{proposition}

It is readily seen that when $\Niter = \omega \left(\frac{\binom{n}{t}}{\binom{s}{t-u} \binom{n-s}{u} }\right)$ then, with probability $1-o(1)$ over the choice of $\sP$ there exists an iteration such that $|\ev_{\sN}|=u$. We only have left to show that for such an iteration we have with high probability that $\ev_{\sP} \left(\Gmbis[(j)]\right)^{\top} \in \sS^{(j)}$ for $j=1, \dots,\Naux$ which is the purpose of the following lemma.
\begin{lemma}
	Let us reuse the setting of Proposition \ref{prop:correctness}. Moreover, let us fix $\sP$ and $\sN$ two complementary sets of $\llbracket 1,n\rrbracket$ of size $s$ and $n-s$ respectively and such that $|\ev_{\sN} | = u$. Let us denote by  $\Gmbis[(1)] , \dots, \Gmbis[(N_{\textup{aux}})]$  the generators matrices of the codes $\CCbis[(1)] , \dots, \CCbis[(N_{\textup{aux}})]$ respectively. Then,\begin{equation} \label{eq:eq_correctness}
		\prob{}{\bigcap_{j=1}^{N_{\textup{aux}}} ``\ev_{\sP} \: \transp{\Gmbis[(j)]}\in \sS^{(j)}"} = 1-o(1).
	\end{equation}
\end{lemma}
\begin{proof}
	First, notice that
	\begin{align*}
		\prob{}{\bigcap_{j=1}^{\Naux}  ``\ev_{\sP} \: \transp{\Gmbis[(j)]}\in \sS^{(j)}"} &= 1 - \prob{}{\bigcup_{j=1}^{\Naux}  ``\ev_{\sP} \: \transp{\Gmbis[(j)]}\notin \sS^{(j)}"} \\
		&\geq 1 - \sum_{j=1}^{\Naux} \prob{}{``\ev_{\sP} \: \transp{\Gmbis[(j)]}\notin \sS^{(j)}"}
	\end{align*}
		where we used the union-bound. 
	 Now, as $\Naux = \OO{1}$ we only have to show that $\prob{}{``\ev_{\sP} \: \transp{\Gmbis[(j)]}\notin \sS^{(j)}"}=o(1)$ to prove Equation \eqref{eq:eq_correctness}. By using Fact  \ref{fact:fundamental1},
	\begin{equation*}
		"\ev_{\sP} \: \transp{\Gmbis[(j)]}\notin \sS^{(j)}" \Longleftrightarrow  \bias_{\left(\hv,\cvbis\right) \drawn \widetilde{\sH}} \left(\langle \yv, \hv \rangle  + \langle \ev_{\sP}, \cvbis \rangle\right)  < \frac{\delta}{2} \frac{\widetilde{H}}{\card{\widetilde{\sH}}}
\end{equation*}
	Our aim is to show,
	$$
	\prob{}{\bias_{\left(\hv,\cvbis\right) \drawn \widetilde{\sH}} \left(\langle \yv, \hv \rangle  + \langle \ev_{\sP}, \cvbis \rangle\right)  < \frac{\delta}{2} \frac{\widetilde{H}}{\card{\widetilde{\sH}}} } = o(1).
	$$ 
	To simplify notation let $b \eqdef \bias_{\left(\hv,\cvbis\right) \drawn \widetilde{\sH}} \left(\langle \yv, \hv \rangle  + \langle \ev_{\sP}, \cvbis \rangle\right)$.
It is readily seen that $\expect{\CC,\CCbis}{\card{\widetilde{\sH}}} = \widetilde{H} \left(1+o(1)\right)$ and $\mathbf{Var}_{\CC,\CCbis}\left( \card{\widetilde{\sH}} \right) \leq \widetilde{H}(1+o(1))$. Therefore, by using Bienaymé-Tchebychev inequality, for any $a$,
	$$
	\mathbb{P}\left( \left| \card{\widetilde{\sH}} - (1+o(1))\widetilde{H} \right| \geq \sqrt{a \widetilde{H}} \right) \leq \frac{1+o(1)}{a}
	$$
	We have the following computation,
	\begin{multline*}
			\prob{}{b < \frac{\delta}{2} \frac{\widetilde{H}}{\card{\widetilde{\sH}}}} = 	\mathbb{P}\left( b < \frac{\delta}{2} \frac{\widetilde{H}}{\card{\widetilde{\sH}}} \mid  \left| \card{\widetilde{\sH}} - (1+o(1))\widetilde{H}\right| \geq \sqrt{a\widetilde{H}} \right) 	\mathbb{P}\left( \left| \card{\widetilde{\sH}} - (1+o(1))\widetilde{H} \right| \geq \sqrt{a \widetilde{H}} \right) \\
			+ 	\mathbb{P}\left( b < \frac{\delta}{2} \frac{\widetilde{H}}{\card{\widetilde{\sH}}} \mid  \left| \card{\widetilde{\sH}} - (1+o(1))\widetilde{H}\right| < \sqrt{a\widetilde{H}} \right) 	\mathbb{P}\left( \left| \card{\widetilde{\sH}} - (1+o(1))\widetilde{H} \right| < \sqrt{a \widetilde{H}} \right)
	\end{multline*}
	Therefore,
	\begin{align*}
			\prob{}{b < \frac{\delta}{2} \frac{\widetilde{H}}{\card{\widetilde{\sH}}}} &\leq 	\mathbb{P}\left( \left| \card{\widetilde{\sH}} - (1+o(1))\widetilde{H} \right| \geq \sqrt{a \widetilde{H}} \right) + \mathbb{P}\left( b < \frac{\delta}{2} \frac{\widetilde{H}}{(1+o(1))\widetilde{H} - \sqrt{a\widetilde{H}} } \right) \\
			&\leq \frac{1+o(1)}{a} + \mathbb{P}\left( b < \frac{\delta}{2} \frac{1}{(1+o(1)) - \sqrt{\frac{a}{\widetilde{H}} }} \right)
	\end{align*}
	Let us choose $a = \widetilde{H}^{1/2}$. Recall that $\widetilde{H} = \om{\frac{n^{\alpha + 8}}{\delta^2}}$ where $\delta \leq 1$. Therefore, 
	\begin{align*} 
	\prob{}{b < \frac{\delta}{2} \frac{\widetilde{H}}{\card{\widetilde{\sH}}}}  &= o(1) + \mathbb{P}\left( b <  \frac{\delta}{2}(1+ o(1)) \right) \\
	&=o(1)
	\end{align*} 
	where in the last equality we used Proposition \ref{eq:cstPropo35}. It concludes the proof. 
\end{proof}
\subsection{Asymptotic complexity of \nRLPNs} \label{sec:app_final_comp}
We now have every tool to give the complexity of our algorithm, namely, we can compute the expected number of candidates at each iteration given by Proposition \ref{prop:exp_sizeS} and we have the correctness of our algorithm which is given by Proposition \ref{prop:correctness}.
\begin{proposition}{Asymptotic complexity exponent of the \nRLPNs algorithm.} \label{prop:asym_comp_doubleRLPN} Define
	\begin{align*}
	 &R \eqdef \lim\limits_{n \to \infty}\frac{k}{n}, \quad \tau \eqdef \lim\limits_{n \to \infty}\frac{t}{n},   \quad \sigma \eqdef  \lim\limits_{n \to \infty} \frac{s}{n}, \quad R_{\textup{aux}} \eqdef  \lim\limits_{n \to \infty} \frac{\kbis}{n}\\ &  \quad \tau_{\textup{aux}} \eqdef  \lim\limits_{n \to \infty} \frac{\tbis}{n}, \omega \eqdef  \lim\limits_{n \to \infty} \frac{w}{n}, \quad \mu \eqdef  \lim\limits_{n \to \infty} \frac{u}{n}
	\end{align*}
	Suppose that de $\Call{Decode}{}$ procedure has an expected time complexity of $2^{n \cdot \sigma \cdot o(1)}$.
	\noindent The expected complexity of the \nRLPNs algorithm to decode a code of rate $R$ at relative distance $\tau$ is upper bounded by
		$2^{n \: \left(\alpha_{\textup{\nRLPNs}} + o(1)\right)}$
where
\begin{eqnarray*}
\alpha_{\textup{\nRLPNs}}&\leq & - \pi + \max\left( \left(1-\sigma\right) \: \cdot \:\alpha_{\textup{eq}}\left(\frac{R-\sigma}{1-\sigma}, \frac{\omega}{1-\sigma}\right) , \; \nu_{\textup{sample}}, \; R_{\textup{aux}}, \; \nu_{\textup{candidate}} \: \cdot \:  N_{\textup{aux}} +  \alpha_{\textup{ISD}} \right)
	\end{eqnarray*}
	where
	\begin{eqnarray*}
	\pi & \eqdef & h \left(\tau \right) - \sigma  \: \cdot \: h_2 \left(\frac{\tau - \mu }{\sigma}\right) -  (1- \sigma)  \: \cdot \: h \left(\frac{\mu }{1-\sigma}\right), \\
	\nu_{\textup{samples}} &\eqdef& \left(1 - \sigma\right)  \: \cdot \: h_2 \left(\frac{\omega}{1-\sigma}\right)  + \sigma  \; \cdot \; h_2 \left(\frac{\tau_{\textup{aux}}}{\sigma}\right)- (R - R_{\textup{aux}}), \\
	\alpha_{\textup{ISD}} & \eqdef & \max\left(\sigma \: \cdot \: \alpha_{\textup{ISD-Dumer}}\left( 1 - \frac{N_{\textup{aux}} \cdot \tau_{\textup{aux}} }{\sigma}, \frac{\tau - \mu}{\sigma}\right), \nu_{\textup{ISD}} + \left(1- \sigma\right)\; \cdot \; \alpha_{\textup{ISD-Dumer}}\left(\frac{R-\sigma}{1- \sigma}, \frac{\mu}{1-\sigma}\right) \right), \\
	 \nu_{\textup{ISD}} &\eqdef& \max\left(\sigma  \: \cdot \: h_2 \left(\frac{\tau - \mu }{\sigma}\right) -N_{\textup{aux}} \: \cdot \: \tau_{\textup{aux}},\: 0  \right)\\
	 		\nu_{\textup{candidates}} &\eqdef& \max \left(\max_{(\eta,\zeta) \in \mathcal{A}} \sigma \: \cdot \: h_2\left(\frac{\zeta}{\sigma}\right) + (1-\sigma) \: \cdot \:  h_2 \left(\frac{\eta}{1-\sigma}\right) - (1-R), \; 0\right) ,
\end{eqnarray*}
with
\begin{multline*}
	\mathcal{A}  \eqdef \Bigg\{ (\eta,\zeta) \in \left[0, 1-\sigma \right]\times \left[ 0,\sigma \right] \: : \sigma \left[\widetilde{\kappa} \left(\frac{\tbis}{\sigma},\frac{\tau-\mu}{\sigma}\right)  - \widetilde{\kappa} \left(\frac{\tbis}{\sigma},\frac{\zeta}{\sigma}\right) \right] +  \\   (1-\sigma) \left[\widetilde{\kappa} \left(\frac{\omega}{1-\sigma},\frac{\mu}{1-\sigma}\right)  - \widetilde{\kappa} \left(\frac{\omega}{1-\sigma},\frac{\eta}{1-\sigma}\right)  \right] \leq 0\Bigg\},
\end{multline*}
and
\begin{itemize}
	\item $\alpha_{\textup{eq}}(R',\tau')$ is the complexity exponent of $\Call{ParityCheckEquations}{}$ to compute all parity-checks of relative weight $\tau'$ of a code of rate $R'$. It is instantiated here with a technique devised in \cite[\S 5 Eq. (5.4)]{CDMT22} and its complexity is recalled in Proposition \ref{prop:asym_BJMM}.
	\item  $\alpha_{\textup{ISD-Dumer}}(R',\tau')$ is the complexity exponent of $\Call{Decode-Dumer}{}$ to return all the solutions to the decoding problem in a code of rate $R'$ at relative distance $\tau'$. Its complexity is recalled in Proposition \ref{prop:comp_ISD_decoder_dumer_asym}.
\end{itemize}
Moreover, $\sigma, \: R_{\textup{aux}}, \: \tau_{\textup{aux}}, \: \omega, \: \mu,$ are non-negative and such that
$$
\sigma \leq R, \quad \tau- \sigma \leq \mu\leq \tau, \quad \omega \leq 1 - \sigma,
$$
\begin{align}
\nu_{\textup{samples}} &\geq -2 \: \varepsilon_{\textup{bias}} , \\
0 &\geq (1-\sigma) \:  h_2 \left(\frac{\omega}{1-\sigma}\right) + \sigma h_2 \left(\frac{\tau_{\textup{aux}}}{\sigma}\right)  - R \\
0 &\geq \sigma \:  h_2 \left(\frac{\tau_{\textup{aux}}} {\sigma}\right) - (\sigma - R_{\textup{aux}})
\end{align}
where $\widetilde{\kappa}$ is the function defined in Proposition \ref{prop:expansion} and $$\varepsilon_{\textup{bias}} \eqdef  \: \sigma \: \left[ \widetilde{\kappa} \left(\frac{\tbis}{\sigma},\frac{\tau-\mu}{\sigma}\right) - h_2 \left(\frac{\tau_{\textup{aux}}}{\sigma}\right)\right] + \: (1-\sigma) \: \left[\widetilde{\kappa} \left(\frac{\omega}{1-\sigma},\frac{\mu}{1-\sigma}\right) - h_2 \left(\frac{\omega}{1-\sigma}\right)\right].$$ 
Finally, we require that $N_{\textup{aux}}=\OO{1}$.
\end{proposition}
\begin{remark}
	In practice our parameters are such that we decode the auxiliary code at Gilbert-Varshamov distance, namely $\tau_{\textup{aux}} = \sigma \: h_2^{-1} \left(1-R_{\textup{aux}}\right)$.
\end{remark}
While initially Dumer's decoder \cite{D91} is designed to produce only one solution to the decoding problem it suffices to re-run it as many time as the number of solutions we expect from the decoding problem to find all of them. We get the following proposition giving the asymptotic complexity of the $\Call{Decode-Dumer}{}$ procedure.
\begin{proposition}[Asymptotic time complexity of ISD Decoder \cite{D91} to produce \textbf{all} solutions to the decoding problem] \label{prop:comp_ISD_decoder_dumer_asym}
	Let
	$ R \eqdef \lim\limits_{n \to \infty}\frac{k}{n}, \tau \eqdef \lim\limits_{n \to \infty}\frac{t}{n}$. Let $\ell$ and $w$ be two (implicit) parameters of the algorithm and define $ \lambda \eqdef \lim\limits_{n \to \infty}\frac{\ell}{n}, \omega \eqdef \lim\limits_{n \to \infty}\frac{w}{n}$.
	The time and space complexities of \cite{D91} to find a proportion $1-o(1)$ of all solutions to the decoding problem at distance $t$ on an $[n,k]$ linear code are given by $2^{n \: \left(\alpha_{\textup{ISD-Dumer}} + o(1)\right)}$ and $2^{n \: \left(\beta_{\textup{ISD-Dumer}}+ o(1)\right)}$ respectively where
	\begin{align}
		\alpha_{\textup{ISD-Dumer}} &\eqdef \min_{\omega,\lambda} \left( \pi +  \max\left(\frac{R + \lambda}{2}h_2 \left(\frac{\omega}{ R+\lambda}\right) , \left(R + \lambda\right)h_2 \left(\frac{\omega}{ R+\lambda}\right) - \lambda\right) \right), \\
		\pi &\eqdef h_2(\tau) - (1- R-\lambda) h_2 \left(\frac{\tau - \omega}{1- R-\lambda}\right)-  (R + \lambda) h_2 \left(\frac{\omega}{R+\lambda}\right), \\
		\nu_{\textup{sol}} &\eqdef \max \left(h_2\left(\tau\right) - (1-R), 0 \right), \\
			\beta_{\textup{ISD-Dumer}} &\eqdef \frac{R + \lambda}{2}h_2 \left(\frac{\omega}{R+\lambda}\right).
	\end{align}
	Moreover $\lambda$ and $\omega$ must verify the following constraints:
	$$ 0 \leq \lambda \leq 1- R, \qquad \max \left(R + \lambda + \tau - 1, 0\right)\leq \omega \leq \min \left(\tau,R + \lambda \right).$$
	The expected number of solutions is given by $2^{n \: \left(\nu_{\textup{sol}} + o(1)\right)}$.
\end{proposition}
We recall here the asymptotic complexity of the technique devised in \cite[\S 5, Equation (5.4)]{CDMT22} based on \cite{BJMM12} decoder to produce all parity-checks of low weight of a code.
\begin{proposition}{Asymptotic time complexity exponent of BJMM technique \cite{BJMM12}, \cite[\S 5, Equation (5.4)]{CDMT22} to produce all parity-checks of relative weight $\omega$ of a code of rate $R$} \label{prop:asym_BJMM}
	\begin{equation}
		\alpha_{\text{BJMM}}(R,\omega) \eqdef \min_{\pi_1,\pi_2, \lambda_1,\lambda_2} \gamma
	\end{equation}
	where
	\begin{align*}
		&\gamma \eqdef \max(\gamma_1,\gamma_2,\gamma_3) \\
			&\gamma_1 \eqdef \max\left(\nu_0, \: 2 \: \nu_0 - \lambda_1 \right), \quad \gamma_2\eqdef  \max\left(\nu_1, \: 2 \: \nu_1 - \left( \lambda_2 - \lambda_1\right) \right), \quad	\gamma_3 \eqdef  \max\left(\nu_2, \: 2 \: \nu_2 - \left( \lambda - \lambda_2\right) \right) \\
		&\nu_0 \eqdef \frac{h(\pi_1)}{2}, \quad \nu_1 \eqdef h(\pi_1) - \lambda_1, \quad\nu_2 \eqdef h(\pi_2) - \lambda_2, \quad \nu_3 \eqdef h(\omega) - \lambda.
	\end{align*}

	and the constraint region $\sR$ is defined by the sub-region of nonnegative tuples $(\pi_1,\pi_2,\lambda_1,\lambda_2)$ such that
	$$
	\lambda_1 \leq \lambda_2 \leq \lambda, \quad \pi_1 \leq \pi_2 \leq \pi, \quad
	\pi_2 \leq 2 \pi_1, \quad \pi \leq 2\pi_2, \quad 
	\pi_2 \leq \lambda_1, \quad \pi \leq \lambda_2, 
	$$
	and
	\begin{eqnarray}
		\lambda_1 & = &  \pi_2 + (1 - \pi_2)h\left( \frac{\pi_1-\pi_2/2}{1-\pi_2}\right),\\
		\lambda_2 & = &  \omega + (1 - \omega)h\left( \frac{\pi_2-\omega/2}{1-\omega}\right)
	\end{eqnarray}
		The expected number of parity-checks computed is given by $2^{n \: \left(\nu_3 + o(1) \right)}$
\end{proposition}

 \section{Proof of Proposition \ref{prop:exp_bias}} \label{app:proofBias}

Let us recall Proposition \ref{prop:exp_bias}.
\propDualityBias*
Let us devise a more convenient expression for $\langle \yv, \hv \rangle  + \langle \xv, \cvbis \rangle$. By noticing that $\hv_{\sP}$ and $\hv_{\sN}$ are linearly linked we get the following lemma.
\begin{lemma} \label{lem:expr_error_doubleRLPN}
	Let $\sP$ and $\sN$ be two complementary subsets of $\llbracket 1 , n \rrbracket$ of size $s$ and $n-s$ respectively. Let $\CC$ and $\CCbis$ be two $[n,k]$ linear and $[s,\kbis]$ linear codes respectively such that $\CC_{\sP}$ is of dimension $s$. Let $\xv \in \F_2^s$ and $\left(\hv,\cvbis\right) \in \widetilde{\sH}$ where recall that 
	$$ 	\widetilde{\sH} \eqdef \{ \left(\hv,\cvbis\right) \in \CC^{\perp} \times \CCbis \: : |\hv_{\sN}|=w \mbox{ and } |\hv_{\sP} + \cvbisperp|=\tbis \}. $$
	We have that \begin{equation} \label{rel:noise_doubleRLPN}
		\langle \yv, \hv \rangle + \langle \xv, \cvbis \rangle = \langle \left(\xv + \ev_{\sP} \right) \Rm + \ev_{\sN}, \hv_{\sN}\rangle + \langle \xv , \hv_{\sP} + \cvbis \rangle
	\end{equation} 
	where $\Rm \in \F_2^{s \times (n-s)}$ is (independently of the parity-check $\vec{h}$) such that
	\begin{equation}\label{rel:parity}
		\hv_{\sP}= \hv_{\sN} \Rm^{\transpose}.
	\end{equation}
\end{lemma}
\begin{proof}
	First, let us show Equation \eqref{rel:parity}. Suppose without loss of generality that $\sP = \llbracket 1,s \rrbracket$ and $\sN = \llbracket s+1,n \rrbracket$. Let $\Gm \in \F_2^{k \times n}$ be a generator matrix of $\CC$. Because $\CC_{\sP}$ is of dimension $s$ there exists an invertible $\Jm \in \F_2^{k \times k}$ such that
	$$\Jm \Gm =  \begin{pmatrix}
		\vec{Id}_{s} & \Rm \\
		\vec{0}_{k-s} & \Rm'
	\end{pmatrix} $$
	where $\Rm \in \F_2^{s \times (n-s)}$ and $\Rm' \in \F_2^{(k-s)\times (n-s)}$. Furthermore, $\Jm \Gm$ is another generator matrix for $\CC$. Therefore for any $\hv \in \CC^{\perp}$ we have $\Jm \Gm \: \transp{\hv} = \vec{0}$. Since $\Jm \Gm \transp{\hv}= \transp{\hv_{\sP}}+\Rm \transp{\hv_{\sN}}$, this gives \eqref{rel:parity}.
	Now, let us prove \eqref{rel:noise_doubleRLPN}.
	Recall that, using Equation \eqref{rel:parity} we have:
	\begin{align*}
		\langle \yv, \hv\rangle &= \langle \ev_{\sP}, \hv_{\sP}\rangle + \langle \ev_{\sN}, \hv_{\sN}\rangle \\
		&= \langle \ev_{\sP}, \hv_{\sN} \Rm^{\top} \rangle + \langle \ev_{\sN}, \hv_{\sN}\rangle  \\
		&= \langle \ev_{\sP} \Rm + \ev_{\sN}, \hv_{\sN}\rangle,
	\end{align*}
	and
	\begin{align*}
		\langle \xv, \cvbis \rangle &=  \langle \xv, \hv_{\sP}\rangle +  \langle \xv,  \hv_{\sP} + \cvbis \rangle \\
		&= \langle \xv \Rm , \hv_{\sN} \rangle + \langle \xv ,  \hv_{\sP} + \cvbis \rangle
	\end{align*}
	where in the last equality we used Equation \eqref{rel:parity}. 
	This concludes the proof.
\end{proof}

\begin{proof}[Proof of Proposition \ref{prop:exp_bias}]
	Let us consider $\Rm \in \F_2^{s \times (n-s)}$ as in Lemma \ref{lem:expr_error_doubleRLPN} and let us prove Equation \eqref{eq:prop_bias_doubleRLPN}.
	By definition of the bias and $\widetilde{\sH}$ given Equation \eqref{rel:parity}  we have the following computation and ,
	\begin{align*}
	\bias_{\left(\hv,\cvbis \right)) \drawn \widetilde{\sH}}&   \left(	\langle \yv, \hv \rangle + \langle \xv, \cvbis \rangle \right) \\
	&= \frac{1}{\card{\widetilde{\sH}}} \sum_{ \left( \hv, \cvbis \right) \in \widetilde{\sH}} (-1)^{\langle \yv, \hv \rangle + \langle \xv, \cvbis \rangle} \\
	&=  \frac{1}{\card{\widetilde{\sH}}} \sum_{ \substack{ \left( \hv_{\sN}, \cvbis\right) \in \left(\CC^{\perp}\right)_{\sN} \times \CCbis  \\   \left|\hv_{\sN}\right| = w ,\left| \Rm \hv_{\sN} + \cvbis \right| = \tbis}} (-1)^{\langle \left(\xv + \ev_{\sP} \right) \Rm + \ev_{\sN}, \hv_{\sN}\rangle + \langle \xv , \hv_{\sN} \transp{\Rm} + \cvbis \rangle} 
	\end{align*}
	where in the last equality we used Lemma \ref{lem:expr_error_doubleRLPN}. Therefore, 
	\begin{equation}\label{eq:bias} 
		\bias_{\left(\hv,\cvbis \right)) \drawn \widetilde{\sH}} = \frac{1}{\card{\widetilde{\sH}}} \sum_{ \substack{ \left( \hv_{\sN}, \cvbis\right) \in \left(\CC^{\perp}\right)_{\sN} \times \CCbis  } } f(\hv_{\sN}, \cvbis) 
	\end{equation}
	where,
	$$
	f(\hv_{\sN}, \cvbis) \eqdef   (-1)^{\langle \left(\xv + \ev_{\sP} \right) \Rm + \ev_{\sN}, \hv_{\sN} \rangle + \langle \xv ,  \hv_{\sN} \transp{\Rm} + \cvbis \rangle} \mathbf{1}_{ \{\left|\hv_{\sN}\right| = w ,\left| \hv_{\sN} \transp{\Rm} + \cvbis\right| = \tbis \}}.
	$$
Using Equation \ref{eq:shortPunct} we have that $\left(\CC^{\perp}\right)_{\sN} =  \left(\CC^{\sN}\right)^{\perp}$ and thus $\left( \left(\CC^{\sN}\right)^{\perp} \times \CCbis \right)^{\perp} =\CC^{\sN} \times \CCbisperp $. By using the Poisson formula
	(see \cite[Lemma 2, Ch. 5.2]{MS86}), together with the fact that $\dim\left(\CC^{\sN} \times \CCbisperp \right) = k-\kbis$, 
	we get
	\begin{align}\label{eq:Poisson} 
		\sum_{ \substack{ \left( \hv_{\sN}, \cvbis\right) \in \left(\CC^{\sN} \right)^{\perp} \times \CCbis  } } f(\hv_{\sN}, \cvbis) &= \frac{1}{2^{k-\kbis}}\sum_{ \substack{ \left( \cv^{\sN}, \cvbisperp\right) \in \CC^{\sN} \times \CCbisperp  } } \widehat{f}(\cv^{\sN}, \cvbisperp).
	\end{align}
	Let us compute the right-hand term. By definition of $f$, it is readily seen that 
	\begin{align*} 
	\widehat{f}(\vec{y}_{1},\vec{y}_{2}) &=\sum_{\substack{\vec{z}_{1}\in \F_{2}^{n-1},\vec{z}_{2}\in \F_{2}^{s} \\ |\vec{z}_{1}| = w, |\vec{z}_{1}\vec{R}^{\top}+\vec{z}_{2}| = \tbis } } (-1)^{\langle \vec{y}_{1},\vec{z}_{1} \rangle + \langle \vec{y}_{2},\vec{z}_{2}\rangle} (-1)^{\langle \left(\xv + \ev_{\sP} \right) \Rm + \ev_{\sN}, \vec{z}_{1} \rangle + \langle \xv ,  \vec{z}_{1} \transp{\Rm} + \vec{z}_{2} \rangle} \\ 
	&= \sum_{\vec{z}_{1}\in \F_{2}^{n-s}:|\vec{z}_{1}|=w}  (-1)^{\langle \vec{y}_{1}+\left(\xv + \ev_{\sP}+\vec{y}_{2} \right) \Rm + \ev_{\sN}, \vec{z}_{1} \rangle} \sum_{\substack{\vec{z}_{2}\in \F_{2}^{s}: |\vec{z}_{1}\transp{\vec{R}}+\vec{z}_{2}}| = \tbis } (-1)^{\langle\vec{y}_{2}+\vec{x},\vec{z}_{1}\transp{\vec{R}} +\vec{z}_{2} \rangle} \\
	&= K^{(n-s)}_{w}\left(\left| \vec{y}_{1} + (\vec{y}_{2}+\vec{x}+\vec{e}_{\sP})\vec{R}+\vec{e}_{\sN} \right| \right) \; K^{(s)}_{\tbis}\left( \left|\vec{y}_{2}+\vec{x} \right| \right)
	\end{align*} 
	where in the last equality we used Fact \ref{fact:krawtchouk_bias}. Plugging this into Equation \eqref{eq:Poisson}  and then into Equation \eqref{eq:bias} concludes the proof. 
\end{proof} \section{Proof of Proposition \ref{prop:exp_sizeS}} \label{app:appendix_sec_tools}
The proof of this Appendix is to prove Proposition \ref{prop:exp_sizeS} which we recall is given by
\propExpSizeS*
The proof is divided in the following steps.
\newline

{\bf \noindent Step 1:} in Lemma \ref{lem:link_S_bias} we show that the expected size of $\sS$ is related to the probability that the bias of $\langle \yv, \hv \rangle  + \langle \xv, \hv_{\sP} \rangle$ is superior to the threshold $\approx \frac{\delta}{2}$.
\newline

{\bf \noindent Step 2:}  We give an exponential bound on the aforementioned probability by using Poisson summation formula as it was done in the proof of Proposition \ref{prop:exp_bias}.

\subsubsection*{\textbf{Step 1}} \label{sec:link_S_bias}
Recall that we have from Equation $\ref{eq:prop_def_S}$ that 
$$ 	
\sS \eqdef \left\{\sv \in \F_2^\kbis \: : \widehat{f_{\yv,\widetilde{\sH},\Gmbis}}\left(\sv\right) \geq  \frac{\delta}{2} \: \widetilde{H}\right\}. 
$$
By using Lemma \ref{lem:fundamental} we get the following condition for an element $\sv$ to be a candidate:
\begin{fact} \label{fact:fundamental1} 	 Let $\sv \in \F_2^{\kbis}$ and $\xv \in \F_2^{s}$ such that $\xv \Gmbis[\transpose] = \sv$. We have,
$$
	\sv \in \sS \; \Longleftrightarrow  \bias_{\left(\hv,\cvbis\right) \drawn \widetilde{\sH}}\left(\langle \yv,\hv \rangle + \langle \xv,\cvbis \rangle\right)  \geq \frac{\delta}{2} \; \frac{\widetilde{H}}{\card{\widetilde{\sH}}}.
	$$
\end{fact}
From there, we can derive the following lemma linking the expected size of $\sS$ and the previous bias. 
\begin{lemma} \label{lem:link_S_bias}
	Under Distribution \ref{notation:randomCodes}, using notation of Proposition \ref{prop:exp_sizeS} and under the constraint of Proposition \ref{prop:exp_sizeS} that $\sH = \widetilde{\sH}$ we have
	$$ 
	\mathbb{E}_{\CC,\CCbis}\; \left(\card{\sS}\right) \leq 2^{\kbis} \prob{\CC,\CCbis,\xv}{\bias_{(\hv,\cvbis)\drawn \widetilde{\sH}}(\langle \yv, \hv \rangle  + \langle \xv, \cvbis \rangle) \geq \frac{\delta}{2} \; \frac{\widetilde{H}}{\card{\widetilde{\sH}}} } + 1$$
	where $\xv$ is taken uniformly at random in $\F_2^s \setminus \{\CCbisperp + \ev_{\sP}\}$.
\end{lemma}
\begin{proof} We have the following computation, 
	\begin{align*} 
		\expect{\CC,\CCbis}{\card{\sS}}&= \expect{}{\sum_{\sv \in \F_2^{\kbis}} \mathbf{1}_{\sv \in \sS}} \\
		&\leq1 + \expect{}{\sum_{\sv \in \F_2^{\kbis} \: : \sv \neq \ev_{\sP} \Gmbis[\transpose]} \mathbf{1}_{\sv \in \sS}} \\
		&=1 + \expect{}{\sum_{\sv \in \F_2^{\kbis} \: : \sv \neq \ev_{\sP} \Gmbis[\transpose]} \frac{1}{2^{s-\kbis}} \sum_{ \zv \in \F_2^s \: : \zv \: \Gmbis[\transpose] = \sv}\mathbf{1}_{\zv \Gmbis[\transpose] \in \sS}}
	\end{align*} 
where in the last equality we used that $\Gmbis$, which is a generator matrix of $\CCbis$, has rank $s-\kbis$ according to Distribution \ref{notation:randomCodes}.
Now, from the linearity of the expectation we get,
	\begin{align*}
		\expect{\CC,\CCbis}{\card{\sS}}&\leq  1 + \frac{1}{2^{s-\kbis}} \sum_{\zv \in \F_2^s \: : \zv \notin \CCbisperp + \ev_{\sP}} \prob{\CC,\CCbis}{\zv \Gmbis[\transpose] \in \sS} \\
&= 1 + 2^{\kbis} \:  \prob{\CC,\CCbis,\xv}{\xv \Gmbis[\transpose] \in \sS} 
	\end{align*}
	where in the last line we used that $\Gmbis$ has full rank, $\card{\CCbisperp + \ev_{\sP}} = 2^{s}$ and $\xv$ is taken uniformly at random in $\F_2^s \setminus \{\CCbisperp + \ev_{\sP}\}$. Using Fact \ref{fact:fundamental1} concludes the proof.
\end{proof}
\subsubsection*{Step 2}\label{sec:distrib_bias} The following lemma relates the upper-bound given in Lemma \ref{lem:link_S_bias} to the involved probability in Conjecture \ref{ass:bias_dom}. 
\begin{lemma} \label{lem:simpl_bias}
		Using Distribution \ref{notation:randomCodes} and notation of Proposition \ref{prop:exp_sizeS} we have
		\begin{multline*}
\prob{\CC,\CCbis,\xv}{\bias_{\left(\hv,\cvbis \right) \drawn \widetilde{\sH}} \left(\langle \yv, \hv \rangle  + \langle \xv, \cvbis \rangle\right) \geq \frac{\delta}{2} \frac{\widetilde{H}}{\card{\widetilde{\sH}}}  }   = \\ \prob{\CC,\CCbis,\xv}{ \sum_{j = 0}^{s} \sum_{i = 0}^{n-s} K_{\tbis}^{(s)} \left(j\right) K_w^{(n-s)} \left(i\right) N_{i,j} \geq \frac{1}{2} \: K_w^{(n-s)}\left(u\right) K_{\tbis}^{(s)}\left(t-u\right) }.
		\end{multline*}
		where $\xv$ is taken uniformly at random in $\F_2^s \setminus \{ \CCbisperp + \ev_{\sP}\}$.
	\end{lemma}
	\begin{proof}
		Recall that we have that from notation of Proposition \ref{prop:exp_sizeS},
		\begin{equation} \label{lem:simpl_bias_bias_def}
			\widetilde{H} \eqdef \frac{\binom{n-s}{w} \binom{s}{\tbis}}{2^{k-\kbis}}, \qquad \delta \eqdef \frac{K_w^{(n-s)}(u) K_{\tbis}^{(s)}(t-u) }{\binom{n-s}{w} \binom{s}{\tbis}}.
		\end{equation}
		According to Proposition \ref{prop:exp_bias} we have the following computation,
		\begin{align*}
			\mathbb{P}&\left( \bias_{\left(\hv,\cvbis \right) \drawn \widetilde{\sH}} \left(\langle \yv, \hv \rangle  + \langle \xv, \cvbis \rangle\right) \geq \frac{\delta}{2} \frac{\widetilde{H}}{\card{\widetilde{\sH}}} \right) \\ 
			&= \prob{}{\frac{1}{2^{k-\kbis}} \frac{1}{\card{\widetilde{\sH}}} \sum_{j = 0}^{s} \sum_{i = 0}^{n-s} K_{\tbis}^{(s)} \left(j\right) K_w^{n-s} \left(i\right) N_{i,j} \geq \frac{\delta}{2} \frac{\widetilde{H}}{\card{\widetilde{\sH}}}  } \\
			&=   \prob{}{\sum_{j = 0}^{s} \sum_{i = 0}^{n-s} K_{\tbis}^{(s)} \left(j\right) K_w^{n-s} \left(i\right) N_{i,j} \geq \frac{K_w^{(n-s)}\left(u\right) K_{\tbis}^{(s)}\left(t-u\right)}{2}}
		\end{align*}
		where in the last equality we used Equation \eqref{lem:simpl_bias_bias_def}. It concludes the proof. 
	\end{proof}
We will now use Conjecture \ref{ass:bias_dom} to bound the right-hand term of Lemma \ref{lem:simpl_bias} by the probability of the  event ``$N_{i,j} \neq 0$'' for some low $i$ and $j$ (more precisely when $K_w^{(n-s)}(u) K_{\tbis}^{(s)}(t-u) \leq K_w^{(n-s)}(i) K_{\tbis}^{(s)}(j)$). 
\begin{restatable}{lemma}{lemprobNij}\label{lem:probNij}
		Under Distribution \ref{notation:randomCodes} we have that for $(i,j) \in \llbracket 0, n-s  \rrbracket \times \llbracket 0, s\rrbracket$,
		$$ 
		\prob{\CC,\CCbis,\xv}{N_{i,j} \neq 0} = \OO{ \frac{\binom{s}{j} \binom{n-s}{i}}{2^{\kbis + n-k}} }
		$$
		where $\xv$ is taken uniformly at random in $\F_2^s \setminus \{ \CCbisperp + \ev_{\sP}\}$.
	\end{restatable}
	\begin{proof}
		This is proved in the first lemma of Appendix \ref{app:prop_Nij}.
	\end{proof}
	We are now ready to prove Proposition \ref{prop:exp_sizeS}.
	\begin{proof}[proof of Proposition \ref{prop:exp_sizeS}]
		Recall that we want to show, 
		$$  \mathbb{E}_{\CC,\CCbis} \left(\card{\sS} \right)  = \OOt{ \max_{(i,j) \in \mathcal{A}} \frac{\binom{s}{j} \binom{n-s}{i}}{2^{n-k}}} + 1. 
			$$
		Lemma \ref{lem:link_S_bias} gives us that 
		\begin{equation}\label{eq:expBia} 
			\mathbb{E}_{\CC,\CCbis} \left(\card{\sS }\right) \leq 2^{\kbis} \prob{\CC,\CCbis,\xv}{\bias_{(\hv,\cvbis)\drawn \widetilde{\sH}}(\langle \yv, \hv \rangle  + \langle \xv, \cvbis \rangle) \geq \frac{\delta}{2} \; \frac{\widetilde{H}}{\card{\widetilde{\sH}}} } + 1
		\end{equation} 
		Now using Lemma \ref{lem:simpl_bias} we get that
		\begin{multline}\label{eq:biasConj}
			\prob{}{\bias_{\left(\hv,\cvbis \right) \drawn \widetilde{\sH}} \left(\langle \yv, \hv \rangle  + \langle \xv, \cvbis \rangle\right) \geq \frac{\delta}{2} \; \frac{\widetilde{H}}{\card{\widetilde{\sH}}}  }  = \\ \prob{}{ \sum_{j = 0}^{s} \sum_{i = 0}^{n-s} K_{\tbis}^{(s)} \left(j\right) K_w^{(n-s)} \left(i\right) N_{i,j} \geq \frac{1}{2} \: K_w^{(n-s)}\left(u\right) K_{\tbis}^{(s)}\left(t-u\right) }.
		\end{multline}
		From Conjecture \ref{ass:bias_dom},
		\begin{align*}
			\prob{}{ \sum_{j = 0}^{s} \sum_{i = 0}^{n-s} K_{\tbis}^{(s)} \left(j\right) K_w^{(n-s)} \left(i\right) N_{i,j} \geq \frac{1}{2} \: K_w^{(n-s)}\left(u\right) K_{\tbis}^{(s)}\left(t-u\right) } = \OOt{ \max_{(i,j) \in \mathcal{A}}\prob{\CC,\CCbis,\xv}{N_{i,j} \neq 0} }
		\end{align*}
		Therefore, using Lemma \ref{lem:probNij} we get 
		$$
		\prob{}{ \sum_{j = 0}^{s} \sum_{i = 0}^{n-s} K_{\tbis}^{(s)} \left(j\right) K_w^{(n-s)} \left(i\right) N_{i,j} \geq \frac{1}{2} \: K_w^{(n-s)}\left(u\right) K_{\tbis}^{(s)}\left(t-u\right) } = \OOt{ \frac{\binom{s}{j} \binom{n-s}{i}}{2^{\kbis + n-k}} }
		$$
		Plugging this into Equation \eqref{eq:biasConj} and then in \eqref{eq:expBia} concludes the proof. 
		which proves our result.
	\end{proof}
 \section{About the distribution of the $N_{i,j}$}\label{app:prop_Nij}
This appendix is dedicated to studying the distribution of $N_{i,j}$ and in particular to prove Lemma~\ref{lem:probNij} which was used in Appendix \ref{app:proofBias} to prove Proposition \ref{prop:exp_bias}. We recall that it is given by
 \lemprobNij*
 First, using the union bound we can devise the following upper bound on our target probability:
\begin{fact} \label{fact:Nij_toexpect}
	$$
	\prob{}{N_{i,j} \neq 0}\leq \mathbb{E}\left(N_{i,j}\right).
	$$
\end{fact}
To compute this expected value, we first need to give the following lemma giving the probability that a word belongs to a random code.
\begin{lemma} \label{lem:probaRandom}
	Let $\CC$ be chosen uniformly at random among the $[n,k]$ linear codes. Let $\cv \in \F_2^n \setminus \{ \vec{0}\}$ we have
	\begin{align}
		\prob{\CC}{\cv \in \CC} &= \frac{2^k - 1}{2^{n} - 1}, \\
		\prob{\CC}{\vec{0} \in \CC} &= 1.
	\end{align}
\end{lemma}
\begin{proof}
	This lemma directly follows from \cite[\S 3, Lemma 9.3.2, (iii)]{BCN89}.
\end{proof}
We now give the preliminary lemma which breaks down the expected value of $N_{i,j}$ on the $\CCbisperp$ part and on the $\CC^{\sN}$ part.
\begin{restatable}{lemma}{lemmaNbar} \label{lem:Nbar}
Under Distribution \ref{notation:randomCodes} we denote by
\begin{align*}
	\overline{N_j^{(\CCbisperp)} }&\eqdef \expect{\CC,\CCbis,\xv} { N_j \left(\CCbisperp + \xv\right)}
\end{align*}
where $\xv$ is taken uniformly at random in $\F_2^{s} \setminus \{\CCbisperp + \ev_{\sP}\}$ and $\vec{r} \in \CCbisperp + \vec{x}$. We denote by fixing~$\vec{x}$,
\begin{align}
	\overline{N_i^{(\CC^{\sN})} }&\eqdef \expect{\CC} {N_i \left( \CC^{\sN} + \left(\rv+\ev_{\sP}\right)\Rm + \ev_{\sN} \right) }
\end{align}
We have that
	\begin{align}
\overline{N_i^{(\CC^{\sN})}} &=\frac{\binom{n-s}{i}}{2^{n-k}} \left(1 + \OO{2^{- (s-k)}} \right) \label{eq:expected_ni_bar}\\
	 \overline{N_j^{(\CCbisperp)}}  &\leq  	 \frac{\binom{s}{j}}{2^{\kbis}} \left(1 + \OO{ 2^{-  \left(s-\kbis\right)}}\right) \label{eq:expected_nj_bar}
\end{align}
Furthermore, the term $\OO{2^{- (s-\kbis)}}$ in Equation \eqref{eq:expected_ni_bar} does not depend on $i$·
\end{restatable}
\begin{proof}

	Under Distribution \ref{notation:randomCodes}, $\CC$ is taken at random among the $[n,k]$-codes that are such that $\CC_{\sP}$ is of full rank dimension $s$. Therefore, it is the same as if $\CC$ was chosen by taking its generator matrix $\Gm \in \F_2^{k \times n}$ as follows: 
	
$$ 
				\Gm_{\sP} \eqdef \begin{pmatrix}
						\Id_{s} \\
						\vec{0}_{k-s}
					\end{pmatrix}, \qquad \Gm_{\sN} \eqdef \begin{pmatrix}
					 \Rm \\
					 \Gm^{\sN}
					\end{pmatrix}. 
					$$
				where $\Rm$ is chosen uniformly at random among matrices of $\F_2^{s \times (n-s)}$ and $\Gm^{\sN}$ is any generator matrix of a code chosen uniformly at random among the $[n-s,k-s]$-codes. In particular, $\Gm^{\sN}$ and $\vec{R}$ are independent.

		 		Now, by definition $\xv\in \F_2^{s} \setminus \{\CCbisperp + \ev_{\sP}\}$ and $\vec{r} \in \CCbisperp + \vec{x}$, therefore $\rv + \ev_{\sP} \neq \vec{0}$. We deduce that, $ \left(\rv+\ev_{\sP}\right)\Rm$ is uniformly distributed in $\F_2^{n-s}$ as a non-zero sum of uniformly distribution vectors.
To simplify the notations let us define the uniformly distributed vector $$\vv \eqdef  \left(\rv+ \ev_{\sP}\right)\Rm + \ev_{\sN}$$ which is independent of $\CC^{\sN}$ (by construction $\Gm^{\sN}$ and $\vec{R}$ are independent). Now, let us show Equation \eqref{eq:expected_ni_bar}. We have by linearity of the expected value that
	\begin{align*}
		\overline{N_i^{(\CC^{\sN})}} &= \sum_{\zv \in \mathcal{S}_i^{n-s}} \prob{ \CC,\vv}{\zv \in \CC^{\sN} + \vv} \\
		&= \sum_{\zv \in \mathcal{S}_i^{n-s}} \prob{ \CC,\vv}{\zv \in \CC^{\sN} + \vv \middle| \vv = \zv} \prob{\CC,\vv}{\vv = \zv} + \prob{ \CC,\vv}{\zv \in \CC^{\sN} + \vv \middle| \vv \neq \zv} \prob{\CC,\vv}{\vv \neq \zv} \\
		&=  \sum_{\zv \in \mathcal{S}_i^{n-s}} \frac{1}{2^{n-s}}  + \frac{2^{k-s} - 1}{2^{n-s} - 1} \left( 1 - \frac{1}{2^{n-s}} \right) \qquad \mbox{(Lemma \ref{lem:probaRandom})} \\
		&= \frac{\binom{n-s}{i}}{2^{n-k}} \left( \frac{2^{n-k}}{2^{n-s}} + 2^{n-k} \; \frac{2^{k-s} -1}{2^{n-s}-1} \OO{1} \right) \\
		&= \frac{\binom{n-s}{i}}{2^{n-k}} \left( \frac{1}{2^{k-s}} + \frac{2^{-s} - 2^{-k}}{2^{-s}-2^{-n}} \; \OO{1}\right) \\
		&= \frac{\binom{n-s}{i}}{2^{n-k}} \left( \frac{1}{2^{k-s}} + \frac{1- 2^{-k+s}}{1-2^{-n+s}} \OO{1} \right) \\
		&= \frac{\binom{n-s}{i}}{2^{n-k}} \left( 1 + \frac{1}{2^{k-s}} + \OO{2^{-k+s}} \right) \quad \mbox{($s \leq k \leq n$)} 
\end{align*}
	Let us now show Equation \eqref{eq:expected_nj_bar}. Recall that 
	$$
	\overline{N_j^{(\CCbisperp)}}  \eqdef \expect{\CCbis, \xv} {N_j \left(\CCbisperp + \xv \right)}
	$$ 
	where $\xv$ is taken uniformly at random in $\F_2^{s} \setminus \{\CCbisperp + \ev_{\sP} \}$. By definition,
	\begin{align*}
		\expect{\CCbis, \xv} {N_j \left(\CCbisperp + \xv \right)} &= \sum_{\vec{z} \in  \mathcal{S}_{j}^{s}} \mathbb{P}_{\CCbis, \xv} \left( \vec{z} \in \CCbisperp + \xv \right) \\
		&= \sum_{\vec{z} \in  \mathcal{S}_{j}^{s}, \vec{x}_{0} \in \F_2^{s}} \mathbb{P}_{\CCbis} \left( \vec{z} - \vec{x}_{0} \in \CCbisperp \right)\mathbb{P}_{\vec{x},\CCbis}\left( \vec{x} = \vec{x}_{0} \right) \\
		&\leq \sum_{\vec{z} \in  \mathcal{S}_{j}^{s}}\left(  \sum_{\vec{x}_{0} \in \F_2^{s}} \frac{2^{s-\kbis}-1}{2^{s}-1} \;  \mathbb{P}_{\vec{x},\CCbis}\left( \vec{x} = \vec{x}_{0} \right) + \mathbb{P}_{\vec{x},\CCbis}\left( \vec{x} = \vec{z} \right) \right) \\
		&\leq \binom{s}{j} \frac{2^{s-\kbis}-1}{2^{s}-1} + \frac{\binom{s}{j}}{2^{s} - 2^{\kbis}} \\
		&\leq \frac{\binom{s}{j}}{2^{\kbis}} \;   2^{\kbis} \; \frac{2^{s-\kbis}-1}{2^{s}-1} + \frac{\binom{s}{j}}{2^{\kbis}} \; \frac{2^{\kbis}}{2^{s} - 2^{\kbis}} \\
		&\leq \frac{\binom{s}{j}}{2^{\kbis}} \;   \frac{2^{s}-2^{\kbis}}{2^{s}-1} + \frac{\binom{s}{j}}{2^{\kbis}} \; \frac{1}{2^{s-\kbis} - 1}  \\
		&\leq \frac{\binom{s}{j}}{2^{\kbis}} \; \left( 1 + \OO{2^{\kbis-s}} \right)
	\end{align*}
	which completes the proof. 
\end{proof}
	We can now show that the expected value of $N_{i,j}$ is the product of the two previously computed quantities:
	\begin{lemma}  \label{fact:product_expect}
		Under Distribution \ref{notation:randomCodes} and when $\xv$ is taken uniformly at random in $\F_2^{s} \setminus \{ \CCbisperp + \ev_{\sP}\}$ we have
		$$ 
		\expect{\CC,\CCbis,\xv}{N_{i,j}} = \overline{N_j^{(\CCbisperp)}}  \; 	\overline{N_i^{(\CC^{\sN})}}
		$$
		where, 
		$$
		N_{i,j} \eqdef \card{ \left\{  \left(\rv,\cv^{\sN}\right) \in  (\xv +\CCbisperp)  \times \CC^{\sN} \: : \left| \rv \right|=j \mbox{ and } \left| \left( \rv + \ev_{\sP}\right) \Rm + \ev_{\sN} + \cv^{\sN} \right| = i\right\} }
		$$
	and,
		$$
			\overline{N_j^{(\CCbisperp)} }\eqdef \expect{\CC,\CCbis,\xv} { N_j \left(\CCbisperp + \xv\right)} \quad \mbox{;} \quad 	\overline{N_i^{(\CC^{\sN})} }\eqdef \expect{\CC} {N_i \left( \CC^{\sN} + \left(\rv+\ev_{\sP}\right)\Rm + \ev_{\sN} \right) }.
		$$
	\end{lemma}
	\begin{proof} By definition,
		$$
		N_{i,j} = \sum_{u = 0}^{N_j\left(\CCbisperp + \xv\right)} N_i \left(\left(\rv^{(u)} + \ev_{\sP}\right) \Rm + \ev_{\sN} + \CC^{\sN}\right)
		$$
		where $\rv^{(u)}$ is the $u$'th codeword of weight~$j$ of $\CCbisperp + \xv$ and $N_j(\CCbisperp + \xv)$ counts the number of elements in $\CCbisperp + \xv$ of Hamming weight~$j$. Therefore,
		\begin{align*}
		N_{i,j} &= \sum_{\vec{z}\in \mathcal{S}_{j}^{s}} N_i \left(\left(\vec{z}+ \ev_{\sP}\right) \Rm + \ev_{\sN} + \CC^{\sN}\right)  \;\mathds{1}_{``\vec{z} \in \CCbisperp + \xv''}  \\
			&= \sum_{\vec{z}\in \mathcal{S}_{j}^{s},\vec{w} \in \mathcal{S}_{i}^{n-s}}  \mathds{1}_{``\vec{w} \in \left(\vec{z}+ \ev_{\sP}\right) \Rm + \ev_{\sN} + \CC^{\sN}"}  \;\mathds{1}_{``\vec{z} \in \CCbisperp + \xv''}  \\
		\end{align*} 
		We deduce that,
		\begin{align*}
			\expect{\CC,\CCbis,\xv}{N_{i,j}} &= \sum_{\vec{z}\in \mathcal{S}_{j}^{s},\vec{w} \in \mathcal{S}_{i}^{n-s}} \mathbb{P}\left( \vec{w} \in \left(\vec{z}+ \ev_{\sP}\right) \Rm + \ev_{\sN} + \CC^{\sN}, \; \vec{z} \in \CCbisperp + \xv \right) \\
			 &= \sum_{\vec{z}\in \mathcal{S}_{j}^{s},\vec{w} \in \mathcal{S}_{i}^{n-s}} \mathbb{P}\left( \vec{w} \in \left(\vec{z}+ \ev_{\sP}\right) \Rm + \ev_{\sN} + \CC^{\sN} \mid \vec{z} \in \CCbisperp + \xv  \right) \mathbb{P}\left(  \vec{z} \in \CCbisperp + \xv \right) \\
			 &= \sum_{\vec{z}\in \mathcal{S}_{j}^{s}}\left( \sum_{\vec{w} \in \mathcal{S}_{i}^{n-s}} \mathbb{P}\left( \vec{w} \in \left(\vec{z}+ \ev_{\sP}\right) \Rm + \ev_{\sN} + \CC^{\sN} \mid \vec{z} \in \CCbisperp + \xv  \right) \right)\mathbb{P}\left(  \vec{z} \in \CCbisperp + \xv \right) \\
			 &= \sum_{\vec{z}\in \mathcal{S}_{j}^{s}} \overline{N_i^{(\CC^{\sN})} }\; \mathbb{P}\left(  \vec{z} \in \CCbisperp + \xv \right) \\
			 &=  \overline{N_i^{(\CC^{\sN})} } \; 	\overline{N_j^{(\CCbisperp)} }
		\end{align*}
		which completes the proof. 
\end{proof}
\begin{proof}[Proof of Lemma \ref{lem:probNij}]
	We prove our result by using Fact \ref{fact:Nij_toexpect}, then Lemmas \ref{lem:Nbar} and \ref{fact:product_expect}.
\end{proof} \section{Proof of Proposition \ref{prop:model_conjecture}}\label{app:model_conjecture}

\subsection{A more minimalistic conjecture} \label{sec:min_ass}
The goal of this section is to devise a more minimalistic but stronger conjecture which implies Conjecture \ref{ass:bias_dom} (in the sense that it involves only concentration bound of the weight enumerator of some random linear codes). Furthermore, the aforementioned implication is a key step of the proof of Proposition \ref{prop:model_conjecture} which shows that the experimental model implies Conjecture \ref{ass:bias_dom}.
\begin{conjecture} \label{ass:min}
	For any $(i,j) \in \llbracket 0, n-s \rrbracket \times \llbracket 0, s \rrbracket $ and any~$v \in \left \llbracket 0, \overline{V_j}  +  n^{1.1} \max\left(\sqrt{\overline{V_j}},1\right) \right \rrbracket$, we have under Distribution \ref{notation:randomCodes},
	\begin{align}
		&\prob{\CCbis, \xv}{\left|V_j - \overline{V_j}\right|>  n^{1.1} \max\left(\sqrt{\overline{V_j}},1\right)} &&= \OOt{2^{-n}}, \label{eq:ass_min_1} \\
&\prob{\CC,\CCbis,\xv}{ \left|N_{i,j} - v \: \overline{N_i} \right| > n^{1.1} \max \left(\sqrt{v \: \overline{N_i}}, \:1\right) \middle | \: N_j = v} &&= \OOt{2^{-n}}. \label{eq:ass_min_2}
	\end{align}
	where
	\begin{equation*}
		V_j \eqdef N_j \left(\CCbisperp + \xv\right)\end{equation*}
and
	\begin{align}
		\overline{N_i} \eqdef \frac{\binom{n-s}{i}}{2^{n-k}}, \quad \overline{V_j} \eqdef \frac{\binom{s}{j}}{2^{\kbis}} \label{def:N_ibar}.
	\end{align}
\end{conjecture}
\begin{remark}
	Recall that $N_{i,j} \eqdef \sum_{u = 1}^{V_j} N_i^{(u)}$ where $N^{(u)}_i \eqdef N_i\left(\left(\rv^{(u)} + \ev_{\sP}\right) \Rm + \ev_{\sN} + \CC^{\sN}\right)$ and $\vec{r}^{(u)}$ is the $u$'th codeword of weight $j$ of $\CCbisperp + \xv$. From this and using Lemma \ref{lem:Nbar} it is readily seen that we have that $$\expect{\CC,\CCbis,\xv}{N_{i,j} \middle| \: V_j = v} = v \: \overline{N_i} \left(1+o(1)\right).$$ As such, Equation \eqref{eq:ass_min_2} in the previous Conjecture can also really be seen as a concentration inequality.
\end{remark}
\begin{proposition} \label{prop:imply_conj2_conj1}
	Conjecture \ref{ass:min} implies Conjecture \ref{ass:bias_dom}.
\end{proposition}
The following lemmas will be useful to prove the this proposition.
\begin{lemma}[Centering Lemma.]\label{lem:centering} We have,
	$$ 
	\sum_{j = 0}^{s} \sum_{i = 0}^{n-s} K_{\tbis}^{(s)} \left(j\right) K_w^{(n-s)} \left(i\right) N_{i,j} = \sum_{j = 0}^{s} \sum_{i = 0}^{n-s} K_{\tbis}^{(s)} \left(j\right) K_w^{(n-s)} \left(i\right) \left(N_{i,j} - V_j \overline{N_i} \right).
	$$
\end{lemma}
\begin{proof}
	From the orthogonality of Krawtchouk polynomials relatively to the measure $\mu\left(i\right) = \binom{n-s}{i}$ \cite[Ch. 5. \S 7. Theorem 16]{MS86} we have:
	$$
	\sum_{i = 0}^{n-s} \binom{n-s}{i} K_w^{(n-s)}\left(i\right) = 0.
	$$
	By using the definition of $\overline{N_i}$ in Equation \eqref{def:N_ibar}, we get,
	\begin{align*}
		0 &= \sum_{j = 0}^s V_j \: K^{(s)}_{\tbis} \left(j\right)  \sum_{i = 0}^{n-s} K^{(n-s)}_w \left(i\right) \overline{N_{i}}\\
		&= \sum_{j = 0}^s  \sum_{i = 0}^{n-s} K^{(n-s)}_w \left(i\right) K^{(s)}_{\tbis} \left(j\right) V_j \: \overline{N_{i}}.
	\end{align*}
	Therefore,
	$$  
	\sum_{j = 0}^{s} \sum_{i = 0}^{n-s} K_{\tbis}^{(s)} \left(j\right) K_w^{(n-s)} \left(i\right) N_{i,j} = \sum_{j = 0}^{s} \sum_{i = 0}^{n-s} K_{\tbis}^{(s)} \left(j\right) K_w^{(n-s)} \left(i\right) \left(N_{i,j} - V_j \overline{N_i} \right)
	$$
	which completes the proof. 
\end{proof}
\begin{corollary}\label{coro:probaKrawNij}We have,
	\begin{multline*}
		\mathbb{P}\left( \sum_{j = 0}^{s} \sum_{i = 0}^{n-s} K_{\tbis}^{(s)} \left(j\right) K_w^{(n-s)} \left(i\right) N_{i,j} \geq \frac{1}{2} K_w^{(n-s)}\left(u\right) K_{\tbis}^{(s)}\left(t-u\right)\right) = \\
		\OOt{\max_{(i,j) \in \llbracket 0,n-s\rrbracket \times \llbracket 0,s \rrbracket} \prob{}{\left| N_{i,j} -  V_j \: \overline{N_i}\right| \geq \frac{1}{2 \: (n+1)^2}  \left|\frac{K_w^{(n-s)}\left(u\right) K_{\tbis}^{(s)}\left(t-u\right)}{ K_w^{(n-s)}\left(i\right) K_{\tbis}^{(s)}\left(j\right) }\right| } }
	\end{multline*}
	
\end{corollary}
\begin{proof}
	The event 
	$$ 
	\sum_{j = 0}^{s} \sum_{i = 0}^{n-s} K_{\tbis}^{(s)} \left(j\right) K_w^{n-s} \left(i\right) \left( N_{i,j} - V_j \overline{N_i}\right) \geq \frac{1}{2} K_w^{(n-s)}\left(u\right) K_{\tbis}^{(s)}\left(t-u\right) 
	$$ 
	implies that it exists $j \in  \llbracket 0, s \rrbracket$ and $i \in \llbracket 0, n-s \rrbracket$ such that 
	\begin{equation} \label{eq:proof_bias_to_NI_doubleRLPN}
		K_{\tbis}^{(s)} \left(j\right) K_w^{n-s}\left(i\right) \left( N_{i,j} - V_j \overline{N_i}\right)  \geq \frac{K_w^{(n-s)}\left(u\right) K_{\tbis}^{(s)}\left(t-u\right) }{2 \left(n-s+1\right) \left(s+1\right)} .
	\end{equation}
	Therefore,
	\begin{align*}
		&	\mathbb{P}\left( \sum_{j = 0}^{s} \sum_{i = 0}^{n-s} K_{\tbis}^{(s)} \left(j\right) K_w^{(n-s)} \left(i\right) N_{i,j} \geq \frac{1}{2} K_w^{(n-s)}\left(u\right) K_{\tbis}^{(s)}\left(t-u\right)\right) \\ 
		&=	\mathbb{P}\left( \sum_{j = 0}^{s} \sum_{i = 0}^{n-s} K_{\tbis}^{(s)} \left(j\right) K_w^{n-s}\left(i\right) \left( N_{i,j} - V_j \overline{N_i}\right)   \geq \frac{1}{2} K_w^{(n-s)}\left(u\right) K_{\tbis}^{(s)}\left(t-u\right)\right)  \quad (\mbox{Lemma } \ref{lem:centering}) \\ 
		&\leq 	 \prob{\CC,\CCbis}{ \bigvee_{j = 0}^{s} \bigvee_{i = 0}^{n-s} \left( K_{\tbis}^{(s)} \left(j\right) K_w^{n-s} \left(i\right) \left( N_{i,j} - V_j \overline{N_i} \right)  \geq \frac{K_w^{(n-s)}\left(u\right) K_{\tbis}^{(s)}\left(t-u\right) }{2 \left(n-s+1\right) \left(s+1\right)}  \right)} \quad (\mbox{using } \eqref{eq:proof_bias_to_NI_doubleRLPN})\\
		&\leq \sum_{j = 0}^s \sum_{i = 0}^{n-s}  \prob{}{ K_{\tbis}^{(s)} \left(j\right) K_w^{n-s} \left(i\right) \left( N_{i,j} - V_j \overline{N_i} \right) \geq \frac{K_w^{(n-s)}\left(u\right) K_{\tbis}^{(s)}\left(t-u\right) }{2 \left(n-s+1\right) \left(s+1\right)} } \quad (\mbox{union bound}) \\
&= \OO{n^2} \max_{\substack{i = 0 \dots n-s\\j = 0\dots s}}  \prob{}{  \left| N_{i,j} - V_j \overline{N_i} \right|\geq  \frac{1}{2 \: (n+1)^2}  \left|\frac{K_w^{(n-s)}\left(u\right) K_{\tbis}^{(s)}\left(t-u\right)}{ K_w^{(n-s)}\left(i\right) K_{\tbis}^{(s)}\left(j\right) }\right| }
	\end{align*}
	which completes the proof. 
\end{proof}
\begin{lemma} \label{lem:ineq_NI}
	Under Parameters constraint \ref{ass:parameters_doubleRLPN} we have that
	$$ \left(  \frac{ K_w^{(n-s)}\left(u\right) K_{\tbis}^{(s)} \left(t-u\right)}{K_w^{(n-s)}\left(i\right) K_{\tbis}^{(s)} \left(j\right)  } \right)^2 = \omega \left(n^{\alpha + 8}\right)\overline{N_i} \max \left(\overline{V_{j}},\frac{1}{\OO{n^{\alpha}}}\right) 
	$$
\end{lemma}
\begin{proof}
	First, we simplify $ \left(K_w^{(n-s)}\left(u\right) K_{\tbis}^{(s)} \left(t-u\right) \right)^2$ by using Constraint $(i)$ 
of Parameters constraint \ref{ass:parameters_doubleRLPN} which we recall is given by:
	$$ 
	\frac{\binom{n-s}{w} \binom{s}{\tbis}}{2^{k-\kbis}} = \frac{\omega \left(n^{\alpha+8} \right)}{\delta^2}, \quad \mbox{where }\; \delta \eqdef \frac{K_w^{(n-s)}\left(u\right) K_{\tbis}^{(s)} \left(t-u\right) }{\binom{n-s}{w}\binom{s}{\tbis}}.
	$$
	By reordering the terms in the previous equation we get:
	\begin{align*}
		\left(K_w^{(n-s)}\left(u\right) K_{\tbis}^{(s)} \left(t-u\right) \right)^2= 2^{k-\kbis}\binom{s}{\tbis} \binom{n-s}{w} \omega(n^{\alpha + 8}).
	\end{align*}
	Now, to show the lemma we only have to show that
	\begin{align}
		\frac{ 2^{k-\kbis} \binom{s}{\tbis} \binom{n-s}{w} }{ \left( K_w^{(n-s)}\left(i\right) K_{\tbis}^{(s)} \left(j\right) \right)^2 } )\geq \overline{N_i} \max \left(\overline{V_{j}},\frac{1}{\OO{n^{\alpha}}}\right). \label{eq:proof_ineq_Ni}
	\end{align}
	First ,let us lower bound $\frac{1}{\left(K_w^{(n-s)}(i) \right)^2}$.  From the orthonormality relations of the Krawtchouk polynomials \cite[Ch. 5. \S 7. Theorem 16]{MS86} with the measure $\mu_1 \left(v\right) = \frac{\binom{n-s}{v}}{2^{n-s}}$ we have that
	\begin{align*}
		\sum_{v=0}^{n-s} \binom{n-s}{v} \left(K_w^{(n-s)}\left(v\right)\right)^2 &= \binom{n-s}{w} 2^{n-s},
	\end{align*}
	and thus, as the previous sum is composed of positive terms, we have that
	\begin{align}
		\frac{1}{\left(K_w^{(n-s)}(i) \right)^2} \geq \frac{\binom{n-s}{i}}{\binom{n-s}{w}2^{n-s}}. \label{eq:proof_ineq_NI_eq1}
	\end{align}
	Now, let us lower-bound $\frac{1}{\left(K_{\tbis}^{(s)}(j) \right)^2} $. Using the same orthonormality argument relatively to the measure $\mu_2 \left(v\right) = \frac{\binom{s}{v}}{2^{s}}$ we get
	\begin{align*}
		\frac{1}{\left(K_{\tbis}^{(s)}(j) \right)^2} \geq \frac{\binom{s}{j}}{\binom{s}{\tbis}2^{s}}.
	\end{align*}
	Furthermore, from Fact \ref{fact:krawtchouk_bias} we can also deduce the following inequality \begin{equation*}  \label{eq:proof_lem_ineq_NI_3}
		\left(K_{\tbis}^{(s)}\left( j\right)\right)^2 \leq \binom{s}{\tbis}^2.
	\end{equation*} 
	Combining the last two equations we get that
	\begin{align}
		\frac{1}{\left(K_{\tbis}^{(s)}(j) \right)^2} \geq \min \left(\frac{\binom{s}{j}}{\binom{s}{\tbis}2^{s}} , \frac{1}{\binom{s}{\tbis}^2} \right)  \label{eq:proof_ineq_NI_eq2}.
	\end{align}
	Finally let us show Equation \eqref{eq:proof_ineq_Ni} by using Equation \eqref{eq:proof_ineq_NI_eq1} and \eqref {eq:proof_ineq_NI_eq2} :
	\begin{align*}
		\frac{ 2^{k-\kbis} \binom{s}{\tbis} \binom{n-s}{w} }{ \left( K_w^{(n-s)}\left(i\right) K_{\tbis}^{(s)} \left(j\right) \right)^2 }  &\geq  2^{k-\kbis} \binom{s}{\tbis} \binom{n-s}{w}  \frac{\binom{n-s}{i}}{\binom{n-s}{w}2^{n-s}} \:  \min \left(\frac{\binom{s}{j}}{\binom{s}{\tbis}2^{s}} , \frac{1}{\binom{s}{\tbis}^2} \right) \\
		&= \frac{\binom{n-s}{i}}{2^{n-k}} \min \left(\frac{\binom{s}{j}}{2^{\kbis}}, \quad \frac{2^{s-\kbis}}{\binom{s}{\tbis}}\right) \\
		&= \frac{\binom{n-s}{i}}{2^{n-k}} \min \left(\frac{\binom{s}{j}}{2^{\kbis}}, \frac{1}{\OO{n^{\alpha}}}\right) \quad \mbox{(Constraint $(iii)$ of Eq. \eqref{constraints:constraint})}\\
		&= \overline{N_i} \min \left(\overline{V_j}, \frac{1}{\OO{n^{\alpha}}} \right).
	\end{align*}
	This completes the proof. 
\end{proof}
\begin{proof}[Proof of Proposition \ref{prop:imply_conj2_conj1}.]
	Let us suppose that the Parameter constraint \ref{ass:parameters_doubleRLPN} is verified and that Conjecture \ref{ass:min} is true. We want to show Conjecture \ref{ass:bias_dom} holds, namely that
	\begin{multline*} 
	\mathbb{P}\left( \sum_{j = 0}^{s} \sum_{i = 0}^{n-s} K_{\tbis}^{(s)} \left(j\right) K_w^{(n-s)} \left(i\right) N_{i,j} \geq \frac{1}{2} K_w^{(n-s)}\left(u\right) K_{\tbis}^{(s)}\left(t-u\right)\right) \\ 
	=
	\OOt{ \max_{(i,j) \in \mathcal{A}}\prob{}{N_{i,j} \neq 0}  + 2^{-n}}. 
	\end{multline*} 
	Using Corollary \ref{coro:probaKrawNij}, we only have to show that for any $(i,j) \in \llbracket 0,n-s\rrbracket \times \llbracket 0,s \rrbracket$ we have
	$$ \prob{}{\left| N_{i,j} -  V_j \: \overline{N_i}\right| \geq \frac{1}{2 \: (n+1)^2}  \left|\frac{K_w^{(n-s)}\left(u\right) K_{\tbis}^{(s)}\left(t-u\right)}{ K_w^{(n-s)}\left(i\right) K_{\tbis}^{(s)}\left(j\right) }\right| }  = \OOt{ \max_{(i^*,j^*) \in \mathcal{A}}\prob{}{N_{i^*,j^*} \neq 0} + 2^{-n}}.$$
	To ease up the notations let us denote by 
	\begin{equation} \label{def:R_ij}
		R_{i,j} \eqdef  \frac{1}{2 \: (n+1)^2}  \left|\frac{K_w^{(n-s)}\left(u\right) K_{\tbis}^{(s)}\left(t-u\right)}{ K_w^{(n-s)}\left(i\right) K_{\tbis}^{(s)}\left(j\right) }\right|.
	\end{equation}
	Thus, we only have to show that 
	\begin{equation} \label{eq:proof_conj2_to_prove}
		\prob{}{\left| N_{i,j} -  V_j \: \overline{N_i}\right| \geq R_{i,j} }  = \OOt{ \max_{(i^*,j^*) \in \mathcal{A}}\prob{}{N_{i^*,j^*} \neq 0} + 2^{-n}}.
	\end{equation}
	We prove the previous equality for each cases: $(i,j) \in \mathcal{A}$ or $(i,j) \notin \mathcal{A}$, where recall that
	$$
		\mathcal{A} \eqdef \left\{ \left(i,j\right) \in \llbracket 0, n-s \rrbracket \times \llbracket0,s \rrbracket, \iftoggle{llncs}{}{\;} \left| \frac{ K_w^{(n-s)}\left(u\right) K_{\tbis}^{(s)}\left(t-u\right) }{ K_w^{(n-s)}\left(i\right) K_{\tbis}^{(s)}\left(j\right)}\right| \leq n^{3.2}\right\}.
	$$

	\noindent \textbf{Cases 1:} Here we suppose that  $(i,j) \in \mathcal{A}$. Let us prove Equation \eqref{eq:proof_conj2_to_prove}. Using the law of total probability we have that
	\begin{align*}
		\prob{}{\left| N_{i,j} -  V_j \: \overline{N_i}\right| \geq R_{i,j} } \leq \prob{}{N_{i,j} \neq 0} +  \prob{}{\left| N_{i,j} -  V_j \: \overline{N_i}\right| \geq R_{i,j}, \; N_{i,j} = 0}.
	\end{align*}
	As $(i,j) \in \mathcal{A}$, 
	$$  
	\prob{}{N_{i,j} \neq 0} = \OOt{\max_{(i^*,j^*) \in \mathcal{A}}\prob{}{N_{i^*,j^*} \neq 0}}
	$$
	we only have left to show that:
	\begin{equation*}
		\prob{}{\left| N_{i,j} -  V_j \: \overline{N_i}\right| \geq R_{i,j} , \: \; N_{i,j} = 0} = \OOt{\max_{(i^*,j^*) \in \mathcal{A}}\prob{}{N_{i^*,j^*} \neq 0} + 2^{-n}}.
	\end{equation*}
	We now show the previous equation, by proving that,
		\begin{equation} \label{proof:conj2_target_1}
		\prob{}{\left| N_{i,j} -  V_j \: \overline{N_i}\right| \geq R_{i,j} , \: \; N_{i,j} = 0} = \OOt{ 2^{-n}}.
	\end{equation}
	We have:
	\begin{align}
		\prob{}{\left| N_{i,j} -  V_j \: \overline{N_i}\right| \geq R_{i,j} , \: \; N_{i,j} = 0} &= \prob{}{ V_j \: \overline{N_i} \geq R_{i,j}} \nonumber \quad \mbox{(we used that $V_{j},\overline{N_{i}} > 0$)} \\
		&= \prob{}{ V_j\geq \frac{R_{i,j}}{\overline{N_i}}} \nonumber \\
		&= \prob{}{V_j - \overline{V_j}\geq \frac{R_{i,j}}{\overline{N_i}} - \overline{V_j}} \nonumber \\
		&\leq \prob{}{\left|V_j - \overline{V_j} \right|\geq \frac{R_{i,j}}{\overline{N_i}} - \overline{V_j}}. \nonumber
	\end{align}
	Recall that from Equation \eqref{eq:ass_min_1} of Conjecture \ref{ass:min} we have that
	\begin{equation*} \label{redit:conj2_eq_1}
		\prob{\CCbis, \xv}{\left|V_j - \overline{V_j}\right|>  n^{1.1} \max\left(\sqrt{\overline{V_j}},1\right)} = \OOt{2^{-n}}.
	\end{equation*}
	Therefore, we only have to show that for $n$ big enough we have 
	\begin{equation} \label{proof:conj2_ineq_1}
		\frac{R_{i,j}}{\overline{N_i}} - \overline{V_j} \geq n^{1.1} \max \left(\sqrt{\overline{V_j}, \: 1} \right)
	\end{equation}
	to prove Equation \eqref{proof:conj2_target_1}. Let us prove Equation \eqref{proof:conj2_ineq_1}. By definition of $R_{i,j}$ in Equation \eqref{def:R_ij} and using Lemma~\ref{lem:ineq_NI} we have that
	\begin{align}
		R_{i,j}^2 &=   \frac{1}{4 \: (n+1)^4}  \left(\frac{K_w^{(n-s)}\left(u\right) K_{\tbis}^{(s)}\left(t-u\right)}{ K_w^{(n-s)}\left(i\right) K_{\tbis}^{(s)}\left(j\right) }\right)^2 \nonumber \\
		&= \frac{1}{4 \: (n+1)^4} \; \omega \left(n^{\alpha + 8}\right)\overline{N_i} \max \left(\overline{V_{j}},\frac{1}{\OO{n^{\alpha}}}\right) \nonumber \\
&=  f(n) \max \left(\overline{N_i} \: \overline{V_j}, n^{-\alpha}\overline{N_i}\right) \label{eq:fn}
	\end{align}
	where $f(n) = \omega\left(n^{\alpha + 4}\right)$.
Therefore,
	\begin{align}
		\frac{R_{i,j}^2}{\overline{N_i}^2} \; \frac{1}{n^{2.4} \max \left(\overline{V_j}^2, \: 1\right)}& =  \frac{f(n) \max \left(\overline{N_i} \: \overline{V_j}, n^{-\alpha} \: \overline{N_i}\right)}{n^{2.4} \; \overline{N_i}^2 \max \left(\overline{V_j}^2, \: 1\right)} \nonumber \\
		&=   \frac{f(n) \max \left(\frac{\overline{V_j}}{\overline{N_i}}, \frac{n^{-\alpha}}{\overline{N_i}} \: \right)}{ n^{2.4} \: \max \left(\overline{V_j}^2, \: 1\right)} \nonumber\\
		&= \begin{cases}
			\frac{1}{n^{2.4}}f(n)\max \left(\frac{1 }{\overline{N_i} \overline{V_j}}, \frac{n^{-\alpha}}{\overline{N_i} \overline{V_j}^2} \: \right) & \mbox{ if } \overline{V_j} > 1\\
			\frac{1}{n^{2.4}} f(n)\max \left(\frac{\overline{V_j}}{\overline{N_i}}, \frac{n^{-\alpha}}{\overline{N_i}} \: \right) & \mbox{ if } \overline{V_j} \leq 1
		\end{cases} \nonumber \\
		&\geq  	 \frac{1}{n^{2.4}} f(n) \min \left(\frac{1}{\overline{N_i} \overline{V_j}}, \frac{n^{-\alpha}}{\overline{N_i}}\right) \nonumber\\
		&\geq  \frac{n^{-2 \alpha}}{n^{2.4}} \; f(n) \; \min \left(\frac{1}{\overline{N_i} \overline{V_j}}, \frac{1}{n^{-\alpha} \: \overline{N_i}}\right) \nonumber \\
		&= \frac{n^{-2 \alpha}}{n^{2.4}} \; \frac{f(n)}{\max \left( \overline{N_i} \overline{V_j}, n^{-\alpha} \: \overline{N_i}\right)} \nonumber \\
		&=   \frac{n^{-2 \alpha}}{n^{2.4}} \;  \frac{f(n)^2}{R_{i,j}^2}\nonumber \qquad (\mbox{By Equation \eqref{eq:fn}}) \\
		&=  \frac{\omega\left(n^{5.6}\right)}{R_{i,j}^2} \qquad (f(n) = \omega\left(n^{\alpha + 4}\right)) \nonumber\\
		&= \omega(1) \label{eq:proof_conj2_c1_f}
	\end{align}
	where in the last line we used the fact that $(i,j) \in \mathcal{A}$: by definition, 
	$$
	R_{i,j} = \frac{1}{2 \: (n+1)^2} \; \left|\frac{K_w^{(n-s)}\left(u\right) K_{\tbis}^{(s)}\left(t-u\right)}{ K_w^{(n-s)}\left(i\right) K_{\tbis}^{(s)}\left(j\right) }\right|\leq \frac{1}{2 \: (n+1)^2} \; n^{3.2}
	$$ 
	and thus
	\begin{align*}
		\frac{1}{R_{i,j}^2} &\geq \frac{ 2 \: \left(n+1\right)^2}{n^{3.2}}.
	\end{align*}
	Finally, Equation \eqref{eq:proof_conj2_c1_f} shows that for $n$ big enough
	$$  \frac{R_{i,j}}{\overline{N_i}} \geq n^{1.2} \max \left(\overline{V_j}, \: 1 \right)$$
	And as such, for $n$ big enough
	\begin{align*}
		\frac{R_{i,j}}{\overline{N_i}} - \overline{V_j} &\geq n^{1.1}\max \left(\overline{V_j}, \: 1 \right) \\
		&\geq n^{1.1}\max \left(\sqrt{\overline{V_j}}, \: 1 \right)
	\end{align*}
	which proves Equation \eqref{proof:conj2_ineq_1}. Therefore we have just proved Equation \eqref{eq:proof_conj2_to_prove} in the case where $(i,j) \in \mathcal{A}$.
	\newline
	
	\noindent \textbf{Case 2: }. Here we suppose that  $(i,j) \notin \mathcal{A}$. Let us prove Equation \eqref{eq:proof_conj2_to_prove}.
	We only have to prove that:
	\begin{equation*}
		\prob{}{\left| N_{i,j} -  V_j \: \overline{N_i}\right| \geq R_{i,j} }  = \OOt{2^{-n}}.
	\end{equation*}
	 Let $M$ be defined as
	\begin{equation} \label{eq:def_M}
		M \eqdef \overline{V_j} + n^{1.1} \max \left(\sqrt{\overline{V_j}}, 1\right).
	\end{equation} 
	By the law of total probability we have that
	\begin{multline*}
		\prob{\CC,\CCbis,\xv}{  \left| N_{i,j} -  V_j\overline{N_{i}}  \right| > R_{i,j} } = \\  \prob{}{ \left| N_{i,j} -  V_j \overline{N_{i}}  \right| > R_{i,j} | V_j > M}\prob{}{V_j > M} + \\ \sum_{v = 0}^{M} \prob{}{ \left| N_{i,j} -  V_j \overline{N_{i}}  \right| > R_{i,j} | V_j =  v}\prob{}{V_j= v}.
	\end{multline*}
	Which we can upper bound by
	\begin{multline*}
		\prob{\CC,\CCbis,\xv}{  \left| N_{i,j} -  V_j \overline{N_{i}}  \right| > R_{i,j} } \leq \prob{}{V_j > M} +  \max_{v= 0...M }\prob{}{ \left| N_{i,j} -  v \:  \overline{N_{i}}  \right|> R_{i,j} | V_j =  v}.
	\end{multline*}
	By definition of $M$ in Equation \eqref{eq:def_M} and using Equation \eqref{eq:ass_min_1} of Conjecture \ref{ass:min} we get that 
	$$ \prob{}{V_j > M}  = \OOt{2^{-n}}.$$
	Now, we only have left to prove that for any $v \in \llbracket 0, M \rrbracket$ we have
	$$ 
	\prob{}{ \left| N_{i,j} -  v\; \overline{N_{i}}  \right|> R_{i,j} |V_j =  v}  = \OOt{2^{-n}}.
	$$
	Let us consider $v \in \llbracket 0, M \rrbracket$. Let us first show that for $n$ big enough we have that
	\begin{align} \label{proof:ineq_Rij}
		n^{1.1}\max \left(\sqrt{v \: \overline{N_i}}, \:1\right) \leq R_{i,j}.
	\end{align}We have
	\begin{align*}
		n^{2.2}\max \left(v \: \overline{N_i}, \:1\right)  &\leq n^{2.2}\max \left(M \: \overline{N_i}, \:1\right)  \\
		&\leq n^{2.2}\max \left(\overline{V_j} \: \overline{N_i} + n^{1.1} \overline{N_i} \max \left(\sqrt{\overline{V_j}},1\right), \:1\right) && \mbox{(Using Equation \eqref{eq:def_M})} \\
		&\leq  \max \left(2 \: n^{2.2} \: \overline{V_j} \: \overline{N_i}, \; 2\: n^{3.3} \overline{N_i} \sqrt{\overline{V_j}}, \: 2\: n^{3.3} \overline{N_i} ,\:  n^{2.2}\right) 
	\end{align*}
	To show Equation \eqref{proof:ineq_Rij}, ne only have left to prove that, for $n$ big enough, each term in the previous maximum is smaller than $R_{i,j}^2$.
	First let us recall that by definition of $R_{i,j}$ in Equation \eqref{def:R_ij} and from Lemma \ref{lem:ineq_NI},
$$R_{i,j}^2 = \max \left(\omega \left(n^{\alpha + 4}\right) \overline{N_i} \: \overline{V_j}, \omega \left(n^{4}\right) \overline{N_i}\right).$$ 
	For $n$ big enough we have that:
	\begin{align*}
		&2 \: n^{2.2} \: \overline{V_j} \: \overline{N_i} &&\leq  n^{\alpha + 4}\overline{N_i} \: \overline{V_j} \leq R_{i,j}^2, \\
		&2\: n^{3.3} \overline{N_i} \sqrt{\overline{V_j}} &&\leq \begin{cases}
			n^{4}\overline{N_i}\leq R_{i,j}^2 &\mbox{ when } \overline{V_j} \leq 1 \\
			n^{\alpha + 4}\overline{N_i} \: \overline{V_j} \leq R_{i,j}^2 &\mbox{ when } \overline{V_j} > 1
		\end{cases},  \\
		&	2 \: n^{3.3} \: \overline{N_i} &&\leq  n^{4} \overline{N_i} \leq R_{i,j}^2, \\
		& n^{2.2} &&\leq R_{i,j}^2.
	\end{align*}
	Where in the last equation we used the fact that $(i,j) \notin \mathcal{A}$, thus $R_{i,j} \geq \frac{n^{3.2}}{2 \left(n+1\right)^2}$ and thus $R_{i,j}^2 \geq n^{2.3}$ for $n$ big enough. We have shown that $$n^{2.2}\max \left(v \: \overline{N_i}, \:1\right) \leq R_{i,j}^2$$ and thus we have shown Equation \eqref{proof:ineq_Rij}.
	Finally we have
	\begin{align*}
		\prob{}{ \left| N_{i,j} -  V_j \overline{N_{i}}  \right|> R_{i,j} |V_j =  v} &=	\OO{\prob{}{ \left| N_{i,j} -  V_j \overline{N_{i}}  \right|> n^{1.1}\max \left(\sqrt{v \: \overline{N_i}}, \:1\right) |V_j =  v}} \\
		&= \OOt{2^{-n}}
	\end{align*}
	where in the last line we used Equation \eqref{eq:ass_min_2} of Conjecture \ref{ass:min}. This concludes the proof.
\end{proof}

\begin{lemma} \label{lem:poisson_conj2}
	The Poisson Model \ref{model:poisson} imply Conjecture \ref{ass:min}.
\end{lemma}
\begin{proof}
	Let $M$ be defined as
	\begin{equation} \label{lemma:proof_bound_bias_1_Mn}
	M \eqdef \overline{V_j} + n^{1.1} \max \left(\sqrt{\overline{V_j}}, 1\right)
	\end{equation} 
	Recall that to show Conjecture \ref{ass:min} we only have to show that
	\begin{align}
	\prob{\CCbis, \xv}{ \left| V_j - \overline{V_j} \right| \geq  n^{1.1} \max\left(\sqrt{\overline{V_j}},1\right)} &&=& \OOt{2^{-n}}, \label{eq:lem_ass_min_1} \\
	\forall v \in \llbracket 0, M \rrbracket, \:\; \prob{\CC,\CCbis,\xv}{ \left|N_{i,j} - v \: \overline{N_i} \right| > n^{1.1} \max\left(\sqrt{v \: \overline{N_i}}, 1\right)\: \middle | \: N_j = v} &&=& \OOt{2^{-n}} \label{eq:lem_ass_min_2}
\end{align}
	Under the Poisson Model \ref{model:poisson} we have that
	$$ N_{i,j} \sim \mathrm{Poisson}\left(V_j \: \overline{N_i}\right), \quad V_j \sim  \mathrm{Poisson}\left( \overline{V_j}\right).$$
	We will use the following fact: when $\vec{X}$ follows a Poisson distribution of parameter $\lambda$ and $g(n)=\omega(n)$, then we have that
	\begin{align} \label{lemma:proof_bound_bias_1_poisson_bound}
		\prob{}{|\vec{X} - \lambda| > g(n) \max \left(\sqrt{\lambda},1\right)} = 2^{-\omega(n)} .
		\end{align}
		Let us prove this claim. It is known \cite[Prop 11.15]{G17b} that we have the following exponential tail bound for $\vec{X}$:
		\begin{equation*}
			\prob{}{|\vec{X} - \lambda| > r} \leq 2 \: e^{\frac{- \: r^2}{2 \: (\lambda + r) }}.
		\end{equation*}
	Thus,
	\begin{align*}
		\prob{}{|\vec{X} - \lambda| > g(n) \max \left(\sqrt{\lambda},1\right)} &\leq 2 \: e^{\frac{- \: g(n)^2 \max \left(\lambda,1\right)}{2 \: (\lambda + g(n) \max\left(\sqrt{\lambda},1\right)) }} \\
		&\leq 2 \: e^{\frac{- \: g(n)}{2 \left(\frac{\lambda + g(n) \max\left(\sqrt{\lambda},1\right)}{g(n) \max \left(\lambda,1\right)} \right) }}.
	\end{align*}
	We only have left to show that 
	$$  \frac{\lambda + g(n) \max\left(\sqrt{\lambda},1\right)}{g(n) \max \left(\lambda,1\right)}= \OO{1}.$$
	First it is readily seen that we have that ($g(n) = \omega(n)$)
	$$  \frac{\lambda}{g(n) \max \left(\lambda,1\right)}  = \OO{1},$$
	and second, 
	$$ \frac{g(n) \max\left(\sqrt{\lambda},1\right)}{g(n) \max \left(\lambda,1\right)} = \begin{cases}
		1=\OO{1} & \mbox{ if } \lambda \leq 1 \\
		\frac{1}{\sqrt{\lambda}} = \OO{1} & \mbox{ if } \lambda > 1.
	\end{cases}$$
	which concludes the proof of Equation \eqref{lemma:proof_bound_bias_1_poisson_bound}. Equation \eqref{eq:lem_ass_min_1} directly follows from Equation \eqref{lemma:proof_bound_bias_1_poisson_bound}. Equation \eqref{eq:lem_ass_min_2} also directly follow sfrom Equation  \eqref{lemma:proof_bound_bias_1_poisson_bound} by noticing that $N_{i,j}= v \sim \mathrm{Poisson}\left(v \: \overline{N_i}\right)$ when $V_{j} = v$.
\end{proof}
\begin{proof}[Proof of Proposition \ref{prop:model_conjecture}] Apply successively Lemma \ref{lem:poisson_conj2} and Proposition \ref{prop:imply_conj2_conj1}.
\end{proof} \section{Instantiating the Auxiliary Code $\CCbis$ with an Efficient Decoder}\label{app:Caux}

We use here notation from \S\ref{sec:SDC}. In particular, we suppose the auxiliary code $\CCbis$ is a product of $b$ small random codes where
\begin{equation}
	b = \Theta(\log n).
\end{equation}

We have to show that for such $b$, the 
analyses 
from Propositions \ref{prop:biasCodedRLPN} and \ref{prop:exp_sizeS} still hold. Indeed, these analyses were done as if $\CCbis$ were a random code equipped with genie aided decoders. Here we compute $\sH$ as a subset of
$$
\widetilde{\sH} \subseteq \left\{ \left( \hv, \cvbis \right) \in \CC^{\perp} \times \CCbis\: : \: \forall i \in \IInt{1}{b}, \; |\hv_{\sN}(i)| = \tfrac{w}{b} \mbox{ and } |\hv_{\sP}(i) + \cv_i| = \tfrac{\tbis}{b}  \right\}
$$
by decoding each parity-check performing an exhaustive search on each block. We have therefore to show that in this case,

\begin{enumerate}[label=\textcolor{blue}{\roman*.}, ref=\roman*]
\item\label{SDC1} the bias
	$$
	\bias_{\left(\hv,\cvbis\right) \drawn \widetilde{\sH}}\left(  \langle \cvbis + \hv_{\sP},\ev_{\sP} \rangle +  \langle \ev_{\sN},  \hv_{\sN}\rangle \right)
	$$
is of the same order as that given by Proposition \ref{prop:biasCodedRLPN},
\item\label{SDC2} and the number of candidates to test 
$$
\mathbb{E }_{\CC,\CCbis}\left(\card{\sS }\right)
$$
is of the same order as that given by Proposition \ref{prop:exp_sizeS}.
\end{enumerate}

To prove item \ref{SDC1}., we first suppose that $\ev_{\sP}$ and $\ev_{\sN}$ have a weight which is fairly distributed, that is:
\begin{equation}
\label{eq:distrib_e}
\forall i \in \IInt{1}{b}, \; \hw{\ev_{\sP}(i)} = \tfrac{t - u}{b} \mbox{ and } \hw{\ev_{\sN}(i)} = \tfrac{u}{b}.
\end{equation}
This happens with a probability:
\begin{align}
\mathbb{P}_{\mathrm{succ}} = \dfrac{ \binom{s/b}{(t - u)/b}^b\binom{(n-s)/b}{u/b}^b}{\binom{s}{t - u}\binom{n-s}{u}} &= \Om{n \left(\tfrac{b}{c  n}\right)^{b}}
\end{align}
where $c$ is constant in $n$.
So we only need to iterate the whole double-RLPN algorithm a sub-exponential number of times, namely at most $\tfrac{1}{\mathbb{P}_{\mathrm{succ}}}$ times. Note that \eqref{eq:distrib_e} is not a necessary condition to achieve our decoding so this overcost is overestimated.

Now, assuming Condition \eqref{eq:distrib_e}, then we can see the bias above as the product of $b$ independent biases involving smaller vectors. More formally, we have
\begin{equation}
\begin{array}{l}
\displaystyle{\bias_{\left(\hv, \cvbis\right) \drawn \overline{\sH}}\left(  \langle \cvbis + \hv_{\sP},\ev_{\sP} \rangle +  \langle \ev_{\sN},  \hv_{\sN}\rangle \right)} \\
\ \ \ \ \ \ \ \ \ \ \ \ \ \ \ \ \ \ \ = \displaystyle{\prod_{i = 1}^{b}\bias_{\left(\hv(i), \cvbis(i)\right) \drawn \widetilde{\sH}_i}\left(  \langle \cvbis(i) + \hv_{\sP}(i),\ev_{\sP}(i) \rangle +  \langle \ev_{\sN}(i),  \hv_{\sN}(i)\rangle \right)}
\end{array}
\end{equation}
where 
\begin{equation}
\begin{array}{l}
\widetilde{\sH}_i \eqdef \Big\{ \left( \hv_{\sP}(i), \hv_{\sN}(i), \cvbis(i) \right) \in \left(\CC_{\sP(i) \cup \sN(i)}\right)^{\perp} \times \CC_i \; \\
\ \ \ \ \ \ \ \ \ \ \ \ \ \ \ \ \ \ \ \ \ \ \ \ \ \ \ \ \ \ : \; |\hv_{\sN}(i)| = \tfrac{w}{b} \mbox{ and } |\hv_{\sP}(i) + \cv_i| = \tfrac{\tbis}{b}  \Big\}
\end{array}
\end{equation}

Moreover, let us degrade the Constraints \eqref{eq:cstPropo35} of Proposition \ref{prop:biasCodedRLPN} by replacing the polynomial factor $n^{\alpha}$ by a super-polynomial 
\begin{equation}
A \eqdef \frac{n^{\alpha - 1 + \log(c) + \log(n)}}{\log(n)^{\log(n)}}.
\end{equation}
On the one hand, this super-polynomial factor is multiplied to the final complexity, but on the other hand the new constraint (with the original one \eqref{eq:constraints}) induces:
\begin{align}
\frac{\binom{(n-s)/b}{w/b} \binom{s/b}{\tbis/b}}{2^{(k-\kbis)/b}}  
&= \om{\frac{n^{\alpha/\log(n)}}{\delta^{2/\log(n)}}}
\end{align}
and
\begin{equation}
\frac{\binom{(n-s)/b}{w/b} \binom{s/b}{\tbis/b}}{2^{k/b}} = \OO{n^{\alpha/\log(n)}} \quad \mbox{and} \quad \frac{\binom{s/b}{\tbis/b}}{2^{(s-\kbis)/b}} = \OO{n^{\alpha/\log(n)}}.
\end{equation}

Which allows us to say, using Proposition \ref{prop:biasCodedRLPN}, that for all $i \in \IInt{1}{b}$ and for a proportion $1 - \oo{1}$ of codes $\CC_i$ and $\CC_{\sP(i) \cup \sN(i)}$:

\begin{equation}
\bias_{\left(\hv(i), \cvbis(i)\right) \drawn \overline{\sH}_i}\left(  \langle \cvbis(i) + \hv_{\sP}(i),\ev_{\sP}(i) \rangle +  \langle \ev_{\sN}(i),  \hv_{\sN}(i)\rangle \right) = \delta^{1/\log(n)}(1-\oo{1}).
\end{equation}
By specifying the values of both $\oo{1}$ (see proof of Proposition \ref{prop:biasCodedRLPN} in Appendix \ref{app:a}), we can deduce that \ref{SDC1}. is verified.\\

To verify item \ref{SDC2}., we can adapt Section \ref{sec:distrib_bias} to show that
\begin{align}
\mathbb{E }_{\CC,\CCbis}\left(\card{\sS}\right) & = \displaystyle{\OOt{ \max_{(i,j) \in \mathcal{A}} \frac{\binom{s/b}{j}^b \binom{(n-s)/b}{i}^b}{2^{n-k}} } + 1} 
\end{align}
where
\begin{equation}
				\mathcal{A} \eqdef \left\{ \left(i,j\right) \in \llbracket 0, \tfrac{n-s}{b} \rrbracket \times \llbracket0,\tfrac{s}{b} \rrbracket, \frac{ K_{w/b}^{((n-s)/b)}\left(u/b\right) K_{\tbis/b}^{(s/b)}\left((t-u)/b\right) }{ K_{w/b}^{((n-s)/b)}\left(i\right) K_{\tbis/b}^{(s/b)}\left(j\right)} \leq n^{2/b} \right\},
\end{equation}
\begin{equation}
\sS \eqdef \{\sv \in \F_2^\kbis \: : \widehat{f_{\yv,\widetilde{\sH},\Gmbis}}\left(\sv\right) \geq  \frac{\delta}{2} \: \widetilde{H}\},
\end{equation}
and
\begin{equation}
\widetilde{H} \eqdef \frac{\binom{(n-s)/b}{w/b}^b \binom{s/b}{\tbis/b}^b}{2^{k-\kbis}}.
\end{equation}
		
Finally, up to a sub-exponential factor, the above expectation is of the same order as in Proposition \ref{prop:exp_sizeS}. \section{Proofs of the statements made in Section \ref{sec:lattice}}\label{sec:app_lattice}

{\bf \noindent Proof of Proposition \ref{propo:dualityLattice}}
\propdualityLattice*

It is helpful to notice before the following link between the Bessel functions and the Fourier transform of the indicator function 
$\un_{\leq w}$
of the words of Euclidean norm $\leq w$
in $\mathbb{R}^n$ (see \cite[Fact 4.11]{DDRT23})
\begin{lemma}
	We have for any positive integer $n$, any $w\geq 0$, any $\xv$ in $\mathbb{R}^n$ 
	$$
	\widehat{\un_{\leq w}}(\xv) = \left( \frac{w}{\| \vec{x}\|_{2}} \right)^{n/2} J_{n/2}\left(2\pi w \|\vec{x}\|_{2}\right)
	$$
where $\widehat{f}(\vec{x}) = \int_{\mathbb{R}^{n}} f(\vec{x})e^{-2i\pi \langle \vec{x},\vec{y} \rangle} d\vec{y}$ for $f : \mathbb{R}^{n} \rightarrow \mathbb{C}$. 
\end{lemma}

\begin{proof}[Proof of Proposition \ref{propo:dualityLattice}]
	First, notice that,
	\begin{align}
		f_{\widetilde{\sW}}(\yv) &= \frac{1}{2} \sum_{\vec{w} \in \widetilde{\sW}} \left( e^{2i\pi \langle \vec{w},\vec{y} \rangle} + e^{-2i\pi \langle \vec{x},\vec{y} \rangle} \right) \nonumber \\
		&=  \sum_{\vec{w} \in \widetilde{\sW}} e^{-2i\pi \langle \vec{w},\vec{y} \rangle} \quad (\mbox{$\vec{w} \mapsto -\vec{w}$ is a bijection in $\sW$}) \nonumber \\
		&= \sum_{\vec{w} \in \Lambda^{\vee}} \left( \un_{\leq w+\varepsilon}(\vec{w}) - \un_{\leq w-\varepsilon}(\vec{w}) \right) e^{-2i\pi \langle \vec{w},\vec{y} \rangle} \label{eq:diffUn} 
	\end{align}
	Recall now the Poisson summation formula, for any $\vec{y} \in \Lambda + \vec{e}$ and sufficiently regular function $f$,
	$$
	\sum_{\vec{x} \in \Lambda^{\vee}} f(\vec{x})e^{-2i\pi \langle \vec{x},\vec{y} \rangle} = \frac{1}{|\Lambda^{\vee}|} \; \sum_{\vec{x}\in \Lambda + \vec{e}} \widehat{f}(\vec{x})
	$$
	Plugging this formula into Equation \eqref{eq:diffUn} yields to, 
	\begin{align}
			f_{\widetilde{\sW}}(\yv) &= \frac{1}{|\Lambda^{\vee}|} \sum_{\vec{x \in \Lambda+\vec{e}}} \left( \widehat{\un_{\leq w+\varepsilon}}(\vec{x}) - \widehat{\un_{\leq w-\varepsilon}}(\vec{x}) \right) \nonumber \\
			&= \frac{1}{|\Lambda^{\vee}|}\sum_{j\geq 0}  \frac{N_{j}}{j^{n/2}} \left( (w+\varepsilon)^{n/2} J_{n/2}\left( 2\pi (w+\varepsilon) j \right) - (w-\varepsilon)^{n/2}J_{n/2}(2\pi (w-\varepsilon) j) \right) \nonumber \\
			&= \frac{1}{|\Lambda^{\vee}|}\sum_{ j \geq 0} \frac{N_j}{(2\pi)^{n/2}\; j^{n}} \Big( \left( 2\pi (w+\varepsilon) j \right)^{n/2} J_{n/2}\left( 2\pi (w+\varepsilon) j \right) \nonumber\\
			&\qquad\qquad\qquad\qquad\qquad\qquad - \left( 2\pi (w-\varepsilon)j \right)^{n/2}J_{n/2}(2\pi (w-\varepsilon) j) \Big)   \label{eq:concl}
	\end{align}
	which 
	 concludes the proof. 
\end{proof} 

{\noindent \bf An Approximation.}
We have also an approximate form for $f_{\widetilde{\sW}}(\yv)$ which is given by
\begin{equation}
f_{\widetilde{\sW}}(\yv) 
\approx \frac{4\pi \varepsilon}{|\Lambda^{\vee}|}\; \sum_{j \geq 0} j N_{j} \;\left( \frac{w}{j} \right)^{n/2} J_{n/2-1}(2\pi  w j) \label{eq:approxDerivBessel}.
\end{equation}
This follows from the fact that 
\begin{equation}\label{eq:derivative} 
	\frac{d}{dx} \left( x^{n/2} J_{n/2}(x) \right) = x^{n/2} J_{n/2-1}(x). 
	\end{equation} 
	Let,
	$$
	X \eqdef 2\pi w j \quad \mbox{ and } \quad h \eqdef 2\pi \varepsilon j
	$$	
	Notice that,
	\begin{multline*}
		\left( \left( 2\pi (w+\varepsilon) j \right)^{n/2} J_{n/2}\left( 2\pi (w+\varepsilon) j \right) - \left( 2\pi (w-\varepsilon)j \right)^{n/2}J_{n/2}(2\pi (w-\varepsilon) j) \right)   \\
		= (X+h)^{n/2}J_{n/2}(X+h) - (X-h)^{n/2} J_{n/2}(X-h)
	\end{multline*}
	From Equation \eqref{eq:derivative},
	\begin{align*} 
	(X+h)^{n/2} J_{n/2}(X+h) - (X-h)^{n/2} J_{n/2}(X-h) &\approx 2h \; \frac{d}{dX} \left( X^{n/2} J_{n/2}(X) \right) \\
	&= 2h \; X^{n/2} J_{n/2-1}(X) \\
	&= (4\pi \varepsilon j) \; (2\pi w j)^{n/2}J_{n/2-1}(2\pi w j)
	\end{align*} 
	Plugging this into Equation \eqref{eq:concl} yields \eqref{eq:approxDerivBessel}.

We also recall that we make the approximation
\begin{equation}\label{eq:fW}
f_{{{\sW}}}(\yv) \approx \frac{N}{\card{\widetilde{\sW}}} \; f_{\widetilde{\sW}}(\yv)
\end{equation}

The number $N_{\leq x}^{\vee}$ of dual vectors of length $\leq x$ can be approximated using the Gaussian heuristic:
\begin{equation*} 
N_{\leq x}^{\vee} \approx \frac{x^{n}}{\sqrt{n\pi}\; \left(\frac{n}{2\pi e} \right)^{n/2}}\;  \frac{1}{|\Lambda^{\vee}|}
\end{equation*} 
Thus we have:
\begin{align}
\card{\widetilde{\sW}} &= N_{\leq w + \varepsilon}^{\vee} -  N_{\leq w - \varepsilon}^{\vee} \nonumber \approx \frac{(w + \varepsilon)^n - (w - \varepsilon)^n}{\sqrt{\tfrac{n}{2 \pi e}}^n \cdot \sqrt{\pi n} \cdot | \Lambda^{\vee}|} \nonumber\approx \frac{2 n \varepsilon w^{n-1}}{\sqrt{\tfrac{n}{2 \pi e}}^n\;  \sqrt{\pi n} \; | \Lambda^{\vee}|}
\end{align}
Putting this into Equation \eqref{eq:fW} shows,
\begin{equation*}
	f_{\sW}(\vec{y}) \approx N \; \frac{\card{\Lambda^{\vee}}\; \sqrt{n\pi}\left( \frac{n}{2\pi e} \right)^{n/2}}{2n \varepsilon w^{n-1} } \; f_{\widetilde{\sW}}(\vec{y})
\end{equation*}
and after some further computation
\begin{align*}
	f_{\sW}(\vec{y}) &\approx   N \; \card{\Lambda^{\vee}} \frac{\sqrt{n\pi}\left( \frac{n}{2\pi e} \right)^{n/2}}{2n \varepsilon w^{n-1} } \; \frac{4\pi \varepsilon}{|\Lambda^{\vee}|}\; \sum_{j \geq 0} j N_{j} \;\left( \frac{w}{j} \right)^{n/2} J_{n/2-1}(2\pi  w j)  \\ 
	&= N \; \frac{\sqrt{n \pi}}{2n w^{n-1}} \left( \frac{n}{2\pi e} \right)^{n/2} \;4\pi \sum_{j\geq 0} j N_{j} \;\left( \frac{w}{j} \right)^{n/2} J_{n/2-1}(2\pi  w j)  \\ 
	&= N \; \frac{\sqrt{n}\; n^{n/2}}{n} \; \frac{1}{w^{n-1}}\; \frac{\sqrt{\pi} \; \pi}{\pi^{n/2}} \; \frac{4}{2 \; 2^{n/2}} \; \frac{1}{e^{n/2}} \; \sum_{j\geq 0} N_{j} \; w^{n/2} \frac{1}{j^{n/2-1}} \; J_{n/2-1}(2\pi wj) \\
	&= N \;\sqrt{n}\; \sqrt{\pi} \;e^{-1}\; n^{n/2-1} \; \pi^{-(n/2-1)} \; 2^{-(n/2-1)} e^{-(n/2-1)} \sum_{j\geq 0} N_{j} \left( \frac{1}{wj} \right)^{n/2-1} \; J_{n/2-1}(2\pi w j) \\
	&= N \frac{\sqrt{n \pi}}{e} \sum_{j\geq 0} N_{j} \left( \frac{n}{2\pi e wj} \right)^{n/2-1} J_{n/2-1}(2\pi wj)
\end{align*}

 }
\end{document}